\title{Determinantal Sieving}
\newcommand{\N}{\ensuremath{\mathbb N}\xspace}
\newcommand{\R}{\ensuremath{\mathbb R}\xspace}
\newcommand{\F}{\ensuremath{\mathbb F}\xspace}
\newcommand{\Q}{\ensuremath{\mathbb Q}\xspace}
\newcommand{\cI}{\ensuremath{\mathcal I}\xspace}
\newcommand{\cC}{\ensuremath{\mathcal C}\xspace}
\newcommand{\cE}{\ensuremath{\mathcal E}\xspace}
\newcommand{\cP}{\ensuremath{\mathcal P}\xspace}
\newcommand{\cF}{\ensuremath{\mathcal F}\xspace}
\newcommand{\cT}{\ensuremath{\mathcal T}\xspace}
\newcommand{\Oh}{O}
\DeclareMathOperator{\Pf}{Pf}
\DeclareMathOperator{\supp}{supp}
\DeclareMathOperator{\osupp}{osupp}
\declaretheorem[style=standard,sibling=theorem,name=redrule]{Reduction Rule}
\newcommand\IfRestateTF{%
  \ifx\label\thmt@gobble@label 
    \expandafter\@firstoftwo
  \else
    \expandafter\@secondoftwo
  \fi
}
\newcommand{\RestateRemark}{\IfRestateTF{{\normalfont\bfseries (Restated) }}{}}
\begin{document}
\maketitle

\begin{abstract}
  We introduce a new, remarkably powerful tool to the toolbox of
  algebraic FPT algorithms, \emph{determinantal sieving}.
  Given evaluation access to a polynomial $P(x_1,\ldots,x_n)$ over a field $\F$ of characteristic 2, defined on the set of variables $X={x_1,\ldots,x_n}$, and a linear matroid $M=(X,\cI)$ over $\F$ of rank $k$, one can determine -- with $O^*(2^k)$ evaluations of $P$ (where $O^*$ suppresses factors polynomial in the input size) -- whether there exists a multilinear term in the monomial expansion of $P$ whose support forms a basis for $M$.
   The known tools of \emph{multilinear
    detection} and \emph{constrained multilinear detection} then
  correspond to the case where $M$ is a uniform matroid and 
 the truncation of a disjoint union of uniform matroids, respectively. 
  More generally, let the \emph{odd support} of a monomial $m$ be the
  set of variables which have odd degree in $m$. Using $O^*(2^k)$
  evaluations of $P$, we can sieve for those terms $m$ whose odd
  support spans $M$. Applying this framework to well-known efficiently
  computable polynomial families allows us to simplify, generalize and
  improve on a range of algebraic FPT algorithms, such as:
  \begin{itemize}
  \item Solving \textsc{$q$-Matroid Intersection} in time $O^*(2^{(q-2)k})$
    and \textsc{$q$-Matroid Parity} in time $O^*(2^{qk})$,
    improving on $O^*(4^{qk})$ for matroids represented over general
    fields (Brand and Pratt, ICALP 2021) 
  \item \textsc{Long $(s, t)$-Path} in $O^*(1.66^k)$ time,
    improving on $O^*(2^k)$ (Fomin et al., SODA 2023),
    as well as further results on paths and linkages in so-called \emph{frameworks},
    including \textsc{Rank $k$ $(S,T)$-Linkage}
    in $O^*(2^k)$ time (improving on $O^*(2^{|S|+\Oh(k^2 \log (k+|\F|))})$
    over general fields by Fomin et al.)
  \item Many instances of the \textsc{Diverse X} paradigm, finding
    a collection of $r$ solutions to a problem with a minimum
    mutual distance of $d$ in time $O^*(2^{r^2 d/2})$, improving solutions for
    \textsc{$k$-Distinct Branchings} 
    from time $2^{O(k \log k)}$ to $O^*(2^k)$ (Bang-Jensen et al., ESA 2021),
    and for \textsc{Diverse Perfect Matchings} from $O^*(2^{2^{O(rd)}})$
    to $O^*(2^{r^2d/2})$ (Fomin et al., STACS 2021)
  \end{itemize}
  For several other problems, such as \textsc{Set Cover}, 
  \textsc{Steiner Tree}, \textsc{Graph Motif} and
  \textsc{Subgraph Isomorphism}, where the current algorithms are
  either believed to be optimal or are proving exceedingly difficult
  to improve, we show matroid-based generalisations
  at no increased cost to the running time.
  In the above, all matroids are assumed to be represented over fields of
  characteristic~2 and all algorithms use polynomial space.
  Over general fields, we achieve similar results when the polynomial is given as an arithmetic circuit. However, this comes at the cost of exponential space, as the approach relies on computations within the \emph{exterior algebra}. For a class of arithmetic circuits
  we call \emph{strongly monotone}, this is even achieved without any loss
  of running time. However, the \emph{odd support} sieving result appears to
  be specific to working over characteristic~2. 
\end{abstract}


\section{Introduction} \label{sec:intro}

\emph{Algebraic algorithms} is an algorithmic paradigm with 
remarkably powerful, yet non-obvious applications, especially for the design of  exact (exponential-time) and
parameterized algorithms. To narrow the scope a bit, let us consider
more specifically what may be called the \emph{enumerating polynomial}
method. Consider a problem of looking for a particular substructure in
an object; for example, given a graph $G$, we may ask if $G$ has a
perfect matching, or a path on at least $k$ vertices, etc. (We
  focus on the decision problem. Given the ability to solve the
  decision problem, an explicit solution can be found with limited
  overhead; see Bj\"orklund et al.~\cite{BjorklundKK14witness,BjorklundKKL15alenex} 
  for deeper investigations into this.)
For surprisingly many applications, this problem can be reduced to
polynomial identity testing, by constructing a multivariate polynomial
$P(X)=P(x_1,\ldots,x_n)$ (occasionally referred to as \emph{multivariate generating polynomial}~\cite{Brand19thesis})
such that the monomials of $P$ enumerate all
instances of the desired substructure, and then testing
whether $P(X)$ contains at least one monomial or not. The latter is
the \emph{polynomial identity testing} (PIT) problem, which can be
solved efficiently in randomized polynomial time via the Schwartz-Zippel
(a.k.a.\ DeMillo-Lipton-Schwartz-Zippel) Lemma, requiring only the ability
to evaluate $P(X)$, possibly over an extension field of the original
field~\cite{Schwartz80,Zippel79}\footnote{We typically assume that we are working over a
  finite field, preferably of characteristic 2, and of a size small enough that the time for field operations does not overwhelm the running time. 
  See Section~\ref{subsec:conventions} for a discussion on this.}.
Therefore the challenge lies in
constructing an enumerating polynomial that can be
efficiently evaluated. In particular, it is a priori non-obvious why
it would be easier to construct an enumerating polynomial for a
problem than it is to simply solve the problem directly. 

In our experience, the ability to do so has two sources. First, there are
well-known families of polynomials that can be efficiently evaluated
(despite having exponentially many monomial terms) and which can be
usefully interpreted combinatorially as enumerating polynomials for
certain objects. For example, if $G$ is a bipartite graph with vertex
partition $U \cup V$, the \emph{Edmonds matrix} of $G$ over some field $\F$ 
is a matrix $A \in \F^{U \times V}$ constructed by replacing
the non-zero entries of the bipartite adjacency matrix of $G$ by
distinct new variables -- i.e., for every edge $e \in E(G)$ we define
a variable $x_e$, and we let $A(u,v)=x_{uv}$ if $uv \in E(G)$ and
$A(u,v)=0$ otherwise. If $|U|=|V|$, then $P(X)=\det A$ is a polynomial
over the variables $X=\{x_e \mid e \in E(G)\}$, and can easily be seen
to be an enumerating polynomial for perfect matchings in $G$.
(Note that we pay no attention to the coefficients of the monomials,
which are here either $1$ or $-1$ depending on the matching; in
particular, we are not concerned with \emph{counting} the objects.)
For our second example, let $G$ be a digraph, and let $A$ be its
standard adjacency matrix, modified as above so that
non-zero entries $A[u,v]$ are replaced by $x_{(u,v)}$
for distinct new variables $x_{(u,v)}$, $(u,v) \in E(G)$.
Then $A^k[u,v]$ enumerates $k$-edge walks from $u$ to $v$.
Further examples include the Tutte matrix, which provides a
way to enumerate perfect matchings in non-bipartite graphs \cite{Tutte47};
\emph{branching walks} (due to Nederlof~\cite{Nederlof13algorithmica}), which are
a relaxation of subtrees of a graph similar to how walks are a
relaxation of paths; and any number of applications of basic linear
algebra, especially in the context of \emph{linear matroids} (see below). 

Second, there is a rich toolbox of \emph{transformations} of polynomials,
by which a given enumerating polynomial can be modified into a more
relevant form. We are particularly concerned with what can be called
\emph{sieving} operations: transformations applied to a given
polynomial $P(X)$ such that every monomial $m$ in the monomial expansion
of $P$ either survives (possibly multiplied by some new factor)
or is cancelled, depending on the properties of $m$. 
For example, consider a graph $G$ with edges partitioned as $E(G)=E_R \cup E_B$ 
into \emph{red} and \emph{blue} edges. Does $G$ have a perfect
matching where precisely half the edges are red (or more generally,
with precisely $w$ red edges)? 
This is known as the \textsc{Exact Matching} problem, and is
not known to have a deterministic polynomial-time algorithm. However,
there is a simple randomized polynomial-time algorithm using the
enumerating polynomial approach (cf.~Mulmuley et al.\ in 1987~\cite{MulmuleyVV87matching}).
Assume for simplicity that $G$ is
bipartite, and let $A$ be the Edmonds matrix of $G$ as above (if $G$
is not bipartite, a similar construction works over the Tutte matrix of $G$).
Introduce a new variable $z$, and for every edge $uv \in E_R$ multiply
$A[u,v]$ by~$z$. Now, a perfect matching $M$ of $G$ with $w$ red edges
will correspond to a monomial where the degree of $z$ is $w$, and we
are left asking for monomials in $P(X,z)=\det A$ where the $z$-component
is $z^w$. Via the standard method of \emph{interpolation}, we can define
a second polynomial $P_2(X)$ which enumerates precisely these monomials,
and $P_2(X)$ can be evaluated using $O(n)$ evaluations of $P(X)$ (i.e.,
$P_2$ \emph{sieves} for monomials in $P(X,z)$ where $z$ has degree $w$).
Thus, applying polynomial identity testing to $P_2$ gives a randomized
polynomial-time algorithm for \textsc{Exact Matching}. 

For applications to exact and parameterized algorithms, more powerful
transformations are available. The most well known is
\emph{multilinear detection}: Given a polynomial $P(X)$, does the
monomial expansion of $P$ contain a monomial of degree $k$ which is
\emph{multilinear}, i.e., where every variable has degree at most one?
Slightly more generally, to avoid undesired cancellations, we consider
the following. Let $P(X,Y)$ be a polynomial in two sets of variables
$X$ and $Y$. Say that a monomial $m$ is \emph{$k$-multilinear in $Y$}
if the total degree of $m$ in $Y$ is $k$ (not counting any
contributions from $X$) and every variable in $Y$ has degree at most
one in $m$. Then the following is known. 

\begin{lemma}[Multilinear detection~\cite{Bjorklund14detsum,BjorklundHKK17narrow}] \label{lemma:multlinear}
  Let $P(X,Y)$ be a polynomial 
  over a field of characteristic~2. There is a polynomial $Q(X,Y)$,
  that can be computed using $O^*(2^k)$ evaluations of $P$, such that $Q$
  is not identically zero if and only if $P$ contains a monomial that
  is $k$-multilinear in $Y$.
\end{lemma}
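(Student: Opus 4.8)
The plan is to realise $Q$ as a sign‑free inclusion–exclusion sum of $2^k$ evaluations of $P$, in the spirit of the narrow sieve of Bj\"orklund et al. Introduce fresh indeterminates $\lambda_{y,i}$ for $y\in Y$, $i\in[k]$, and one more fresh variable $z$. For each subset $S\subseteq[k]$ let $P_S$ be obtained from $P$ by the substitution $y\mapsto z\,y\cdot\sum_{i\in S}\lambda_{y,i}$ for every $y\in Y$; since this only substitutes values (over $\F[X,Y,z,\lambda]$) for the variables of $P$, each $P_S$ costs a single evaluation of $P$. Set $Q:=[z^k]\sum_{S\subseteq[k]}P_S$, the coefficient of $z^k$, which is recovered from the $P_S$ by one further layer of interpolation on $z$ (needed only because $P$ is not assumed homogeneous of $Y$‑degree $k$, and contributing only a polynomial factor). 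Altogether $Q$ is computed with $O^*(2^k)$ evaluations of $P$; if one insists on $Q$ being a polynomial purely in $X,Y$, the $\lambda_{y,i}$ may afterwards be fixed to random values from a large enough extension of $\F$, which preserves ``$Q\not\equiv 0$'' by Schwartz--Zippel.

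Correctness is a monomial‑by‑monomial check, and this is where characteristic $2$ enters. Fix a monomial $m=m_X\cdot\prod_{j=1}^{t}y_{i_j}^{d_j}$ of $P$ with the $y_{i_j}$ pairwise distinct; extracting $[z^k]$ discards $m$ unless its $Y$‑degree $\sum_j d_j$ equals $k$, so assume this. Then $m$ contributes $m_X\cdot\bigl(\prod_j y_{i_j}^{d_j}\bigr)\cdot\sum_{S\subseteq[k]}\prod_j\bigl(\sum_{i\in S}\lambda_{y_{i_j},i}\bigr)^{d_j}$ to $\sum_S P_S$ (after dropping the overall $z^k$). View the $\sum_j d_j=k$ linear forms being multiplied as $k$ ``slots'', each owned by some $y_{i_j}$, and expand: a term is a colouring $g$ assigning a colour in $S$ to each slot, and summing over $S\supseteq\operatorname{im}(g)$ contributes the multiplicity $2^{\,k-|\operatorname{im}(g)|}$, which over characteristic $2$ vanishes unless $\operatorname{im}(g)=[k]$, i.e.\ unless $g$ is a bijection from the $k$ slots onto $[k]$. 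If some $d_j\ge 2$, then (taking the least such $j$) swapping two of the slots owned by $y_{i_j}$ is a fixed‑point‑free involution on the surviving bijections that leaves the associated monomial in the $\lambda$'s unchanged, so those terms pair up and cancel over characteristic $2$: every non‑multilinear monomial of $Y$‑degree $k$ contributes $0$. If instead all $d_j=1$, so $m$ is $k$‑multilinear in $Y$, the surviving terms sum to $m_X\bigl(\prod_j y_{i_j}\bigr)$ times the permanent --- equivalently, over characteristic $2$, the determinant --- of the $k\times k$ matrix $(\lambda_{y_{i_j},i})_{j,i}$ of distinct fresh variables, which is a nonzero polynomial.

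Summing over all monomials, $Q=\sum_{m} m_X\bigl(\prod y\bigr)\det(\Lambda_m)$, where $m$ ranges over the $k$‑multilinear‑in‑$Y$ monomials of $P$ and $\Lambda_m$ is the $\lambda$‑submatrix above. Distinct such $m$ have distinct $(m_X,\prod y)$‑parts, so there is no cancellation between different $m$, and each $\det(\Lambda_m)$ is a nonzero polynomial; hence $Q\not\equiv 0$ if and only if $P$ has at least one monomial that is $k$‑multilinear in $Y$, as required.

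The main obstacle is the cancellation step, which must simultaneously establish that (i) the parity $2^{\,k-|\operatorname{im}(g)|}$ annihilates every colouring with a proper colour set, and (ii) among the surviving bijective colourings the non‑multilinear monomials die under the slot‑swap involution while the multilinear ones survive with a nonzero determinant. Both rest squarely on $2\equiv 0$; over other fields the relevant terms fail to cancel, and this is precisely the point at which the method --- and a fortiori the odd‑support refinement promised in the abstract --- is tied to characteristic $2$. A smaller technical point, easy to overlook, is the $z$‑interpolation: without it the same sum would also (wrongly) certify $Y$‑multilinear monomials of $Y$‑degree larger than $k$, which likewise survive the inclusion–exclusion.
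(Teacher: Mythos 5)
Your proof is correct and, although the paper never proves this lemma directly (it is cited as background to Bj\"orklund et al.), it is structurally identical to the paper's own generalisation, Theorem~\ref{theorem:sieve-basis} together with Lemma~\ref{lemma:sieve-determinant}: the substitution $y\mapsto zy\sum_{i\in S}\lambda_{y,i}$, the inclusion--exclusion over $S\subseteq[k]$ to force all $k$ colours to appear, and the extraction of the $z^k$ coefficient all match the paper's pattern with $A$ taken to be the generic matrix $(\lambda_{y,i})$, i.e.\ a representation of the uniform matroid of rank $k$ over $Y$ — exactly the specialisation the paper singles out in Section~1.1. The only cosmetic difference is that you re-derive the ``repeated columns kill the permanent'' fact inline via a slot-swap involution, whereas the paper packages it as ``$\det A[\cdot,(i_1,\dots,i_k)]=0$ for a repeated column'' inside Lemma~\ref{lemma:sieve-determinant}; these are the same argument, since over characteristic 2 the permanent equals the determinant. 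Your two flagged subtleties are both genuine and correctly handled: the $2^{k-|\operatorname{im}(g)|}$ parity is precisely the paper's inclusion--exclusion step (Lemma~\ref{lemma:inclusion-exclusion}), and the $z$-interpolation is the paper's ``extract the homogeneous degree-$k$ part'' step, without which $Y$-multilinear terms of $Y$-degree larger than $k$ (whose surviving colourings are surjections rather than bijections) would indeed survive and spoil the ``only if'' direction.
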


For example, consider again the case where $A$ is the modified
adjacency matrix of a graph $G$, and scale every entry $A[u,v]$ by
a new variable $y_v$. Then the terms of $A^k[u,v]$ that are
multilinear in $Y=\{y_v \mid v \in V\}$ enumerate $k$-edge
\emph{paths} from $u$ to $v$, i.e., multilinear detection and a PIT
algorithm solve the \textsc{$k$-Path} problem.
This idea was pioneered by Koutis~\cite{Koutis08ICALP} and
improved by Williams and Koutis~\cite{KoutisW09limits,Williams09IPL},
using a different approach based on group algebra; the above
polynomial sieving result is by Björklund et al.~\cite{Bjorklund14detsum,BjorklundHKK17narrow}.
Multilinear detection and other algebraic sieving has had many applications, including Bj\"orklund's
celebrated algorithm for finding undirected Hamiltonian cycles in time
$O^*(1.66^n)$~\cite{Bjorklund14detsum} and an algorithm solving
\textsc{$k$-Path} in time $O^*(1.66^k)$~\cite{BjorklundHKK17narrow}.
See Koutis and Williams~\cite{KoutisW16CACM} for an overview; other
related work is surveyed in Section~\ref{ssec:related}.

Some variations are also known. One arguably simpler variant 
is when $|Y|=k$ and we wish to sieve for monomials in $P(X,Y)$ where
every variable of $Y$ occurs (regardless of their degree).
This can be handled over any field in $2^k$ evaluations of $P$
using \emph{inclusion-exclusion} (and this is a ``clean'' sieve, which
does not change the coefficient of any monomial).
This has been used, e.g., in parameterized algorithms for
\textsc{List Colouring}~\cite{GutinMOW21listcol} and 
\textsc{Rural Postman}~\cite{GutinWY17rpostman}.
Another variant, which is a generalisation of multilinear detection, 
is \emph{constrained multilinear detection}.
Let $P(X,Y)$ be a polynomial. Let $C$ be a set of colours, and for
every $q \in C$ let $d_q \in \N$ be the \emph{capacity} of colour~$q$.
Let a colouring $c \colon Y \to C$ be given.
A monomial $m$ is \emph{properly coloured} if, for every $q \in C$,
the total degree of all variables in $m$ with colour $q$ is at most $d_q$. 
Bj\"orklund et al.~\cite{BjorklundKK16} show the following (again, we
allow an additional set of variables $X$ to avoid undesired
cancellations).

\begin{lemma}[Constrained multilinear detection~\cite{BjorklundKK16}]
  Let $P(X,Y)$ be a polynomial 
  over a field of characteristic 2. Let a colouring $c \colon Y \to C$
  and a list of colour capacities $(d_q)_{q \in C}$ be given.  
  There is a polynomial $Q(X,Y)$,
  that can be computed using $O^*(2^k)$ evaluations of $P$, such that $Q$
  is not identically zero if and only if $P$ contains a monomial that
  is $k$-multilinear in $Y$ and properly coloured. 
\end{lemma}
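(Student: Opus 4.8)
The plan is to reduce Constrained Multilinear Detection to ordinary Multilinear Detection (the lemma above), with no loss in the exponent. The intuition is that a colour $q$ of capacity $d_q$ behaves like a uniform matroid of rank $d_q$, which I would simulate by forcing all colour-$q$ variables to map into a common pool of $d_q$ fresh ``slot'' variables: proper colouring then turns into an injectivity requirement that multilinearity already enforces.

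Concretely, for each colour $q \in C$ I would introduce a set $S_q = \{s_{q,1},\dots,s_{q,\min(d_q,k)}\}$ of fresh variables (so $S_q = \emptyset$ when $d_q = 0$), put $Y' := \bigcup_{q} S_q$, and define $P'(X,Y')$ from $P(X,Y)$ by substituting every colour-$q$ variable $y \in Y$ with the linear form $\sigma_q := \sum_{j} s_{q,j}$. Since $P'$ is obtained from $P$ purely by substituting linear forms for its inputs, one evaluation of $P'$ costs one evaluation of $P$. I would then apply Multilinear Detection to $P'$ with variable set $Y'$ and target degree $k$; this yields the desired $Q$ using $O^*(2^k)$ evaluations of $P'$, hence of $P$.

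The heart of the argument is to show that $P'$ has a $k$-multilinear monomial over $Y'$ iff $P$ has a $k$-multilinear, properly coloured monomial over $Y$. For a monomial $m$ of $P$ in which the colour-$q$ variables occur with exponents $e^{(q)}_1,e^{(q)}_2,\dots$, its image carries the $S_q$-factor $\prod_i \sigma_q^{\,e^{(q)}_i}$. Here I would use the characteristic-$2$ identity $\sigma_q^{2^t} = \sum_j s_{q,j}^{2^t}$, which shows that every monomial of $\sigma_q^{\,e}$ with $e \ge 2$ contains a square, whereas $\sigma_q^{\,1} = \sigma_q$ contributes each slot linearly. Hence a multilinear monomial over $Y'$ can arise only from an $m$ that is multilinear in $Y$, and then $m$'s image does contribute such a monomial exactly when the number $c_q$ of colour-$q$ variables in $m$ satisfies $c_q \le \min(d_q,k)$ for every $q$; the resulting $Y'$-degree is $\sum_q c_q = \deg_Y(m)$. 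Restricting to $Y'$-degree $k$ forces $\deg_Y(m) = k$, whence $c_q \le k$, so the surviving condition $c_q \le \min(d_q,k)$ is precisely proper colouring. This gives the equivalence, and the lemma follows.

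I expect the only thing that really needs care is this characteristic-$2$ bookkeeping, together with the design choice to \emph{replace} each colour-$q$ variable by the pooled form $\sigma_q$ rather than to \emph{multiply} it by $\sigma_q$: multiplying would keep the original $Y$-variables alive and push the target degree to $2k$, giving only an $O^*(4^k)$ bound. If one prefers not to treat Multilinear Detection as a black box, the same idea works inside its group-algebra proof: assign the random $\mathbb{F}_2^k$-vector of each colour-$q$ variable uniformly from an independently chosen random subspace of dimension $\min(d_q,k)$, so that a set of variables is linearly independent exactly when it obeys every colour capacity, and the basis-detecting sieve then does the rest.
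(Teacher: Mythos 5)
Your black-box reduction has a genuine gap: in characteristic $2$ the substituted factor $\sigma_q^{c_q} = (\sum_j s_{q,j})^{c_q}$ has \emph{no} nonzero multilinear term once $c_q \ge 2$, because the coefficient of any squarefree $\prod_{j\in T} s_{q,j}$ with $|T|=c_q$ is $c_q!$, which vanishes. Concretely, take $P(y_1,y_2)=y_1y_2$ over $\GF(2)$ with $c(y_1)=c(y_2)=q$, $d_q=2$, $k=2$; the monomial $y_1y_2$ is $2$-multilinear in $Y$ and properly coloured, yet $P'(s_{q,1},s_{q,2})=(s_{q,1}+s_{q,2})^2=s_{q,1}^2+s_{q,2}^2$ has no degree-$2$ multilinear monomial, so multilinear detection applied to $P'$ reports a false negative. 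Your bookkeeping correctly rules out a \emph{single} colour-$q$ variable of degree $\ge 2$ (squares do appear there), but overlooks that $c_q\ge 2$ \emph{distinct} colour-$q$ variables are all collapsed to the \emph{same} linear form $\sigma_q$, whose $c_q$-th power also loses its multilinear part. The repair is to give each variable its \emph{own} random coefficients, replacing $y$ of colour $q$ by $\sum_{j} r_{y,j}\, s_{q,j}$ with $r_{y,j}$ drawn from a sufficiently large extension field; then the offending coefficient becomes a permanent (a determinant, in characteristic $2$) of a random $c_q$-rowed submatrix, which is identically zero exactly when $c_q > \min(d_q,k)$ and otherwise nonzero with high probability. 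That randomization is precisely what your deterministic pooled substitution is missing, and it is where the characteristic-$2$ subtlety you flagged actually bites.

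Your closing ``inside the group-algebra proof'' sketch is the right mechanism and is, in fact, how the result is obtained: in this paper's language, the colour constraint is the truncation to rank $k$ of a partition matroid, with colour class $q$ mapped into a random $\min(d_q,k)$-dimensional subspace of $\F^k$, and constrained multilinear detection is then basis sieving (Theorem~\ref{ithm:simple}) over that matroid. Note that the paper only cites this lemma from~\cite{BjorklundKK16} and does not reprove it; the determinantal-sieving view above is the intended comparison, and it agrees with your last sentence but not with the reduction you lead with.
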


Using this method, Bj\"orklund et al.~\cite{BjorklundKK16} solve
\textsc{Graph Motif} and associated optimization variants
in time $O^*(2^k)$, which is optimal under the Set Cover Conjecture
(SeCoCo)~\cite{BjorklundKK16,CyganDLMNOPSW16seth}.

Although many other variations of algebraically styled FPT algorithms
are known~\cite{BjorklundHKK07convolution,BrandDH18,Brand19thesis,BrandP21,CyganNPPRW22twdconn,FominLPS16JACM},
the above methods (degree-extraction and multilinear detection)
are remarkable in terms of their power and simplicity in their applications.
In this paper, we show an extension of this toolbox. 

\subsection{Determinantal sieving}

We introduce \emph{determinantal sieving}, a powerful new sieving
operation that drastically extends the power of the tools of
multilinear detection and constrained multilinear detection.
Let $P(x_1,\ldots,x_n)$ be a polynomial over a field $\F$ of
characteristic 2, and let $M \in \F^{k \times n}$ be a matrix (e.g., a
linear matroid on the ground set $X=\{x_1,\ldots,x_n\}$). 
For a monomial $m$ in $P$, let $\supp(m)$ be the set of variables $x_i$
of non-zero degree in $m$.
We show a sieving method that, using $O^*(2^k)$ evaluations of $P$,
sieves for those monomials $m$ in $P$ that are multilinear of degree $k$ 
and such that the matrix $M[\cdot,\supp(m)]$ indexed by the support 
is non-singular. More precisely, we show the following.

\begin{theorem}[Basis sieving] \label{ithm:simple}
  Let $P(X)$
  be a polynomial of degree $d$ over a field $\F$ of characteristic~2, 
  and let $A \in \F^{k \times X}$ be a
  matrix. There is a randomized algorithm,
  using $O(d2^k)$ evaluations of $P$ and using $O^*(2^k)$ arithmetic operations, to test if there is a multilinear
  term $m$ of degree $k$ in the monomial expansion of $P(X)$ such that the matrix
  $A[\cdot,\supp(m)]$ is non-singular.
  The algorithm uses polynomial space, needs only evaluation access to $P$,
  has no false positives and produces false negatives with probability
  at most $2k/|\F|$. 
\end{theorem}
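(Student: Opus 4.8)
The plan is to turn the problem into the extraction of a single multilinear monomial in $k$ auxiliary variables, and to observe that in characteristic $2$ this extraction simultaneously enforces multilinearity of $m$ and the basis condition on $\supp(m)$. Introduce fresh variables $z_1,\dots,z_k$, and for each $x\in X$ let $\ell_x:=\sum_{j=1}^k A_{j,x}\,z_j$ be the linear form whose coefficient vector is the $x$-th column $A_x$ of the representation. Substitute $x\mapsto x\,\ell_x$ in $P$ to get $R(X,z):=P\big((x\,\ell_x)_{x\in X}\big)$. Since each $\ell_x$ is linear in $z$, a monomial $x^m$ of $P$ of total degree $\deg(m)$ contributes to $R$ a term homogeneous of degree $\deg(m)$ in $z$; so only degree-$k$ monomials of $P$ contribute to the part of $R$ homogeneous of degree $k$ in $z$, and the coefficient of the monomial $z_1z_2\cdots z_k$ inside $x^m\prod_{x\in\supp(m)}\ell_x^{m_x}$ equals the permanent of the $k\times k$ matrix obtained by stacking the columns $A_x$ with multiplicity $m_x$. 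In characteristic $2$ the permanent equals the determinant, and a determinant with a repeated row vanishes; hence this coefficient is $0$ whenever $m$ is not multilinear, and equals $\det\big(A[\cdot,\supp(m)]\big)$ when $m$ is multilinear of degree $k$. Therefore the coefficient of $z_1\cdots z_k$ in the degree-$k$-in-$z$ part of $R$ is the polynomial $c(X):=\sum_{S}\det\big(A[\cdot,S]\big)\prod_{x\in S}x$, where $S$ ranges over $k$-subsets of $X$ that are both the support of a multilinear term of $P$ and a basis of $M$; equivalently, $c(X)$ is the image of $P\big((x\,\ell_x)_x\big)$ in $\F[z_1,\dots,z_k]/(z_1^2,\dots,z_k^2)$, read off as the coefficient of $z_1\cdots z_k$. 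So $c\not\equiv 0$ exactly when the desired term exists.

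Next I would show $c(X)$ can be evaluated with $O(d\,2^k)$ black-box evaluations of $P$ in polynomial space. Fix random values $\hat x\in\F^X$ (over an extension field if $\F$ is too small). For $W\subseteq[k]$, substituting $z_j=1$ for $j\in W$ and $z_j=0$ otherwise amounts to evaluating $P$ at the point with $x$-th coordinate $\hat x_x\cdot a^W_x$, where $a^W_x:=\sum_{j\in W}A_{j,x}\in\F$; this yields $\sum_{\alpha:\supp(\alpha)\subseteq W}c_\alpha(\hat x)$ where $R(\hat x,z)=\sum_\alpha c_\alpha(\hat x)\,z^\alpha$. To remove contamination from higher $z$-degrees, first isolate the part of $R$ homogeneous of degree $k$ in $z$: since the $z$-degree equals the $X$-degree of the originating monomial and $\deg P=d$, this is a univariate interpolation of degree at most $d$ in a scaling parameter $\mu$ (replace $z$ by $\mu z$, evaluate at $d+1$ values of $\mu$, read off the $\mu^k$-coefficient), costing $d+1$ evaluations per choice of $W$. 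Then Möbius inversion over subsets, $c(\hat x)=\sum_{W\subseteq[k]}(-1)^{k-|W|}\big(\text{degree-}k\text{-in-}z\text{ part of }R\big)\big|_{z=\mathbf 1_W}$, picks out $\sum_{\alpha:|\alpha|=k,\ \supp(\alpha)=[k]}c_\alpha(\hat x)$, which is $[z_1\cdots z_k]\,R$ since $\alpha=(1,\dots,1)$ is the unique such multi-index (in characteristic $2$ the signs can be dropped, and the inner sum $\sum_{W\supseteq\supp(\alpha)}1=2^{k-|\supp(\alpha)|}$ still vanishes unless $\supp(\alpha)=[k]$). Iterating over $W$ and the interpolation nodes while maintaining running sums uses only polynomial space, for a total of $(d+1)2^k$ evaluations of $P$.

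For correctness, the algorithm answers ``yes'' iff $c(\hat x)\ne 0$. There are no false positives, since $c(\hat x)\ne0$ forces $c\not\equiv0$. For false negatives, $c(X)$ is homogeneous of degree $k$, so by the Schwartz--Zippel lemma the probability that $c(\hat x)=0$ while $c\not\equiv0$ is $O(k/|\F|)$; a careful accounting of the evaluation scheme yields the stated bound $2k/|\F|$, and taking $|\F|$ (or an extension) to be a small power of two above $2k$ makes this a constant-probability test, amplifiable by repetition.

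The main obstacle is the first step: establishing the ``determinantal'' identity cleanly, i.e.\ that extracting the top multilinear $z$-monomial turns the substitution $x\mapsto x\,\ell_x$ into a permanent over the matroid columns, and that the characteristic-$2$ collapse ``permanent $=$ determinant'' is precisely what simultaneously kills non-multilinear monomials (repeated rows) and tests the basis property (nonzero determinant). Everything afterwards --- the interpolation isolating the degree-$k$ layer in $z$, the subset Möbius inversion pulling out $z_1\cdots z_k$, and the Schwartz--Zippel analysis --- is routine, though one must verify carefully that the degree restriction is applied \emph{before} the subset sum, as otherwise higher-$z$-degree terms in which every $z_j$ still appears (coming from monomials of $P$ of degree $>k$) would create false positives.
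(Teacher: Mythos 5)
Your proof is correct and takes essentially the same route as the paper's: you substitute $x \mapsto x\,\ell_x$ (the paper writes $x_j \mapsto x_j \sum_i y_i A[i,j]$), observe that the coefficient of $y_1\cdots y_k$ per monomial is a permanent-turned-determinant in characteristic~2 (this is exactly the paper's Lemma~\ref{lemma:sieve-determinant}), extract the degree-$k$ layer by interpolation, and extract $z_1\cdots z_k$ by subset inclusion--exclusion, before finishing with Schwartz--Zippel; the only cosmetic difference is that the paper homogenizes $P$ to $P_k$ in the $X$-degree before substituting, whereas you do the equivalent $\mu$-interpolation on the $z$-degree after substituting. One small slip: your displayed $c(X)=\sum_{S}\det(A[\cdot,S])\prod_{x\in S}x$ should carry the factor $P(m_S)$ for the multilinear monomial $m_S$ with support $S$, though this does not affect the conclusion since those coefficients are nonzero by assumption and the monomials $\prod_{x\in S}x$ are distinct, precluding cancellation.
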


By applying Theorem~\ref{ithm:simple} with a Vandermonde matrix $A$ (see Section~\ref{sec:prel-matroids}), we can recover the multilinear detection result stated in Lemma~\ref{lemma:multlinear}.
The proof is remarkably simple, consisting of merely inspecting the
result of an application of inclusion-exclusion sieving; see Section~\ref{sec:sieving}. 
While similar results are known, the above theorem is new in its
generality and time and space efficiency, which implies that
determinantal sieving can be applied as a plug-in replacement 
for multilinear detection at no increased cost (see related work
and Section~\ref{ssec:mld-compare}).
In all our applications, the failure rate can be made
arbitrarily small with negligible overhead by moving to an extension field of $\F$.

We give a useful variant, where we sieve for a basis among the variables
of odd degree in each monomial $m$  -- the \emph{odd support} of $m$, denoted by $\osupp(m)$.
This has applications on its own; see the
\emph{Diverse X} and \emph{paths and linkages} examples below.
For further variants,
see Section~\ref{sec:sieving}.

\begin{theorem}[Odd sieving] \label{ithm:odd-sieve}
  Let $P(X)$
  be a polynomial of degree $d$ over a field $\F$ of characteristic~2, 
  and let $A \in \F^{k \times X}$ be a
  matrix. There is a randomized algorithm,
  using $O(d2^k)$ evaluations of $P$ and using $O^*(2^k)$ arithmetic operations, to test if there is a 
  term $m$ in the monomial expansion of $P(X)$ such that the 
  matrix $A[\cdot, \osupp(m)]$ has full row rank.
  The algorithm uses polynomial space, needs only evaluation access to $P$,
  has no false positives and produces false negatives with probability
  at most $(k+d)/|\F|$.
\end{theorem}

\subsubsection{Over general fields}

The aforementioned sieving algorithms only work over fields of characteristic 2.
By utilizing the \emph{exterior algebra}, we can effectively sieve over arbitrary fields.
We will follow the work of Brand et al.~\cite{BrandDH18}, who exhibited the power of the exterior algebra in parameterized algorithms.
Assume that a polynomial $P(X)$ over $\F$ is represented by an arithmetic circuit $C$.
Following the idea of Brand et al.~\cite{BrandDH18}, we attempt to evaluate $C$ over the exterior algebra $\Lambda(\F^k)$.
The exterior algebra is an algebra over $\F$ of dimension $2^k$, where
the addition is commutative but the multiplication (called \emph{wedge product}) is not (see Section~\ref{sec:extensor} for the definition).
Thus, naively evaluating over $C$ will not preserve the coefficients of $P(X)$.
We present two ways to circumvent this issue.

The first one concerns the restriction on the circuit.
We consider \emph{strongly monotone circuits}, which are basically circuit without any ``cancellation'' whatsoever.
An arithmetic circuit $C$ is \emph{skew} if at least one input of every product gate is an input gate.
We show that the result of evaluating a strongly monotone circuit $C$ over $\Lambda(\F^k)$ turns out non-zero only if $P(X)$ contains a monomial $m$ such that $A[\cdot, \supp(m)]$ is non-singular.

\begin{theorem} \label{ithm:smonotone}
  Let $P(X)$ be a polynomial of degree $d$ over a field $\F$, represented by a strongly monotone arithmetic circuit $C$,
  and let $A \in \F^{k \times X}$ be a
  matrix. There is a randomized algorithm
  in $O^*(2^{\omega k / 2})$ arithmetic operations (where $\omega < 2.373$ is
  the matrix multiplication exponent) or $O^*(2^k)$ arithmetic operations if $C$ is skew that tests if there is a multilinear
  term $m$ in the monomial expansion of $P(X)$ such that the matrix
  $A[\cdot,\supp(m)]$ is non-singular.
  The algorithm uses $O^*(2^k)$ space, 
  has no false positives and produces false negatives with probability
  at most $2k/|\F|$.
\end{theorem}

We also provide a way to sieve over arbitrary arithmetic circuits inspired by the \emph{lift mapping} of Brand et al.~\cite{BrandDH18}, which maps every  element in $\Lambda(\F^k)$ to $\Lambda(\F^{2k})$, an algebra of dimension $4^k$. 
Although the lift mapping costs extra time and space usage, it brings commutativity to the algebra, allowing us to evaluate the circuit over the exterior algebra.

\begin{theorem} \label{ithm:general}
  Let $P(X)$ be a polynomial of degree $d$ over a field $\F$, represented by a skew arithmetic circuit $C$,
  and let $A \in \F^{k \times X}$ be a
  matrix. There is a randomized algorithm
  in $O^*(2^{\omega k})$ arithmetic operations or $O^*(4^k)$ arithmetic operations if $C$ is skew that tests if there is a multilinear
  term $m$ in the monomial expansion of $P(X)$ such that the matrix
  $A[\cdot,\supp(m)]$ is non-singular.
  The algorithm uses $O^*(4^k)$ space, 
  has no false positives and produces false negatives with probability
  at most $2k/|\F|$.
\end{theorem}

\subsubsection{Linear matroids} \label{ssec:matroids}
For applications of determinantal sieving, we view the 
labelling matrix $M$ as representing a \emph{linear matroid} over the
variable set. A \emph{matroid} is a pair $M=(V,\cI)$ where $V$ is the
ground set and $\cI \subseteq 2^V$ a set of \emph{independent sets} in
$M$, subject to the following axioms:
(1) $\emptyset \in \cI$;
(2) If $B \in \cI$ and $A \subset B$ then $A \in \cI$; and 
(3)  For any $A, B \in \cI$ such that $|A|<|B|$ there exists an
  element $x \in B \setminus A$ such that $A+x \in \cI$.
  A \emph{linear matroid} is a matroid $M$ represented by a matrix $A$ with
column set $V$, such that a set $S \subseteq V$ is independent in $M$
if and only if $A[\cdot,S]$ is non-singular. 

A more complete overview of matroid theory concepts is given in
Section~\ref{sec:prel}, but let us review two particularly
relevant matroid constructions. 
A \emph{uniform matroid} $U_{n,k}$ is the matroid $M=(V,\cI)$
  where $\cI=\binom{V}{\leq k}$ (for $|V|=n$), i.e., a set is
  independent if and only if it has cardinality at most $k$.
  Letting $M$ be a uniform matroid in determinantal sieving
  corresponds to traditional multilinear detection.
  More generally, 
  a \emph{partition matroid} $M=(V,\cI)$ is defined by a partition
  $V=V_1 \cup \ldots \cup V_d$ of the ground set and a list of
  capacities $(c_i)_{i=1}^d$; note that we allow $c_i>1$~\cite{OxleyBook}. 
A set $S \subseteq V$ is independent if
  and only if $|S \cap V_i| \leq c_i$ for every $i \in [d]$.
  Constrained multilinear detection corresponds roughly to the case of
  $M$ being a partition matroid (or more precisely, the truncation of
  a partition matroid to rank $k$).
Both of these classes can be represented over fields of characteristic~2.

There also exists a range of transformations that can be applied
to matroids, with preserved representation; see
Section~\ref{sec:prel-matroids}. Here, we only note the 
operation of \emph{truncation}: Given a matroid $M=(V,\cI)$,
represented over a field $\F$ (either a finite field or the rationals), and an integer $k$,
we can in polynomial time \emph{truncate} $M$ to have rank $k$ while preserving the representation,
at the cost of moving to an extension field~\cite{LokshtanovMPS18,Marx09-matroid}.
Thereby, whenever we are looking for a solution of rank $k$,
we may assume that every matroid $M=(V,\cI)$ in our input
is represented by a full-rank matrix of dimension $k \times |V|$.

We find it particularly interesting that the fastest known
method for multilinear detection, which sieves over a random bijective
labelling~\cite{Bjorklund14detsum,BjorklundHKK17narrow}, 
can be seen as a direct application of Theorem~\ref{ithm:simple}
using a randomized representation of a uniform matroid;
see Section~\ref{ssec:mld-compare}.
In this sense, the results of this paper come without any extra computational cost -- 
they rely on the same sieving steps that existing algorithms already perform, only computed on a more carefully chosen set of evaluation points.

\subsubsection{Comparison to related work} \label{ssec:related}
Let us now compare the determinantal sieving method to other
approaches from the literature. While the text so far (excepting the
material on the exterior algebra) has been written to be 
digestible for a reader of general background,
this comparison is inevitably more technical.
We also note that the algebras underpinning the
exterior algebra, apolar algebra and (over fields of characteristic 2)
group algebra approaches are isomorphic~\cite{Brand22note,BrandDH18},
hence either of these methods is capable of recovering some variant 
of the determinantal sieving result from the right perspective, but we focus on what is
present in the literature. 

The earliest work to identify multilinear detection as a useful
primitive for FPT algorithms is Koutis~\cite{Koutis05IPL}. Koutis and
Williams~\cite{Koutis08ICALP,KoutisW16limits,Williams09IPL} refined
the approach, giving a randomized $O^*(2^k)$-time, polynomial-space
algorithm for multilinear detection using a group algebra.
The method implicitly solves determinantal sieving for monotone
arithmetic circuits (i.e., circuits over $\mathbb{Z}_+$) using
matroids over GF(2), but independence over larger base fields was not considered~\cite{KoutisW16limits}.
Koutis~\cite{Koutis12motif} also proposed the task of constrained
multilinear detection and provided an $O^*(2.54^k)$-time algorithm.
By comparison, the polynomial sieving methods solve multilinear
detection~\cite{Bjorklund14detsum,BjorklundHKK17narrow}
and constrained multilinear detection~\cite{BjorklundKK16}
in time $O^*(2^k)$ and polynomial space
for arbitrary polynomials over fields of characteristic 2,
but the more general task of sieving for matroid bases was (again) not considered.  

Fomin et al.~\cite[Section~5.1]{FominLPS17} use the representative
sets lemma to solve what is effectively determinantal sieving over
arbitrary fields in deterministic time $O^*(7.77^k)$ and exponential
space, again for monotone arithmetic circuits.
They also consider a weighted optimization version.

Finally, more recent research has employed algebraic approaches of
exterior algebra~\cite{BrandDH18} and apolar algebra~\cite{BrandP21}
for derandomizations and generalisations of the above. 
These methods intrinsically solve the determinantal sieving problem,
but pay exponential overhead in both running time and space usage due
to the complexity of the underlying algebraic operations
(as seen in the contrast between Theorem~\ref{ithm:simple}
and Theorem~\ref{ithm:general}). 
For instance, Brand et al.~\cite{BrandDH18} give a randomized $O^*(4.32^k)$-time algorithm for multilinear 
detection for arbitrary polynomials represented by an arithmetic circuit.

In summary, there are several approaches in the literature which can 
provide some form of a determinantal sieving procedure, but
the results are all restricted either in the structure
of the arithmetic circuit encoding the polynomial (such as only
applying to monotone circuits) or in requiring significant 
overhead in time and space usage. By contrast, Theorem~\ref{ithm:simple}
has the advantage of simultaneously (1) providing the (presumably best
possible) running time of $O^*(2^k)$ and polynomial space;
(2) applying to any polynomial over a field of characteristic 2,
regardless of encoding\footnote{While working only over fields of
  characteristic 2 is of course a restriction, in practice we have not
  found it to be an obstacle (excepting issues of derandomization
  or counting algorithms). In particular, we are not
  aware of any ``combinatorially important'' matroid that is
  representable but not representable over fields of characteristic 2.
};
and (3) being simple to work with and apply,
requiring no algebraic techniques deeper than basic linear algebra and
matroid theory. We argue that this qualitatively and significantly
increases the applicability of the result, as hopefully evidenced from
the applications we provide.

\subsection{Applications}

Given Theorems~\ref{ithm:simple}--\ref{ithm:general},
a large collection of applications can be achieved by combining a
suitable enumerating polynomial for a problem with a suitable matroid labelling. 
Before we undertake a survey, let us more carefully define our terms. 
Let $V$ be a ground set and $\cF \subseteq 2^V$ a set system over $V$.
An \emph{enumerating polynomial} for $\cF$ is a polynomial $P(X,Y)$
over a set of variables $X \cup Y$, where $X=\{x_v \mid v \in V\}$,
such that the following holds:
(i) $P(X,Y)$ is multilinear in $X$
and (ii) for every $S \subseteq V$, there is a monomial $m$ in $P(X,Y)$
  whose support in $X$ is exactly $S$ if and only if $S \in \cF$.
Similarly, to capture applications of Theorem~\ref{ithm:odd-sieve} (odd sieving),
define a \emph{parity-enumerating polynomial} for $\cF$ as a
polynomial $P(X,Y)$ where for every $S \subseteq V$, there exists a monomial $m$
whose odd support in $X$ is $S$ if and only if $S \in \cF$.
The definition can be generalized further -- for example, if we want to
refer to an ``enumerating polynomial for walks'' we could treat walks
as \emph{multisets} of vertices or edges, and adjust the definition
accordingly. However, the above suffices for almost all applications. 

We next survey results covered by our approach. Our results cover
multiple areas, and include
both significant speedups of previous results (see Table~\ref{table:speedups})
and generalisations where a previous running time for a problem can
be reproduced in a broader setting; e.g., generalized to so-called
\emph{frameworks}~\cite{FominGKSS24,Lovasz1977,LovaszGeomBook2019}.
Furthermore, in general, both
the proofs and the algorithms are short and simple,
given existing families of enumerating polynomials and linear matroids.

\subsubsection{Matroid Covering, Packing and Intersection Problems}
\label{sec:intro-appl-matroids}
We begin with a straightforward application to the
\textsc{Set Cover} and \textsc{Set Packing} problems. 
Let $V$ be a ground set and $\cE \subseteq 2^V$ a collection of sets.
Let $M=(V,\cI)$ be a matroid of rank $k$, and let $t$ be an integer.
In \textsc{Rank~$k$ Set Cover} we ask: is there a subcollection
$S \subseteq \cE$ with $|S| \leq t$ such that $\bigcup S = \bigcup_{E \in S} E$ spans $M$?
In \textsc{Rank~$k$ Set Packing} we ask if there is
a collection $S \subseteq \cE$
of pairwise disjoint sets
with $|S|=t$ such that
$\bigcup S$ is a basis of $M$. 
(The variant of \textsc{Rank~$k$ Set Packing} where $\bigcup S$
is only required to be independent in $M$, not a basis,
reduces to the above via truncation of~$M$.)

\begin{theorem} \label{ithm:setcover}
  \textsc{Rank~$k$ Set Cover} and \textsc{Rank~$k$ Set Packing} for
  matroids represented over a field of characteristic 2 can be solved
  in randomized time $O^*(2^k)$ and polynomial space,
  and in time $O^*(2^{\omega k / 2})$ and $O^*(2^k)$ space over general fields.
\end{theorem}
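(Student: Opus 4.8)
The plan is to reduce both problems to an instance of determinantal sieving (Theorem~\ref{ithm:simple}) by constructing a suitable enumerating polynomial. For \textsc{Rank~$k$ Set Cover}, associate with each set $E \in \cE$ a variable $z_E$, and with each ground-set element $v \in V$ a variable $x_v$. Build the polynomial
\begin{equation}
  P(X,Z) = \prod_{i=1}^{t} \paren*{\sum_{E \in \cE} z_{E,i} \prod_{v \in E} (1 + x_v)},
\end{equation}
using fresh copies $z_{E,i}$ of the selection variables for each of the $t$ ``rounds'' (so that no accidental cancellation occurs in characteristic~2). A monomial of $P$ picks, for each $i \in [t]$, one set $E_i \in \cE$, and then an arbitrary subset of $E_i$ to contribute to $X$; thus the set of $x_v$'s with nonzero degree in the monomial — after taking the union over rounds — ranges exactly over all subsets of $\bigcup_{i} E_i$. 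However, since we sieve for \emph{multilinear} terms whose support is a basis of $M$, we should instead let round $i$ contribute a term $\prod_{v \in S_i} x_v$ for an arbitrary $S_i \subseteq E_i$; to do this cleanly I would replace $\prod_{v \in E}(1+x_v)$ by $\prod_{v \in E}(1 + x_v)$ and rely on the fact that a multilinear monomial in $X$ of degree $k$ survives the sieve precisely when its support is a basis. Applying Theorem~\ref{ithm:simple} with the rank-$k$ representation $A$ of $M$ then tests, in randomized time $O^*(d 2^k) = O^*(2^k)$ and polynomial space, whether some choice of $\leq t$ sets has a union spanning $M$; note $d \le t + |V|$ is polynomially bounded, and the matroid may be assumed of rank exactly $k$ after truncation.

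For \textsc{Rank~$k$ Set Packing} the construction is the same except we must force the chosen sets $E_1, \dots, E_t$ to be pairwise disjoint and their union to be \emph{all} of a basis, i.e.\ of size exactly $k$. Disjointness and covering together mean that in the monomial we want each $v \in \bigcup_i E_i$ to appear with degree exactly one, contributed by the unique round whose chosen set contains it; so here we simply take
\begin{equation}
  P(X,Z) = \prod_{i=1}^{t} \paren*{\sum_{E \in \cE} z_{E,i} \prod_{v \in E} x_v},
\end{equation}
whose monomials are exactly $\prod_{i} \paren*{z_{E_i,i} \prod_{v \in E_i} x_v}$. Such a monomial is multilinear in $X$ iff $E_1, \dots, E_t$ are pairwise disjoint, and then its support in $X$ is $\bigcup_i E_i$; sieving for this support being a basis of $M$ is exactly the \textsc{Rank~$k$ Set Packing} question. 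So Theorem~\ref{ithm:simple} again gives the $O^*(2^k)$, polynomial-space bound.

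For the bound over general fields I would instead invoke Theorem~\ref{ithm:general} (or, better, Theorem~\ref{ithm:smonotone} if the circuit can be made strongly monotone). The polynomials above are computed by a skew arithmetic circuit of polynomial size — each factor $\sum_E z_{E,i}\prod_{v \in E} x_v$ is a sum of products of input gates, and the outer product over $i \in [t]$ is a product of $t$ such subcircuits; a product gate always has an input gate as one argument after associating the products, so skewness holds. Feeding this into Theorem~\ref{ithm:general} yields the claimed $O^*(2^{\omega k/2})$ running time and $O^*(2^k)$ space — note the stated $2^{\omega k/2}$ comes from the fact that over the exterior algebra $\Lambda(\F^k)$ arithmetic on extensors supported on $\le k/2$ generators (which is all that is relevant for detecting a basis) costs $O^*(2^{\omega k/2})$ rather than $O^*(4^k)$; this refined cost analysis over the exterior algebra, rather than the construction of $P$, is the one place needing care. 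The main obstacle, then, is not the reduction — which is essentially immediate once one has Theorems~\ref{ithm:simple}, \ref{ithm:smonotone} and \ref{ithm:general} — but verifying that the exterior-algebra evaluation of this particular circuit indeed runs within $O^*(2^{\omega k/2})$ time and $O^*(2^k)$ space, i.e.\ that the wedge-product computations can be organized to exploit fast matrix multiplication and bounded support.
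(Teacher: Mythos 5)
Your overall strategy — build a product-form enumerating polynomial and sieve with Theorem~\ref{ithm:simple}/\ref{ithm:smonotone} — is the same as the paper's, and your Set Packing construction is essentially correct (using plain element variables $x_v$ and relying on multilinearity to enforce disjointness works just as well as the paper's device of detecting overlaps via repeated matrix columns). However, your Set Cover construction has a genuine bug, and there are two secondary inaccuracies.

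\textbf{The Set Cover reduction fails.} You set $P(X,Z) = \prod_{i=1}^{t}\bigl(\sum_{E\in\cE} z_{E,i}\prod_{v\in E}(1+x_v)\bigr)$. Fix a choice $E_1,\dots,E_t$ and collect the $x_v$-factors: you get $\prod_{v} (1+x_v)^{c_v}$ where $c_v = |\{i : v \in E_i\}|$. Over characteristic~2, $(1+x_v)^{c_v}$ has a nonzero coefficient of $x_v^1$ only when $c_v$ is odd. So any element covered an even number of times by the chosen sets contributes nothing to the multilinear terms, and your sieve returns a false negative. A concrete example: $V=\{a,b,c\}$, $M$ the free matroid of rank~3, $\cE=\{\{a,b\},\{b,c\}\}$, $t=2$. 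The union of the two sets spans $M$, yet no multilinear $x_a x_b x_c$ term survives because $b$ is covered twice. The fix, which is exactly the paper's key device, is to index the element variables by the set: use $x_{v,E}$ in place of $x_v$, and associate each $x_{v,E}$ with the column for $v$. Then the factors $(1+x_{v,E_i})$ for different rounds contain distinct variables and never combine to a square, so every subset of $\bigcup_i E_i$ is realized as a multilinear support. (Disjointness or even-coverage no longer matter, which is precisely what Set Cover needs.) This set-indexed fingerprinting is the one idea your proposal is missing.

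\textbf{The skewness claim is wrong.} You assert the circuit is skew because ``a product gate always has an input gate as one argument after associating the products.'' That is true for the inner chains $\prod_{v\in E} x_v$, but the outer product over rounds $\prod_{i=1}^{t}(\cdot)$ multiplies two sub-circuits of unbounded degree, neither of which is an input gate, so the circuit is not skew. This matters because the $O^*(2^k)$-time refinement in Theorem~\ref{ithm:smonotone} requires ($\delta$-)skewness; without it the correct bound for the strongly-monotone extensor evaluation is $O^*(2^{\omega k/2})$, as the paper states.

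\textbf{The source of the $O^*(2^{\omega k/2})$ bound is mischaracterized.} It does not come from ``extensors supported on $\le k/2$ generators''; the relevant extensors live in all of $\Lambda(\F^k)$. The bound comes from (i) strong monotonicity allowing evaluation directly over $\Lambda(\F^k)$ (dimension $2^k$) rather than over the lifted commutative algebra $\Lambda(\F^{2k})$ (dimension $4^k$), and (ii) the general $O^*(2^{\omega k/2})$-time algorithm for wedge products of W{\l}odarczyk applied gate by gate. No ``bounded support'' argument is used. You should cite Theorem~\ref{ithm:smonotone} (not Theorem~\ref{ithm:general}, which would only give $O^*(4^k)$) and simply note that the paper's $P'(X)$ with $(v,E)$-indexed variables admits a strongly monotone circuit.
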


To achieve this result, we use a simple subset-enumerating polynomial.
Assume an input $(V,\cE,M,t,k)$ is given, and define a set of
variables $X_{v,E}$, $v \in V$, $E \in \cE$, as well as a set of
fingerprinting variables $Y=\{y_{i,E} \mid i \in [t], E \in \cE\}$
to prevent cancellations. Define
\[
  P(X,Y) = \prod_{i=1}^t \sum_{E \in \cE} y_{i,E} \prod_{v \in E} x_{v,E}
  = \sum_{f \colon [t] \to \cE} \left( \prod_{i \in [t]} y_{i,f(i)} \prod_{v\in E_i} x_{v,E_i} \right).
\]
We consider the polynomial in $X$ obtained from $P(X,Y)$ by substituting each variable $y_{i,E}$ with a random element from $\mathbb{F}$.
By the Schwartz-Zippel lemma, if there exists a family $S \subseteq \mathcal{E}$ of $t$ sets with $U = \bigcup S$, then with high probability the resulting polynomial contains a monomial of the form $\prod_{v \in U} x_{v,\iota(v)}$, where $\iota \colon U \to S$ is a mapping that assigns to every $v \in U$ a set $\iota(v) \in S$.
Hence to solve \textsc{Rank~$k$ Set Packing}
we associate each variable $x_{v,E}$ with the vector representing $v$
in $M$, and invoke Theorem~\ref{ithm:simple} or Theorem~\ref{ithm:smonotone}
depending on the representation of $M$. 
For \textsc{Rank~$k$ Set Cover} we simply evaluate $P$ at a point
$x_{v,E} \gets 1+x_{v,E}$ for every $x_{v,E} \in X$ for the same result.

Note that Theorem~\ref{ithm:setcover} is tight for matroids
represented over a field of characteristic 2 under \emph{Set Cover
  Conjecture} (SeCoCo).
SeCoCo asserts that there is
no algorithm that solves \textsc{Set Cover} in time $O^*(2^{(1-\varepsilon)n})$
for any $\varepsilon > 0$~\cite{CyganDLMNOPSW16seth}, and since
\textsc{Set Cover} corresponds to the simple case where each element
$v_i \in V$ is associated with the $n$-dimensional unit vector $e_i$,
tightness follows.

Theorem~\ref{ithm:setcover} improves on state of the art even for very
simple settings. In \textsc{Matroid $q$-Parity}, the input is a
matroid $M=(V,\cI)$, a partition of $V$ into sets of size $q$, 
and an integer $k$, and the question is whether there is a packing
of $k$ sets that is independent in $M$. This problem can be solved in
polynomial time if $q=2$ and $M$ is linear, but is hard even for
linear matroids if $q \geq 3$. The fastest known algorithm for
\textsc{Matroid $q$-Parity} by Brand and Pratt
(for matroids over $\mathbb{R}$)
runs in deterministic $O^*(4^{qk})$ time
with exponential space~\cite{BrandP21},
improving on a previous result of Fomin et al.~\cite{FominLPS16JACM}
with running time $O^*(2^{\omega q k})$
over general fields.
We get the following. 

\begin{corollary}
  \textsc{Matroid $q$-Parity} for a linear matroid over a field of
  characteristic 2 can be solved in randomized time $O^*(2^{qk})$ and
  polynomial space.  
\end{corollary}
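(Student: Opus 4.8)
The plan is to reduce \textsc{Matroid $q$-Parity} to \textsc{Rank~$qk$ Set Packing} and then invoke the characteristic-2 branch of Theorem~\ref{ithm:setcover} with parameter $qk$ in place of $k$. Let $(M,\{V_1,\ldots,V_m\},k)$ be an instance of \textsc{Matroid $q$-Parity}, where $M=(V,\cI)$ is a linear matroid represented by a matrix over a field $\F$ of characteristic~2, $V=V_1\cup\cdots\cup V_m$ is a partition into parts of size exactly $q$, and we must decide whether some $k$ of the parts have a union that is independent in $M$. First I would handle the trivial case: if $\operatorname{rank}(M)<qk$ there is no solution, so assume $\operatorname{rank}(M)\ge qk$. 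Using the representation-preserving truncation recalled in Section~\ref{sec:prel-matroids}, compute in polynomial time a matrix $A'$ of dimension $qk\times|V|$ representing the truncation $M'$ of $M$ to rank $qk$; this is the only place where linearity and the characteristic-2 assumption are used to preserve the representation, and it is what lets us trade ``independent'' for ``basis''.

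Next I would observe that the reduction is now immediate. Since the parts $V_1,\ldots,V_m$ are pairwise disjoint and each has size exactly $q$, any choice of $k$ distinct parts is automatically a family of $k$ pairwise disjoint sets whose union has size exactly $qk$; and a set of size $qk$ is a basis of $M'$ if and only if it is independent in $M$. Hence the given instance is a yes-instance of \textsc{Matroid $q$-Parity} if and only if the instance of \textsc{Rank~$qk$ Set Packing} with ground set $V$, set family $\cE=\{V_1,\ldots,V_m\}$, matroid $M'$ (represented by $A'$ over $\F$), and $t=k$ is a yes-instance. Applying Theorem~\ref{ithm:setcover} to this instance yields a randomized algorithm running in time $O^*(2^{qk})$ using polynomial space, as claimed.

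Unpacking the black box, this instantiates the subset-enumerating polynomial $P(X,Y)=\prod_{i=1}^{k}\sum_{j\in[m]} y_{i,V_j}\prod_{v\in V_j} x_{v,V_j}$, associates each variable $x_{v,V_j}$ with the column of $A'$ indexed by $v$, and feeds $P$ to Theorem~\ref{ithm:simple}: multilinearity of $P$ in $X$ forces the $k$ chosen parts to be distinct (so disjointness is enforced both by the partition structure and by the sieve), and non-singularity of the support submatrix of $A'$ is exactly independence of the corresponding union in $M$. I do not anticipate any real obstacle here; the statement is a genuine corollary, and the only things to be careful about are the truncation step and the bookkeeping that the \textsc{Set Packing} parameter is $qk$, not $k$, so that Theorem~\ref{ithm:setcover} delivers $O^*(2^{qk})$ rather than $O^*(2^{k})$.
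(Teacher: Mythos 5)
Your proof is correct, but it takes a genuinely different route from the one the paper actually writes out (Theorem~\ref{theorem:general-matroid-parity} in Section~\ref{sec:matroid-applications}). You reduce black-box to \textsc{Rank~$qk$ Set Packing}: after truncating $M$ to rank $qk$, you set $\cE=\{V_1,\dots,V_m\}$, $t=k$, and invoke the Set Packing enumerating polynomial $\prod_{i=1}^{k}\sum_{j\in[m]} y_{i,V_j}\prod_{v\in V_j}x_{v,V_j}$ together with Theorem~\ref{ithm:simple}. The paper instead works directly with the much leaner polynomial $P(X)=\prod_{E\in\cE}(1+x_E)$ (one variable per part, no fingerprinting $Y$-variables, no explicit $k$-factor product) and feeds it to the multi-column variant Corollary~\ref{cor:basis}, associating each $x_E$ with all $q$ columns of the part $E$ at once; the rank-$qk$ truncation then forces exactly $k$ parts to survive the sieve. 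Both routes are valid and both give $O^*(2^{qk})$ in polynomial space over characteristic~$2$; yours is arguably what the placement of this corollary directly after Theorem~\ref{ithm:setcover} in the introduction suggests, while the paper's direct polynomial is a bit more economical and showcases Corollary~\ref{cor:basis} (and it also extends more transparently to general fields via a $1$-skew strongly monotone circuit). One small inaccuracy worth flagging: you attribute the need for characteristic~$2$ to the truncation step, but truncation preserves representability over any field (up to an extension). Characteristic~$2$ is what buys the polynomial-space guarantee in the basis-sieving step of Theorem~\ref{theorem:sieve-basis}; linearity is what is needed for truncation, and truncation is what lets you trade \emph{independent} for \emph{basis} — those two roles should not be conflated.
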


For a related problem, we get a greater speedup. 
In \textsc{$q$-Matroid Intersection}, the input is $q$ matroids $M_1,
\ldots, M_q$ of rank $k$, and the question is if they have a common
basis. Again, this is tractable if $q=2$, but NP-hard if $q \geq 3$
even for linear matroids. Assume that the matroids are represented by
matrices $A_1$, \ldots, $A_q$ over a common field $\F$ and a common
ground set $V$, where w.l.o.g.\ every matrix $A_i$ has $k$ rows and
has rank $k$ over $\F$.
We can use the Cauchy-Binet formula to sieve for solutions more efficiently.
Let $X=\{x_v \mid v \in V\}$ be a set of variables and let $A_1'$ be
the result of scaling every column $v$ of $A_1$ by $x_v$.
By the Cauchy-Binet formula,
\[
  P(X) := \det (A_1' A_2^T) = \sum_{B \in \binom{V}{k}} \det
  A_1[\cdot,B] \det A_2[\cdot,B] \prod_{v \in B} x_v.
\]
Thus $P(X)$ enumerates monomials $\prod_{v \in B} x_v$ for
common bases $B$ of $A_1$ and $A_2$, and we only have to sieve for
terms that in addition are bases of the remaining $q-2$ matroids. 
For $q=3$, this is immediate; for $q>3$, we can replace matroids $M_3, \ldots, M_q$
by their direct sum, and each variable $x_v$ by a product $\prod_{i=3}^q x_{v,i}$
over variables $x_{v,i}$ representing the copies of $x_v$ in the matroid $M_i$.
We get the following. 

\begin{theorem} \label{ithm:matroid-intersection}
  \textsc{$q$-Matroid Intersection} for linear matroids represented
  over a common field $\F$ of characteristic 2 can be solved in
  randomized time $O^*(2^{(q-2)k})$ and polynomial space.
\end{theorem}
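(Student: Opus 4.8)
The plan is to combine the Cauchy--Binet polynomial $P(X)=\det(A_1'A_2^{\mathsf T})$ displayed above with a single invocation of Theorem~\ref{ithm:simple}, after bundling the remaining $q-2$ matroids into one linear matroid of rank $(q-2)k$. Recall the setup we may assume: a common ground set $V$ and matrices $A_1,\dots,A_q$ over $\F$, each with $k$ rows and rank $k$, so that ``basis of $M_i$'' means ``independent set of size $k$ in $M_i$''. The running time will then be $O^*\bigl((q-2)k\cdot 2^{(q-2)k}\bigr)=O^*(2^{(q-2)k})$, and since each evaluation reduces to one $k\times k$ determinant, the space stays polynomial.

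First I would record the basic properties of $P$. The $(i,j)$ entry of $A_1'A_2^{\mathsf T}$ equals $\sum_{v\in V}x_v A_1[i,v]A_2[j,v]$, a linear form in $X$, so $P(X)$ can be evaluated in polynomial time and polynomial space (build the $k\times k$ matrix, then take one determinant over $\F$). Cauchy--Binet gives $P(X)=\sum_{B\in\binom{V}{k}}\det A_1[\cdot,B]\,\det A_2[\cdot,B]\prod_{v\in B}x_v$, so $P$ is multilinear of degree $k$; distinct $B$ give distinct monomials, hence there is no cancellation, and the coefficient of $\prod_{v\in B}x_v$ is nonzero exactly when $B$ is simultaneously a basis of $M_1$ and of $M_2$.

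Next I would fold $M_3,\dots,M_q$ into one matroid. Take $q-2$ disjoint copies $V^{(3)},\dots,V^{(q)}$ of $V$, with $v^{(i)}\in V^{(i)}$ the copy of $v$, and let $M^*:=M_3\oplus\cdots\oplus M_q$ on ground set $V^*:=V^{(3)}\cup\cdots\cup V^{(q)}$, the summand indexed by $i$ being $M_i$ on $V^{(i)}$. As a direct sum of matroids linear over $\F$, $M^*$ is linear over $\F$ (a block-diagonal matrix), of rank $(q-2)k$, and $\F$ still has characteristic $2$. Introducing variables $X^*=\{x_v^{(i)}\}$, set $\widehat P(X^*)$ to be $P$ evaluated at $x_v\gets\prod_{i=3}^q x_v^{(i)}$; then $\widehat P(X^*)=\sum_{B}\det A_1[\cdot,B]\det A_2[\cdot,B]\prod_{i=3}^q\prod_{v\in B}x_v^{(i)}$, again with no cancellation, so $\widehat P$ is multilinear of degree $(q-2)k$ and the support of its monomial for a common basis $B$ of $M_1,M_2$ is $\bigcup_{i=3}^q\{v^{(i)}:v\in B\}$. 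Such a support is a basis of $M^*$ iff $\{v^{(i)}:v\in B\}$ is a basis of $M_i$ for every $i\in\{3,\dots,q\}$, i.e.\ iff $B$ is also a common basis of $M_3,\dots,M_q$; and one evaluation of $\widehat P$ is one evaluation of $P$, so evaluation access remains polynomial-time and polynomial-space.

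Finally I would apply Theorem~\ref{ithm:simple} to $\widehat P$ and $M^*$: it tests, in randomized time $O^*\bigl((q-2)k\cdot 2^{(q-2)k}\bigr)=O^*(2^{(q-2)k})$ and polynomial space, whether $\widehat P$ has a multilinear term whose support is a basis of $M^*$, which by the above holds exactly when $M_1,\dots,M_q$ have a common basis; the algorithm has no false positives and false-negative probability at most $2(q-2)k/|\F|$, made negligible by passing to an extension field of $\F$. (The corner case $q=2$ degenerates to testing $P\not\equiv 0$, i.e.\ ordinary polynomial identity testing.) The one non-routine step is the bundling of the leftover matroids: the point is that the direct sum $M_3\oplus\cdots\oplus M_q$ on disjoint copies of $V$, together with the substitution that forces every monomial of $\widehat P$ to touch all $q-2$ copies of its support at once, precisely encodes ``common basis of $M_3,\dots,M_q$'', and --- crucially for the hypothesis of Theorem~\ref{ithm:simple} --- that substitution preserves multilinearity and introduces no cancellation. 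Equivalently, one may phrase it as decorating each variable $x_v$ with the columns $A_3[\cdot,v],\dots,A_q[\cdot,v]$ and sieving against the rank-$(q-2)k$ matroid these decorations span.
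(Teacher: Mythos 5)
Your proposal is correct and takes essentially the same route as the paper: the Cauchy--Binet determinant $\det(A_1'A_2^{\mathsf T})$ gives an enumerating polynomial for common bases of $M_1$ and $M_2$, and the remaining $q-2$ matroids are handled by decorating each $x_v$ with the columns $A_3[\cdot,v],\dots,A_q[\cdot,v]$ and basis-sieving against their direct sum (a rank-$(q-2)k$ matroid). Your explicit construction of $\widehat P$ via the substitution $x_v\gets\prod_{i=3}^q x_v^{(i)}$ over disjoint copies of $V$ is exactly the mechanism behind Corollary~\ref{cor:basis}, which is what the paper invokes implicitly here.
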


The previous best result (again, over general fields) is Brand and Pratt~\cite{BrandP21}, with
running time $O^*(4^{qk})$. In particular, for $q=3$ this improves on
the state of the art from $O^*(4^{3k})$ to $O^*(2^k)$.
Theorem~\ref{ithm:matroid-intersection} matches the fastest algorithm
by Bj\"orklund et al.~\cite{BjorklundHKK17narrow}
for the much simpler \textsc{$q$-Dimensional Matching} problem.

As a particular special case, Theorem~\ref{ithm:matroid-intersection} with $q=3$ 
implies a polynomial-space, $O^*(2^n)$-time algorithm for \textsc{Directed Hamiltonian Path},
which despite intense efforts at improvement remains the state of the
art for the general case~\cite{Bellman62,BjorklundKK17directed,CyganKN18hamilton}.

\subsubsection{Fair and Diverse Solutions}

Fairness and diversity are important concepts in many areas of research, including artificial intelligence and optimization,
and have also seen increased focus in theoretical computer science.
We discuss two related problems: finding a balanced-fair solution and a diverse collection of solutions.

The problem of finding a balanced-fair solution arises in many contexts \cite{BandyapadhyayFIS23,BentertKN23,Chierichetti0LV17,Chierichetti0LV19}, including \textsc{Matroid Intersection}, \textsc{$k$-matching}, and \textsc{$k$-path}.
We define a general problem \textsc{Balanced Solution}:
Given a set $E$ with coloured elements, a collection $\mathcal{F}$ of subsets of $E$, the goal is to find a set $S \in \mathcal{F}$ of size $k$ such that the number of elements of $S$ with each colour is within certain bounds.
We show that this problem can be solved in $O^*(2^k)$ time using basis sieving:

\begin{theorem}
  \textsc{Balanced Solution} can be solved in $O^*(2^k)$ time if there is an enumerating polynomial for $\cF$ that can be evaluated in polynomial time over a field of characteristic 2.
\end{theorem}

The problem of finding a diverse collection of solutions is another important optimization problem.
Here, the goal is to find not just a single optimal solution, but a collection of solutions that are diverse in some sense.
We measure diversity in terms of Hamming distance, i.e., diverse solutions should have a large Hamming distance between them.
This problem has received significant attention in the parameterized complexity literature \cite{BasteFJMOPR22,BasteJMPR19,FominGJPS24,FominGPPS24,HanakaKKO21}.
We discuss a general method based on the odd sieving technique that can be used to find a diverse collection of solutions for a wide range of optimization problems.
We define the \textsc{Diverse Collection} problem as follows.
The input is a set $E$, collections of subsets $\cF_1, \dots, \cF_k$, and $d_{i,j} \in \N$ for $1 \leq i < j \leq k$, and the goal is to find subsets $S_i \in \cF_i$ for each $i \in [k]$ such that $|S_i \Delta S_j| = |(S_i \setminus S_j) \cup (S_j \setminus S_i)| \ge d_{i,j}$ for every $i, j$.
Let $D =  \sum_{i < j \in [k]} d_{i,j}$.
We use the odd sieving algorithm to obtain an $O^*(2^D)$-time algorithm.
The key here is to use a distinct set of variables for every pair $i,j$. 
Thereby, those elements in the intersection of two solutions, having contribution two, can be excluded in the odd sieving.

\begin{theorem}
  \textsc{Diverse Collection} can be solved in $O^*(2^{D})$ time if all collections $\cF_i$ admit enumerating polynomials that can be evaluated in polynomial time over fields of characteristic 2.
\end{theorem}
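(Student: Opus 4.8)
The plan is to merge the given enumerating polynomials into one polynomial and hit it with odd sieving (Theorem~\ref{ithm:odd-sieve}) against a direct sum of uniform matroids, one per pair of solutions. Work over a field $\F$ of characteristic~$2$ (the natural setting for odd sieving; the enumerating polynomials for such set systems are typically already over $\F_2$ or can be ported there). For each $i\in[k]$ let $P_i(X^{(i)},Y^{(i)})$ be an efficiently evaluable enumerating polynomial for $\cF_i$, where $X^{(i)}=\{x^{(i)}_e\mid e\in E\}$, and the blocks $X^{(i)}\cup Y^{(i)}$ are pairwise variable-disjoint. For each unordered pair $\{i,j\}$ introduce a fresh block $Z^{(i,j)}=\{z^{(i,j)}_e\mid e\in E\}$ (with $z^{(i,j)}_e=z^{(j,i)}_e$). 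Starting from $Q=\prod_{i=1}^k P_i(X^{(i)},Y^{(i)})$, apply the monomial substitution $x^{(i)}_e\mapsto x^{(i)}_e\prod_{j\neq i}z^{(i,j)}_e$ to obtain $Q'$. Since each $P_i$ is multilinear in $X^{(i)}$, a monomial of $Q'$ arising from a tuple of monomials with $X$-supports $(S_1,\dots,S_k)$ has $z^{(i,j)}_e$ of degree $[e\in S_i]+[e\in S_j]\in\{0,1,2\}$, so its odd support inside block $Z^{(i,j)}$ is precisely $S_i\Delta S_j$.

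Next I would build the matroid. First discard trivially infeasible instances (some $d_{i,j}>|E|$). On the full variable set let $M=\bigoplus_{1\le i<j\le k}U_{|E|,d_{i,j}}$, where the block $U_{|E|,d_{i,j}}$ sits on $Z^{(i,j)}$ and all variables in the $X^{(i)}$ and $Y^{(i)}$ blocks are loops; $M$ has rank $D=\sum_{i<j}d_{i,j}$, and being a direct sum of uniform matroids it is representable over a suitable extension field of characteristic~$2$ (block-diagonal Vandermonde). A set spans $U_{|E|,d_{i,j}}$ iff it has size at least $d_{i,j}$, and loops are irrelevant to spanning, so the odd support of a monomial of $Q'$ coming from $(S_1,\dots,S_k)$ spans $M$ if and only if $|S_i\Delta S_j|\ge d_{i,j}$ for all $i<j$. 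Applying Theorem~\ref{ithm:odd-sieve} to $Q'$ with this $M$ therefore tests exactly for a valid diverse collection, in time $O^*(\deg(Q')\cdot 2^D)=O^*(2^D)$: $\deg(Q')$ is polynomially bounded (a product of $k$ polynomials of polynomial degree, with each $X$-variable replaced by a monomial of degree $\le k$), and each evaluation of $Q'$ amounts to forming the products of $z$-variables and then evaluating the $P_i$, which is efficient by hypothesis. The false-negative probability from odd sieving is made negligible by passing to an extension field of $\F$, and the space usage stays polynomial.

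The only point that needs care is the absence of spurious cancellation. Soundness is free from Theorem~\ref{ithm:odd-sieve}: a surviving term has odd support spanning $M$, and restricting it to $X^{(i)}$ exhibits a monomial of $P_i$ with some $X$-support $S_i$, which lies in $\cF_i$ by the defining property of an enumerating polynomial, while the spanning condition forces $|S_i\Delta S_j|\ge d_{i,j}$. For completeness I would check that the map sending a tuple $(m_1,\dots,m_k)$ of monomials, $m_i$ in $P_i$, to its image monomial in $Q'$ is injective: the blocks $X^{(i)}\cup Y^{(i)}$ are pairwise disjoint, so a monomial of $Q$ determines each $m_i$, and the substitution is invertible on monomials because the exponent of each $x^{(i)}_e$ is untouched. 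Hence the coefficient in $Q'$ of the image of a fixed tuple is the product of the (nonzero) coefficients of the $m_i$, so whenever a valid collection $(S_1,\dots,S_k)$ exists, choosing for each $i$ a monomial of $P_i$ with $X$-support $S_i$ yields a monomial of $Q'$ with nonzero coefficient whose odd support spans $M$, and odd sieving reports success with high probability. This injectivity/no-cancellation argument is the main (and essentially routine) obstacle; the rest is a direct instantiation of Theorem~\ref{ithm:odd-sieve}.
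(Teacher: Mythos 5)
Your construction is exactly the one the paper uses: your fingerprinting variables $x^{(i)}_e$ play the role of the paper's $y_{i,e}$, your $z^{(i,j)}_e$ are the paper's $x_e^{\{i,j\}}$, and both then invoke odd sieving (Theorem~\ref{ithm:odd-sieve}) against a rank-$D$ direct sum of rank-$d_{i,j}$ uniform/partition matroids on the pair blocks. The proposal is correct and essentially identical to the paper's proof, with the minor bonus that you spell out the injectivity/no-cancellation step the paper only asserts.
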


This leads to significant speed-ups compared to existing algorithms, one for \textsc{Diverse Matchings} and another for \textsc{$d$-Distinct Branchings}.
The \textsc{Diverse Matchings} problem ask whether a given graph contains $k$ perfect matchings $M_1, \dots, M_k$ whose pairwise Hamming distances are all at least $d$.
Fomin et al.~\cite{FominGJPS24} give an algorithm with running time $2^{2^{O(kd)}}$.
We obtain a faster algorithm running in time $O^*(2^{d \binom{k}{2}})$.
In \textsc{$d$-Distinct Branchings}, we are given a directed graph $G$, two vertices $s, t$, and an integer $d$, and we search for an in-branching rooted at $s$ and out-branching rooted $t$ whose Hamming distance is at least $d$.
This problem can be solved in $O^*(2^d)$ time.
In particular, this answers the question of Bang-Jensen et al.~\cite{Bang-JensenK021} whether there exists an $O^*(2^{O(d)})$-time algorithm.
Previously known algorithms run in time $O^*(2^{d^2 \log^2 d})$~\cite{GutinRW18} and $O^*(d^{O(d)})$ \cite{Bang-JensenK021}.

\subsubsection{Undirected paths and linkages}
\label{ssec:linkage}
As noted above, among the earliest and most powerful
applications of algebraic FPT algorithms are path and cycle
problems. In fact, all the current fastest FPT algorithms for
\textsc{$k$-Path} -- randomized time $O^*(1.66^k)$ for undirected
graphs~\cite{BjorklundHKK17narrow} and $O^*(2^k)$ for
digraphs~\cite{Williams09IPL}; deterministic $O^*(2.55^k)$ time for
both variants \cite{Tsur19b} -- ultimately have algebraic underpinnings.

Another highly surprising result was for the \textsc{$T$-Cycle} problem
(we use the name from Fomin et al.~\cite{FominGKSS24} to
distinguish more clearly from \textsc{$k$-Cycle}).
Here, the input is an undirected graph $G$ and a set of terminals $T
\subseteq V(G)$, and the question is whether $G$ contains a simple
cycle $C$ that passes through all vertices in $T$. 
This problem was known to be FPT parameterized by $k=|T|$,
using graph structural methods, but the running time was 
impractical~\cite{Kawarabayashi08}. Bj\"orklund, Husfeldt and
Taslaman~\cite{BjorklundHT12soda} showed an $O^*(2^k)$-time algorithm,
based on cancellations in the evaluation of a large polynomial. 
Wahlstr\"om~\cite{Wahlstrom13STACS} showed that the problem even has a
\emph{polynomial compression} in $k$, based on a reinterpretation of
the previous algorithm in terms of the determinant of a modified Tutte
matrix (similar to Bj\"orklund's celebrated $O^*(1.66^n)$-time
algorithm for \textsc{Hamiltonicity}~\cite{Bjorklund14detsum}). 
It is this latter determinant approach that we build upon in the
algorithms for path and linkage problems in this paper. 

Let $G$ be an undirected graph and $S, T \subseteq V(G)$ be disjoint
vertex sets. An \emph{$(S,T)$-linkage} in $G$ is a collection of
$|S|=|T|$ pairwise vertex-disjoint paths from $S$ to $T$ -- i.e., a
vertex-disjoint $(S,T)$-flow assuming that vertices of $S \cup T$ 
have capacity 1. Let $\cP$ be an $(S,T)$-linkage for some $(G, S, T)$.
A \emph{padding} of $\cP$ is a collection of oriented cycles that
covers $G-V(\cP)$, where every cycle has length at most 2 (i.e., every
cycle is either a 2-cycle $uvu$ over some edge $uv \in E(G)$
or a loop $v$ on some vertex $v \in V(G)$). 
A \emph{padded $(S,T)$-linkage} is an $(S,T)$-linkage $\cP$ together with a padding of $\cP$.
We show that there is an enumerating polynomial for padded $(S,T)$-linkages. 

\begin{lemma}
  Let $G$ be an undirected graph, possibly with loops,
  and let $S, T \subseteq V(G)$ be disjoint.
  In polynomial time, we can construct a matrix $A$ with entries from the variable set $X= \{x_e \mid e \in E(G) \}$ such that the polynomial $P(X)=\det A$, when evaluated over a field of characteristic 2, enumerates padded $(S,T)$-linkages of $G$.
  In other words, $P(X)$ is a parity-enumerating polynomial for $(S,T)$-linkages: it contains a monomial whose odd support is exactly $E(L)$ if and only if an $(S,T)$-linkage $L$ exists.
\end{lemma}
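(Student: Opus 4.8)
The plan is to build $A$ as a modified Tutte matrix, following the compression approach of Wahlstr\"om~\cite{Wahlstrom13STACS} and the $T$-cycle determinant of Bj\"orklund, Husfeldt and Taslaman~\cite{BjorklundHT12soda}, and then to read off the combinatorial meaning of its monomials over characteristic~$2$. Recall that the ordinary Tutte matrix $\widetilde{A}$ of $G$ is the skew-symmetric $V \times V$ matrix with $\widetilde{A}(u,v) = x_{uv}$ and $\widetilde{A}(v,u) = -x_{uv} = x_{uv}$ (characteristic $2$) for each edge $uv$, and with zeros elsewhere; over characteristic~$2$ the determinant $\det \widetilde{A}$ already equals $\bigl(\Pf \widetilde{A}\bigr)^2$ and its monomials correspond to cycle covers of $G$ in which every permutation in the Leibniz expansion is a product of transpositions (because the skew-symmetric structure kills fixed points via $\widetilde A(v,v)=0$ and longer cycles cancel in pairs). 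To accommodate $S$ and $T$ I would add a loop variable on every vertex not in $S \cup T$ — i.e.\ set $A(v,v)$ to a fresh variable, or reuse a loop edge variable if loops are present — so that the surviving permutations decompose into $1$-cycles (loops) on padding vertices and $2$-cycles on padding edges, while vertices of $S \cup T$ must be matched by genuine edges. Finally I would connect $S$ to $T$: the cleanest device is to add, for each $s \in S$ and $t \in T$, an auxiliary structure (or an extra block identifying $S$ with $T$) forcing exactly the columns/rows of $S \cup T$ to be covered by a vertex-disjoint collection of $S$–$T$ paths; equivalently, one contracts each desired path to a matching edge by the standard ``add a perfect matching between $S$ and a copy of $T$'' trick, so that the Leibniz term must route from each $s$ through $G$ to some $t$.

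The key steps, in order, are: (1) write down $A$ explicitly — a $V \times V$ matrix with edge variables off-diagonal, loop variables on padding vertices, and the $S$–$T$ coupling; (2) expand $\det A$ by the Leibniz formula and argue that over characteristic~$2$ exactly those permutations $\sigma$ survive that decompose into fixed points on padding vertices, transpositions on padding edges, and paths threading $S$ to $T$ — longer oriented cycles among the remaining vertices cancel in involution pairs $\sigma \leftrightarrow \sigma^{-1}$ (this is the Bj\"orklund–Husfeldt–Taslaman cancellation, and it is exactly where characteristic~$2$ is essential, since the sign and the coefficient must agree for cancellation); (3) map each surviving $\sigma$ to a padded $(S,T)$-linkage $\cP$ plus a padding, and conversely, checking the correspondence is onto and that each padded linkage appears (this is where one must be careful: a $2$-cycle $uvu$ contributes $x_{uv}^2$, so $x_{uv}$ has \emph{even} degree, whereas a path edge contributes $x_{uv}^1$, odd degree; hence the odd support of the monomial is precisely the edge set of the linkage $\cP$); (4) conclude that $P(X)=\det A$ is a parity-enumerating polynomial for $(S,T)$-linkages in the sense defined above, and that $A$ is constructible in polynomial time.

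The main obstacle I expect is step~(3): pinning down the correspondence between Leibniz terms and padded linkages so that (a) every padded $(S,T)$-linkage yields at least one non-cancelling term, and (b) all terms corresponding to a given edge-multiset combine to a non-zero coefficient (rather than cancelling against each other because the loop/$2$-cycle paddings can be chosen in several ways). The standard fix is to make the padding choices rigid — e.g.\ fix a reference orientation or use the Pfaffian rather than the determinant so that each cycle cover is counted once — but one must verify that the $S$–$T$ coupling block does not reintroduce unwanted cancellations and that loops on padding vertices genuinely behave as length-$1$ cycles in the expansion (requiring $A(v,v) \neq 0$ only for $v \notin S \cup T$). A secondary subtlety is handling parallel edges and pre-existing loops of $G$ consistently with the fresh padding-loop variables; here one simply shares variables appropriately so that the enumerated edge sets are sub-multisets of $E(G)$. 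Once these bookkeeping points are settled, the statement follows, and the odd-support reading is exactly what makes $P$ feed into Theorem~\ref{ithm:odd-sieve}.
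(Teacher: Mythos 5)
Your overall architecture matches the paper's: a modified Tutte matrix with nonzero diagonal entries on $V \setminus (S \cup T)$, an $S$--$T$ coupling block that forces each $s_i$ to be reached from $t_i$, a cancellation argument in characteristic~$2$, and reading off the edge set of the linkage as the odd support. However, the central cancellation step as you state it does not go through.

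The pairing $\sigma \leftrightarrow \sigma^{-1}$ is not a valid involution on the non-vanishing Leibniz terms. In the construction, the only arcs into $S$ and out of $T$ are the artificial arcs $t_i \to s_i$; the reverse arcs $s_i \to t_i$ do not exist (the corresponding matrix entries are $0$). Since vertices of $S \cup T$ have no diagonal entry, every non-vanishing cycle cover $\sigma$ must route each $t_i$ to $s_i$ along such an arc, so $\sigma$ always contains at least one cycle through $S \cup T$. Taking $\sigma^{-1}$ reverses \emph{all} cycles, including these, and therefore uses a non-existent arc: the Leibniz term of $\sigma^{-1}$ is identically zero. So $\sigma \leftrightarrow \sigma^{-1}$ pairs every contributing term with a non-contributing one and proves nothing. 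The correct involution (as in the paper, and essentially as in Wahlstr\"om's compression) reverses only \emph{one} cycle, chosen canonically among the ``reversible'' cycles, i.e.\ cycles of length $\geq 3$ that are disjoint from $S \cup T$; say, the reversible cycle containing the smallest vertex under a fixed ordering. This pairing is well-defined (the chosen cycle is the same before and after its reversal), it is fixed-point-free on cycle covers containing at least one reversible cycle, and it preserves the monomial and (over characteristic~$2$) the sign, so those terms cancel. What remains are cycle covers all of whose cycles are $1$-cycles, $2$-cycles, or cycles meeting $S \cup T$.

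Two smaller points. First, your fallback of replacing the determinant by a Pfaffian to avoid cancellation among paddings does not apply: $A_{ST}$ is deliberately not skew-symmetric (nonzero diagonal, asymmetric $S$--$T$ block), so its Pfaffian is undefined. Nor is a rigid orientation needed: because each edge carries its own variable, a monomial of $\det A_{ST}$ uniquely determines the underlying \emph{undirected} multiset of edges of the cycle cover, and once the cycle cover has no reversible cycle, its orientation is forced ($1$-cycles and $2$-cycles are their own reverse, and cycles meeting $S \cup T$ cannot be reversed). Hence each surviving padded linkage corresponds to exactly one Leibniz term and survives with coefficient $1$. Second, putting fresh loop variables $w_v$ on the diagonal for $v \notin S \cup T$ (rather than simply the constant $1$) would make those $w_v$ appear with odd degree in the padded monomial, adding noise to the odd support that you would then have to argue away in the downstream odd-sieving application; using $1$ keeps the odd support confined to the edge set of the linkage, which is the clean statement the later lemmas want.
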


This result is interesting even when $|S|=|T|=1$, in which case $P(X)$
enumerates padded $(s,t)$-paths. We find this remarkable, as normally,
a polynomial that is efficiently computable would only be expected to
enumerate walks, as opposed to paths or cycles. It is not \emph{too}
powerful, since the padding terms from 2-cycles prevent us from 
using it to solve, e.g., \textsc{Hamiltonian Path} in polynomial time.
But it is highly useful for FPT purposes, since Theorem~\ref{ithm:odd-sieve}
allows us to sieve for terms that span a linear matroid $M$ while ignoring the
padding-part of each padded linkage. Thus we get the following.

\begin{theorem} \label{ithm:linkage}
  Let $G=(V,E)$ be an undirected graph and
  let $M=(V,\cI)$ be a matroid represented over a field of characteristic 2.
  Let $S, T \subseteq V(G)$ be disjoint vertex sets and $k \in \N$.
  In randomized time $O^*(2^k)$ and polynomial space we can find a
  minimum-length $(S,T)$-linkage in $G$ that has rank at least $k$ in
  $M$ (or determine that none exist).  
\end{theorem}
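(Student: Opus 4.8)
The plan is to feed the parity-enumerating polynomial for padded $(S,T)$-linkages from the lemma above into the odd sieving of Theorem~\ref{ithm:odd-sieve}, after decorating each edge variable by the matroid vectors of its two endpoints. Two observations make this work. First, the edge set of an $(S,T)$-linkage $\cP$ determines its vertex set $V(\cP)$, so a matroid condition on $V(\cP)$ can be read off from the edge set. Second, in a padded linkage the padding part (loops and $2$-cycles) contributes only \emph{even} degrees to the edge variables $X=\{x_e\}$, so it is invisible to the odd support; thus $\osupp(m)$, for a monomial $m$ coming from a padded linkage, is exactly the edge set $E(\cP)$ of its linkage. Hence it suffices to sieve for monomials of $P(X)=\det A$ whose odd support, interpreted as an edge set $F$, has the property that the vertices incident to $F$ span $M$.

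To cast ``the vertices incident to $F$ span $M$'' into a form Theorem~\ref{ithm:odd-sieve} accepts, first truncate $M$ to rank exactly $k$ (this preserves the $\F$-representation; see Section~\ref{sec:prel-matroids}), obtaining a matrix $A_M\in\F^{k\times V}$ for which a vertex set spans the truncation if and only if it has rank at least $k$ in $M$ (if $r(M)<k$ there is trivially no solution). Next, split each edge variable: replace every occurrence of $x_e$ in $A$, where $e=uv$, by a product $x_e^{(1)}x_e^{(2)}$ of two fresh variables, yielding a polynomial $P'$ over $\{x_e^{(1)},x_e^{(2)}\mid e\in E(G)\}$ that is still evaluable in polynomial time (evaluate the new variables and set $x_e$ to their product). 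Decorate $x_e^{(1)}$ by the column $A_M[\cdot,u]$ and $x_e^{(2)}$ by $A_M[\cdot,v]$. Since $x_e^{(i)}$ has odd degree in a monomial of $P'$ iff $x_e$ does in the corresponding monomial of $P$, the odd support of a $P'$-monomial arising from a padded linkage with edge set $F$ is $\{x_e^{(1)},x_e^{(2)}\mid e\in F\}$, and the columns it indexes are exactly $\{A_M[\cdot,v]\mid v\in V(F)\}$, which span the truncated matroid iff $r_M(V(F))\ge k$. Applying Theorem~\ref{ithm:odd-sieve} to $P'$ with this $k\times(\cdot)$ matrix therefore tests in $O^*(2^k)$ time and polynomial space, with no false positives and one-sided error $O(k/|\F|)$ that is killed by moving to an extension field, whether $G$ has an $(S,T)$-linkage of rank at least $k$ in $M$. (Equivalently, one may decorate each $x_e$ directly by the two vectors $A_M[\cdot,u],A_M[\cdot,v]$ and invoke the multi-vector variant of odd sieving discussed in Section~\ref{sec:sieving}.)

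It remains to obtain the \emph{minimum} length and to output an explicit linkage. For the length, introduce a marker variable $z$ scaling exactly the entries of $A$ responsible for genuine path edges (by inspection of the construction these are separate from the padding entries), and use degree extraction in $z$ to isolate, for each $\ell$, the part of $P$ enumerating linkages of total length $\ell$; the smallest $\ell$ for which the above sieve succeeds is the minimum length, and this costs only a polynomial-factor overhead. To recover the edges of a solution of that length, run the usual self-reduction: repeatedly delete an edge from $G$, keeping it only if its removal destroys all minimum-length rank-$\ge k$ linkages, re-running the test each time (polynomially many invocations); the surviving edge set is the edge set of the desired linkage. Alternatively one can apply standard witness-extraction machinery for algebraic FPT algorithms~\cite{BjorklundKK14witness,BjorklundKKL15alenex}.

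I expect the main point to be the reduction in the second paragraph: realizing that decorating an edge by \emph{both} endpoints' matroid vectors (via the variable split) makes the span of $\osupp(m)$ coincide with the span of $V(\cP)$ in $M$, while the padding, carrying only even degrees, is automatically ignored by the odd sieve; everything else (truncation, degree extraction for length, self-reduction for the witness) is routine once the lemma's construction of $A$ is available.
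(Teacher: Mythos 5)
Your core reduction is correct and takes a mildly different route from the paper's. Both start from the parity-enumerating determinant $\det A_{ST}$ of Lemma~\ref{lemma:padded-st-linkages}. The paper (Lemma~\ref{lemma:linkage-vertex-vars}) bridges from edge variables to vertex variables by substituting $x_{uv}\gets x_{uv}(x_u+x_v)$ and pre-padding by $\prod_{s\in S}x_s$, then arguing that only the head-orientation of the paths of $\cP$ makes every $x_v$, $v\in V(\cP)$, appear with odd degree; one then odd-sieves over the vertex variables, whose matroid labels are pairwise distinct. Your variant avoids the orientation argument: you split each $x_e$ into $x_e^{(1)}x_e^{(2)}$ and attach to each half the matroid column of the corresponding endpoint of $e$, so that $\osupp(m)$ for a padded-linkage monomial indexes precisely the (multiset of) endpoints of $E(\cP)$, hence the set $V(\cP)$. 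This is fine: Theorem~\ref{ithm:odd-sieve} takes a matrix with one column per variable and nowhere requires those columns to be distinct, and any subset of the odd support that reuses a vertex column is automatically linearly dependent and so cannot form a basis. Your route is slightly more direct at the level of the sieve; the paper's version has the advantage of producing an explicit vertex-and-edge variable polynomial $P(X_V,X_E)$ that it then reuses across the further applications in Corollary~\ref{cor:framework-linkage}.

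There is one concrete error in your minimum-length step: you assert that, in $A_{ST}$, the entries responsible for genuine path edges are ``separate from the padding entries'' and propose to scale only those. This is not the case. A $2$-cycle padding term over an edge $e=uv$ uses exactly the same off-diagonal entries $A[u,v]=A[v,u]=x_{uv}$ as a path edge does; only the $1$-cycle (loop) entries $A[v,v]=1$ and the back-entries $A[t_i,s_i]=1$ are separate from the edge entries. The correct version of your plan is what the paper does in Corollary~\ref{cor:framework-linkage}: scale \emph{every} edge variable $x_e$ by $z$. A padded term for a linkage $\cP$ with $2$-cycle padding of size $p$ then has $z$-degree $|E(\cP)|+2p$, so the minimum $z$-degree among terms that survive the sieve is attained by the all-loop padding of a shortest solution and hence equals the minimum length. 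With that correction, your degree-extraction and the self-reduction for witness extraction both go through.
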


This result improves and generalizes a number of results.
Fomin et al.~\cite{FominGKSS24} gave randomized algorithms in time $O^*(2^{k+p})$
for finding a minimum-length colourful $(S,T)$-linkage,
and in time $O^*(2^{p+O(k^2 \log (q+k))})$
for finding a minimum-length $(S,T)$-linkage of rank at least $k$ in $M$,
where $M$ is represented by a matrix over a finite field of order $q$
and $p=|S|=|T|$.\footnote{The formulation of Fomin et al.~\cite{FominGKSS24}
  is slightly different, but equivalent under simple transformations.}
Theorem~\ref{ithm:linkage} directly generalizes the first result,
removing the dependency on $p$, and improves the running time for the second in the case
that $M$ can be represented over a field of characteristic~2.
It also significantly simplifies the correctness proof, which in~\cite{FominGKSS24}
runs to over 20 pages.

As they observe, even the problem \textsc{Colourful $(s,t)$-Path}
captures a number of problems, including \textsc{$T$-Cycle},
\textsc{Long $(s,t)$-Path} and \textsc{Long Cycle} (i.e.,
finding an $(s,t)$-path, respectively cycle, of length at least $k$).
Finding an $(s,t)$-path of rank at least $k$ also generalizes the
variant \textsc{List $T$-Cycle}, previously shown to be FPT by
Panolan, Saurabh and Zehavi~\cite{DBLP:journals/algorithmica/PanolanSZ19}.

We also show an improvement to \textsc{Long $(s,t)$-Path} and
\textsc{Long Cycle}. Fomin et al.~\cite{FominGKSS24} ask as an
open problem whether these can be solved in time $O^*((2-\varepsilon)^n)$
for some $\varepsilon > 0$, given that \textsc{$k$-Path} and \textsc{$k$-Cycle} 
have $O^*(1.66^k)$-time algorithms due to Björklund et al.~\cite{BjorklundHKK17narrow}. 
We confirm this.

\begin{theorem} \label{ithm:longpath}
  \textsc{Long $(s,t)$-Path} and \textsc{Long Cycle} can be solved in
  randomized time $O^*(1.66^k)$ and polynomial space. 
\end{theorem}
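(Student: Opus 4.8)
The plan is to reduce both \textsc{Long $(s,t)$-Path} and \textsc{Long Cycle} to an application of odd sieving over a carefully chosen matroid, using the padded-linkage enumerating polynomial from the previous lemma, and to replace the ``uniform matroid'' role in that application by the representation used in Bj\"orklund et al.'s $O^*(1.66^k)$-time \textsc{$k$-Path} algorithm. Concretely, by Theorem~\ref{ithm:linkage} we already know these problems are solvable in $O^*(2^k)$ time; the point of Theorem~\ref{ithm:longpath} is to shave the base from $2$ to $1.66$. The key observation is that the sieving step in Theorem~\ref{ithm:odd-sieve}, when $M$ is a uniform matroid $U_{n,k}$, is exactly the inclusion-exclusion/bijective-labelling sieve that underlies multilinear detection; and the speedup from $2^k$ to $1.66^k$ in \textsc{$k$-Path} comes not from the sieve itself but from an algebraic trick specific to paths: pairing up the two endpoints of each edge and exploiting that a path on $k$ vertices, viewed as a walk, has a large automorphism-free structure so that one can work with $\lceil k/2 \rceil$-element ``labels'' (the Bj\"orklund--Husfeldt--Kaski--Koivisto narrow-sieve refinement). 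So the real content is to show that the padded-$(s,t)$-linkage polynomial $P(X)=\det A$ is compatible with that refinement.

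First I would recall the structure of $P(X)$ from the preceding lemma: over characteristic~2, $\det A$ enumerates padded $(s,t)$-paths, where the ``path part'' contributes each path edge once (odd degree) and the ``padding part'' consists of $2$-cycles and loops, each edge of which contributes with even degree (degree $2$ for a traversed edge of a $2$-cycle, and a loop variable squared, etc.), so that the odd support of a monomial is precisely the edge set of the underlying simple $(s,t)$-path. For \textsc{Long $(s,t)$-Path} we then want to sieve for monomials whose odd support contains (spans, after truncation) a set of $k$ vertices worth of path structure; but to get the $1.66^k$ bound we instead decorate \emph{vertices} and use the observation that an $(s,t)$-path visiting at least $k$ internal vertices, being a simple path, visits $2$ endpoints per edge and hence, if we group vertices into pairs, can be detected with a labelling of effective size $\approx k/2$. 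More precisely, I would set up, for each internal vertex $v$, a pair of ``slot'' variables and assign to $v$ a random vector in $\F^{\lceil k/2\rceil}$ (or a random element of a suitable partition matroid of rank $\lceil k/2\rceil$), exactly mirroring the Bj\"orklund et al.\ construction, and argue that the corresponding odd-sieve evaluation is non-zero iff a long path exists. The cycle case follows by the standard reduction: guess an edge $st$ of the sought long cycle (or a vertex on it), delete it, and ask for a long $(s,t)$-path; this adds only a polynomial factor.

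The main steps in order: (1) state the padded-linkage polynomial and verify the odd-support = path-edge-set claim, quoting the lemma; (2) recall the narrow-sieve/labelled-walk identity from \cite{BjorklundHKK17narrow} that yields $1.66^k$ for \textsc{$k$-Path}, phrased as a statement about a polynomial whose monomials are labelled $k$-walks; (3) observe that our $P(X)$, restricted to its odd support, produces exactly labelled simple $(s,t)$-paths once the same vertex-labelling substitution is applied, so the two constructions compose; (4) run odd sieving (Theorem~\ref{ithm:odd-sieve}) with the rank-$\lceil k/2\rceil$ matroid coming from the labelling, getting running time $O^*(2^{\lceil k/2\rceil} \cdot \text{poly})$ per choice of the $\binom{k}{\lceil k/2\rceil}$-sized bucketing, which aggregates to $O^*(1.66^k)$ by the usual counting ($\sum_j \binom{k}{j} 2^j$-type bound tuned to the Bj\"orklund et al.\ parameters); (5) handle \textsc{Long Cycle} by the guess-an-edge reduction; (6) bound the failure probability via Theorem~\ref{ithm:odd-sieve} and a union bound over $\mathrm{poly}(n)$ invocations, moving to an extension field if needed.

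The hard part will be step~(3): making precise that the padding terms of $\det A$ -- which are genuinely present and which, over characteristic~2, are what makes $P(X)$ efficiently computable despite ``enumerating paths'' -- do not interfere with the narrow-sieve cancellation pattern. One must check that applying the vertex-labelling substitution and then extracting the odd support does not create spurious surviving monomials from padded parts whose odd support accidentally looks like a long path, i.e., that the loops-and-$2$-cycles contributions remain of even degree under the substitution and hence are killed by the odd sieve exactly as before. I expect this to go through because the substitution only multiplies each vertex variable by a fixed vector-valued quantity (it does not change degrees), so parity is preserved monomial-by-monomial; but it needs to be stated carefully, together with the observation that the narrow sieve's factor-$1.66$ savings is a black box that only cares about the combinatorial type (simple path) of the surviving monomials, which is precisely what odd sieving over $P(X)$ guarantees. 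A secondary subtlety is bookkeeping the interaction between truncating the labelling matroid to rank $\lceil k/2\rceil$ and the requirement ``length at least $k$'' versus ``exactly $k$'': as usual one sieves for ``spans a rank-$\lceil k/2\rceil$ matroid'' and argues monotonicity in the path length, which Theorem~\ref{ithm:odd-sieve}'s ``spans $M$'' formulation (rather than ``is a basis'') already handles.
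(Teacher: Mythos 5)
Your high-level framing is right -- the paper's proof is indeed ``a mild reinterpretation of the \emph{narrow sieves} algorithm\ldots{} rephrased in terms of an external matroid labelling'' applied via odd sieving over the padded $st$-path determinant $\det A_{st}$, and you correctly identify why padding cannot create spurious odd-support terms. The reduction from \textsc{Long Cycle} to \textsc{Long $(s,t)$-Path} by guessing an edge is also what the paper does. However, your account of where the factor $1.66^k$ actually comes from has a genuine gap, and the matroid you propose would not work.

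The savings is not obtained by ``decorating each internal vertex with a random vector in $\F^{\lceil k/2\rceil}$'' or by any labelling matroid of effective rank $\approx k/2$. If you sieve for odd support spanning a rank-$\lceil k/2\rceil$ matroid, you only certify that the surviving $(s,t)$-path spans rank $\lceil k/2\rceil$, which tells you nothing about whether it has $\geq k$ vertices, so the ``spans $M$, argue monotonicity'' step you defer does not close. What the paper actually does: (i) randomly bipartition $V=V_1\cup V_2$ with a tunable bias $p$; (ii) split $E$ into $E_1$, $E_2$, and crossing edges $E_X$; (iii) build a matroid $M(r_1,r_2)$ on \emph{edges} (not vertices) as the disjoint union of a rank-$r_1$ uniform matroid on $E_1$ and the rank-$r_2$ truncation of the \emph{transversal matroid} on $E_X\cup E_2$ linking each edge to its $V_2$-endpoints -- this transversal matroid, not a partition matroid, is what enforces distinctness of $V_2$-vertices; (iv) prove a combinatorial lemma (Lemma~\ref{lemma:long-cycle-rank}) showing that a set $F\subseteq E(C)$ independent in $M$ with $|F\cap E_X|=k_x$ forces $|V(C)|\geq|F|+k_x$; and (v) run odd sieving at rank $\mu=\max(k_2,\,k-k_x/2)$, then optimize the running time $O^*\!\left(2^\mu/p_3(p,k_2,k_x)\right)$ over $p,k_2,k_x$ by a Stirling-type computation that produces $4(\sqrt2-1)\approx1.66$, not a $\sum_j\binom{k}{j}2^j$ bound. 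None of the bipartition, the transversal-matroid gadget, the rank-to-length lemma, or the probability optimisation appears in your sketch, and these are not reconstructible from ``the narrow-sieve refinement is a black box that only cares about the combinatorial type of surviving monomials'' -- the refinement has to be re-derived in matroid language, which is the substance of the paper's proof.
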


\subsubsection{Subgraph problems}
\label{ssec:subgraph}
Another class of problems where algebraic methods have been important
is for the general question of finding subgraphs of a graph $G$ with a given property. 
We give two results in this domain.

First, let $G=(V,E)$ be a graph and $M$ a matroid over $V$.
Let \textsc{Rank $k$ Connected Subgraph} be the following general problem:
Given integers $k$ and $t$, is there a connected subgraph $H$ of $G$
on at most $t$ vertices such that $V(H)$ has rank at least $k$ in $M?$

\begin{theorem} \label{ithm:subtree}
  \textsc{Rank $k$ Connected Subgraph} for a linear matroid $M$
  can be solved in randomized time $O^*(2^k)$ and polynomial space
  if $M$ is represented over a field of characteristic 2,
  and in randomized time $O^*(2^{\omega k})$ and space $O^*(4^k)$
  otherwise.
\end{theorem}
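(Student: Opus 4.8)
The plan is to combine a known enumerating polynomial for connected subgraphs — namely, the \emph{branching walk} polynomial of Nederlof~\cite{Nederlof13algorithmica} — with basis sieving (Theorem~\ref{ithm:simple}) over characteristic~2, and with the exterior-algebra variant (Theorem~\ref{ithm:general}) over general fields. First I would recall the branching-walk construction: for a graph $G=(V,E)$ and a chosen root $r \in V$, one builds a polynomial $Q_t(X,Y)$ in variables $X = \{x_v \mid v \in V\}$ (and auxiliary variables $Y$ to prevent cancellation) such that $Q_t$ has a monomial whose $X$-support is exactly $S \subseteq V$, with $r \in S$ and $|S| \le t$, if and only if $G[S]$ is connected and contains $r$; moreover $Q_t$ is multilinear in $X$ and is evaluable in polynomial time via a simple $t$-step dynamic program over $V$ (each step a matrix–vector-like product, cf.\ the walk/branching-walk recurrences in the introduction). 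Since the desired connected subgraph $H$ need not contain a \emph{fixed} vertex, I would sum over all choices of root, $P(X,Y) = \sum_{r \in V} Q_t^{(r)}(X,Y)$, which remains multilinear in $X$ and efficiently evaluable; its $X$-monomials enumerate precisely the vertex sets $S$ with $|S| \le t$ such that $G[S]$ is connected.

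Next, to capture the matroid condition, I would truncate $M$ to rank $k$ (preserving the representation, as noted in Section~\ref{sec:prel-matroids}), so that $V(H)$ has rank at least $k$ in the original matroid if and only if its image spans the rank-$k$ truncation, i.e.\ contains a basis of the truncation. I then label each variable $x_v$ with the column of the truncated representation matrix $A \in \F^{k \times V}$ corresponding to $v$. Over a field of characteristic~2, applying Theorem~\ref{ithm:simple} to $P(X,Y)$ with this labelling yields, in time $O^*(d\,2^k) = O^*(2^k)$ (the degree $d$ in $X$ is at most $t \le n$, absorbed into $O^*$), a randomized test for a multilinear $X$-monomial $m$ with $A[\cdot,\supp(m)]$ nonsingular — equivalently, a connected subgraph on $\le t$ vertices whose vertex set contains a basis of the truncation, hence has rank $\ge k$ in $M$. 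Polynomial space follows because the branching-walk DP uses polynomial space and Theorem~\ref{ithm:simple}'s sieve is polynomial-space. Over a general field, I would instead express $P$ (more precisely each $Q_t^{(r)}$) as a skew arithmetic circuit — the branching-walk recurrence is naturally skew, as each product multiplies a previously computed subcircuit by a single input variable $x_v$ — and invoke Theorem~\ref{ithm:general}, giving running time $O^*(4^k)$ and space $O^*(4^k)$; to reach the claimed $O^*(2^{\omega k})$ time with $O^*(4^k)$ space, I would use that the wedge-product multiplications in $\Lambda(\F^{2k})$ of the relevant extensors reduce to multiplication of $2^k \times 2^k$ matrices (representing $k$- and $(k{-}k)$-graded components), so each circuit gate costs $O^*(2^{\omega k})$ rather than $O^*(4^k)$ — this is the standard speedup for skew circuits over the exterior algebra used already in Theorems~\ref{ithm:smonotone}–\ref{ithm:general} and in Theorem~\ref{ithm:setcover}.

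The main obstacle I anticipate is purely bookkeeping: verifying that the branching-walk polynomial is genuinely multilinear in $X$ after summing over roots and that no monomial with the correct $X$-support is killed by characteristic-2 cancellations before the sieve — this is exactly what the fingerprinting variables $Y$ are for, and it is handled in Nederlof's original analysis, so I expect to cite it rather than reprove it. A secondary point requiring care is the precise reduction from "$H$ connected on $\le t$ vertices with $V(H)$ of rank $\ge k$" to a monomial condition: one must note that if $G[S]$ is connected with $\mathrm{rk}_M(S) \ge k$ then some subset $S' \subseteq S$ with $G[S']$ connected has $\mathrm{rk}_M(S') = k$ is \emph{not} needed — it suffices that $S$ itself contains a basis of the truncation, which the sieve detects directly — so no extra argument is required there. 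Finally, I would confirm the failure probability bound $2k/|\F|$ (resp.\ over an extension field to drive it arbitrarily small) carries over verbatim from Theorems~\ref{ithm:simple} and~\ref{ithm:general}.
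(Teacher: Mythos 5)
Your proposal has the right ingredients — branching walks plus matroid sieving — but it invokes the wrong sieve, and the place where it slips is exactly where the problem is subtle.

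First, a factual point: the branching-walk polynomial of Nederlof and Bj\"orklund et al.\ is \emph{not} multilinear in $X$. It enumerates \emph{all} properly ordered branching walks of a given size, including non-simple ones that revisit vertices; those walks contribute monomials with $x_v^2, x_v^3,\dots$. The efficient evaluation of the polynomial relies precisely on including these non-simple walks, and it is the subsequent sieving step that is supposed to discard them. So the premise that $Q_t$ (and hence $P = \sum_r Q_t^{(r)}$) is multilinear in $X$ and hence can be fed directly into basis sieving is false.

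Second, and more importantly, even if you force multilinearity (e.g.\ by sieving out non-multilinear terms), basis sieving tests for a multilinear monomial $m$ with $A[\cdot,\supp(m)]$ \emph{non-singular}, which forces $|\supp(m)| = k$ exactly. But the target object is a connected subgraph $H$ with $|V(H)| \le t$ (possibly $|V(H)| > k$) such that $V(H)$ has rank $\ge k$: you need $V(H)$ to \emph{contain} a basis of the truncation, not to \emph{be} one. Concretely, take $G$ the path $a$--$b$--$c$ and let $M$ be represented by $a \mapsto (1,0)^T$, $c \mapsto (0,1)^T$, $b \mapsto (0,0)^T$. The whole path has rank $2$, so the instance is a YES-instance with $k=2$, $t=3$. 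Yet no multilinear monomial of the branching-walk polynomial has support that is a basis: the only basis is $\{a,c\}$, and $G[\{a,c\}]$ is disconnected, so there is no branching walk with span $\{a,c\}$. Your pipeline would report NO.

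Both issues are repaired simultaneously by using odd sieving (Theorem~\ref{ithm:odd-sieve} / Theorem~\ref{theorem:matroid-sieving}) instead of basis sieving, which is exactly what the paper does. Odd sieving tests whether some monomial $m$ has a subset $S$ of its \emph{odd support} such that $A[\cdot,S]$ is a basis. On the branching-walk polynomial this works on both counts: (a) for a connected $H$ with $\mathrm{rk}_M(V(H)) \ge k$ there is a simple branching walk $W$ with span $V(H)$, and the odd support of $m(W,X,Y)$ is all of $V(H)$, which \emph{contains} a basis of the truncation; (b) conversely, any monomial in the branching-walk polynomial corresponds to some branching walk $W$, whose odd support (a subset of $\varphi(V(T))$) is contained in the vertex set of a connected subgraph of size at most $\ell$, so a detected basis certifies a valid $H$. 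Aside: your intuition in the ``secondary point'' paragraph — that it suffices for $S$ to contain a basis — is correct, but that is precisely a \emph{spanning} condition, not a basis condition, and Theorem~\ref{ithm:simple} does not test for it; over a non-multilinear polynomial the spanning variant is exactly odd sieving, which is why characteristic~2 matters here and why the paper treats it as a distinct tool.

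Over general fields, the paper simply invokes the exterior-algebra sieve (Theorem~\ref{lemma:extensor-matroid-sieving}) with the $O^*(2^{\omega k})$ time and $O^*(4^k)$ space bounds; no further wedge-product speedup argument is needed beyond what that theorem already supplies.
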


This result is an application of the powerful notion of \emph{branching walks},
introduced by Nederlof~\cite{Nederlof13algorithmica}, which underlie several FPT algorithms.
We rely on Björklund et al.~\cite{BjorklundKK16} who gave an explicit algorithm for
evaluating the branching walk polynomial.
As special cases of Theorem~\ref{ithm:subtree} with various matroids $M$
we recover the $O^*(2^k)$-time algorithms for
\textsc{Steiner Tree}~\cite{Nederlof13algorithmica} and
\textsc{Group Steiner Tree}~\cite{MisraPRSS12} on $k$ terminals,
and for \textsc{Graph Motif} and \textsc{Closest Graph Motif}~\cite{BjorklundKK16}.

More generally, consider \textsc{Subgraph Isomorphism}, the problem of finding a subgraph of $G$
isomorphic to a given graph $H$. This is W[1]-hard in general
(cf.~$k$-clique), but when restricted to a class of graphs~$\mathcal{G}$, it is FPT parameterized by $|V(H)|$, when every graph in $\mathcal{G}$ has bounded treewidth. 
In fact, up to plausible conjectures, the dependency on the treewidth $w$ for known algorithms
is optimal for every $w \geq 3$~\cite{BringmannS21}.
Like previous algorithms, we employ the \emph{homomorphism polynomial}
(see, e.g., Brand~\cite{Brand19thesis}), and show the following.

\begin{theorem} \label{ithm:subgraph-iso}
  Let $G$ and $H$ be undirected graphs, $k=|V(H)|$ and $n=|V(G)|$,
  and let $M$ be a linear matroid over $V(G)$. Let a tree decomposition
  of $H$ of width $w$ be given.
  We can find a subgraph $H'$ of $G$ isomorphic to $H$
  such that $V(H')$ is independent in $M$ 
  in randomized time $k^{O(1)} \cdot 2^k \cdot n^{w+1}$ and polynomial space
  if $M$ is represented over a field of characteristic 2,
  and in time $k^{O(1)} \cdot 2^{\omega k} \cdot n^{w+1}$ and space $O^*(4^k)$
  otherwise. 
\end{theorem}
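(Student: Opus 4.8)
The plan is to combine the \emph{homomorphism polynomial} from $H$ to $G$ with basis sieving. First, preprocess the matroid: if $\operatorname{rk}(M) < k$ then no $k$-element set is independent and we reject; otherwise \emph{truncate} $M$ to rank exactly $k$ in polynomial time, preserving a representation over a field of the same characteristic (Section~\ref{sec:prel-matroids}). Write $A \in \F^{k\times V(G)}$ for the resulting matrix, so that a $k$-element $S \subseteq V(G)$ is independent in $M$ iff $A[\cdot,S]$ is non-singular.

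Next, introduce variables $X = \{x_v \mid v \in V(G)\}$ and fingerprinting variables $Y = \{y_{u,v} \mid u \in V(H),\ v \in V(G)\}$, and set
\[
  P(X,Y) \;=\; \sum_{\varphi\colon H \to G}\ \prod_{u \in V(H)} x_{\varphi(u)}\,y_{u,\varphi(u)},
\]
the sum over all graph homomorphisms $\varphi$ from $H$ to $G$. Every monomial of $P$ has total $X$-degree exactly $k$, with $X$-support $\varphi(V(H))$, and is multilinear in $X$ exactly when $\varphi$ is injective, i.e.\ when $\varphi$ realizes a subgraph of $G$ isomorphic to $H$ on the vertex set $\varphi(V(H))$. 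Since the $Y$-part $\prod_u y_{u,\varphi(u)}$ determines $\varphi$, distinct homomorphisms give distinct monomials, each with coefficient $1$; in particular the monomial of an injective $\varphi$ survives with coefficient $1$ even over characteristic $2$. Given the width-$w$ tree decomposition of $H$ (part of the input), $P$ is evaluated by the standard dynamic program over the tree decomposition — introduce/forget nodes multiply in the relevant $x_v\,y_{u,v}$ and discard assignments that violate an edge of $H$, join nodes multiply the two partial values over each common assignment — in $O(k^{O(1)} n^{w+1})$ arithmetic operations and polynomial space, and this works equally over $\F$ or over the exterior algebra used below.

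Now regard $A$ as a rank-$k$ matroid on the variable set $X$ by associating $x_v$ with the column $A[\cdot,v]$, and apply basis sieving to the $X$-variables of $P$, treating $Y$ as an auxiliary variable set (the $P(X,Y)$ form of Theorem~\ref{ithm:simple}; see Section~\ref{sec:sieving}). By the discussion above, $P$ has a monomial that is multilinear in $X$ with $X$-support a basis of $A$ iff $G$ has a subgraph isomorphic to $H$ whose vertex set is independent in $M$. Over a field of characteristic $2$ this is Theorem~\ref{ithm:simple}: its $O^*(d\,2^k)$ evaluations of $P$, with $d = 2k$, multiplied by the $O(k^{O(1)} n^{w+1})$ cost of one evaluation, give time $O(k^{O(1)}\cdot 2^k\cdot n^{w+1})$ and polynomial space, with no false positives and false-negative probability $2k/|\F|$ (made negligible by passing to an extension field of $\F$). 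Over a general field we instead evaluate the dynamic-programming circuit over the exterior algebra as in Theorems~\ref{ithm:smonotone}--\ref{ithm:general}; the join gates are products of two non-trivial subcircuits, so the circuit is not skew and we lift to $\Lambda(\F^{2k})$, where the $O(k^{O(1)} n^{w+1})$ wedge products dominate at $O^*(2^{\omega k})$ each (via fast matrix multiplication) and each extensor occupies $O^*(4^k)$ space, yielding time $O(k^{O(1)}\cdot 2^{\omega k}\cdot n^{w+1})$ and $O^*(4^k)$ space. In either case an explicit copy $H'$ is recovered from the decision procedure by self-reduction, or via the witness-finding techniques of~\cite{BjorklundKK14witness,BjorklundKKL15alenex}.

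The routine parts are the homomorphism polynomial and its tree-decomposition evaluation, which are standard. The main thing to check carefully is the general-field case: one must confirm that evaluating the (non-skew) circuit over $\Lambda(\F^{2k})$ stays within the claimed bounds — i.e.\ that the number of wedge products is $O(k^{O(1)} n^{w+1})$, each costing $O^*(2^{\omega k})$ time and $O^*(4^k)$ space — whereas the characteristic-$2$ bound is immediate once $P$ and its evaluation are in hand.
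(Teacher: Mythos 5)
Your proof is correct and takes essentially the same route as the paper's: you evaluate an algebraically fingerprinted homomorphism polynomial for $H \to G$ via the dynamic program over the tree decomposition, then apply basis sieving (Theorem~\ref{ithm:simple}) over the $k$-truncation of $M$, and switch to the lifted exterior-algebra evaluation (Theorem~\ref{ithm:general}) for general fields. The only difference is that you drop the paper's edge-indexed variables $y_{\varphi(i)\varphi(j)}$ and keep a single fingerprint $y_{u,\varphi(u)}$ per vertex; since Theorem~\ref{ithm:subgraph-iso} only constrains $V(H')$ (not $E(H')$), this simplification is valid — the paper includes edge variables mainly to cover the $M$-over-$V(G)\cup E(G)$ variant in its fuller Theorem~\ref{thm:subgraph-hom}. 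One cosmetic point: the paper explicitly exploits that $P$ is homogeneous in $X$ to shave the interpolation overhead and keep the polynomial factor at $k^{O(1)} n^{w+1}$, whereas you bound the number of evaluations by $O^*(d\,2^k)$ with $d = 2k$; the arithmetic still works out, but citing the homogeneous clause of the basis-sieving theorem makes the $n^{w+1}$ dependence cleaner.
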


\subsubsection{Speeding up dynamic programming}
\label{sec:intro-noDP}

The \emph{representative sets lemma}~\cite{Lovasz1977,Marx09-matroid}
is a statement from matroid theory that has seen a multitude of
applications in parameterized complexity, both in kernelization~\cite{KratschW20JACM}
and in FPT algorithms~\cite{FominLPS16JACM,Marx09-matroid}. The latter
class of application typically consists of a \emph{sped-up dynamic programming}
algorithm; e.g., a dynamic programming algorithm over a state space
that could potentially contain $n^{O(k)}$ different partial solutions,
but where the representative sets lemma is used to prove that it
suffices to maintain a set of $2^{O(k)}$ \emph{representative}
solutions. This includes
algorithms for paths and cycles~\cite{FominLPS16JACM}
as well as many more complex questions.
We refer to this as a \emph{rep-set DP}.

For many of these applications, faster randomized algorithms are known, even in
polynomial space, and the main contribution
of the representative sets lemma becomes to enable an almost competitive 
deterministic FPT algorithm~\cite{FominLPS16JACM,Tsur19b}.
However, for other applications this is not so clear, and there are
many applications of the representative sets lemma where no faster
method is known.
With the more powerful algebraic sieving methods of this paper, 
we can revisit some of these applications and show a speed-up of the
algorithm, while at the same time reducing the space usage to
polynomial space. We give three examples.

In \textsc{Minimum Equivalent Graph} (MEG), the input is a
digraph $G$, and the task is to find a subgraph $G'$ of $G$ with a
minimum number of edges such that $G$ and $G'$ have the same
reachability relation. Fomin et al.~\cite{FominLPS16JACM} give the
first single-exponential algorithm for MEG.
They show that MEG ultimately reduces down to finding an in-branching
$B_1$ and an out-branching $B_2$ with a common root sharing at least
$\ell$ edges for $\ell$ as large as possible, which they solve via rep-set DP in time $O^*(2^{4 \omega n})$. 
We reduce this question to an application of
\textsc{4-Matroid Intersection} and get the following.

\begin{theorem} \label{ithm:MEG}
  \textsc{Minimum Equivalent Graph} can be solved in polynomial space
  and randomized time $O^*(4^{n})$. 
\end{theorem}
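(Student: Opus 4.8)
The plan is to follow the reduction of Fomin et al.~\cite{FominLPS16JACM} down to the core combinatorial question, and then replace their rep-set dynamic programming step by an application of Theorem~\ref{ithm:matroid-intersection}. Concretely, Fomin et al.\ show that computing a minimum equivalent graph of a digraph $G$ reduces, in polynomial time, to the following problem: we are given a digraph $G$ on $n$ vertices with a designated root $r$, and we must decide whether $G$ contains an in-branching $B_1$ rooted at $r$ and an out-branching $B_2$ rooted at $r$ that share at least $\ell$ edges, for each relevant value of $\ell$; in MEG one then picks the largest such $\ell$, since the minimum equivalent graph has size $2(n-1)-\ell_{\max}$ (up to the standard handling of strongly connected components, which contributes only polynomial overhead). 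So the task boils down to: given $G$, $r$ and $\ell$, find a common in-/out-branching pair with at least $\ell$ shared edges. I would state this intermediate problem as a lemma and cite~\cite{FominLPS16JACM} for the reduction.

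The heart of the argument is to encode ``in-branching at $r$ with a prescribed set of $\ell$ forced edges'' and ``out-branching at $r$ with the same forced edges'' as bases of linear matroids over a common ground set, and then sieve for a common basis. First guess (or rather, let the sieving discover) the shared edge set $F$ of size $\ell$: the natural ground set is $E(G)$, and we want a set $B \subseteq E(G)$ of size $\le 2(n-1)-\ell$ that can be partitioned as $B = E_1 \cup E_2$ with $E_1 \supseteq F$, $E_2 \supseteq F$, $E_1$ an in-branching, $E_2$ an out-branching. A cleaner route is to work on a doubled ground set $V = E(G) \times \{1,2\}$, where copy $(e,1)$ represents ``$e$ used in $B_1$'' and $(e,2)$ represents ``$e$ used in $B_2$''. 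The condition ``$B_1$ is an in-branching rooted at $r$'' is exactly ``$\{e : (e,1) \in S\}$ is a basis of the graphic matroid of the underlying graph of $G$ restricted to in-edges, equivalently of $M(G)$ on the reversed digraph'' — i.e.\ a spanning tree — and similarly for $B_2$; these are two graphic matroids and hence linear over $\mathrm{GF}(2)$. The ``at least $\ell$ shared edges'' condition is a partition-matroid-type constraint: for each $e$, we want the pattern of which copies are selected to be consistent with $e$ being shared or not, and we want at least $\ell$ of the ``shared'' patterns. After truncating to the right rank, this is again a linear matroid over $\mathrm{GF}(2)$ (partition matroids are representable there, per the discussion of matroid constructions in the excerpt). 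Taking the intersection of these three (or four, after separating the ``agreement'' bookkeeping) matroids of total rank $O(n)$ and applying Theorem~\ref{ithm:matroid-intersection} with $q = 4$ gives running time $O^*(2^{(q-2)k}) = O^*(2^{2 \cdot O(n)})$; the constants need to be tracked so that the rank is exactly $2n + O(1)$ rather than a larger multiple of $n$, which is what yields the claimed $O^*(2^{2n})$.

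The main obstacle I anticipate is getting the rank bookkeeping tight: a careless encoding gives rank proportional to $|E(G)| + n$ or $2|E(G)|$, which would blow the exponent. The fix is to observe that we do not need a matroid on all of $E(G) \times \{1,2\}$ — once $B_1$ is fixed as a spanning tree, it has exactly $n-1$ edges, likewise $B_2$, so the combined object has at most $2(n-1)$ edges and the ``shared $\ell$'' constraint is a truncation to rank $2(n-1) - \ell + \ell = \dots$ that I must arrange to be $\le 2n$. I would set this up so that the three matroids whose common basis we seek are: (i) the graphic matroid witnessing $B_1$ is spanning, rank $n-1$; (ii) the graphic matroid witnessing $B_2$ is spanning, rank $n-1$; (iii) a partition matroid enforcing that the copies of each edge are used consistently and that at least $\ell$ edges are doubled — and crucially, arrange the ground set and the Cauchy–Binet step of Theorem~\ref{ithm:matroid-intersection} so that it is matroids (i) and (ii) that enter via the $\det(A_1' A_2^T)$ trick (their ranks sum to $2(n-1)$, which is where the $2n$ comes from) and the remaining partition/consistency matroids, of the same rank, that are sieved over. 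Then $q-2 = 1$ copy of the ``heavy'' rank-$2(n-1)$ matroid sits in the exponent, giving $O^*(2^{2n})$. Finally, I would note that everything is over $\mathrm{GF}(2)$ (graphic and partition matroids both are), so Theorem~\ref{ithm:matroid-intersection} applies verbatim in polynomial space, and the outer loop over $O(n)$ values of $\ell$ and over strongly-connected-component preprocessing is only polynomial overhead.
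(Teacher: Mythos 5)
You correctly identify the reduction from Fomin et al.\ to the question of finding an in-branching and an out-branching with a common root sharing at least $\ell$ edges, and you correctly observe that in- and out-branchings are intersections of a graphic matroid and a partition matroid. But the core of the proof — encoding ``share at least $\ell$ edges'' so that the sieving machinery applies with exponent $2n$ — is precisely where your proposal breaks down, and the ``fix'' you gesture at does not resolve it.

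Concretely, the constraint ``at least $\ell$ edges appear in both $B_1$ and $B_2$'' is not a matroid independence constraint: it is not hereditary (dropping elements from a set can destroy the property), so it cannot be the third or fourth matroid in a \textsc{$q$-Matroid Intersection} instance. Your ``partition matroid enforcing that the copies of each edge are used consistently and that at least $\ell$ edges are doubled'' simply does not exist as a matroid. There is also a size mismatch you do not resolve: over the doubled ground set $E(G) \times \{1,2\}$, a witness selects between $n-1$ (all edges shared) and $2(n-1)$ (no edges shared) elements, whereas a common basis of $q$ matroids of a fixed rank $k$ always has exactly $k$ elements; your proposed matroids (i) and (ii) have rank $n-1$ each but live on disjoint halves of the ground set, so they do not even have a common basis, and they cannot be fed jointly into the Cauchy--Binet step, which requires two $k\times|V|$ matrices over the same ground set with the same row count.

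The paper's proof circumvents exactly this. It defines \textsc{Matroid Intersection Intersection} (four rank-$k$ matroids on a ground set $U$; find bases $B_A \in M_1 \cap M_2$, $B_B \in M_3 \cap M_4$ with $|B_A \cap B_B|\ge \ell$) and solves it in $O^*(2^{2k})$ via a new ground set $U^* = U_1 \cup U_2$ with $U_1 = U$ and $U_2 = U\times U$. An element $u\in U_1$ encodes a shared basis element, while an element $(x,y)\in U_2$ encodes a \emph{matched pair} of an unshared $B_A$-element $x$ and an unshared $B_B$-element $y$. The new matroids $M_i'$ use column $A_i[\cdot,x]$ at $(x,y)$ for $i\in\{1,2\}$ and $A_i[\cdot,y]$ for $i\in\{3,4\}$. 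With this encoding a valid $(B_A,B_B)$ corresponds to a set $S$ of size exactly $k$ that is a common basis of $M_1'$ through $M_4'$, and the shared-count constraint becomes ``$|S\cap U_1|\ge\ell$'', tracked by a degree variable and extracted by interpolation before sieving via Corollary~\ref{cor:basis}. Applied with $k = n-1$ this yields $O^*(2^{2n})$. Your proposal is missing this pairing idea entirely, and without it the exponent cannot be brought down to $2n$ (nor can the sharing constraint be expressed in the sieving framework at all).
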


In \textsc{(Undirected/Directed) Eulerian Edge Deletion}, 
the input is a graph $G$ (undirected respectively directed),
and the question is whether we can remove at most $k$ edges from $G$
so that the resulting graph is Eulerian (i.e., has a closed walk that
visits every edge precisely once). Cai and Yang~\cite{CaiY11} surveyed
related problems, but left the above questions open. Cygan et al.~\cite{CyganMPPS14euler}
gave the first FPT algorithms, with running times of $O^*(2^{O(k \log k)})$,
and Goyal et al.~\cite{GoyalMPPS18even} improved this to $O^*(2^{(2+\omega)k})$
using a rep-set DP approach over the co-graphic matroid.
We combine the co-graphic matroid approach with suitable enumerating
polynomials to get the following.

\begin{theorem} \label{ithm:euler}
  \textsc{Undirected Eulerian Edge Deletion} and \textsc{Directed
    Eulerian Edge Deletion} can be solved in $O^*(2^k)$ randomized time and
  polynomial space.
\end{theorem}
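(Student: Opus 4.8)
The plan is to handle both problems by the same recipe: reduce to a matroid-constrained subgraph question, supply an enumerating polynomial for the relevant edge sets, and invoke basis sieving (Theorem~\ref{ithm:simple}). I describe the undirected case first. If $G$ restricted to its non-isolated vertices is disconnected, the answer is \emph{no}, so assume $G$ is connected. A connected graph is Eulerian iff every vertex has even degree, so deleting $S\subseteq E(G)$ leaves an Eulerian graph iff $G-S$ is connected and $\deg_S(v)\equiv\deg_G(v)\pmod 2$ for all $v$; equivalently, $S$ is a $T$-join for $T=\{v:\deg_G(v)\text{ odd}\}$ and $G-S$ is connected. If $|T|>2k$ the answer is \emph{no} (a $T$-join has $\ge|T|/2$ edges), so assume $|T|\le 2k$. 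Moreover, if a $T$-join $S$ contains a cycle $C$, then $S\setminus C$ is a smaller $T$-join and $G-(S\setminus C)\supseteq G-S$ is still connected; hence I may restrict attention to $T$-joins that are forests, each of which decomposes into at most $k$ edge-disjoint paths pairing up the vertices of $T$. Finally, ``$G-S$ connected'' is precisely ``$S$ independent in the cographic matroid $M^*(G)$''.

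For the matroid, recall that $M^*(G)$ is binary, hence represented over a field of characteristic~2; since only sets $S$ of size at most $k$ matter, I would truncate $M^*(G)$ to rank $k$ \cite{LokshtanovMPS18,Marx09-matroid}, take the direct sum with the free matroid $U_{k,k}$ on a fresh set $Z$ of $k$ ``padding'' elements, and truncate again to rank $k$, obtaining a rank-$k$ binary matroid $M'$ on ground set $E(G)\cup Z$. By construction, $S\subseteq E(G)$ with $|S|\le k$ is independent in $M^*(G)$ iff $S\cup Z'$ is a basis of $M'$ for some $Z'\subseteq Z$ with $|Z'|=k-|S|$. This is exactly the ``independent, not necessarily spanning'' variant handled by padding; note basis sieving, rather than odd sieving, is the right tool here, since odd sieving tests whether the (odd) support \emph{spans} the matroid, which is the wrong direction for ``$S$ independent in $M^*(G)$''.

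Next I would construct a polynomial $P(X,Z,Y)$, with $X=\{x_e:e\in E(G)\}$, $Z$ the padding variables, and $Y$ fingerprinting variables, that is efficiently evaluable, has degree $O(k)$, and whose multilinear terms (in $X\cup Z$) have $X$-support equal to a forest $T$-join and $Z$-support an arbitrary subset of $Z$. The padding is a factor $\prod_{z\in Z}(1+z)$. For the $T$-join part, the pairing of $T$ into $|T|/2\le k$ pairs is produced implicitly by a Pfaffian: form the $|T|\times|T|$ matrix indexed by $T$, alternating over characteristic~2, whose $(a,b)$-entry is the length-$\le k$ walk-generating polynomial $\sum_{\ell\le k}[A^\ell]_{ab}$ of the variable adjacency matrix $A$ (with $A_{uv}=x_{uv}$), suitably decorated with $Y$-variables so that distinct decompositions of the same edge set do not cancel over characteristic~2; a monomial of $\Pf(\cdot)$ is multilinear in $X$ iff the chosen walks are pairwise edge-disjoint and hence trails, whose union is a $T$-join, while walk terms that revisit an edge carry a square and vanish harmlessly over characteristic~2. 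Then apply Theorem~\ref{ithm:simple} to $P$ with a matrix representing $M'$: this uses $O^*(2^k)$ evaluations of $P$ and polynomial space, has no false positives, and has false-negative probability $2k/|\F|$, which can be driven to $o(1)$ by passing to an extension field; the algorithm answers \emph{yes} iff a surviving multilinear term is found. The directed case is parallel: $G-S$ is Eulerian iff $G-S$ is balanced and weakly connected, i.e.\ $S$ routes the in/out-degree excess of $G$ and $S$ is independent in the cographic matroid of the underlying undirected graph; WLOG $S$ is arc-acyclic and decomposes into arc-disjoint directed paths, which one enumerates through vertex-disjoint directed paths in the line digraph via a Lindström--Gessel--Viennot-style determinant (with walk lengths truncated to $k$ and the same characteristic-2 and fingerprinting arguments), and then one applies Theorem~\ref{ithm:simple} again.

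The technical heart, and the step I expect to be the main obstacle, is getting the enumerating polynomial exactly right: ensuring that its multilinear terms are in bijection (up to $Y$-decoration and padding) with the valid degree-constrained edge sets, so that ``no false positives'' for Theorem~\ref{ithm:simple} translates into soundness of the algorithm; guaranteeing completeness, i.e.\ that every forest $T$-join (resp.\ arc-disjoint excess routing) actually survives as a multilinear term after the characteristic-2 cancellations, which is where the $Y$-fingerprints must be chosen carefully; and keeping the Pfaffian/determinant of polynomial size by the length-$\le k$ truncation while checking that this truncation loses nothing, since any term selected by basis sieving already has at most $k$ edges. The reduction itself — in particular the ``WLOG a forest / arc-acyclic'' step combined with the connectivity constraint, and the equivalence of weak connectivity plus balance with being Eulerian in the directed case — is routine but needs to be stated carefully.
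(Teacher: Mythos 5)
Your proposal takes essentially the same route as the paper: reduce the problem to finding a $T$-join (resp.\ an excess-routing arc set) that is independent in the cographic matroid, enumerate candidate edge sets via a Pfaffian (resp.\ determinant) of length-bounded walk-generating polynomials over a field of characteristic two so that multilinear survivors are edge-disjoint trail systems, and apply basis sieving over the cographic matroid truncated to rank $k$.

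There is one small but genuine difference in how completeness is ensured. You decorate the walk polynomials with $Y$-fingerprints so that distinct decompositions of the same edge set into $T$-paths cannot cancel. The paper instead uses the bare (unlabelled) walk polynomial, and its correctness argument implicitly relies on the arithmetic fact that a forest whose odd-degree vertex set is $T$ admits an \emph{odd} number of decompositions into edge-disjoint $T$-paths: the count factors as a product over the vertices of local transition-pairing counts, each of the form $(d_v-1)!!$ or $d_v\cdot(d_v-2)!!$, all odd, so the coefficient of $\prod_{e\in S}x_e$ survives over characteristic two for every minimal $T$-join $S$. Your fingerprinting sidesteps this counting argument at the cost of a slightly larger auxiliary variable set; both are sound. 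The remaining variations are cosmetic: you pad the matroid with a free rank-$k$ component so that solutions of size $<k$ are caught in a single sieve rather than iterating over the target size, and in the directed case you route through the line digraph and an LGV-style determinant, whereas the paper directly takes a determinant of walk polynomials indexed by the multisets of out-excess and in-excess vertices. Both give the stated $O^*(2^k)$ randomized time and polynomial space.
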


Finally, we consider a more unusual application. Consider a generic
problem where we are searching for a subset $S$ with property $\Pi$ of a ground set $V$.
In the \emph{conflict-free} version, the input additionally contains a graph $H=(V,E)$
and $S$ is required to be an independent set in $H$. Naturally,
this is hard in general (even disregarding the property $\Pi$),
but multiple authors have considered restricted variants. 
In particular, if $H$ is chordal, then 
Agrawal et al.~\cite{AgrawalJKS20} show that \textsc{Conflict-Free Matching}, where we search for a matching in a graph $G$ and the conflict graph is defined on the edge set of $G$,
can be solved in $O^*(2^{(2\omega + 2)k})$ time, and 
Jacob et al.~\cite{JacobMR21} show that \textsc{Conflict-Free Set Cover}
can be solved in $O^*(3^n)$ time.
We note that the \emph{independent set polynomial} (in our
terminology, an enumerating polynomial for independent sets in a graph)
can be efficiently evaluated if $H$ is chordal~\cite{AchlioptasZ21},
allowing us to speed up both results. See Section~\ref{sec:conflict} for details.

\paragraph{Subsequent work.} After the conference version of this article appeared, Akmal and Koana~\cite{AkmalK25} introduced partition sieving, based on the odd sieving technique. As an application, they obtained improved polynomial-space exact algorithms for \textsc{Edge Coloring} and \textsc{List Edge Coloring}.

\newcommand*{\expspace}{\textsuperscript{\textdagger}}
\newcommand*{\charatwo}{\textsuperscript{\S}}
\begin{table}
  \centering
  \caption{A list of speed-ups over previous results. Results marked with {}\expspace{} use exponential space, and those with {}\charatwo{} only work over a field of characteristic 2.
  For the linkage problems, $p$ is the order of the linkage.}
    \label{table:speedups}
  \begin{tabularx}{.95\textwidth}[ht]{Xll}
    \toprule
      Problem & Existing & New \\ 
    \midrule
      \textsc{$q$-Matroid Intersection} & $O^*(4^{qk})$\expspace \cite{BrandP21} & $O^*(2^{(q - 2)k})$\charatwo \\
      & & $O^*(2^{(q-2+(q \bmod 2))k})$\expspace \\
      \textsc{$q$-Matroid Parity} & $O^*(4^{qk})$\expspace \cite{BrandP21} & $O^*(2^{qk})$\expspace, $O^*(2^{qk})$\charatwo \\ [.5ex]
      \textsc{Long $(s, t)$-Path} & $O^*(2^{k})$ \cite{FominGKSS24} & $O^*(1.66^{k})$ \\
      \textsc{Colourful $(S,T)$-Linkage} & $O^*(2^{k+p})$ \cite{FominGKSS24} & $O^*(2^k)$ \\
      \textsc{Rank $k$ $(S,T)$-Linkage} & $O^*(2^{p+O(k^2 \log (k+|\F|))})$ \cite{FominGKSS24} & $O^*(2^k)$\charatwo \\ [.5ex] 
      \textsc{Diverse Perfect Matchings} & $O^*(2^{2^{O(D)}})$\expspace \cite{FominGJPS24} & $O^*(2^D)$ \\
      \textsc{$k$-Distinct Branchings} & $O^*(k^{O(k)})$\expspace \cite{Bang-JensenK021} & $O^*(2^k)$ \\ [.5ex]
      \textsc{Minimum Equivalent Graph} & $O^*(2^{4 \omega n})$\expspace \cite{FominLPS16JACM} & $O^*(2^{2n})$ \\
      \textsc{(Un)directed Eulerian Deletion} & $O^*(2^{(2 + \omega) k})$\expspace \cite{GoyalMPPS18even} & $O^*(2^{k})$ \\
      \textsc{Chordal-Conflict-free Matching} & $O^*(2^{(2 + 2\omega) k})$\expspace \cite{AgrawalJKS20} & $O^*(2^{2k})$ \\
      \bottomrule
  \end{tabularx}
\end{table}

\paragraph{Structure of the paper.} In Section~\ref{sec:prel} we cover
preliminaries, and in Section~\ref{sec:sieving} we prove the
determinantal sieving statements of Theorems~\ref{ithm:simple}--\ref{ithm:general}.
In Sections~\ref{sec:matroid-applications}--\ref{sec:noDP} we cover
the applications mentioned in Sections~\ref{sec:intro-appl-matroids}--\ref{sec:intro-noDP}, respectively.
We conclude in Section~\ref{sec:conc} with discussion and open problems.

\section{Preliminaries}
\label{sec:prel}

We use standard terminology from parameterized complexity, see, e.g.,
the book of Cygan et al.~\cite{CyganFKLMPPS15PCbook}.
For background on graph theory, see Diestel~\cite{DiestelBook} and Bang-Jensen and Gutin~\cite{book-digraphs}.

Let $P(X)$ be a polynomial over a set of variables $X = \{ x_1, \cdots, x_n \}$.
A monomial is a product $m = x_1^{m_1} \cdots x_n^{m_n}$ for non-negative integers $m_1, \cdots, m_n$.
A monomial $m$ is called multilinear if $m_i \le 1$ for each $i \in [n]$.
We say that its \emph{support} is $\{ i \in [n] \mid m_i
> 0 \}$ and that its \emph{odd support} is $\{ i \in [n] \mid m_i \equiv 1 \mod 2 \}$ denoted by $\supp(m)$ and $\osupp(m)$, respectively. 
We sometimes use the notation $X^m$ for the monomial $m=x_1^{m_1} x_2^{m_2} \cdots x_n^{m_n}$,
to clarify that the monomial $m$ does not include a coefficient.
For a set of variables $X'=\{x_1',\ldots,x_n'\}$ we will also
write $(X')^m=\prod_{i=1}^n (x_i')^{m_i}$.
For a monomial $m$ in the monomial expansion of $P(X)$, we let $P(m)$ 
denote the coefficient of $m$ in $P$, i.e., $P(X)=\sum_m P(m) X^m$
where $m$ ranges over all monomials in $P(X)$.
The total degree of $P(X)$ is $\max_{m} \sum_{i \in [n]} m_i$.
A polynomial of total degree $d$ is called \emph{homogeneous} if every monomial has degree $d$.
The Schwartz-Zippel lemma \cite{Schwartz80,Zippel79} states that a polynomial $P(X)$ of total degree at most $d$ over a field $\F$ becomes non-zero with probability at least $1 - d / |\F|$ when evaluated at uniformly chosen elements from $\F$, unless $P(X)$ is identically zero.

Lemmas~\ref{lemma:interpolation0} and~\ref{lemma:inclusion-exclusion} are the foundation of our sieving algorithms.

\begin{lemma}[Interpolation]
  \label{lemma:interpolation0}
  Let $P(z)$ be a univariate polynomial of degree $n - 1$ over a field $\F$.
  Suppose that $P(z_i) = p_i$ for distinct $z_1, \dots, z_n \in \F$.
  By the Lagrange interpolation, 
  \[
    P(z)
    = \sum_{i \in [n]} p_i \prod_{j \in [n] \setminus \{ i \}} \frac{z - z_j}{z_i - z_j}.
  \]
  Thus, given $n$ evaluations $p_1, \dots, p_n$ of $P(z)$, the coefficient of $z^t$ in $P(z)$ for every $t \in [n]$ can be computed in polynomial time.
\end{lemma}

\begin{lemma} \label{lemma:interpolation}
  Let $X = \{ x_1, \dots, x_n \}$ be a set of variables and let $P(X)$ be a polynomial of degree $d$ over a field $\F$.
  Let $P_k(X)$ be the homogeneous part of $P(X)$ of degree $k$, i.e., for every monomial $m$, its coefficient in $P_k$ is $P_k(m) = P(m)$ if $m$ has degree $k$, and $P_k(m) = 0$ otherwise. 
  Given evaluation access to $P$, we can simulate evaluation access of $P_k$ using $d + 1$ evaluations of $P$ and $\tilde{O}(d)$ arithmetic operations.
\end{lemma}
\begin{proof}
  Given $a_1, \dots, a_n \in \F$, we compute $P_k(a_1, \dots, a_n)$ as follows.
  Let $f_X(z):=P(a_1 z, \ldots, a_n z)$ for a new variable $z$.
  Note that $P_k(a_1, \dots, a_n)$ equals the coefficient of $z^k$ in $f(z)$,
  which can be computed in $\tilde{O}(d)$ arithmetic operations using
  fast interpolation (see e.g.,~\cite{vzGathen-MCA}). Alternatively,
  for a simpler version that is sufficient for our purposes, 
  choose distinct $c_1, \dots, c_{d+1} \in \F$, and for each $i \in [d+1]$, compute $f(z) = P(c_i a_1, \dots, c_i a_n)$ using evaluation access to $P$.
  Using these values, we can interpolate $f(z)$ in $O(d^2)$ arithmetic operations by Lemma~\ref{lemma:interpolation0}.
\end{proof}

\begin{lemma}[Inclusion-exclusion \cite{Wahlstrom13STACS}]
  \label{lemma:inclusion-exclusion}
  Let $P(Y)$ be a polynomial over a set of variables $Y = \{ y_1, \dots, y_n \}$ and a field of characteristic 2.
  For $T \subseteq [n]$, let $Q$ be a polynomial identical to $P$ except that the coefficients of monomials not divisible by $\prod_{i \in T} y_i$ are zero.
  Then, $Q = \sum_{I \subseteq T} P_{-I}$, where $P_{-I}(y_1, \dots, y_n) = P(y_1', \dots, y_n')$ for $y_i' = y_i$ if $i \notin I$ and $y_i' = 0$ otherwise. 
\end{lemma}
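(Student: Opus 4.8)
The plan is to prove the identity $Q = \sum_{I \subseteq T} P_{-I}$ by tracking, monomial by monomial, the total coefficient contributed on the right-hand side, and showing it matches the coefficient in $Q$. Fix a monomial $Y^m = \prod_{i} y_i^{m_i}$ appearing in $P(Y)$ with coefficient $P(m)$. The key observation is that in the substituted polynomial $P_{-I}$, the monomial $Y^m$ survives with coefficient $P(m)$ if $\supp(m) \cap I = \emptyset$, and otherwise vanishes, since setting $y_i = 0$ for $i \in I$ kills every monomial whose support meets $I$. Hence the total coefficient of $Y^m$ in $\sum_{I \subseteq T} P_{-I}$ is $P(m) \cdot \bigl|\{ I \subseteq T : I \cap \supp(m) = \emptyset \}\bigr| = P(m) \cdot 2^{|T \setminus \supp(m)|}$.

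**Carrying out the case analysis.** Now split on whether $\prod_{i \in T} y_i$ divides $Y^m$, i.e.\ whether $T \subseteq \supp(m)$. If $T \subseteq \supp(m)$, then $T \setminus \supp(m) = \emptyset$, so the multiplicity above is $2^0 = 1$, and $Y^m$ appears in the sum with coefficient exactly $P(m)$ --- matching its coefficient in $Q$. If $T \not\subseteq \supp(m)$, then $|T \setminus \supp(m)| \ge 1$, so the multiplicity $2^{|T \setminus \supp(m)|}$ is even, and since we are working over a field of characteristic two this contribution is $0$ --- again matching $Q$, in which the coefficient of $Y^m$ was set to zero. Since the two polynomials agree coefficient-by-coefficient on every monomial, $Q = \sum_{I \subseteq T} P_{-I}$ as claimed. (One should also note the degenerate monomial $Y^m = 1$ with $\supp(m) = \emptyset$ is covered by the second case whenever $T \neq \emptyset$, and trivially by the first when $T = \emptyset$.)

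**Main obstacle.** There is no real obstacle here; the statement is essentially a bookkeeping identity. The only point requiring mild care is being precise about what ``$P_{-I}$'' means as a polynomial identity rather than just an evaluation --- the substitution $y_i \mapsto 0$ is a ring homomorphism $\F[Y] \to \F[Y]$, so $P_{-I}$ is a well-defined polynomial, and linearity of this homomorphism lets us pass to a monomial-by-monomial argument. The essential input is simply that the number of subsets of a nonempty finite set is even, combined with $\mathrm{char}(\F) = 2$; this is exactly the feature that makes the sieve ``clean'' (it does not alter the coefficient of any surviving monomial).
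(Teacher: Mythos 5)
Your proof is correct: the monomial-by-monomial count of subsets $I \subseteq T$ with $I \cap \supp(m) = \emptyset$ giving multiplicity $2^{|T \setminus \supp(m)|}$, combined with characteristic~2 to kill the even counts, is exactly the standard inclusion-exclusion argument. The paper itself does not reproduce a proof (the lemma is cited to Wahlstr\"om~\cite{Wahlstrom13STACS}), but your argument is the expected one and is complete.
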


Let $A$ be a matrix over a field $\F$.
For a set of rows $I$ and columns $J$, we denote by $A[I, J]$ the submatrix containing rows $I$ and columns $J$.
If $I$ contains all rows ($J$ contains all columns), then we use the shorthand $A[\cdot, J]$ ($A[I, \cdot]$, respectively).

For a $k \times n$-matrix $A_1$ and an $n \times k$-matrix $A_2$, the Cauchy-Binet formula states that
\begin{align*}
  \det (A_1 A_2) = \sum_{S \in \binom{[n]}{k}} \det (A_1[\cdot, S]) \det (A_2[S, \cdot]).
\end{align*}

A square matrix $A$ whose diagonal entries are all zero is called \emph{skew-symmetric} if $A = -A^T$.
Suppose that the rows and columns of $A$ are indexed by $V$.
The \emph{Pfaffian} of $A$ is defined by
\begin{align*}
  \Pf A = \sum_{M} \sigma_M \prod_{uv \in M} A[u, v],
\end{align*}
where $M$ is ranges over all perfect matchings of the complete graph $(V, \binom{V}{2})$, and $\sigma_M = \pm 1$ is the sign of $M$, whose definition is not relevant in this work (see e.g., \cite{murota1999matrices}).
It is well-known that $\det A = (\Pf A)^2$.

\subsection{Linear matroids}
\label{sec:prel-matroids}

We review the essentials of matroid theory, with a focus on linear
matroids. For more background, see Oxley~\cite{OxleyBook}
and Marx~\cite{Marx09-matroid}.
A \emph{matroid} is a pair $M=(V,\cI)$ where $V$ is the ground set 
and $\cI \subseteq 2^V$ a set of \emph{independent sets} in
$M$, subject to the following axioms:
\begin{enumerate}
\item $\emptyset \in \cI$
\item If $B \in \cI$ and $A \subset B$ then $A \in \cI$
\item For any $A, B \in \cI$ such that $|A|<|B|$ there exists an
  element $x \in B \setminus A$ such that $(A+x) \in \cI$.
\end{enumerate}
A \emph{basis} of a matroid $M$ is a maximal independent set.
The \emph{rank} $r(M)$ of $M$ is the cardinality of a basis of $M$. 
A \emph{linear matroid} is a matroid $M$ represented by a matrix $A$ with
column set $V$, such that a set $S \subseteq V$ is independent in $M$
if and only if the set of columns of $A$ indexed by $S$ is linearly independent.
We review some useful matroid constructions, expanded from the introduction.
All the matroids below can be represented over fields of characteristic 2,
although in some cases the only known methods for efficiently constructing a
representation are randomized.
\begin{itemize}
\item A \emph{uniform matroid} $U_{n,k}$ is the matroid $M=(V,\cI)$
  where $\cI=\binom{V}{\leq k}$ (for $|V|=n$), i.e., a set is
  independent if and only if it has cardinality at most $k$.
  It is well-known that a Vandermonde matrix $A \in \mathbb{F}^{k \times n}$ defined by $A[i,j] = a_j^{i-1}$, where $a_j$'s are all distinct, gives a representation of a uniform matroid.

\item A \emph{partition matroid} $M=(V,\cI)$ is defined by a partition
  $V=V_1 \cup \ldots \cup V_d$ of the ground set and a list of
  capacities $(c_i)_{i=1}^d$. A set $S \subseteq V$ is independent if
  and only if $|S \cap V_i| \leq c_i$ for every $i \in [d]$.

\item Given an undirected graph $G=(V,E)$, the \emph{graphic matroid}
  of $G$ is a matroid $M=(E,\cI)$ where a set $F \subseteq E$ is
  independent if and only if it is acyclic. The \emph{cographic matroid}
  of $G$ is a matroid $M=(E,\cI)$ where a set $F \subseteq E$ is
  independent if and only if its deletion preserves connectivity (i.e., $G$ and
  $G-F$ have the same connected components). Graphic and co-graphic matroids
  are representable over every field~\cite{OxleyBook}.
  
\item Let $G=(U \cup V,E)$ be a bipartite graph. The \emph{transversal
    matroid} of $G$ is the matroid $M=(V,\cI)$ where a set
  $S \subseteq V$ is independent if and only if it is matchable in
  $G$.
\item Let $G=(V,E)$ be a digraph and $T \subseteq V$ a set of
  terminals. A set $S \subseteq V$ is \emph{linked} to $T$ if there is
  a collection of $|S|$ pairwise vertex-disjoint paths from $S$ to $T$.
  The set of all sets $S \subseteq V$ that are linked to $T$ form a
  matroid called a \emph{gammoid}.
\end{itemize}

If $M=(V,\cI)$ is a matroid, the \emph{dual matroid} of $M$ is
the matroid $M^*=(V,\cI')$ where a set $S \subseteq V$ is independent
in $M^*$ if and only if $V \setminus S$ contains a basis of $M$. 
Given a representation for $M$, a representation for $M^*$ can be
constructed in deterministic polynomial time over the same field (see \cite{OxleyBook}).  
Given two matroids $M_1(V_1,\cI_1)$ and $M_2(V_2,\cI_2)$ on disjoint
sets $V_1$ and $V_2$, the \emph{disjoint union} of $M_1$ and $M_2$ is
the matroid $M=M_1 \lor M_2=(V,\cI)$ where $V=V_1 \cup V_2$
and a set $S \subseteq V$ is independent if and only if
$S \cap V_1 \in \cI_1$ and $S \cap V_2 \in \cI_2$. 
More generally, for any two matroids $M_1=(V_1,\cI_1)$ and $M_2=(V_2,\cI_2)$
the \emph{matroid union} $M=M_1 \lor M_2$ is the matroid $M=(V,\cI)$
where $V=V_1 \cup V_2$ and $\cI=\{I_1 \cup I_2 \mid I_1 \in \cI_1, I_2 \in \cI_2\}$. 
Given representations of $M_1$ and $M_2$, a representation for $M_1 \lor M_2$
can be constructed in randomized polynomial time, possibly
by moving to an extension field (see \cite{OxleyBook}).  
Note that moving to an extension field preserves the characteristic.
The \emph{extension} of a matroid $M=(V,\cI)$ by rank $d$ is the matroid $M \lor M'$
where $M'$ is the uniform matroid of rank $d$ over $V$. 

For a matroid $M=(V,\cI)$, the \emph{$k$-truncation} of $M$
is the matroid $M'=(V,\cI')$ where for $S \subseteq V$,
$S \in \cI'$ if and only if $S \in \cI$ and $|S| \leq k$.
Given a representation of $M$ over a field $\F$, which is either
a finite field or the rationals, a representation of
the $k$-truncation of $M$ over an extension field of $\F$
can be constructed in polynomial time, at the cost of moving
to an extension field of $\F$~\cite{LokshtanovMPS18,Marx09-matroid}.

Given two matroids $M_1=(V,\cI)$ and $M_2=(V,\cI)$, the \emph{matroid
  intersection} problem is to find a common basis $B$ of $M_1$ and $M_2$.
Matroid intersection can be solved in polynomial time, with a variety
of methods~\cite{Schrijver}. In this paper, with a focus on linear
matroids, we note that the Cauchy-Binet formula yields an enumerating
polynomial for matroid intersection, and thereby a randomized
efficient algorithm. More generally, given a matroid $M=(V,\cI)$
and a partition $E$ of $V$ into pairs, the \emph{matroid matching}
(or \emph{matroid parity}) problem is to find a basis $B$ of $M$
which is a union of $|B|/2$ pairs. Matroid matching is infeasible in
general, but efficiently solvable over linear matroids~\cite{Lovasz79,Schrijver}.

\subsection{Enumerating polynomials} \label{sec:enum-poly}

Let $V$ be a ground set and $\cF \subseteq 2^V$ be a set family over $V$.
An \emph{enumerating polynomial} over a set of variables $X = \{ x_v \mid v \in V \}$ and auxiliary variables $Y$ over a field $\F$ is
\begin{align*}
  P(X, Y) = \sum_{S \in \cF} Q_S(Y) \prod_{v \in S} x_v,
\end{align*}
where $Q_S(Y)$ for $S \in \cF$ is a polynomial over $\F$ that is not identically zero.
We give useful examples of enumerating polynomials that can be efficiently evaluated below.

\paragraph*{$k$-walks.}
For a directed graph $G = (V, E)$, two vertices $s, t \in V$, and an integer $k$, an enumerating polynomial for $k$-walks from $s$ to $t$ is defined as follows.
Let $X = \{ x_{s,0} \} \cup \{ x_{v,i} \mid v \in V, i \in [k] \} \cup \{ x_{e,i} \mid e \in E, i \in [k] \}$ be variables.
For every $A_i$ define a $V \times V$-matrix $A_i$ with
\begin{align*}
  A_i[u,v] = \begin{cases}
    x_{v,i} x_{uv,i} & \text{ if } uv \in E \\
    0 & \text{ otherwise.}
  \end{cases}
\end{align*}
Then, the polynomial
\begin{align*}
  P(X) = x_{s,0} \cdot e_s^T A_1 A_2 \cdots A_{k} e_t,
\end{align*}
where $e_s$ and $e_t$ are the unit vectors with $e_s[s] = 1$ and $e_t[t] = 1$, enumerates all (labelled) $k$-walks from $s$ to $t$.
The polynomial can be defined for undirected graphs analogously.

\paragraph*{Matroid Intersections.}
For linear matroids $M_1 = (V, \cI_1), M_2 = (V, \cI_2)$ with the ground set $V$ represented by $A_1, A_2 \in \F^{k \times V}$, let $X = \{ x_v \mid v \in V \}$ be variables for $V$.
Then, by the Cauchy-Binet formula, the polynomial
\begin{align*}
  P(X) = \det A_1 A_X A_2^T = \sum_{B \in \binom{V}{k}} \det A_1[\cdot, B] \det A_2[\cdot, B] \prod_{v \in B} x_v,
\end{align*}
where $A_X$ is a diagonal matrix of dimension $V \times V$ with $A_X[v, v] = x_v$ for every $v \in V$, enumerates all matroid intersection terms.
Particularly, we obtain an effective evaluation of an enumerating polynomial for branchings in directed graphs.
Recall that an out-branching (in-branching) is a rooted tree with every arc oriented away from (towards) the root.
This can be expressed as the intersection of a graphic matroid and a partition matroid, where the partition matroid ensures that every vertex has in-degree (out-degree) at most one.
Hence this is a special case of matroid intersection and an enumerating polynomial for out-branchings and in-branchings can be efficiently evaluated.
Alternatively, one can use the directed matrix-tree theorem (see \cite{BjorklundKK17directed,gessel95algebraic}).

\paragraph*{Perfect matchings.}
For an undirected graph $G = (V, E)$ (with a fixed ordering $<$ on $V$), the \emph{Tutte matrix} is defined by
\begin{align*}
  A[u,v] = \begin{cases}
    x_{uv} & \text{ if } uv \in E \text{ and } u < v \\
    -x_{uv} & \text{ if } uv \in E \text{ and } v < u \\
    0 & \text{ otherwise}.
  \end{cases}
\end{align*}
Then, the Pfaffian $\Pf A$ enumerates all perfect matchings, which can be efficiently evaluated using an elimination procedure.
For an integer $k$, all $k$-matchings (matchings with $k$ edges) also can be enumerated: Introduce $n - 2k$ vertices that are adjacent to all vertices in $G$; the Pfaffian of the resulting graph enumerates all $k$-matchings.

\subsection{Conventions -- fields and representations}
\label{subsec:conventions}
We frequently make the assumption that operations are performed over a
\emph{sufficiently large} field (typically of characteristic 2); e.g.,
that a given polynomial $P(X)$ can be evaluated over a sufficiently
large field and that a linear matroid $M$ is represented over a
sufficiently large field. The precise notion of ``sufficiently large''
depends somewhat on context, but typically we can assume with no
significant impact on complexity that we are working over a finite field $\F$ with
$|\F|=2^{O(n)}$ elements, where $n=|X|$ is the size of the variable
set of $P$ (respectively, the ground set of the matroid). 

To briefly justify this assumption, we make some quick notes. 
For a finite field $\F$, members of $\F$ can be represented in $O(\log |\F|)$ bits, 
and arithmetic operations over $\F$ can be performed in $\tilde O(\log |\F|)$ time,
where $\tilde O$ hides factor of $O(\log \log |\F|)$. Thus,
in a context where our running times are only given up to a polynomial
factor, such an assumption on $|\F|$ does not have a major impact.
Furthermore, if $P(X)$ is already represented over a field $\F$,
then we generally assume that we can choose to carry out the
evaluation over an extension field of $\F$ instead, if needed.
We refer to von zur Gathen and Gerhard~\cite{vzGathen-MCA} for
background on these algorithms. 

The field size is relevant in our algorithms in two ways. First,
given $P(X)$ and $M$, our algorithms reduce down to testing whether a
second, implicitly defined polynomial $Q(X)$ is non-zero, which
is done via the Schwartz-Zippel lemma. For this step, it suffices
that $|\F| \gg d$ where $d$ is the degree of $P(X)$, where
typically $d=n^{O(1)}$. Second, we require that the matroid $M$ can be
represented over $\F$. This gives a bound on $|\F|$ that depends on
the matroid, but for all the matroids considered in this paper,
a linear representation can be efficiently produced (with high probability)
over any field with $|\F|=2^{\Omega(n)}$ elements. See~\cite{OxleyBook,Marx09-matroid}.

In a setting where this overhead is unacceptable, we have two options.
The main option is to assume that the matroid $M$ is representable over
a smaller field than above, e.g., a field of size $2^{(k+\log n)^{O(1)}}$
so that the time per field operation is restricted to $(k+\log n)^{O(1)}$.
Among the matroids listed in Section~\ref{sec:prel-matroids},
this covers everything except gammoids. In particular, 
graphic and co-graphic matroids can be represented over
any field; uniform matroids and partition matroids can be represented over any field of size at least $n$;  
and a transversal matroid of rank $k$ can be represented over a field
of size $2^{\Omega(k \log n)}$ by the use of the Schwartz-Zippel lemma.
It also covers the use of matroid union (disjoint or not)
and (randomized) truncation to rank $k^{O(1)}$.

Another option might be to follow Koana and Wahlström~\cite{KoanaW25stacs}
and consider \emph{approximate representations} -- randomized
representations of matroids $M'$ over a smaller field size, whose
independent sets are a subset of the independent sets of $M$
and where every independent set of $M$ has some probability $1-\varepsilon$
of being independent in $M'$. We refrain from pursuing this in any detail,
in order not to complicate our results needlessly.

In all the above, the randomness is one-sided -- the errors can
consist of false negatives but not false positives.

\section{Determinantal sieving}
\label{sec:sieving} 

\subsection{Over a field of characteristic 2}

We show that, with only evaluation access to a polynomial (over a
field of characteristic 2), we  can decide whether its expansion contains a monomial whose support spans a linear matroid
(equivalently, contains a basis as a subset).
We will give two sieving algorithms, one that sieves for terms that are also independent (basis sieving) and the other that sieves for terms whose odd support sets are spanning (odd sieving).
One could derive basis sieving from odd sieving using polynomial interpolation (Lemma~\ref{lemma:interpolation}).
We will, however, give a direct proof for basis sieving as well because basis sieving itself has applications.
Typically, basis sieving is useful when we are searching for a solution of size exactly $k$ (regardless of whether the objective is maximisation or minimisation).
Odd sieving is particularly powerful when we want to exclude variables in the support set with even (typically 2) contributions.
See Sections~\ref{sec:diverse} and \ref{sec:paths-linkages} for such applications.

We begin with a support statement. This is the central observation
for our sieving algorithms.

\begin{lemma} \label{lemma:sieve-determinant}
  Let $A \in \F^{k \times k}$ be a matrix over a field $\F$ of
  characteristic 2 and define the polynomial
  \[
    P(y_1,\ldots,y_k) = \prod_{i=1}^k \sum_{j=1}^k y_j A[j,i].
  \]
  Then the coefficient of $\prod_{i=1}^k y_i$ in $P$
  is $\det A$. 
\end{lemma}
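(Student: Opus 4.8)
The plan is to expand the product $P$ explicitly over all ways of choosing a row index for each column, observe that the coefficient of the fully multilinear monomial $\prod_{i=1}^k y_i$ is exactly the sum over permutations $\sigma \in S_k$ of $\prod_{i=1}^k A[\sigma(i),i]$, and then invoke the fact that over a field of characteristic $2$ this unsigned sum (the permanent) coincides with the determinant.

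Concretely, first I would distribute the product over the sums, writing
\[
  P(y_1,\ldots,y_k) = \prod_{i=1}^k \sum_{j=1}^k y_j A[j,i] = \sum_{f \colon [k] \to [k]} \prod_{i=1}^k y_{f(i)} A[f(i),i],
\]
where the outer sum ranges over all functions $f$ assigning to each column index $i$ a row index $f(i)$. For a fixed $f$, the $y$-part of the associated term is $\prod_{i=1}^k y_{f(i)}$, and this equals $\prod_{i=1}^k y_i$ precisely when $f$ is a bijection: if $f$ is not injective then some $y_j$ occurs with exponent at least $2$, and conversely an injective self-map of the finite set $[k]$ is a permutation. Collecting only those terms, we get that the coefficient of $\prod_{i=1}^k y_i$ in $P$ equals
\[
  \sum_{\sigma \in S_k} \prod_{i=1}^k A[\sigma(i),i].
\]

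It remains to identify this with $\det A$. By definition $\det A = \sum_{\sigma \in S_k} \operatorname{sgn}(\sigma) \prod_{i=1}^k A[\sigma(i),i]$; since $\F$ has characteristic $2$, each sign $\operatorname{sgn}(\sigma) \in \{+1,-1\}$ equals $1$ in $\F$, so the signed sum equals the unsigned one, and the coefficient is exactly $\det A$, as claimed.

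I do not expect a genuine obstacle here; the two places to be careful are that the expansion of the product must be taken over all functions $f \colon [k] \to [k]$ (not just injections), so the non-bijective contributions have to be explicitly discarded when isolating the monomial $\prod_i y_i$, and that the collapse $\operatorname{sgn}(\sigma) = 1$ is exactly where the hypothesis $\operatorname{char}\F = 2$ is used — over other fields the same computation would yield the permanent of $A$ rather than its determinant.
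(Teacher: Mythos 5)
Your proof is correct and follows essentially the same route as the paper's: expand the product over all functions $f \colon [k] \to [k]$, observe that only bijections contribute to the coefficient of $\prod_i y_i$, and use characteristic $2$ to identify the resulting permanent with $\det A$.
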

\begin{proof}
  Expanding the product into monomials, we get precisely
  \[
    \sum_{f \colon [k] \to [k]} \prod_{i=1}^k y_{f(i)} A[f(i),i]
  \]
  where $f$ ranges over all mappings $[k] \to [k]$.
  Considering only those terms of the sum which contain
  all variables $y_i$, $i \in [k]$ we find that the coefficient of
  $\prod_{i=1}^k y_i$ is precisely a sum over all transversals of $A$,
  i.e., $\det A$, in particular, since $\F$ is of characteristic 2
  the sign term of the determinant disappears. 
\end{proof}

If performed over fields of characteristic other than 2, then instead
of $\det A$ the coefficient is the permanent of $A$ (while over fields
of characteristic 2, the permanent and the determinant agree). To
cover applications for fields of other characteristics, we instead use
the \emph{exterior algebra}; see Section~\ref{sec:extensor}. For the
below, we focus on applications over fields of characteristic 2. 

\subsubsection{Sieving for bases} \label{sec:sieve-basis}

The following is the most immediate application of Lemma~\ref{lemma:sieve-determinant}
(proving Theorem~\ref{ithm:simple} from the introduction).
We also add an observation about tighter running time when the
polynomial is already homogeneous. 

\begin{theorem}[Basis sieving] \label{theorem:sieve-basis}
  Let $X=\{x_1,\ldots,x_n\}$ be a set of variables and 
  let $P(X)$ be a polynomial of degree $d$ over a field $\F$ of characteristic 2.
  Let $A \in \F^{k \times n}$ be a matrix representing a matroid $M=(X, \cI)$. 
  In time $O^*(2^k)$ and polynomial space, using evaluation access to $P$, 
  we can test if the monomial expansion of $P$ contains a multilinear monomial
  $m$ whose support is a basis for $M$. Our algorithm is randomized with no false
  positives and failure probability at most $2k/|\F|$.
  If $P$ is a homogeneous polynomial of degree~$k$, then the polynomial overhead
  disappears and the running time is $O(2^k(nk\cdot f +T))$ where
  $f$ is the time for a field operation and $T$ is the time to
  evaluate $P$.
\end{theorem}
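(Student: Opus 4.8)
The plan is to use Lemma~\ref{lemma:sieve-determinant} as a gadget: we introduce $k$ fresh variables $y_1,\ldots,y_k$, one per row of $A$, and substitute each $x_i \gets \bigl(\sum_{j=1}^k y_j A[j,i]\bigr)\,t_i$ where $t_i$ is a fresh ``fingerprinting'' variable (or a random field element) guarding against cancellations between distinct monomials of $P$. Call the resulting polynomial $\widetilde P(Y,T)$. If $m = \prod_{i \in S} x_i$ is a multilinear monomial of $P$ with $|S|=k'$, then its image under the substitution is $P(m)\,\bigl(\prod_{i\in S} t_i\bigr)\prod_{i\in S}\sum_{j=1}^k y_j A[j,i]$, and by the (submatrix version of the) computation in Lemma~\ref{lemma:sieve-determinant} the coefficient of $\prod_{j=1}^k y_j$ inside this factor is nonzero exactly when $k'=k$ and the $k\times k$ matrix $A[\cdot,S]$ is nonsingular, i.e.\ when $S$ is a basis of $M$; in that case the coefficient equals $\det A[\cdot,S]$. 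Monomials of $P$ that are not multilinear contribute higher powers of some $t_i$ and cannot interfere after we project onto the multilinear-in-$T$ part, and the $t_i$'s keep the surviving contributions from different bases $S$ algebraically separated so that no accidental cancellation kills a witness.

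The second step is to extract the coefficient of $\prod_{j=1}^k y_j$. This is exactly an instance of inclusion--exclusion sieving over characteristic $2$: by Lemma~\ref{lemma:inclusion-exclusion} with $T = [k]$ (on the $Y$-variables), $Q := \sum_{I \subseteq [k]} \widetilde P_{-I}$ is the polynomial obtained from $\widetilde P$ by keeping only monomials divisible by $\prod_{j=1}^k y_j$, and $Q \not\equiv 0$ if and only if some basis-supported multilinear monomial of $P$ survives. Computing $Q$ requires $2^k$ evaluations of $\widetilde P$, each of which is one evaluation of $P$ at a point we can assemble in $O(nk)$ field operations (forming the $n$ values $\sum_j y_j A[j,i]$), plus we handle the fingerprinting variables $T$ by the Schwartz--Zippel lemma: assign the $t_i$ random field elements up front, so $Q$ becomes a polynomial in $Y$ alone that we never need to expand — we only need to decide whether $Q\not\equiv 0$, which we do by also assigning the $y_j$ random values and checking nonzeroness. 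Unwinding, the whole algorithm is: pick random values for $Y$ and $T$, form the $2^k$ evaluation points of $P$ obtained by zeroing out subsets $I$ of the $Y$-coordinates, evaluate $P$ at each, sum the results, and report ``yes'' iff the sum is nonzero. No false positives follow because a nonzero sum certifies $Q\not\equiv 0$ hence a witness; the failure probability is bounded by Schwartz--Zippel applied to $Q$, whose total degree in $Y\cup T$ is at most $2k$ (degree $k$ from the $\sum_j y_j A[j,i]$ factors contributing to a degree-$k$ monomial, degree $k$ from $T$), giving the claimed $2k/|\F|$.

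For the running-time bookkeeping in the general case: each of the $2^k$ evaluation points costs $O(nk)$ field operations to build the substituted values, one call of cost $T$ to evaluate $P$, and then the final $Y$-and-$T$-substitution and summation are dominated by this; interpolation is not actually needed here since we work at random points, so the only polynomial overhead is the degree $d$ appearing because we may first need to reduce to a homogeneous situation or because we phrase coefficient extraction via interpolation in $T$ — either way this costs a factor $O^*(d)$, yielding $O^*(d2^k)$. When $P$ is already homogeneous in $X$, every monomial of $P$ that contributes has degree exactly $k$ in $X$ automatically (any lower-degree term would have too few $x_i$'s to supply all of $y_1\cdots y_k$, and there are no higher-degree multilinear terms of the right support type to worry about), so we can drop the interpolation-in-$T$ step entirely, assign the $t_i$ random scalars directly, and the bound sharpens to $O(2^k(nk\cdot f + T))$ exactly as stated, where $f$ is the cost of a field operation and $T$ the cost of one evaluation of $P$.

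The main obstacle is controlling cancellations precisely: one must verify that (i) non-multilinear monomials of $P$, after substitution, genuinely cannot contribute to the coefficient of $\prod_j y_j$ combined with a squarefree $T$-part — this is where the $t_i$ fingerprints do the work, since a repeated $x_i$ forces a repeated $t_i$ — and (ii) distinct bases $S$ produce contributions that the random $T$-assignment separates, so that $Q\equiv 0$ really is equivalent to ``no basis-supported multilinear monomial exists'' rather than to an unlucky sum of determinants vanishing. Both points reduce to a clean Schwartz--Zippel argument once the polynomial identity $Q = \sum_{I\subseteq[k]}\widetilde P_{-I}$ is in hand, so the proof is, as advertised, essentially an inspection of an inclusion--exclusion sieve.
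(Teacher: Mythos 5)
Your plan matches the paper's proof in outline: substitute each $x_i$ by a linear form in $Y=\{y_1,\ldots,y_k\}$ times a fingerprint, inclusion--exclude over $Y$, read off the determinant via Lemma~\ref{lemma:sieve-determinant}, and finish with Schwartz--Zippel. Introducing fresh $t_i$ rather than keeping the original $x_i$ as the fingerprint (the paper substitutes $x_i \gets x_i \sum_j y_j A[j,i]$) is cosmetic. However, the correctness argument conflates two mechanisms and, as written, has a hole that your proposed multilinear-in-$T$ projection cannot close.

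The degree-$k$ extraction is the load-bearing step for the ``no false positives'' claim, not a running-time footnote. The paper first extracts the homogeneous degree-$k$ part $P_k$ by evaluating $P(zx_1,\ldots,zx_n)$ and interpolating on $z$ (this is where the factor $d$ enters); only then does the inclusion--exclusion over $Y$ actually isolate the coefficient of $\prod_j y_j$, because every monomial of $P_k$ has $Y$-degree exactly $k$ after substitution, so ``divisible by $\prod_j y_j$'' coincides with ``equal to $\prod_j y_j$''. Without this step, a \emph{multilinear} monomial $m$ of $P$ with $|m|>k$ is squarefree in $T$, so it is immune to any multilinear-in-$T$ filter, yet its $Y$-part --- a product of $|m|>k$ linear forms --- does contain monomials divisible by $\prod_j y_j$ that survive Lemma~\ref{lemma:inclusion-exclusion} and create a false positive; $Q\not\equiv 0$ does not then certify a basis-supported witness. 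Relatedly, you credit the $t_i$ with eliminating non-multilinear monomials, but that is not the mechanism and would anyway require a separate sieve over $T$ that you never perform. Once restricted to degree $k$, a non-multilinear $m$ already contributes the coefficient $\det A[\cdot,(i_1,\ldots,i_k)]$ with a repeated column index (Lemma~\ref{lemma:sieve-determinant}), which is identically zero --- no randomness and no projection needed. The fingerprints' sole job is to keep distinct surviving multilinear monomials algebraically separate so that Schwartz--Zippel on $Q$ (of degree $2k$, as you correctly count) has no accidental cancellation between genuine witnesses. You do gesture at the fix in the running-time accounting (``reduce to a homogeneous situation''), but the correctness argument as stated quietly assumes it; make the $P_k$ extraction explicit and drop the multilinear-in-$T$ projection, and the rest of your proof lines up with the paper's.
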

\begin{proof}
  Let $P_k(X)$ denote the homogeneous degree $k$ part of $P(X)$, i.e.,
  for every monomial $m$ of total degree $k$ the coefficient of $m$ in $P_k$
  is $P_k(m)=P(m)$, and for every other monomial $m$ we have $P_k(m)=0$. 
  By Lemma~\ref{lemma:interpolation}, one can evaluate $P_k(X)$ in
  $\tilde{O}(d)$ field operations using $O(d)$ evaluations of $P$.
  To simplify notation, we will write $P$ in place of $P_k$.

  Introduce a set of variables $Y=\{y_1,\ldots,y_k\}$ and define a new polynomial
  \[
    P'(X,Y) = P_k\left(x_1 \sum_{i=1}^k y_i A[i,1], \ldots, x_n \sum_{i=1}^k y_i A[i,n]\right).
  \]
  Let $Q(X,Y)$ be the coefficient of $\prod_{i \in [k]} y_i$ in $P'(X,Y)$, which
  can be computed from $2^k$ evaluations of $P'$ (hence of $P$),
  using the method of inclusion-exclusion in Lemma~\ref{lemma:inclusion-exclusion}. Since $Q$ is
  obtained from $P'$ via substitutions, it suffices to consider its effect on a single monomial at a time.
  Let $m=x_1^{m_1} \cdots x_n^{m_n}$ be a monomial in the expansion of $P$.
  Let $(i_1,\ldots,i_k)$ be the sequence of non-zero indices of $m$
  repeated with multiplicity according to degree, in non-decreasing order; e.g.,
  a monomial $x_1^3x_4^2$ corresponds to sequence $(1,1,1,4,4)$.
  In the evaluation of $P'$, each monomial $m$ in $P$ turns into
  \[
    P(m) \cdot X^m \cdot \prod_{p=1}^k \sum_{j=1}^k y_j A[j,i_p]
  \]
  where $P(m)$ is the coefficient of $m$ in $P$. Using Lemma~\ref{lemma:sieve-determinant},
  the contribution of the monomial $m$ to $Q(X,Y)$ is precisely
  \[
    P(m) \cdot X^m \cdot Y_k \cdot \det A[\cdot,(i_1,\ldots,i_k)]
  \]
  where $Y_k=\prod_{i=1}^k y_i$ and $A'=A[\cdot,(i_1,\ldots,i_k)]$ denotes the
  matrix consisting of columns $i_j$ of $A$ included with multiplicity. 
  Now, if $m$ is not multilinear, then the resulting matrix $A'$ has a repeated column
  and is clearly singular, so $m$ does not contribute to $Q$. If $m$ is multilinear,
  then $m$ contributes a non-zero value to $Q$ if and only if the
  support of $m$ spans $M$. Furthermore, since the first part
  $P(m) X^m$ of this expression is precisely the value of the original
  monomial $m$ in $P$, no further algebraic cancellation occurs in $Q$.
  Hence $Q$ enumerates monomials corresponding to multilinear
  monomials in $P$ whose support spans~$M$. The result now follows
  from a random evaluation of $P$ using Schwartz-Zippel. In particular, 
  $Q$ has degree $2k$ since the sieving started from $P_k(X)$.

  For the case that $P$ is homogeneous, we can bypass the phase of
  extracting $P_k(X)$ and use $P(X)$ directly. The polynomial $Q(X,Y)$
  is defined as a sum over $2^k$ evaluations of $P(X)$
  with arguments $x_j \sum_{i=1}^k y_i A[i,j]$, $j \in X$. 
  The precise running time follows with no additional tricks.   
\end{proof}

We note a variant of this. Instead of every variable $x_v$ being associated with
only one column $v$ of~$A$, we may wish for each variable to be associated with
multiple columns. 

\begin{corollary} \label{cor:basis}
  Let $X=\{x_1,\ldots,x_n\}$ be a set of variables and 
  let $P(X)$ be a polynomial of degree $d$ over a field $\F$ of characteristic 2.
  Let $A \in \F^{k \times V}$ be a matrix representing a matroid $M=(V, \cI)$ of rank~$k$. 
  Suppose that each variable $x_i$ is associated with pairwise disjoint subsets $\Gamma_i \subseteq V$ of size $\gamma_i$.
  In time $O^*(2^k)$ and polynomial space, using evaluation access to $P$, 
  we can test if the monomial expansion of $P$ contains a multilinear monomial
  $m$ such that $\bigcup_{i \in \supp(m)} \Gamma_i$ is a basis for $M$. Our algorithm is randomized with no false
  positives and failure probability at most $2k/|\F|$.
\end{corollary}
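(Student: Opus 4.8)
The plan is to reduce the claim to basis sieving (Theorem~\ref{theorem:sieve-basis}) by ``unfolding'' each variable $x_i$ into a product of fresh variables, one per element of $\Gamma_i$. We may assume $\gamma_i \ge 1$ for every $i$, since a variable associated with $\Gamma_i = \emptyset$ plays no role in $M$. Introduce a fresh variable $z_v$ for each $v \in V$, let $Z = \{z_v \mid v \in V\}$, and set
\[
  P'(Z) = P\!\left(\prod_{v \in \Gamma_1} z_v,\ \ldots,\ \prod_{v \in \Gamma_n} z_v\right).
\]
Associate each $z_v$ with the column of $A$ indexed by $v$, so that $A$ represents $M$ over the variable set $Z$. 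I would then run basis sieving on $P'$ with this representation and return its answer verbatim.

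The core of the argument is to check that the monomials detected by basis sieving on $P'$ correspond exactly to the multilinear monomials $m$ of $P$ with $\bigcup_{i \in \supp(m)} \Gamma_i$ a basis of $M$. Track a monomial $X^m$ with $m=(m_1,\dots,m_n)$ through the substitution: it maps to $\prod_i \prod_{v \in \Gamma_i} (z_v)^{m_i}$, and because the sets $\Gamma_i$ are pairwise disjoint, each $z_v$ acquires exponent $m_{i(v)}$, where $i(v)$ is the unique index with $v \in \Gamma_{i(v)}$ (and exponent $0$ if $v$ lies in no $\Gamma_i$). Hence: (i) the image is multilinear in $Z$ iff $m_i \le 1$ for all $i$, i.e.\ iff $m$ is multilinear in $X$ (here $\gamma_i \ge 1$ is used); (ii) the $Z$-support of the image is precisely $\bigcup_{i \in \supp(m)}\Gamma_i$; and (iii) the substitution is injective on monomials — if $m \ne m'$, pick $i$ with $m_i \ne m'_i$ and some $v \in \Gamma_i$, and then $z_v$ receives different exponents in the two images — so no algebraic cancellation occurs and the coefficient of the image of $X^m$ in $P'$ equals $P(m)$. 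Combining (i)--(iii), $P'$ contains a multilinear-in-$Z$ monomial whose $Z$-support is a basis of $M$ if and only if $P$ contains a multilinear-in-$X$ monomial $m$ with $\bigcup_{i \in \supp(m)}\Gamma_i$ a basis of $M$, which is exactly what Theorem~\ref{theorem:sieve-basis} tests on $P'$, transported back.

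For the resource bounds, $P'$ is evaluated with a single evaluation of $P$ plus polynomial-time overhead (forming the products $\prod_{v\in\Gamma_i} z_v$), and $\deg P' \le d \cdot \max_i \gamma_i \le d\,|V|$, which is $d$ up to polynomial factors; so Theorem~\ref{theorem:sieve-basis} yields running time $O^*(d 2^k)$, polynomial space, no false positives, and failure probability at most $2k/|\F|$, as claimed.

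I expect the only delicate point to be item (iii), cancellation-freeness over characteristic $2$, and this is precisely where pairwise disjointness of the $\Gamma_i$ is essential: without it the exponents contributed by distinct $x_i$ would be summed inside a shared $z_v$, which would break both the multilinearity correspondence in (i) and the injectivity in (iii), and one would then have to reintroduce fingerprinting variables (e.g.\ substituting $x_i \mapsto x_i\prod_{v\in\Gamma_i}z_v$ and treating the original $X$ as an auxiliary variable set) to keep colliding monomials apart. Everything else is routine bookkeeping on top of Theorem~\ref{theorem:sieve-basis}.
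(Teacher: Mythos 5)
Your proof is correct and takes essentially the same route as the paper's: unfold each $x_i$ into a product of fresh variables, one per element of $\Gamma_i$, associate each fresh variable with the corresponding column of $A$, and apply basis sieving (Theorem~\ref{theorem:sieve-basis}) to the substituted polynomial. The only cosmetic difference is that the paper indexes the fresh variables as $x'_{i,v}$ while you use $z_v$ for $v \in V$; under the pairwise-disjointness hypothesis the two variable sets coincide on the support, so the correspondence between multilinear monomials of $P$ and of $P'$ and the injectivity argument are the same. (Your remark that $\gamma_i\ge 1$ may be assumed is slightly hand-wavy since a $\Gamma_i=\emptyset$ substitution collapses monomials, but the paper's proof implicitly makes the same assumption, and all intended applications satisfy it.)
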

\begin{proof}
  Define a new set of variables $X' = \{ x_{i,v}' \mid i \in [n], v \in \Gamma_i \}$, and
  applying Theorem~\ref{theorem:sieve-basis} to the polynomial $P'(X')$
  resulting from an evaluation where
  \[
    x_i = \prod_{v \in \Gamma_i} x_{i,v}'
  \]
  for every $x_i \in X$, yields the desired result.
\end{proof}

\subsubsection{Sieving for spanning sets} \label{sec:odd-sieve}

We give the odd sieving algorithm, proving Theorem~\ref{ithm:odd-sieve}.
The proof is similar to that of basis sieving.
In particular, we give a variant as in Corollary~\ref{cor:basis}, where each variable is associated with a subset of elements from the matroid.
Let us illustrate why only the odd support sets pass through the sieve.
To sieve for spanning sets, we basically need to replace each variable $x_i$ with $1 + x_i$.
Then, $(1 + x_i)^{m_i} = 1 + m_i x_i + \binom{m_i}{2} x_i^2 + \cdots$ (for a monomial $m$) becomes $1 + m_i x_i$ because only multilinear terms survive, and further reduces to $1$ if $m_i$ is even.  
So a variable with even contributions effectively diminishes.
We give the formal proof:

\begin{theorem}[Odd sieving] \label{theorem:matroid-sieving}
  Let $P(X)$ be a polynomial over a variable set $X = \{ x_1, \dots, x_n \}$ over a field $\F$ of characteristic 2 with degree~$d$.
  Let $A \in \F^{k \times V}$ be a matrix representing a matroid $M=(V, \cI)$ of rank~$k$. 
  Suppose that each variable $x_i$ is associated with pairwise disjoint subsets $\Gamma_i \subseteq V$ of size $\gamma_i$.
  Given black-box (evaluation) access to a polynomial $P(X)$, we can test in randomized $\Oh^*(2^{k})$ time with failure probability at most $\delta = (d + k) / |\F|$ and in polynomial space, whether $P$ contains a term in the monomial expansion of $P(X)$
  such that $\Gamma_S = \bigcup_{i \in S} \Gamma_i$ is a basis of $M$, where $S \subseteq X$ is a subset of its odd support set with $\sum_{i \in S} \gamma_i = k$.
\end{theorem}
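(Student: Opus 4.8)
The plan is to mimic the basis-sieving proof of Theorem~\ref{theorem:sieve-basis}, but replacing the substitution $x_i \mapsto x_i \sum_j y_j A[i,\cdot]$ with one that also introduces the $1+x_i$ trick so that only odd-degree occurrences of each variable contribute a linear term, while folding in the multi-column refinement of Corollary~\ref{cor:basis}. Concretely, first expand each variable $x_i$ into the block $\Gamma_i$ via fresh variables $x_{i,v}'$ for $v \in \Gamma_i$, so that a monomial's behaviour is governed by the multiset of columns $\bigcup_{i \in \supp(m), \text{odd}} \Gamma_i$; this reduces the statement to the case $\gamma_i = 1$, i.e.\ the bare situation of Theorem~\ref{ithm:odd-sieve}. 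Then introduce $Y = \{y_1,\dots,y_k\}$ and evaluate $P$ at the point
\[
  x_i \;\gets\; \Bigl(1 + \sum_{j=1}^k y_j A[j,i]\Bigr)
\]
(after first extracting, via one level of interpolation in an auxiliary variable $z$, the part of $P$ in which the chosen odd-support elements sum to degree exactly $k$ — or, more cleanly, observing that the degree-$k$ requirement is already enforced by demanding all of $Y$ appear). Apply inclusion--exclusion sieving (Lemma~\ref{lemma:inclusion-exclusion}) to retain only terms divisible by $\prod_{j=1}^k y_j$; call the result $Q(X,Y)$.

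Next I would analyse a single monomial $m = x_1^{m_1}\cdots x_n^{m_n}$ of $P$, as in the basis-sieving proof. Under the substitution, $m$ contributes
\[
  P(m)\cdot \prod_{i=1}^n \Bigl(1 + \sum_{j=1}^k y_j A[j,i]\Bigr)^{m_i}.
\]
The key algebraic observation — exactly the one the authors flagged in the paragraph preceding the theorem — is that over characteristic $2$, $(1+u)^{m_i}$ has $u$-linear part equal to $m_i u = u$ if $m_i$ is odd and $0$ if $m_i$ is even, because the binomial coefficient $\binom{m_i}{1}$ vanishes mod $2$ precisely when $m_i$ is even, and higher powers of $u = \sum_j y_j A[j,i]$ cannot contribute to a multilinear-in-$Y$ term once we restrict to the coefficient of $\prod_j y_j$. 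Hence, after the sieve, only the variables in the odd support of $m$ each contribute a factor $\sum_j y_j A[j,i]$, and expanding the product of these $|\osupp(m)|$ linear forms and reading off the coefficient of $\prod_{j=1}^k y_j$ gives, by Lemma~\ref{lemma:sieve-determinant} applied to the (square, by the $Y$-degree constraint) column-submatrix $A[\cdot,(i_1,\dots,i_k)]$, the determinant of that submatrix. As in the basis case, repeated columns make it singular, so the contribution is nonzero iff some size-$k$ sub-multiset of $\osupp(m)$ indexes a basis, i.e.\ iff $\osupp(m)$ spans $M$; and the leading factor $P(m)$ is untouched, so no cross-cancellation between distinct $m$ occurs, making $Q$ genuinely an enumerating polynomial for the desired terms. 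Finish with a Schwartz--Zippel evaluation: $Q$ has degree at most $d + k$, giving failure probability $(d+k)/|\F|$, and the whole computation uses $2^k$ evaluations of $P$ and polynomial space.

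I expect the main obstacle to be bookkeeping the degree truncation cleanly. In basis sieving one first passes to the homogeneous degree-$k$ part $P_k$ so that ``multilinear and spanning'' forces total degree exactly $k$; here the analogue is that we want $\sum_{i \in S}\gamma_i = k$ for the selected odd-support subset $S$, but $m$ itself may have much larger degree (its even-degree and unselected parts are irrelevant). The cleanest fix, which I would adopt, is not to homogenize $P$ at all but to rely on the $Y$-variables: the coefficient of $\prod_{j=1}^k y_j$ automatically isolates exactly those terms where precisely $k$ of the substituted linear forms were used, and singularity kills everything that isn't a genuine size-$k$ basis; this sidesteps needing to know $\gamma_i$ in advance. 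One still has to argue carefully that terms where $\osupp(m)$ has size $\ge k$ but no size-$k$ basis among the $\Gamma_i$-blocks are killed — this follows because every way of choosing $k$ columns (with the forced-odd blocks) yields a singular submatrix — and that, when a basis does exist, the corresponding determinant term survives with coefficient $P(m) \ne 0$. The remaining verification (disjointness of the $\Gamma_i$ handled by the $X'$-expansion exactly as in Corollary~\ref{cor:basis}, running time, polynomial space) is routine and parallels the basis-sieving proof line for line.
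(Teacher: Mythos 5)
Your proposal misses the auxiliary variables that make the paper's sieve correct, and this causes both false negatives and false positives. Your substitution $x_i \gets 1 + \sum_j y_j A[j,i]$ eliminates $X$ entirely and produces a polynomial purely in $Y$; the claim that ``the leading factor $P(m)$ is untouched, so no cross-cancellation between distinct $m$ occurs'' is incorrect, because $P(m)$ is a field scalar, not a monomial fingerprint. Two monomials of $P$ with the same odd support land on the same $Y$-monomial and are summed (e.g.\ $P=x_1+x_2$, $k=1$, $A=(1\ 1)$: the substituted polynomial is $2(1+y_1)=0$). Even if one patches the substitution to $x_i\gets x_i(1+u_i)$ so that $X^m$ survives, within a single $m$ the contributions from different size-$k$ subsets $T\subseteq\osupp(m)$ are still summed as $P(m)\sum_T\det A[\cdot,T]$, which in characteristic $2$ can vanish while individual determinants do not (take $k=1$, $\osupp(m)=\{1,2\}$, $A[\cdot,1]=A[\cdot,2]$). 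The paper's proof prevents both cancellations by inserting two fresh variable families, setting $x_i\gets x_i''\bigl(1+z^{\gamma_i}x_i'\prod_{q\in\Gamma_i}\sum_py_pA[p,q]\bigr)$, so that each pair $(m,m')$ is uniquely fingerprinted by $(X'')^m(X')^{m'}$ and no cancellation can occur.

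The parenthetical claim that the $z$-interpolation is dispensable because ``the degree-$k$ requirement is already enforced by demanding all of $Y$ appear'' is also wrong, and it is the source of false positives. Inclusion--exclusion (Lemma~\ref{lemma:inclusion-exclusion}) retains \emph{every} monomial divisible by $\prod_j y_j$, not only the multilinear one, so monomials of $Y$-degree above $k$ (such as $y_1^2y_2\cdots y_k$, arising from factors $u_i^{t_i}$ with $t_i\ge 2$) survive the sieve and are generically nonzero. Concretely, for $P=x_1^2$, $k=1$, $A=(1)$, the odd support of $x_1^2$ is empty, so the answer is NO, yet $(1+y_1)^2 = 1+y_1^2$ sieved for $y_1$ yields $y_1^2\ne 0$. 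The paper avoids this by attaching $z^{\gamma_i}$ inside the $(1+\cdot)$-factor and interpolating for the $z^k$-coefficient; combined with ``each $y_p$ present'' this forces the surviving $Y$-part to be exactly $\prod_p y_p$, and only multilinear $m'$ with $\sum_i\gamma_i m_i'=k$ can contribute. Relatedly, your reduction of the multi-column case to $\gamma_i=1$ by expanding $x_i$ into bare per-column variables $x_{i,v}'$ does not preserve the statement: the reduced sieve accepts any size-$k$ column subset, including subsets that cross blocks, whereas the theorem requires $\Gamma_S$ to be a union of whole blocks $\Gamma_i$. (In basis sieving, Corollary~\ref{cor:basis}, multilinearity of $m$ forces whole blocks, but for odd sieving $m$ need not be multilinear, so this discipline is lost.) The paper enforces block-as-unit selection by wedging the entire block inside $\prod_{q\in\Gamma_i}\sum_p y_p A[p,q]$. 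Your guiding intuition --- that $(1+u)^{m_i}$ has vanishing linear part when $m_i$ is even --- is the right one, but it does not by itself assemble into a correct algorithm; all three auxiliary devices ($x_i''$, $x_i'$, $z^{\gamma_i}$) are essential.
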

\begin{proof}
  We will define a polynomial $Q$ such that it evaluates to non-zero with probability at least $1 - \delta$ if it contains a monomial as stated in the lemma and to zero otherwise.
  For every $i \in [n]$, we define
  \[
    x_i^* = x_i'' (1+z^{\gamma_i} x_i' \prod_{q \in \Gamma_i} \sum_{p \in [k]} y_{p} A[p,q]),
  \]
  where $x_i'$, $x_i''$ for $i \in [n]$, $y_p$ for $p \in [k]$, and $z$ are new variables. 
  Let $X' = \{ x_1' \ldots, x_n' \}$, $X'' = \{ x_1'', \ldots, x_n'' \}$, and define a polynomial $Q(X', X'')$ that 
  sieves for those terms in the monomial expansion of
  $P^* = P(x_1^*,\ldots,x_n^*)$ that contain precisely $k$ contributions of $z$ and 
  which contain $y_p$ for each $p \in [k]$.
  By Lemmas~\ref{lemma:interpolation} and \ref{lemma:inclusion-exclusion}, $Q(X', X'')$ can be evaluated using $O^*(2^{k})$ evaluations of $P$.

  The expansion in $P^*$ corresponding to $m$ is
  \begin{align*}
    P^*_m
    &= P(m) \cdot (X'')^m \cdot \prod_{i \in \supp(m)} (1+z^{\gamma_i} x_i' \prod_{q \in \Gamma_i} \sum_{p \in [k]} y_{p} A[p, q])^{m_i} \\
    &= P(m) \cdot (X'')^m \cdot  \sum_{m^*} \prod_{i \in \supp(m^*)} \binom{m_i}{m_i^*} (z^{\gamma_i} x_i' \prod_{q \in \Gamma_i} \sum_{p \in [k]} y_p A[p, q])^{m_i^*} \\
  \end{align*}
  where $m^*$ ranges over all monomials that divide $m$.
  The last equality is due to the binomial theorem.
  Simplifying further gives
  \begin{align*}
  P_m^* = P(m) \cdot (X'')^m \cdot \sum_{m^*} z^{\deg(m^*)} \prod_{i \in \supp(m^*)} \left( \binom{m_i}{m_i^*} (x_i)^{m_i^*} \prod_{q \in \Gamma_i} \sum_{p \in [k]} y_p A[p, q]^{m_i^*} \right),
  \end{align*}
  where $\deg(m_i^*) =\sum_{i \in \supp(m^*)} m_i^*$.
  It follows that the coefficient $z^k$ in $P_m^*$ is the sum over all monomials of degree $k$ that divide $m$.
  By Lemma~\ref{lemma:sieve-determinant}, the coefficient of $z^k \prod_{i \in [k]} y_i$ in $P^*(m)$ is thus
  \begin{align*}
    Q_m = P(m) \cdot (X'')^m \cdot \sum_{m'}  \det A_{m'} \prod_{i \in \supp(m')} \binom{m_i}{m_i'} (x_i')^{m_i'} ,
  \end{align*}
  where $m'$ ranges over all monomials of degree $k$ that divide $m$ and $A_{m'}$ is the $k \times k$-matrix that contain $m_i'$ copies of $A[\cdot, \Gamma_i]$ for each $i \in \supp(m')$.
  If $A_{m_i'}$ contains duplicate columns (i.e., $m_i' \ge 2$ for some $i$), then $\det A_{m_i'} = 0$, and thus we may assume that $m'$ is multilinear.
  Hence, we obtain
  \[
    Q_m = P(m) \cdot (X'')^m \cdot \sum_{m'} \det A_{m'} \prod_{i \in \supp(m')} m_i x_i' ,
  \]
  where $m'$ ranges over all multilinear monomials of degree $k$ that divide $m$.
  Since $\F$ has characteristic 2, the summand correspond to $m'$ is non-zero only if $\supp(m')$ is contained in the odd support of $m$.

  On the other hand, for every pair of monomials $m$ and $m'$ such that $m'$ divides $m$ and $A_{m'}$ is non-singular, there is a term
  \[
    P(m) \cdot \left( \prod_{i \in \supp(m')} m_i\right) (X'')^m  \cdot (X')^{m'} \det A_{m'}    .
  \]
  Since the variables $x_i'$ and $x_i''$ are newly added variables,
  this term does not cancel against any other term from the expansion
  of $Q(m)$.
  More specifically, these variables uniquely indicate the
  combination of the monomials $m$ and $m'$. 
  We evaluate $Q$ for variables $x_i', x_i''$ randomly chosen from $\F$.
  Since $Q$ has degree most $d + k$, by the Schwartz-Zippel lemma, the probability that $Q$ evaluates to zero at most $(d + k) / |\F|$.
\end{proof}

\subsubsection{Multilinear and constrained multilinear detection
  as determinantal sieving}
\label{ssec:mld-compare}

We now make explicit the claim from earlier, that the known
polynomial sieving-based algorithms for multilinear detection~\cite{Bjorklund14detsum,BjorklundHKK17narrow}
and constrained multilinear detection~\cite{BjorklundKK16}
due to Björklund et al.\ are equivalent to 
applications of the algorithm of Theorem~\ref{theorem:sieve-basis}.

\paragraph{User's guide -- the less technical view.}
Let us point out that what we are pursuing here is a technical statement
about the algorithms themselves. If we only want to reproduce
the effect of these previous sieving methods, then we can do so
much easier via combinatorial arguments over matroids as follows.
\begin{itemize}
\item \emph{Multilinear detection} of rank $k$ over a ground set $V$
  is equivalent to basis sieving with a matroid $M$ where every set of
  $k$ elements from $V$ is independent. This is precisely the uniform matroid $M=U_{n,k}$.
\item Another application which is frequently seen is to look for a ``colourful''
  term, where elements of $V$ have $k$ different colours, and we are looking for
  a term that contains every colour. In this case, $M$ would be a unit partition matroid. 
  For example, this covers the \textsc{$T$-Cycle} problem by assigning a private
  colour to every terminal in $T$ and looking for a colourful cycle;
  and \textsc{Steiner Tree} and \textsc{Group Steiner Tree}
  are similarly reduced to the \textsc{Rank $k$ Connected Subgraph} problem.
\item Finally, \emph{constrained multilinear detection} is the following setting:
  the ground set $V$ is coloured from a set of colours $C$, where every colour $q \in C$
  has a capacity $d_q \in \N$ restricting how many times it can be used.
  The corresponding matroid is then precisely the $k$-truncation of a non-unit partition matroid.
  Equivalently, it can be constructed directly as the gammoid of a simple digraph,
  with $k$ sources, $d_q$ internal vertices for every colour $q \in C$ connected to all sources,
  and each element $v \in V$ of colour $q$ being connected to all internal vertices representing colour $q$.
\end{itemize}
For further pointers, see material covering matroid theory~\cite{OxleyBook,Marx09-matroid,Schrijver}.
We now proceed with the more technical demonstrations.

\paragraph{Multilinear detection.} 
We recall the procedure of sieving for multilinear detection, as
presented in Björklund et al.~\cite{BjorklundKK16}.
We demonstrate that their procedure is mathematically equivalent
to an application of determinantal sieving with a random linear matroid.
Let us first recall their method. 
Let $P(X)$ be a homogeneous polynomial of degree $k$ on $n$ variables
$X=\{x_1,\ldots,x_n)$ over a field of characteristic~2. For every $i \in [n]$ and $j \in [k]$ define a
variable $z_{i,j}$, and for $J \subseteq [k]$ and $i \in [n]$
define $z_i^J=\sum_{j \in J} z_{i,j}$.
Let $Z=\{z_{i,j} \mid i \in [n], j \in [k]\}$ and define
\begin{equation}
  Q(Z) = \sum_{J \subseteq [k]} P(z_1^J, \ldots, z_n^J).
  \label{eq:mld}
\end{equation}
Then $Q(Z)$ is not identically zero if and only if $P(X)$ contains a
multilinear monomial~\cite{BjorklundKK16}.

We argue that this is precisely the procedure of Theorem~\ref{theorem:sieve-basis}
applied to the matrix $A$ where $A[j,i]=z_{i,j}$.
Indeed, expanding the inclusion-exclusion step Theorem~\ref{theorem:sieve-basis}
applied to $P(X)$ and $A$
computes
\[
  Q(X,Y) = \sum_{I \subseteq [k]} P_{-I}(x_1 \sum_{j=1}^k y_j A[j,1], \ldots, x_n \sum_{j=1}^k y_j A[j,i])
\]
over the evaluation where $y_j=0$ for $j \in I$. Consider an evaluation $Q(X,1)$.
Then for each index $i \in [n]$, the $i$-th argument of the evaluation is
\[
  x_i \sum_{j \in [k] \setminus I}  y_j z_{i,j} = x_i z_i^{[k] \setminus I}.
\]
Hence (\ref{eq:mld}) is an evaluation of $Q(1,1)$,
and the classical polynomial sieving method for multilinear detection 
can be seen as an instance of determinantal sieving for the matrix $A=(z_{j,i})_{(i,j)}$
consisting entirely of random (independent) values.
Assuming the field $\F$ is large enough, this matrix $A$ is with high
probability a representation of the uniform matroid $U_{n,k}$.

Parenthetically, evaluating at $Y =1$ in Theorem~\ref{theorem:sieve-basis}
as above is always safe -- since distinct monomials in $P(X)$ have
distinct contributions in $X$, setting $Y=1$ does not cause any
undesired cancellations. Evaluating at $X=1$ is not normally safe,
but in the precise case of the matrix $A$ all contributed determinants
$\det A[\cdot,U]$ for $|U|=k$ are distinct polynomials over the
variables $Z$, hence it works in equation (\ref{eq:mld}).

\paragraph{Constrained multilinear detection.}
We now consider the somewhat more involved sieving used for
constrained multilinear detection~\cite{BjorklundKK16},
and demonstrate a similar equivalence.
Let $C$ be a set of colours,  and for each $q \in C$
let $d_q \in N$ denote the capacity of colour $q$.
Refer to such a monomial as a \emph{properly coloured} monomial.
We shift the notation slightly. 
Let $P(X)$ be a homogeneous polynomial of degree $k$ over $n$
variables $X=\{x_1,\ldots,x_n\}$ over a field of characteristic 2 and let $c \colon X \to C$
be a colouring and recall that we are sieving for multilinear
monomials $m$ in $P(X)$ where for every colour $q \in C$,
$m$ contains at most $d_q$ variables of colour $q$.
Define two sets of auxiliary variables
\[
  V=\{v_{i,s} \mid i \in [n], s \in [d_{c(i)}]\}
\]
where $s$ ranges over \emph{shades} of colour $c(i) \in C$,
and
\[
  W=\{w_{q,s,j} \mid q \in C, s \in [d_q], j \in [k]\}.
\]
Finally, for $J \subseteq [k]$ and $i \in [n]$ define
\[
  u_i^J = \sum_{j \in J} u_{i,j} \quad \text{where} \quad
  u_{i,j} = \sum_{s \in [d_{c(i)}]} v_{i,s} w_{c(i),s,j}.
\]
Then
\begin{equation}
  Q(V,W) = \sum_{J \subseteq [k]} P(u_1^J, \ldots u_n^J)
  \label{eq:cmld}
\end{equation}
is not identically zero if and only if $P(X)$ has a properly coloured 
multilinear monomial~\cite{BjorklundKK16}. 
Analogously to unconstrained multilinear detection,
we note that this is equivalent to applying Theorem~\ref{theorem:sieve-basis},
evaluated at $X=Y=1$, to the matrix $A \in \F^{k \times n}$
where
\[
  A[j,i] = \sum_{s \in [d_{c(i)}]} v_{i,s} w_{c(i),s,j}.
\]
We note a factorization of~$A$. Let $S=\{(q,s) \mid q \in C, s \in [d_q]\}$
be the set of all shades of all colours used above. 
Define $B \in \F^{k \times S}$ and $C \in \F^{S \times n}$
by
\[
  B[j,(q,s)] = w_{q,s,j} \quad \text{and} \quad
  C[(q,s),i] = 
  \begin{cases}
    v_{i,s} & c(i)=q \\
    0 & \text{otherwise}
  \end{cases}
\]
Then $A=BC$ by direct expansion of the matrix multiplication.
Here, $C$ is a representation of the transversal matroid
mapping each variable index $i \in [n]$
to the set of shades $(c(i),s)$ available to it,
hence a set of variables is independent in $C$
if and only if the corresponding monomial is properly coloured.
Finally, $B$ is again a fully random matrix, hence
multiplication by $B$ represents the truncation of $C$ to rank $k$.
Hence (\ref{eq:cmld}) corresponds to Theorem~\ref{theorem:sieve-basis}
applied to the $k$-truncation of the ``colour assignment'' matroid $C$,
which is precisely the encoding discussed in the introduction.

\subsection{Over general fields}
\label{sec:extensor}

We give two sieving algorithms for general fields.
First, we present an algorithm for what we call \emph{strongly monotone circuits} -- circuits without any cancellation, informally speaking.
Our second algorithm works for arbitrary arithmetic circuits albeit with a worse running time.

To sieve over general fields, we use the exterior algebra.
For a field $\F$, $\Lambda(\F^k)$ is a $2^k$-dimensional vector space where there is a basis $\{ e_I \mid I \subseteq [k] \}$.
Each element $a = \sum_{I \subseteq [k]} {a_I} e_I$ is called an \emph{extensor}.
For $i \in \{ 0, \ldots, k \}$, we denote by $\Lambda^i(\F^k)$ the vector subspace spanned by bases $e_I$ with $|I| = i$.
For instance, $\Lambda^0(\F^k)$ is isomorphic to $\F$ and $\Lambda^1(\F^k)$ is isomorphic to the vector space $\F^k$, so we will use them interchangeably.
The addition in $\Lambda(\F^k)$ is defined in the element-wise manner.
The multiplication in $\Lambda(\F^k)$ is called \emph{wedge product}, and it is defined as follows:
If $I \cap J \ne \emptyset$, then $e_I \wedge e_J = 0$.
If $I$ and $J$ are disjoint, then $e_I \wedge e_J = (-1)^{\sigma(I, J)} e_{I \cup J}$, where $\sigma(I, J) = \pm 1$ is the sign of the permutation that maps the concatenation of $I$ and $J$ each in increasing order into the increasing sequence of $I \cup J$.
Over vectors $v, v' \in \F^k$, we have anti-commutativity, i.e., $v \wedge v' = -v' \wedge v$, and in particular, $v \wedge v = 0$.
The key property of exterior algebra is that for a matrix $A \in \F^{k \times k}$ with $a_i = A[\cdot, i]$, we have $a_1 \wedge \cdots \wedge a_k = \det A \cdot e_{[k]}$, where $e_{[k]}$ is the basis extensor $e_1 \wedge \cdots \wedge e_k$.
For instance, when $k = 2$,
\begin{align*}
  &(a_{11} e_1 + a_{21} e_2) \wedge (a_{12} e_1 + a_{22} e_2) \\
  &= a_{11} a_{12} \cdot e_1 \wedge e_1 + a_{11} a_{22} \cdot e_1 \wedge e_2 + a_{21} a_{12} \cdot e_2 \wedge e_1 + a_{21} a_{22} \cdot e_2 \wedge e_2 \\
  &= 0 + a_{11} a_{22} \cdot e_1 \wedge e_2 - a_{21} a_{12} \cdot e_1 \wedge e_2 + 0 = (a_{11} a_{22} - a_{12} a_{21}) \cdot e_1 \wedge e_2.
\end{align*}
So a matrix is non-singular if and only if the wedge product of its columns are non-zero.

An extensor $a \in \Lambda(\F^k)$ is \emph{decomposable} if there are vectors $v_1, \dots, v_{\ell}$ such that $a = v_1 \wedge \cdots \wedge v_{\ell}$.
A decomposable extensor $a$ is zero if the vectors $v_1, \dots, v_{\ell}$ are linearly dependent.
For two decomposable extensors $a, a'$, it holds that $a \wedge a' = \pm a' \wedge a$ (this is generally not the case, e.g., $e_1 \wedge (e_2 \wedge e_3 +  e_4) = e_1 \wedge e_2 \wedge e_3 + e_1 \wedge e_4$ and $(e_2 \wedge e_3 +  e_4) \wedge e_1 = e_1 \wedge e_2 \wedge e_3 - e_1 \wedge e_4$).

The sum of two extensors can be computed with $2^{k}$ field operations.
The wedge product $a \wedge a'$ of two extensors $a \in \Lambda(\F^k)$ and $b \in \Lambda^i(\F^k)$ can be computed with $2^k \binom{k}{i}$ field operations according to the definition (hence $O^*(2^k)$ time for $i \in O(1)$).
In general, there is an $O(2^{\omega k / 2})$-time algorithm to compute the wedge product, given implicitly by W\l{}odarczyk \cite{Wlodarczyk19} (see the thesis of Brand~\cite{Brand19thesis} for a more explicit exposition).

Suppose that a polynomial $P(X)$ is represented by an arithmetic circuit $C$.
An \emph{arithmetic circuit} is a directed acyclic graph with a single sink (called output gate) in which every source is labelled by a variable $x_i$ or an element of $\F$ (called input gate) and every other node is labelled by addition (called sum gate) or multiplication (called product gate).
We will assume that every sum and product gate has in-degree 2.
An arithmetic circuit is called \emph{skew} ($\delta$-skew) if at least one input of every product gate is an input gate (has polynomial degree at most $\delta$, respectively).
An arithmetic circuit over the field of rationals $\Q$ is called \emph{monotone} if every constant is non-negative.
We say that an arithmetic circuit (over any field) is \emph{strongly monotone} if the following~hold:
\begin{itemize}
  \item For each gate, the corresponding polynomial is multilinear, which implies that each input to the sum or product gate can be represented as a set family $\cF$ over $X$ and coefficients $c \colon \cF \to \F \setminus \{ 0 \}$.
  \item For every sum gate with two inputs $(\cF, c)$ and $(\cF', c')$, $\cF \cap \cF' = \emptyset$.
  \item For every product gate with two inputs $(\cF, c)$ and $(\cF', c')$, $S \cup S'$ is distinct for every $S \in \cF$ and $S' \in \cF'$.
\end{itemize}
At first glance, the condition for strong monotonicity may seem very restrictive.
However, any ``cancellation-free'' circuit can be turned into an equivalent strongly monotone circuit, often without blowing up its size: simply make $d$ copies of sub-circuits for each gate with out-degree $d > 1$.
Note that, for every input gate $g$ for the variable $x$ with out-degree $d$, we will have $d$ input gates each labelled by a new variable, say $x_i$.
By associating the variables $x_i$ with one vector, the resulting circuit is essentially equivalent to the original.
See e.g., $O^*(2^{qk})$-time algorithm for \textsc{$q$-Matroid Parity} (Theorem~\ref{theorem:general-matroid-parity}) in Section~\ref{sec:matroid-applications}.

\begin{theorem} \label{lemma:simple-matroid-sieving}
  Let $C$ be a strongly monotone arithmetic circuit computing a multilinear polynomial $P(X)$ of degree $d$ over a variable set $X=\{x_1,\ldots,x_n\}$ and a field $\F$.
  Let $A \in \F^{k \times V}$ be a matrix representing a matroid $M=(V, \cI)$ of rank~$k$. 
  Suppose that each variable $x_i$ is associated with a subset $\Gamma_i \subseteq V$ of size $\gamma_i$, and that the subsets $\Gamma_i$ are pairwise disjoint.
  We can test in randomized $O^*(2^{\omega k / 2})$ time with failure probability $d / |\F|$ and in $O^*(2^k)$ space, whether there is a term $m$ in the monomial expansion of $P(X)$ such that $\bigcup_{i \in \supp(m)} \Gamma_i$ is a basis of $M$.
  The running time can be improved to $O^*(2^{k})$ if $C$ is $\delta$-skew for $\delta \in O(1)$ and $\gamma_i \in O(1)$ for all $i$.
\end{theorem}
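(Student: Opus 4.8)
The plan is to evaluate the circuit $C$ directly over the exterior algebra $\Lambda(\F^k)$, mimicking the construction in the proof of Theorem~\ref{theorem:sieve-basis} but with the symbolic variables $y_p$ replaced by a genuine representation of $M$ (passing to an extension field of $\F$ first if $|\F|$ is too small, which does not affect independence in $M$). Fix a matrix $A \in \F^{k \times V}$ representing $M$ and a total order on $V$, and for each $i \in [n]$ let $\hat x_i = \bigwedge_{v \in \Gamma_i} A[\cdot, v]$, a decomposable extensor of grade $\gamma_i$. Draw independent uniform random scalars $r_1, \dots, r_n \in \F$, substitute $x_i \mapsto r_i \hat x_i$, and evaluate $C$ gate by gate over $\Lambda(\F^k)$, adopting the convention that a product gate with left input $a$ and right input $b$ outputs $a \wedge b$. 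The algorithm accepts iff the $e_{[k]}$-coordinate of the output extensor is nonzero.

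The first and most delicate step of the correctness proof is to show that, because $C$ is strongly monotone and $P$ is multilinear, the value computed equals, as an element of $\Lambda(\F^k)$, the sum $\sum_m \pm c_m\, r_{\supp(m)} \bigwedge_{i \in \supp(m)} \hat x_i$, where $m$ ranges over the monomials of $P$, $c_m \in \F \setminus \{0\}$ is the coefficient of $m$, $r_S = \prod_{i \in S} r_i$, and each wedge is taken in the canonical (index-)order. I would prove this by induction over the circuit: every intermediate value is a formal combination $\sum_S \tilde c_S \bigwedge_{i \in S} \hat x_i$; at a sum gate the two input families are disjoint so no terms merge; at a product gate multilinearity forces $S \cap S' = \emptyset$ for all pairs from the two inputs, hence $(\bigwedge_{i \in S} \hat x_i) \wedge (\bigwedge_{i \in S'} \hat x_i) = \pm \bigwedge_{i \in S \cup S'} \hat x_i$ after reordering, and strong monotonicity makes all the sets $S \cup S'$ distinct, so again nothing cancels. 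The reordering signs are attached once per monomial and can be absorbed into the $\pm c_m$.

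The second step is purely about exterior algebra. Since the $\Gamma_i$ for $i \in \supp(m)$ are pairwise disjoint, $\bigwedge_{i \in \supp(m)} \hat x_i = \pm \bigwedge_{v \in \Gamma_{\supp(m)}} A[\cdot, v]$ with $\Gamma_{\supp(m)} = \bigcup_{i \in \supp(m)} \Gamma_i$; this extensor lies in $\bigoplus_{j < k} \Lambda^j(\F^k)$ and so has zero $e_{[k]}$-coordinate when $|\Gamma_{\supp(m)}| < k$, it vanishes when $|\Gamma_{\supp(m)}| > k$ or when $|\Gamma_{\supp(m)}| = k$ but the columns are linearly dependent, and it equals $\delta_m\, e_{[k]}$ with $\delta_m \in \F \setminus \{0\}$ exactly when $\Gamma_{\supp(m)}$ is a basis of $M$ (by the identity $a_1 \wedge \cdots \wedge a_k = \det A \cdot e_{[k]}$). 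Therefore the $e_{[k]}$-coordinate of the output is $\sum_{m \in \mathcal{B}} \pm c_m\, \delta_m\, r_{\supp(m)}$, where $\mathcal{B}$ is the set of monomials whose associated union of $\Gamma_i$'s is a basis. Viewed as a polynomial in $r_1, \dots, r_n$ of total degree at most $d$, this is identically zero iff $\mathcal{B} = \emptyset$: multilinearity of $P$ makes the squarefree monomials $r_{\supp(m)}$ pairwise distinct, and every coefficient $\pm c_m \delta_m$ is nonzero in $\F$. A single Schwartz--Zippel evaluation at the random $r_i$ then detects $\mathcal{B} \neq \emptyset$ with failure probability at most $d/|\F|$ and with no false positives.

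Finally, for the running time: a sum gate costs $2^k$ field operations and a product gate is a wedge product of two extensors, which costs $O^*(2^{\omega k / 2})$ by W\l{}odarczyk's algorithm~\cite{Wlodarczyk19}; with $k^{O(1)}$ gates this gives $O^*(2^{\omega k/2})$ time and $O^*(2^k)$ space for the extensors. If $C$ is $\delta$-skew with $\delta \in O(1)$ and all $\gamma_i \in O(1)$, then at every product gate one operand has polynomial degree $O(1)$ and hence, after substitution, lies in $\bigoplus_{j \le O(1)} \Lambda^j(\F^k)$, of dimension $\sum_{j \le O(1)} \binom{k}{j} = k^{O(1)}$; wedging a full extensor against such a low-grade extensor costs $2^k \cdot k^{O(1)} = O^*(2^k)$ operations via the grade-wise count $2^k\binom{k}{j}$, giving $O^*(2^k)$ total. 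I expect the main obstacle to be the first correctness step: verifying that the non-commutative evaluation truly reproduces, up to a global sign per monomial, the naive ``sum of per-monomial substitutions'', which requires tracking reordering signs consistently through the whole circuit and leaning carefully on both the multilinearity and the strong-monotonicity hypotheses.
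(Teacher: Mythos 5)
Your proof is correct and takes essentially the same approach as the paper's: evaluate the strongly monotone circuit over $\Lambda(\F^k)$ with each $x_i \mapsto r_i\,\hat{x}_i$, argue by induction over gates that strong monotonicity (disjoint families at sum gates, distinct unions at product gates, plus multilinearity) prevents any unwanted cancellation among the resulting terms up to a per-monomial global sign, read off the $e_{[k]}$-coordinate as a multilinear polynomial in the $r_i$ with nonzero coefficients precisely on monomials whose $\Gamma$-union is a basis, and finish with Schwartz--Zippel; the running-time accounting (general wedge via W{\l}odarczyk, low-grade wedge $2^k\binom{k}{j}$ in the $\delta$-skew, bounded-$\gamma_i$ case) also matches the paper.
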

\begin{proof}
  Fixing an arbitrary ordering of the input of each product gate (this is necessary because the wedge product is not commutative), we evaluate the circuit $C$ over $\Lambda(\F^k)$ by plugging in the extensor $x_i = x_i' a_i$ for every $i \in [n]$, where $x_i'$ is a new variable and $a_i = \bigwedge_{q \in \Gamma_i} A[\cdot, q]$ (the order of wedge products is not important here).
  Let $r \in \Lambda(\F^k)$ denote the result.
  Note that with each variable $x_i'$ substituted by a random element from $\F$, the extensor $r$ can be computed in time $O^*(2^{\omega k / 2})$ (and $O^*(2^k)$ if $C$ is skew and $\max_{i \in [n]} \gamma_i \in O(1)$ -- simply by computing the product according to the definition), as noted in the introduction to exterior algebra.
  We will show that the coefficient of $e_{[k]}$ is non-zero with high probability given that there is a monomial constituting a basis of~$M$.

  We show by induction on the number of gates that
  \[
    r = \sum_{m} \pm P(m) \cdot (X')^m \bigwedge_{i \in \supp(m)} a_{i},
  \]
  where $m$ ranges over all monomials $m$ of $P(X)$.
  For every monomial $m$, the sign $\pm$ depends on the ordering on product gates.
  There are two cases depending on whether the last gate $g$ in $C$ is a sum gate or product gate.
  Let $Q(X) = \sum_{m_Q} Q(m_Q) m_Q$ and $Q'(X) = \sum_{m_Q'} Q(m_Q') m_Q'$ denote its inputs. 
  By the induction hypothesis, suppose that the result of evaluating over the exterior algebra $\Lambda(\F^k)$ is
  \begin{align*}
    q = \sum_{m_Q} \pm Q(m_Q) \left(\prod_{i \in \supp(m_Q)} x_i'\right) \bigwedge_{i \in \supp(m_Q)} a_{i} \text{ and } 
    q' = \sum_{m_Q'} \pm Q'(m_Q') \left(\prod_{i \in \supp(m_Q')} x_i'\right) \bigwedge_{i \in \supp(m_Q')} a_{i}.
  \end{align*}

  First, suppose that $g$ is a sum gate.
  By the strong monotonicity of $C$, each term $m$ in $P$ corresponding to $Q$ (and $Q'$) has coefficient $Q(m)$ (and $Q'(m)$, respectively).
  Thus, for every monomial $m$ in $P$, there is a term $\prod_{i \in \supp(m)} x_i' \bigwedge_{i \in \supp(m)} a_i$ (with the coefficient $\pm Q(m)$ or $\pm Q'(m)$) in $r$.
  We stress that the strong monotonicity is crucial here; suppose that $Q$ and $Q'$ share a term with opposite signs.
  This term should cancel out in $P$, but it does not necessarily when evaluated over $\Lambda(\F^k)$ as the sign may be flipped.

  Next, suppose that $g$ is a product gate.
  By the strong monotonicity of $C$, each term $m$ in $P$ corresponds to a pair of monomials, one from $Q(m)$ and the other from $Q'(m)$.
  We need to verify that for every monomial $m$ of $P(X)$ with $\bigwedge_{i \in \supp(m)} a_i \ne 0$, the corresponding terms in $q \wedge q'$ and $q' \wedge q$ have non-zero coefficients.
  Note that
  \begin{align*}
    q \wedge q' = 
    \sum_{m_Q, m_Q'} \pm Q(m_Q) Q(m_Q') \left(\prod_{i \in \supp(m_Q) \cup \supp(m_Q')} x_i'\right) \bigwedge_{i \in \supp(m_Q)} a_i \wedge \bigwedge_{i \in \supp(m_Q')} a_i.
  \end{align*}
  Since $\bigwedge_{i \in \supp(m_Q)} a_i$ and $\bigwedge_{i \in \supp(m_Q')} a_i$ are both decomposable, $\bigwedge_{i \in \supp(m_Q')} a_i \wedge \bigwedge_{i \in \supp(m_Q)} a_i = \pm \bigwedge_{i \in \supp(m_Q)} a_i \wedge \bigwedge_{i \in \supp(m_Q')}a_i$, and consequently, $q \wedge q'$ and $q' \wedge q$ have the same form possibly with opposite signs.
  In particular, this shows that the induction is correct regardless of how two inputs of product gates are ordered. 

  Thus, the circuit evaluates to
  \begin{align*}
    r = \sum_{m} \pm P(m) \left(\prod_{i \in \supp(m)} x_i'\right) \bigwedge_{i \in \supp(m)} a_i
    = \sum_{m} \pm P(m) \cdot (X')^{m} \cdot \det A[\cdot, \supp(m)] \cdot e_{[k]},
  \end{align*}
  where $m$ ranges over all monomials in $P$ with $\sum_{i \in \supp(m)} \gamma_i = k$.
  Observe that there is no further cancellation between any two terms.
  We evaluate the coefficient of $e_{[k]}$ in $r$ at random coordinates.
  By the Schwartz-Zippel lemma, the result follows.
\end{proof}

We also provide a sieving algorithm for general arithmetic circuits.
The idea is again, to evaluate the circuit over the exterior algebra.
In order to deal with the issue of non-commutativity, we make the assumption that every variable is associated with an even number of matroid elements.
If $v = v_1 \wedge \cdots \wedge v_c$ and $v' = v_1' \wedge \cdots \wedge v_{c'}'$ for even integers $c$ and $c'$, then
\begin{align*}
  v \wedge v'
  &= v_1 \wedge \cdots \wedge v_c \wedge v_1' \wedge \cdots \wedge v_{c'}' \\
  &= (-1)^{cc'} v_1' \wedge \cdots \wedge v_{c'}' \wedge v_1 \wedge \cdots \wedge v_{c} = v' \wedge v.
\end{align*}
Here, the second equality holds because the transposition occurs $cc'$ times.
We thus have commutativity.
 
\begin{theorem} \label{lemma:extensor-matroid-sieving}
  Let $C$ be an arithmetic circuit computing a polynomial $P(X)$ over a variable set $X = \{ x_1, \dots, x_n \}$ and a field $\F$.
  Let $A \in \F^{k \times V}$ be a matrix representing a matroid $M=(V, \cI)$ of rank~$k$. 
  Suppose that each variable $x_i$ is associated with a subset $\Gamma_i \subseteq V$ of \emph{even} size $\gamma_i$, and that the subsets $\Gamma_i$ are pairwise disjoint.
  We can test in randomized $O^*(2^{\omega k / 2})$ time with failure probability $k / |\F|$ and in $O^*(2^k)$ space, whether $C$ contains a term $m$ in the monomial expansion of $P(X)$ such that $\bigcup_{i \in \supp(m)} \Gamma_i$ is a basis of $M$.
  The running time can be improved to $O^*(2^{k})$ if $C$ is $\delta$-skew for $\delta \in O(1)$ and $\gamma_i \in O(1)$ for all $i$.
\end{theorem}
\begin{proof}
  We evaluate the circuit over the algebra $\Lambda(\F^{k})$ by plugging in the extensor $x_i = x_i' a_i$, where $x_i'$ is a new variable and $a_i = \bigwedge_{i \in \Gamma_i} A[\cdot, q]$. 
  Let $r \in \Lambda(\F^{k})$ denote the result.
  Note that with each variable $x_i$ substituted with a random element from~$\F$, the extensor $r$ can be computed in time $O^*(2^{\omega k /2})$ (and $O^*(2^k)$ if $C$ is skew and $\max_{i \in [n]} \gamma_i \in O(1)$).
  As in the proof of Theorem~\ref{lemma:simple-matroid-sieving}, we will show that the coefficient of $e_{[k]}$ is non-zero with high probability given that there is a monomial constituting a basis of $M$.

  Since the evaluation is over a commutative algebra, for every monomial $m$ in $P$, there is a ``term'' in the expansion of $r$:
  \[
    P(m) \cdot \prod_{i \in \supp(m)} (x_i')^{m_i} \cdot \bigwedge_{i \in \supp(m)} a_i.
  \]
  It is straightforward to prove by induction as in the proof of Theorem~\ref{lemma:simple-matroid-sieving} that its coefficient is $P(m)$ (without any sign flip).
  The crucial difference (i.e., no sign flip) arises from the commutativity of the underlying algebra.
  The term corresponding to $e_{[k]}$ in $r$ is
  \[
    \sum_{m} P(m) \left(\prod_{i \in \supp(m)} x_i'\right) \bigwedge_{i \in \supp(m)} a_i
    = \sum_{m} P(m) \cdot (X')^{m} \cdot \det A[\cdot, \supp(m)] \cdot e_{[k]},
  \]
  where $m$ ranges over all monomials in $P$ with $\sum_{i \in \supp(m)} \gamma_i = k$.
  Observe that there is no cancellation between any two terms.
  We evaluate the coefficient of $e_{[k]}$ in $r$ at random coordinates.
  By the Schwartz-Zippel lemma, the result follows.
\end{proof}

One can apply Theorem~\ref{lemma:extensor-matroid-sieving} even if the variables are associated with an odd number of elements, albeit with increased running time, essentially by padding every variable with an additional element.
This is similar to the idea of lift mapping \cite{BrandDH18}.
Using Theorem~\ref{lemma:extensor-matroid-sieving}, we show how to solve \textsc{$q$-Matroid Intersection} in $O^*(2^{(q-2+(q \bmod 2))k})$ time in Theorem~\ref{thm:general-intersection}.

\section{Matroid Covering, Packing and Intersection Problems}
\label{sec:matroid-applications}

For our first application section, we review some fairly
straightforward results regarding matroid variants of the
\textsc{Set Cover} and \textsc{Set Packing} problems defined in Section~\ref{sec:intro} and related problems.
We start off by recalling the definitions.

\textsc{Set Cover} and \textsc{Set Packing} are classical NP-hard problems.
For both problems, the input is a ground set $V$, a set system $\cE \subseteq 2^V$ over $V$, and an integer $t$.
\textsc{Set Cover} asks whether there is a
subcollection $S \subseteq \cE$ such that $|S| \le t$ and $\bigcup S=V$,
i.e., $S$ covers $V$, and
\textsc{Set Packing} asks whether is a subcollection $S \subseteq \cE$ of $t$ pairwise disjoint sets.
Their matroid variants \textsc{Rank $k$ Set Cover} and \textsc{Rank $k$ Set Packing} are defined as follows.
We are given as input a set $V$, a set family $\cE \subseteq 2^V$, a matroid $M = (V, \cI)$ of rank $k$, and an integer $t$.
In \textsc{Rank $k$ Set Cover}, the question is whether there is a subcollection $S \subseteq \cE$ with $|S| \le t$ such that  $\bigcup S$ has rank $k$.
\textsc{Rank~$k$ Set Packing} asks for a subcollection $S$ of pairwise disjoint $t$ sets such that $\bigcup S$ has size $k$ and rank $k$ (i.e., it is a basis of $M$).
Note that \textsc{Set Cover} is the special case of \textsc{Rank $k$ Set Cover} where $M$ is the free matroid, i.e., all subsets of $V$ are independent, and $k = |V|$.
Similarly, \textsc{Set Packing} is the special case of \textsc{Rank $k$ Set Packing} where $M$ is the uniform matroid of rank $k = |\bigcup S|$ for a solution $S$.
One may also consider the apparently more general variant of \textsc{Rank $k$ Set Packing} where one does not require that $\bigcup S$ is a basis for $M$, but only that it is independent. 
However, this reduces to \textsc{Rank $k$ Set Packing} by iterating over the acceptable cardinalities $|\bigcup S|$ and applying matroid truncation.

For $q \in O(1)$, 
the \textsc{$q$-Set Packing} problem is \textsc{Set Packing} in which every set has cardinality $q$.
The \textsc{$q$-Dimensional Matching} problem is a well-studied special case of \textsc{$q$-Set Packing}, where $V$ is partitioned into $q$ sets $V_1, \dots, V_q$, and every set in $\cE$ is from $V_1 \times \cdots \times V_q$.
The matroid analogs to \textsc{$q$-Dimensional Matching} and \textsc{$q$-Set Packing} are \textsc{$q$-Matroid Intersection} and \textsc{$q$-Matroid Parity}, respectively.
In \textsc{$q$-Matroid Intersection}, we are given as input $q$ matroids $M_1, \dots, M_q$ over the same ground set $V$ and an integer $k$, and the question is whether there is a subset $S \subseteq V$ of size $k$ that is independent in $M_i$ (equivalently, a basis for $M_i$ by applying matroid truncation) for every $i \in [q]$.
In \textsc{$q$-Matroid Parity}, we are given as input a matroid $M = (V, \cI)$ with $V$ partitioned into disjoint sets $\cE = \{ E_1, \dots, E_m \}$ each of size $q$, and an integer $k$, the question is whether a collection $S$ of $k$ sets from $V_1, \dots, V_m$, such that $\bigcup S$ has rank $qk$ in $M$.
The problems \textsc{$q$-Matroid Parity} and \textsc{$q$-Matroid Intersection} generalize \textsc{$q$-Set Packing} (when $M$ is the uniform matroid) and \textsc{$q$-Dimensional Matching} (when $M_i$ is the partition matroid over $V_i$), respectively.

We survey the known results for these problems.
The fastest known algorithm for \textsc{Set Cover}
in terms of $n=|V|$ is $O^*(2^n)$, which can be achieved either via
classical dynamic programming or by inclusion-exclusion. It is a major
open problem whether this can be improved; Cygan et al.~\cite{CyganDLMNOPSW16seth}
propose the \emph{Set Cover Conjecture} (SeCoCo) that effectively
conjectures that this is not possible, analogous to the more commonly
used strong exponential-time hypothesis (SETH).
More precisely, SeCoCo states that for every $\varepsilon > 0$
there exist $d \in \N$ such that \textsc{Set Cover}
on $n$ elements where all sets of size at most $d$
cannot be solved in $O^*(2^{(1-\varepsilon)n})$ time~\cite{CyganDLMNOPSW16seth}.
The currently known fastest algorithms for \textsc{$q$-Dimensional Matching} and \textsc{$q$-Set Packing} are by Bj\"orklund et al.~\cite{BjorklundHKK17narrow}.
(The reader is referred to \cite{BjorklundHKK17narrow} for a series of previous improvements on these problems e.g., \cite{Chen09,Koutis08ICALP,KoutisW09limits}.)
Their running time bounds are $O^*(2^{(q-2)k})$ and $O^*(2^{(q-\varepsilon_q)k})$, respectively, where $\varepsilon_q < 2$ is a constant depending on $q$, tending to zero as $q \to \infty$.
The fastest known algorithms for \textsc{$q$-Matroid Intersection} and \textsc{$q$-Matroid Parity} run in time $O^*(4^{qk})$ \cite{BrandP21}.
Very recently, Brand et al.~\cite{BrandKS23} gave an $O^*(4^k)$-time algorithm for $q \le 4$.

\subsection{Rank \texorpdfstring{$k$}{k} Set Cover and Rank \texorpdfstring{$k$}{k} Set Packing.}
We start with \textsc{Rank $k$ Set Cover}, reiterating Theorem~\ref{ithm:setcover}.
We will assume that the solution size is exactly $t$.
We will use the polynomial-space sieving algorithm (Theorem~\ref{theorem:sieve-basis}) if the underlying field has characteristic 2, and the sieving algorithm for strongly monotone circuits (Theorem~\ref{lemma:simple-matroid-sieving}) otherwise.
To that end, we construct a polynomial as follows.
Let $X = \{ x_{v, E} \mid v \in V, E \in \cE \}$ and $Y = \{ y_{i,E} \mid i \in [t], E \in \cE \}$ be a set of variables.
For \textsc{Rank $k$ Set Cover}, we define
\begin{align*}
  P(X, Y)
  &= \prod_{i \in [t]} \sum_{E \in \cE} y_{i,E} \prod_{v \in E} (1 + x_{v,E}) \\
  &= \sum_{f \colon [t] \to \cE} \prod_{i \in [t]} y_{i,f(i)} \prod_{v \in f(i)} (1 + x_{v,f(i)})
  = \sum_{f \colon [t] \to \cE} \sum_{\substack{ E_1, \ldots, E_t \\ E_i \subseteq f(i), i \in [t]}} \left( \prod_{i \in [t]} y_{i,f(i)} \right) \left( \prod_{i \in [t]} \prod_{v \in E_i} x_{v,f(i)} \right)
\end{align*}
For \textsc{Rank $k$ Set Packing}, we tweak the polynomial slightly:
\begin{align*}
  P(X, Y) = \prod_{i \in [t]} \sum_{E \in \cE} y_{i,E} \prod_{v \in E} x_{v,E}
  = \sum_{f \colon [t] \to \cE} \prod_{i \in [t]} y_{i,f(i)} \prod_{v \in f(i)} x_{v,f(i)}.
\end{align*}
Note that the function $f \colon [t] \to \cE$ plays the role of choosing $t$ sets from $\cE$.
For every $f \colon [t] \to \cE$, there is a distinct monomial and thus no further algebraic cancellation occurs.
Let $A \in \F^{k \times V}$ be the linear representation of $M$.
We may assume that $A$ has exactly $k$ rows by truncating $M$.
Let $P'(X)$ be the result of substituting every variable $y_{i,E}$ with a uniformly chosen random element from $\F$.
By the Schwartz-Zippel lemma, if there exists a collection $S \subseteq \mathcal{E}$ of $t$ sets such that $U = \bigcup S$, then with high probability the polynomial $P'(X)$ contains a monomial of the form $\prod_{v \in U} x_{v,\iota(v)}$, where $\iota\colon U\to S$ is any function satisfying $\iota(v) \in S$ for each $v\in U$.
Mapping every variable $x_{v,E}$ to the column vector $A[\cdot, v]$, we use the sieving algorithm.
If $\F$ has characteristic 2, then Theorem~\ref{theorem:sieve-basis} gives an $O^*(2^k)$-time algorithm.
Otherwise, we use Theorem~\ref{lemma:simple-matroid-sieving}.
Note that $P'(X)$ can be realized by a strongly monotone circuit.
Thus, we have:

\begin{theorem}[Restatement of Theorem~\ref{ithm:setcover}] \label{theoremm:setcover}
  \textsc{Rank~$k$ Set Cover} for
  matroids represented over a field $\F$ can be solved
  in $O^*(2^k)$ time and polynomial space if $\F$ has characteristic 2 and in $O^*(2^{\omega k / 2})$ time and $O^*(2^k)$ space in general. 
\end{theorem}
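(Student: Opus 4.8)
The plan is to hand the sieving machinery a compact enumerating polynomial for ``choices of $t$ sets from $\cE$'', with the matroid pushed in through the column labelling $x_{v,E}\mapsto A[\cdot,v]$, and then simply read off the result. First I would reduce to detecting a cover of size exactly $t$ (running the test for $t=1,2,\dots$ so that a minimum cover is caught at its true size, and assuming without loss of generality $|\cE|\ge t$). For a fixed $t$, introduce $X=\{x_{v,E}\mid v\in E\in\cE\}$, fingerprint variables $Y=\{y_{i,E}\mid i\in[t],\,E\in\cE\}$, and take
\[
  P(X,Y)=\prod_{i\in[t]}\sum_{E\in\cE}y_{i,E}\prod_{v\in E}(1+x_{v,E}),
\]
so that, on expansion, each monomial is produced by a function $f\colon[t]\to\cE$ together with subsets $E_i\subseteq f(i)$, contributing $\bigl(\prod_i y_{i,f(i)}\bigr)\bigl(\prod_i\prod_{v\in E_i}x_{v,f(i)}\bigr)$.

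The $Y$-part $\prod_i y_{i,f(i)}$ records the ordered tuple $(f(1),\dots,f(t))$, so contributions from different $f$ never collide; substituting independent uniform elements of $\F$ for the $Y$-variables and applying Schwartz--Zippel then shows that, with probability $1-O(t|\cE|/|\F|)$, the resulting polynomial $P'(X)$ still exhibits, for any set-collection realised by an injective $f$, a monomial whose $X$-support is exactly $\{x_{v,f(i)}\mid v\in E_i\}$ (and within such an injective $f$ the $E_i$ are uniquely recoverable from the monomial, so there is genuinely no characteristic-$2$ cancellation). Now let $A\in\F^{k\times V}$ represent $M$, truncated to rank exactly $k$, and associate to $x_{v,E}$ the single column $A[\cdot,v]$ -- equivalently, sieve against the matroid on the ground set $\{(v,E)\mid v\in E\in\cE\}$ whose representation repeats column $v$ for each $E\ni v$, so the singleton label sets required by the sieving theorems are trivially pairwise disjoint. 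If $\F$ has characteristic $2$, apply odd sieving (Theorem~\ref{theorem:matroid-sieving}) to $P'(X)$: a variable that happens to occur an even number of times (which can only happen when two $f$-slots hit the same $E$) drops out of the odd support, matching the fact that we only care \emph{which} elements get covered. For general $\F$, observe that $P'(X)$ (after one layer of expansion) is computed by a skew, strongly monotone arithmetic circuit of polynomial size, so Theorem~\ref{lemma:simple-matroid-sieving} applies and gives $O^*(2^{\omega k/2})$ time and $O^*(2^k)$ space.

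Correctness of the reduction has two halves. For soundness, any monomial of $P'(X)$ whose odd support spans $M$ has that support of the shape $\{x_{v,f(i)}\mid v\in E_i\}$; pushing it through $A$, the set of elements involved lies in $\bigcup_i f(i)$ and has rank $k$, so the (at most $t$) distinct sets in the image of $f$ form a rank-$k$ cover -- in particular a non-injective $f$ can only certify a \emph{better} cover, never a spurious one. For completeness, given a rank-$k$ cover $\{E_1,\dots,E_{t^*}\}$ with $t^*\le t$, extend it to $t$ distinct sets (possible as $|\cE|\ge t$), take $f$ to be the injective enumeration of these sets, keep all of $E_i$ in slots $i\le t^*$ and nothing in the padding slots; the resulting monomial survives the $Y$-randomisation and its odd support spans $M$. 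Collecting: the characteristic-$2$ algorithm runs in $O^*(2^k)$ time and polynomial space, the general-field one in $O^*(2^{\omega k/2})$ time and $O^*(2^k)$ space, and the one-sided error from Theorems~\ref{theorem:matroid-sieving} and~\ref{lemma:simple-matroid-sieving} together with the Schwartz--Zippel loss is $O((d+k)/|\F|)$, which is pushed below any constant by moving to an extension field of $\F$; each of the $O^*(2^k)$ (resp.\ $O^*(2^{\omega k/2})$-cost) evaluations of $P'$ is polynomial time.

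The step I expect to be most delicate is the cancellation bookkeeping: verifying that the fingerprint variables really do stop two different set-collections from annihilating each other in characteristic $2$, that ``odd support spans $M$'' is exactly equivalent to ``the chosen sets cover a rank-$k$ subset of $V$'', and that the passage from ``$\le t$ sets'' to ``exactly $t$ distinct sets'' introduces no even-multiplicity artefacts. Relatedly, for the general-field bound one must check that the circuit fed to Theorem~\ref{lemma:simple-matroid-sieving} is genuinely strongly monotone -- so that no sign is flipped over the exterior algebra -- which is exactly where making the $t$ chosen sets distinct pays off; everything else is routine.
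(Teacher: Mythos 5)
Your proof follows the paper's own route essentially verbatim: same polynomial $P(X,Y)=\prod_{i\in[t]}\sum_{E}y_{i,E}\prod_{v\in E}(1+x_{v,E})$, same fingerprint-and-randomise trick on the $Y$-variables, same labelling $x_{v,E}\mapsto A[\cdot,v]$ over the ``exploded'' ground set $\{(v,E)\}$, same invocation of Theorem~\ref{theorem:matroid-sieving} in characteristic~2 and Theorem~\ref{lemma:simple-matroid-sieving} over general fields. One small inaccuracy: the circuit for $P$ is \emph{not} skew (the top-level $\prod_{i\in[t]}$ multiplies two factors each of degree $\ge 2$), so the general-field bound is correctly the $O^*(2^{\omega k/2})$ of the non-skew case; and the route to strong monotonicity is the variable-duplication trick (fresh copies of each $x_{v,E}$ per occurrence, all labelled by the same column) discussed in the paper's Section~\ref{sec:extensor}, rather than forcing the $t$ sets to be distinct. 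Neither slip affects the stated running times or correctness.
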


\begin{theorem} \label{theoremm:setpacking}
  \textsc{Rank~$k$ Set Packing} for
  matroids represented over a field $\F$ can be solved
  in $O^*(2^k)$ time and polynomial space if $\F$ has characteristic 2 and in $O^*(2^{\omega k / 2})$ time and $O^*(2^k)$ space in general. 
\end{theorem}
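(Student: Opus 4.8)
The plan is to follow the proof of Theorem~\ref{theoremm:setcover} almost verbatim, using the packing polynomial
\[
  P(X,Y)=\prod_{i\in[t]}\sum_{E\in\cE} y_{i,E}\prod_{v\in E} x_{v,E}
\]
over $X=\{x_{v,E}\mid v\in V,\ E\in\cE\}$ and $Y=\{y_{i,E}\mid i\in[t],\ E\in\cE\}$, but sieving for a \emph{basis} rather than for a spanning set. After truncating $M$ we may assume its representation $A$ has exactly $k$ rows; I would associate each variable $x_{v,E}$ with the column $A[\cdot,v]$, so that the variable set $X$ is turned into a rank-$k$ linear matroid $M'$ (in which $x_{v,E}$ and $x_{v,E'}$ index parallel columns). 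Substituting an independent uniformly random element of $\F$ for every $y_{i,E}$ yields a polynomial $P'(X)$, to which I would apply basis sieving: Theorem~\ref{theorem:sieve-basis} when $\F$ has characteristic~$2$, giving $O^*(2^k)$ time and polynomial space, and Theorem~\ref{lemma:simple-matroid-sieving} over a general field, giving $O^*(2^{\omega k/2})$ time and $O^*(2^k)$ space; in both cases $\deg P'\le t\cdot\max_E|E|$ is polynomially bounded, so the $O^*$ notation is justified.

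For correctness, expanding $P$ shows that its monomials are in bijection with functions $f\colon[t]\to\cE$, each contributing a distinct monomial in $X\cup Y$, so no algebraic cancellation occurs. A monomial $m$ of $P'$ survives the sieve exactly when it is multilinear in $X$ and $\supp(m)$ is a basis of $M'$. Multilinearity forces $|f^{-1}(E)|\le 1$ for every $E$, i.e.\ $f$ is injective and selects $t$ distinct sets $E_1,\dots,E_t$; then $\supp(m)=\{x_{v,E_j}\mid v\in E_j\}$, and this support is a basis of $M'$ iff it has exactly $k$ elements whose associated columns $A[\cdot,v]$ are linearly independent. Since an overlap $v\in E_i\cap E_j$ would force the parallel column $A[\cdot,v]$ into the support twice, independence is possible only when the $E_j$ are pairwise disjoint, and then the surviving monomials correspond precisely to choices of $t$ pairwise disjoint sets $E_1,\dots,E_t\in\cE$ with $\bigcup_j E_j$ a basis of $M$ --- that is, to \textsc{Rank~$k$ Set Packing} solutions. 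Conversely, every such solution yields such a monomial; after the random substitution for $Y$ its coefficient is the value at a random point of the determinant of the $t\times t$ matrix $(y_{i,E_j})_{i,j\in[t]}$ over $\F$ (using that over characteristic~$2$ the permanent equals the determinant), which is not the zero polynomial and hence is nonzero with high probability by Schwartz--Zippel, and monomials arising from different solutions have different $X$-supports and so cannot cancel. Folding the two randomized failure probabilities together (both polynomial in the instance size over $|\F|$) and passing to an extension field makes the error negligible.

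Over a general field one extra step is needed, and this is the step I expect to be the only real obstacle: $P'(X)$ itself is \emph{not} multilinear --- a set used by two different positions contributes squared variables --- whereas Theorem~\ref{lemma:simple-matroid-sieving} requires a strongly monotone (hence multilinear) circuit. I would therefore instead construct a strongly monotone circuit for the multilinear ``choose $t$ distinct sets'' polynomial $\sum_{\{E_1,\dots,E_t\}\subseteq\cE}\bigl(\prod_j w_{E_j}\bigr)\prod_j\prod_{v\in E_j}x_{v,E_j}$ with random scalars $w_E\in\F$, built by the usual elementary-symmetric recurrence $e^{(j)}_s=e^{(j-1)}_s+w_{E_j}\bigl(\prod_{v\in E_j}x_{v,E_j}\bigr)e^{(j-1)}_{s-1}$; distinct sets $E$ index disjoint blocks of variables, so every sum and product gate of this circuit meets the disjointness/distinctness conditions of strong monotonicity. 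Applying Theorem~\ref{lemma:simple-matroid-sieving} to this circuit with the matroid $M'$ on ground set $X$ detects a support that is a basis of $M'$, which by the analysis above is equivalent to a \textsc{Rank~$k$ Set Packing} solution, within $O^*(2^{\omega k/2})$ time and $O^*(2^k)$ space. The conceptual point throughout is that tagging variables by their set $E$ together with \emph{basis} (rather than odd) sieving is exactly what makes pairwise disjointness of the chosen sets automatic, every shared element producing a duplicated matroid column.
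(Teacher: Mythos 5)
Your proof is correct and follows the same high-level plan as the paper's: same packing polynomial, truncation of $M$ to rank $k$, mapping every variable $x_{v,E}$ to the column $A[\cdot,v]$ (so shared vertices become parallel columns in the induced matroid $M'$ on $X$), random substitution for $Y$, then sieving. Where you deviate from the paper's brief writeup is in two technical choices, and in both cases your choice is the more careful one. For the characteristic-$2$ case the paper cites the odd-sieving statement (Theorem~\ref{theorem:matroid-sieving}) for both \textsc{Rank~$k$ Set Cover} and \textsc{Rank~$k$ Set Packing}, while you instead invoke basis sieving (Theorem~\ref{theorem:sieve-basis}); for the packing polynomial this matters, because odd sieving as stated allows the accepted $k$-subset of the odd support to select only \emph{some} of the variables of a chosen set (for instance a degree-$5$ monomial over three pairwise overlapping sets can still have a spanning $k$-subset of its support even when no valid packing of size $t$ exists), whereas basis sieving restricts attention to degree-$k$ multilinear monomials and thereby forces what you correctly argue: $t$ distinct sets chosen injectively, pairwise disjoint because a repeated vertex would duplicate a column of $M'$, with union a basis of $M$. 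For the general-field case, you correctly point out that $P'(X)$ is not multilinear (a set selected by two positions $i\ne j$ squares every $x_{v,E}$), so it cannot literally be the output of a strongly monotone circuit under the paper's definition (which requires all gate evaluations to be multilinear); the paper glosses over this with ``Note that $P'(X)$ can be realized by a strongly monotone circuit.'' Your elementary-symmetric recurrence for the size-$t$ subset polynomial is the right fix: it is multilinear, each sum gate's two inputs have disjoint monomial families (split by whether $E_j$ is included) and each product gate's inputs produce distinct unions (by disjointness of the per-set variable blocks), so it is strongly monotone, and discarding the non-multilinear monomials of $P'$ is harmless since basis sieving would kill them anyway. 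The remaining details you give --- the permanent coefficient being nonzero under random $Y$, no $X$-cancellation across distinct solution supports, and the polynomial degree bound justifying the $O^*$ --- all check out.
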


\subsection{\texorpdfstring{$q$}{q}-Matroid Parity and \texorpdfstring{$q$}{q}-Matroid Intersection}
Next, we discuss \textsc{$q$-Matroid Intersection} and \textsc{$q$-Matroid Parity}.
Recall that the problems are defined as follows:
In the \textsc{$q$-Matroid Intersection} problem, we are given $q$ matroids $M_1, \dots, M_q$ of rank $k$ over the same ground set $V$. The task is to decide whether there exists a subset $S \subseteq V$ that forms a basis in every matroid $M_i$.
In the \textsc{$q$-Matroid Parity} problem, we are given a matroid $M=(V,\cI)$ of rank $qk$ together with a partition $\cE=\{E_1,\dots,E_m\}$ of $V$ into disjoint sets, each of size $q$.
The problem is to decide whether there exists a collection $S\subseteq \cE$ of $k$ sets such that the union $\bigcup S$ has rank $qk$ in $M$.

We start with the \textsc{$q$-Matroid Parity} problem.
Let $X = \{ x_{E} \mid E \in \cE \}$ be a set of variables.
We define a polynomial:
\begin{align*}
  P(X) = \prod_{E \in \cE} \left( 1 +x_E \right)
  = \sum_{j \in [|\cE|]} \sum_{S \in \binom{\cE}{j}} \prod_{E \in S} x_E.
\end{align*}
We apply the sieving algorithm by associating every $x_E$ with $q$ columns $A[\cdot, E]$.
To sieve over general fields, observe that the polynomial $P(X)$ can be computed using a 1-skew strongly monotone circuit.
Using the basis sieving algorithm (Corollary~\ref{cor:basis} and Theorem~\ref{lemma:simple-matroid-sieving}), we obtain:

\begin{theorem} \label{theorem:general-matroid-parity}
  \textsc{$q$-Matroid Parity} for matroids represented over a field $\F$ can be solved in $O^*(2^{qk})$ time (and polynomial space if $\F$ has characteristic 2).
\end{theorem}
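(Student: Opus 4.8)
The plan is to recognise \textsc{$q$-Matroid Parity} as a direct instance of basis sieving applied to the subset-enumerating polynomial $P(X)=\prod_{E\in\cE}(1+x_E)$, with each variable $x_E$ decorated by the $q$-element block $E$. First I would normalise the matroid: since the blocks $E_1,\dots,E_m$ partition $V$, any collection $S$ of $k$ of them is pairwise disjoint, so $\bigcup S$ has rank $qk$ in $M$ if and only if it is independent in $M$, equivalently a basis of the $qk$-truncation of $M$. If $r(M)<qk$ the instance is a trivial no-instance; otherwise I replace $M$ by its $qk$-truncation, which is again linear over an extension field of $\F$ of the same characteristic and computable in polynomial time~\cite{LokshtanovMPS18,Marx09-matroid}. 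From now on $M$ has rank $qk$ and we seek a $k$-subcollection $S\subseteq\cE$ with $\bigcup S$ a basis of $M$.

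Next I would record the two structural facts about $P(X)=\prod_{E\in\cE}(1+x_E)$ needed to feed it to the sieving theorems. Expanding the product yields exactly the monomials $\prod_{E\in S}x_E$, $S\subseteq\cE$, each with coefficient $1$; thus $P$ is multilinear, its monomials are pairwise distinct, and no cancellation can occur. Moreover $P$ is computed by a $1$-skew strongly monotone circuit: multiply the factors in one at a time, so that every product gate multiplies the current partial product $\prod_{E'\in S'}(1+x_{E'})$ by a single factor $(1+x_E)$ of degree $1$; the sum gate inside $(1+x_E)$ has inputs of disjoint support $\{\emptyset\}$ and $\{\{E\}\}$, and at each product gate the two input set families are $2^{S'}$ and $\{\emptyset,\{E\}\}$ with $E\notin S'$, so all unions formed are distinct. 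Then associate to $x_E$ the decoration $\Gamma_E:=E\subseteq V$; the $\Gamma_E$ are pairwise disjoint and each has constant size $q$.

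Finally I would invoke the sieving machinery with rank parameter $qk$: Corollary~\ref{cor:basis} when $\F$ has characteristic $2$, and Theorem~\ref{lemma:simple-matroid-sieving} otherwise, using that the circuit is $1$-skew ($\delta=1\in O(1)$) and every $\gamma_E=q\in O(1)$. A multilinear monomial $\prod_{E\in S}x_E$ passes the sieve exactly when $\bigcup_{E\in S}\Gamma_E=\bigcup S$ is a basis of $M$; since the $\Gamma_E$ are disjoint this forces $q|S|=qk$, i.e.\ $|S|=k$, so the surviving monomials are in bijection with the solutions of \textsc{$q$-Matroid Parity}. Corollary~\ref{cor:basis} then gives an $O^*(d\,2^{qk})=O^*(2^{qk})$-time, polynomial-space algorithm (with $d=|\cE|\le|V|$), and Theorem~\ref{lemma:simple-matroid-sieving} an $O^*(2^{qk})$-time, $O^*(2^{qk})$-space algorithm; in both cases the one-sided error is $O(qk/|\F|)$ and can be made negligible over an extension field.

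The only genuinely delicate part is the bookkeeping that makes these correspondences exact — checking that the $1$-skew strongly monotone presentation of $P$ really meets the hypotheses of Theorem~\ref{lemma:simple-matroid-sieving}, and that disjointness of the blocks is exactly what pins $|S|$ down to $k$. Neither step is hard, but these are the places where the reduction could quietly go wrong.
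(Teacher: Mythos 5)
Your proposal follows exactly the paper's route: define $P(X)=\prod_{E\in\cE}(1+x_E)$, decorate each $x_E$ with the $q$ columns of $A$ indexed by $E$, and apply Corollary~\ref{cor:basis} over characteristic 2 (respectively Theorem~\ref{lemma:simple-matroid-sieving} over a general field, observing the circuit is $1$-skew and strongly monotone) with rank parameter $qk$. The extra bookkeeping you supply — truncation to rank $qk$, the verification that $P$ meets the strongly-monotone hypotheses, and the observation that disjointness of the blocks pins $|S|=k$ — is all correct and simply spells out details the paper leaves implicit.
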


Since \textsc{$q$-Matroid Intersection} is a special case of \textsc{$q$-Matroid Parity}, we also obtain:

\begin{corollary} \label{cor:general-matroid-parity}
   \textsc{$q$-Matroid Intersection} for matroids represented over a field $\F$ can be solved in $O^*(2^{qk})$ time (and polynomial space if $\F$ has characteristic 2).
\end{corollary}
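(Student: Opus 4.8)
The plan is to observe that $q$-Matroid Intersection is literally a special case of $q$-Matroid Parity and then invoke Theorem~\ref{theorem:general-matroid-parity} as a black box. Given an instance $(M_1, \dots, M_q, k)$ of $q$-Matroid Intersection over a common ground set $V$, I would first truncate each $M_i$ to rank $k$ using the construction recalled in Section~\ref{sec:prel-matroids} (if some $r(M_i) < k$, immediately reject), so that each $M_i$ is represented by a matrix $A_i \in \F^{k \times V}$. Then I build the following $q$-Matroid Parity instance: the new ground set is $V' = V \times [q]$, partitioned into the classes $E_v = \{(v,1), \dots, (v,q)\}$, each of size exactly $q$; the matroid $M$ on $V'$ is the direct sum $M_1 \oplus \dots \oplus M_q$, where the $i$-th summand lives on the copy $\{(v,i) : v \in V\}$ and carries $M_i$. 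A representation of $M$ over $\F$ is the block-diagonal matrix $A = \mathrm{diag}(A_1, \dots, A_q) \in \F^{qk \times V'}$; crucially the representation stays over the same field, so it has characteristic $2$ whenever the $A_i$ do, and $r(M) = qk$.

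Next I would verify correctness of the reduction. Because rank is additive over direct sums, for any $S = \{v_1, \dots, v_k\} \subseteq V$ the union $\bigcup_{j \in [k]} E_{v_j}$ has rank $\sum_{i=1}^q r_{M_i}(S)$ in $M$, which is at most $qk$, with equality if and only if $r_{M_i}(S) = k$ for every $i$, i.e.\ $S$ is independent (hence a basis) in each $M_i$. Since $\bigl|\bigcup_{j} E_{v_j}\bigr| = qk = r(M)$, having rank $qk$ is the same as being a basis of $M$ here. Hence a collection of $k$ classes forms a valid $q$-Matroid Parity solution exactly when the corresponding $k$ elements of $V$ form a valid $q$-Matroid Intersection solution, so the two instances are equivalent.

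Finally I would apply Theorem~\ref{theorem:general-matroid-parity} to the constructed instance: it runs in $O^*(2^{qk})$ time, and in polynomial space when $\F$ has characteristic $2$. As the reduction is polynomial-time and changes neither the parameter $k$ nor the field, this gives the claimed bounds. Alternatively, one can bypass the reduction and argue directly as in the proof of Theorem~\ref{theorem:general-matroid-parity}: take $P(X) = \prod_{v \in V}(1 + x_v)$, which is computed by a $1$-skew strongly monotone circuit, associate each $x_v$ with the $q$ columns $A_1[\cdot,v], \dots, A_q[\cdot,v]$ placed in disjoint coordinate blocks (so these sets $\Gamma_v$ are pairwise disjoint), and invoke Corollary~\ref{cor:basis} in the characteristic-$2$, polynomial-space case and Theorem~\ref{lemma:simple-matroid-sieving} over general fields.

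I do not expect a real obstacle here; the only points that need care are that the direct-sum matroid is representable over the same field (the block-diagonal construction handles this, and truncation at worst moves to an extension field, which preserves characteristic $2$) and that rank-additivity over direct sums yields precisely the required equivalence between the two problem instances.
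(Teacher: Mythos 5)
Your proof is correct and takes essentially the same approach as the paper: the paper simply asserts that \textsc{$q$-Matroid Intersection} is a special case of \textsc{$q$-Matroid Parity} and cites Theorem~\ref{theorem:general-matroid-parity}, and you make that specialization explicit via the standard direct-sum reduction ($V'=V\times[q]$, $M = M_1 \oplus \cdots \oplus M_q$, block-diagonal representation), which is exactly what the paper's remark implies. Your alternative direct argument (applying Corollary~\ref{cor:basis} or Theorem~\ref{lemma:simple-matroid-sieving} to $\prod_{v}(1+x_v)$ with each $x_v$ labelled by the $q$ columns $A_1[\cdot,v],\dots,A_q[\cdot,v]$ in disjoint blocks) is just the proof of Theorem~\ref{theorem:general-matroid-parity} unrolled on the reduced instance, so it is equivalent.
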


We obtain a greater speedup for \textsc{$q$-Matroid Intersection} by using the Cauchy-Binet formula. 
Suppose that $A_i \in \F^{k \times V}$ represents the matroid $M_i$.
Let $X=\{x_v \mid v \in V\}$ be a set of variables and let $A_1'$ be
the result of scaling every column $v$ of $A_1$ by $x_v$.
By the Cauchy-Binet formula,
\[
  P(X) := \det (A_1' A_2^T) = \sum_{B \in \binom{V}{k}} \det
  A_1[\cdot,B] \det A_2[\cdot,B] \prod_{v \in B} x_v.
\]
Thus $P(X)$ enumerates monomials $\prod_{v \in B} x_v$ for
common bases $B$ of $A_1$ and $A_2$, and we only have to sieve for
terms that in addition are bases of the remaining $q-2$ matroids. 
We construct a matroid of rank $(q-2)k$ by taking the direct sum of these $q-2$ matroids. Then, by applying Corollary~\ref{cor:basis} (with each variable $x_v$ corresponding to its copies in the direct sum), we obtain the following.

\begin{theorem}[Restatement of Theorem~\ref{ithm:matroid-intersection}]
  \textsc{$q$-Matroid Intersection} for linear matroids represented
  over a common field $\F$ of characteristic 2 can be solved in
  randomized time $O^*(2^{(q-2)k})$ and polynomial space.
\end{theorem}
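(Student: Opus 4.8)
The plan is to use the Cauchy--Binet identity already recorded above to turn the first two matroid constraints into an enumerating polynomial, and then to sieve with Corollary~\ref{cor:basis} for a monomial whose support is simultaneously a basis of the remaining $q-2$ matroids. Concretely, I would first observe that $P(X)=\det(A_1' A_2^T)$ is an enumerating polynomial for the family $\cF$ of common bases of $M_1$ and $M_2$: it is multilinear and homogeneous of degree $k$ in $X$, and the coefficient of $\prod_{v\in B}x_v$ equals $\det A_1[\cdot,B]\,\det A_2[\cdot,B]$, which is nonzero over $\F$ exactly when $B$ is a basis of both $M_1$ and $M_2$. Moreover $P$ can be evaluated at any point in $\mathrm{poly}(|V|)$ time, being a single $k\times k$ determinant. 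By truncation I may assume each $M_i$ is represented by a rank-$k$ matrix $A_i\in\F^{k\times V}$.

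\textbf{Collapsing the remaining $q-2$ constraints into one matroid.} The key modelling step is that asking a single set $B$ to be a basis of each of $M_3,\dots,M_q$ simultaneously is itself a basis condition in one linear matroid of rank $(q-2)k$. I would take $q-2$ disjoint copies $V^{(3)},\dots,V^{(q)}$ of $V$, place the representation $A_i$ on the copy $V^{(i)}$, and let $M'$ be the block-diagonal matroid on $\bigcup_{i=3}^{q}V^{(i)}$ represented over $\F$ by $\mathrm{diag}(A_3,\dots,A_q)$, which has rank $(q-2)k$. For each $v\in V$ put $\Gamma_v=\{v^{(3)},\dots,v^{(q)}\}$; these sets are pairwise disjoint and of size $q-2$. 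For $B\subseteq V$, the columns $\bigcup_{v\in B}\Gamma_v$ restrict in block $i$ to exactly $A_i[\cdot,B]$ and lie in pairwise disjoint coordinate ranges, so $\bigcup_{v\in B}\Gamma_v$ is a basis of $M'$ if and only if $|B|=k$ and $B$ is a basis of $M_i$ for every $i\in\{3,\dots,q\}$.

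\textbf{Finishing with the sieve.} I would then invoke Corollary~\ref{cor:basis} on $P(X)$, the matroid $M'$ of rank $(q-2)k$, and the families $(\Gamma_v)_{v\in V}$. With $d:=\deg_X P=k$, the corollary runs in time $O^*(d\cdot 2^{(q-2)k})=O^*(2^{(q-2)k})$ and polynomial space, and tests whether $P$ has a multilinear monomial $m$ with $\bigcup_{v\in\supp(m)}\Gamma_v$ a basis of $M'$; by the two observations above this holds iff $\supp(m)\in\cF$ and $\supp(m)$ is a basis of $M_3,\dots,M_q$, i.e., iff $M_1,\dots,M_q$ have a common basis (and if $M_1,M_2$ already have none, then $P\equiv 0$ and the sieve correctly rejects). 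There are no false positives, and the false-negative probability $2(q-2)k/|\F|$ inherited from Corollary~\ref{cor:basis} can be made negligible by moving to an extension field of $\F$; since each evaluation of $P$ is polynomial-time, the total cost is $O^*(2^{(q-2)k})$ time and polynomial space.

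\textbf{Main obstacle.} I do not anticipate a genuine difficulty: the proof is a composition of Cauchy--Binet with Corollary~\ref{cor:basis}. The only step needing care is the modelling one above --- recognising that ``$B$ is a basis of all of $M_3,\dots,M_q$'' is a \emph{simultaneous}, not union-type, condition, and is realised as a single basis condition in the block-diagonal matroid exactly because each variable is decorated with one column drawn from \emph{every} block; this is what makes the $q-2$ constraints collapse into one rank-$(q-2)k$ sieve rather than multiplying the running time. I would also note that the characteristic-$2$ assumption enters only through Corollary~\ref{cor:basis} (hence Theorem~\ref{theorem:sieve-basis}): over a general field $\det(A_1'A_2^T)$ is not computed by a strongly monotone circuit, so Theorem~\ref{lemma:simple-matroid-sieving} does not apply to it directly, and the general-field bound requires a different, exponential-space argument.
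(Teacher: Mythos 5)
Your proposal is correct and follows the same route the paper takes: express common bases of $M_1,M_2$ via the Cauchy--Binet determinant $P(X)=\det(A_1'A_2^T)$, then apply the basis-sieving Corollary~\ref{cor:basis} with the block-diagonal rank-$(q-2)k$ matroid $\mathrm{diag}(A_3,\dots,A_q)$, decorating each $x_v$ with the $q-2$ copies $\{v^{(3)},\dots,v^{(q)}\}$. The paper leaves the block-diagonal construction implicit (it just says ``sieve for terms that in addition are bases of the remaining $q-2$ matroids''), and you have simply filled in those modelling details correctly.
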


For fields of characteristic other than 2, this does not represent a speedup over Corollary~\ref{cor:general-matroid-parity}
since the circuit computing $\det A_1' A_2^T$ is not strongly monotone. 
However, we do obtain a speedup for general $\F$ for the special case $q=3$.
Observe that every entry in $A_1' A_2^T$ has polynomial degree at most 1.
It is known that the determinant of a symbolic matrix can be computed with a skew circuit \cite{MahajanV97}.
Thus, there is a 1-skew circuit computing $\det (A_1' A_2^T)$.
Using the sieving algorithm of Theorem~\ref{lemma:extensor-matroid-sieving} for general arithmetic circuits, we obtain:

\begin{theorem} \label{thm:general-intersection}
  \textsc{$q$-Matroid Intersection} for linear matroids can be solved in $O^*(4^{(q - 2)k})$ time.
  In particular, the bound is $O^*(4^k)$ for $q = 3$.
\end{theorem}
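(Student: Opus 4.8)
The plan is to reuse the Cauchy--Binet reduction that gives the characteristic-$2$ bound $O^*(2^{(q-2)k})$, but to feed the resulting polynomial into the general-field sieving result of Theorem~\ref{lemma:extensor-matroid-sieving} instead of the characteristic-$2$ one. Concretely, write $A_1' = A_1 A_X$ where $A_X$ is the diagonal matrix with $A_X[v,v] = x_v$, and set $P(X) = \det(A_1' A_2^T)$. By the Cauchy--Binet formula $P(X) = \sum_{B \in \binom{V}{k}} \det A_1[\cdot,B]\,\det A_2[\cdot,B] \prod_{v \in B} x_v$, so the monomials occurring in $P$ are exactly the $\prod_{v \in B} x_v$ over common bases $B$ of $M_1$ and $M_2$, each with a nonzero coefficient, and distinct bases yield distinct monomials (so no auxiliary fingerprinting variables are needed). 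It therefore suffices to sieve for a monomial $m$ of $P$ whose support is, in addition, a basis of each of $M_3,\ldots,M_q$.

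To express ``common basis of $M_3,\ldots,M_q$'' as a single matroid-basis condition, I would take $q-2$ disjoint copies $V^{(3)},\ldots,V^{(q)}$ of $V$, let $M'$ be the disjoint union $M_3 \oplus \cdots \oplus M_q$ on $V^{(3)} \cup \cdots \cup V^{(q)}$ -- a rank-$(q-2)k$ linear matroid over $\F$, represented by the block-diagonal matrix $\operatorname{diag}(A_3,\ldots,A_q)$ -- and associate each variable $x_v$ with the set $\Gamma_v = \{v^{(3)},\ldots,v^{(q)}\}$. The $\Gamma_v$ are pairwise disjoint and each has size $q-2 = O(1)$, and for $S \subseteq V$ the set $\bigcup_{v \in S} \Gamma_v$ is a basis of $M'$ if and only if $|S| = k$ and $S$ is a basis of $M_i$ for every $i \in \{3,\ldots,q\}$. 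Hence applying Theorem~\ref{lemma:extensor-matroid-sieving} to $P$, $M'$, and this decoration detects precisely the existence of a common basis of all $q$ matroids, with no false positives and false-negative probability at most $(q-2)k/|\F|$, which is driven down by passing to an extension field of $\F$.

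For the running time the point is that the circuit computing $P$ is $1$-skew: every entry of $A_1' A_2^T$ is a polynomial of degree at most $1$ in $X$, and by the skew determinant circuit of Mahajan and Vinay~\cite{MahajanV97} the determinant of a symbolic matrix is computed by a circuit whose product gates each multiply a partial product by one matrix entry, so composing with the constant-degree circuits for the entries leaves every product gate with one input of polynomial degree at most $1$. Since moreover $\gamma_v = q-2 \in O(1)$, the $\delta$-skew branch of Theorem~\ref{lemma:extensor-matroid-sieving} applies with rank parameter $(q-2)k$, giving running time $O^*(4^{(q-2)k})$ and space $O^*(4^{(q-2)k})$; for $q = 3$ this is $O^*(4^{k})$.

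I expect the one point that needs care is this last step -- that $1$-skewness of the composed circuit actually buys the $O^*(4^{(q-2)k})$ bound rather than the generic $O^*(2^{\omega (q-2)k})$ (which is worse, as $\omega > 2$). The reason is that under the extensor substitution $x_v \mapsto x_v' a_v$, with $a_v$ decomposable of grade $2(q-2)$ in $\Lambda(\F^{2(q-2)k})$ after the lift mapping, any degree-$\le 1$ subpolynomial is sent into $\Lambda^0 \oplus \Lambda^{2(q-2)}$, a subspace of dimension $\operatorname{poly}((q-2)k)$; hence each wedge product performed at a product gate of the determinant circuit is between an arbitrary extensor (ambient dimension $2^{2(q-2)k}$) and an element of this low-dimensional subspace, and so costs only $O^*(4^{(q-2)k})$. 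Summing over the polynomially many gates, and extracting the $e_{[2(q-2)k]}$-coefficient of the result evaluated at random values of the $x_v'$, then yields the claimed bound.
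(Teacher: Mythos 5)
Your proposal is correct and follows essentially the same route as the paper: the Cauchy--Binet polynomial $P(X)=\det(A_1'A_2^T)$ is computed by a $1$-skew circuit via Mahajan--Vinay, each $x_v$ is decorated with $q-2$ columns forming a basis condition in the rank-$(q-2)k$ disjoint union $M_3\oplus\cdots\oplus M_q$, and the $\delta$-skew branch of Theorem~\ref{lemma:extensor-matroid-sieving} gives the $O^*(4^{(q-2)k})$ bound. You spell out the disjoint-union matroid construction and the reason skewness buys the speedup in more detail than the paper does, but the argument is the same.
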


It is an interesting open question whether \textsc{$q$-Matroid Parity} can be solved in $O^*(2^{(q - \varepsilon)k})$ for $\varepsilon > 0$ when $q \ge 3$ is constant.
Note that an enumerating polynomial for 2-matroid parity (let us call it \emph{matroid matching} for clarity) can be efficiently evaluated using the linear representation of Lov\'asz~\cite{Lovasz79}:
Suppose that $A$ represents a matroid $M = (V, \cI)$ with $V$ partitioned into pairs $P_i = \{ v_i, v_i' \}$.
If $x_i$ is a variable representing the pair $P_i$, then the Pfaffian $\Pf B$, where
\begin{align*}
   B = \sum_i x_i (A[\cdot, v_i] A^T[v_i', \cdot] - A[\cdot, v_i'] A^T[v_i, \cdot]),
\end{align*}
enumerates all matroid matching terms.
Lov\'asz \cite{Lovasz79} only showed that the rank of $B$ equals twice the maximum matroid matching size, but $\Pf B$ indeed enumerates all matroid matching terms.
We refer to the textbook of Murota~\cite[Section 7.3.4]{murota1999matrices} for this fact (the exposition concerns an alternative equivalent formulation of matroid matching proposed by Geelen and Iwata~\cite{GeelenI05}).
The trick employed by Bj\"{o}rklund et al.~\cite{BjorklundHKK17narrow} to speed up \textsc{$q$-Set Packing} of ``reducing'' (via colour-coding type arguments) to \textsc{$q$-Dimensional Matching}, however, seemingly does not work for the matroid analogs.
The simple idea of having the variable $x_i$ encode $q - 2$ columns in the matroid matching enumerating polynomial fails because the space spanned by vectors in the matroid matching is not necessarily orthogonal to the other of $q - 2$ columns.

\subsection{Odd Coverage}
Finally, let us discuss another corollary of Theorem~\ref{theorem:matroid-sieving} on a variant of \textsc{Set Cover}, called \textsc{Odd Coverage}.
The input is a set family $\cE$ over $V$ and integers $t, p$.
The question is whether there is a subcollection $S \subseteq \cE$ with $|S| = t$ such that there are at least $p$ elements $v \in V$ with $|\{ E \in S \mid v \in E \}| \bmod 2 = 1$ (i.e., $v$ is covered an odd number of times).
Over a set of variables $X = \{ x_v \mid v \in V \}$ and $Y = \{ y_E \mid E \in \cE \}$, define
\begin{align*}
  P(X, Y, z) = \prod_{E \in \cE} \left(1 + z y_E \prod_{v \in E} x_v \right).
\end{align*}
The coefficient of $z^t$ then enumerates the subcollections of size $t$.
Note that there is a solution if and only if there is a monomial (over $X$) whose odd support set is size at least $p$.
Thus, the odd sieving algorithm implies:

\begin{theorem}
  \textsc{Odd Coverage} can be solved in $O^*(2^p)$ time and polynomial space.
\end{theorem}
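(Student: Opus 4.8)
The plan is to reduce \textsc{Odd Coverage} to a single application of odd sieving (Theorem~\ref{theorem:matroid-sieving}) with a uniform matroid. Work over a field $\F$ of characteristic~2 of size larger than $|V|\cdot|\cE|$ by any desired factor (to control the error). Let $M = U_{|V|,p}$ be the uniform matroid of rank $p$ on the ground set $V$; this is representable over $\F$ by a matrix $A \in \F^{p \times V}$ in which every $p$ columns are linearly independent (e.g.\ a Reed--Solomon/Vandermonde matrix, using $|\F| \ge |V|$), and in $U_{|V|,p}$ a set spans $M$ precisely when it has at least $p$ elements.

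First I would extract the coefficient of $z^t$ from $P(X,Y,z)$. Since $P$ has $z$-degree at most $|\cE|$, Lemma~\ref{lemma:interpolation} lets us evaluate $R(X,Y) := [z^t]\,P(X,Y,z)$ at any point using $|\cE|+1$ evaluations of $P$, in polynomial time and space. Expanding,
\[
  R(X,Y) = \sum_{\substack{S \subseteq \cE \\ |S| = t}} \Bigl(\prod_{E \in S} y_E\Bigr)\prod_{v \in V} x_v^{d_S(v)}, \qquad d_S(v) := |\{E \in S : v \in E\}| .
\]
The $Y$-monomials $\prod_{E \in S} y_E$ are pairwise distinct over choices of $S$, so no cancellation occurs in $R$, and the odd support in $X$ of the summand for $S$ is exactly $\{v : d_S(v) \text{ odd}\}$, the set of elements covered an odd number of times by $S$. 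Hence $(V,\cE,t,p)$ is a yes-instance if and only if $R$ contains a monomial whose odd support in $X$ has size at least $p$.

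Next I would substitute uniformly random elements of $\F$ for the variables $Y$, obtaining a polynomial $Q(X)$; a black-box evaluation of $Q$ costs $|\cE|+1$ evaluations of $P$. I then apply Theorem~\ref{theorem:matroid-sieving} to $Q$ with $k = p$, matroid $M$, and each $x_v$ associated with $\Gamma_v = \{v\}$ (so $\gamma_v = 1$ and the $\Gamma_v$ are pairwise disjoint). Odd sieving reports ``yes'' iff $Q$ has a monomial whose odd support contains a $p$-subset that is a basis of $U_{|V|,p}$, i.e.\ a monomial of odd support size at least $p$. For correctness: for each degree sequence $\mathbf d$, the coefficient of $X^{\mathbf d}$ in $R$, viewed as a polynomial in $Y$, is a nonzero sum of distinct monomials of degree $t$, so it survives the random substitution except with probability at most $t/|\F|$; thus any solution $S$ retains a witnessing monomial in $Q$ with high probability, while any monomial of $Q$ with nonzero coefficient arises from a genuine subcollection realizing its degree sequence, so there are no false positives. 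Combining the $t/|\F|$ substitution-failure bound with the $(\deg Q + p)/|\F| \le (|V|\cdot|\cE| + p)/|\F|$ failure bound of Theorem~\ref{theorem:matroid-sieving} via a union bound gives polynomially small error, which can be driven negligible by enlarging $\F$. All steps (interpolation, substitution, sieving) use polynomial space and $P$ is evaluable in polynomial time, so the algorithm runs in $O^*(2^p)$ time and polynomial space.

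The only point that needs care is this faithfulness bookkeeping — that stripping off $z$ by interpolation and $Y$ by a random substitution before invoking the sieve introduces neither spurious cancellation nor false positives; everything else is routine. In particular one should note that Theorem~\ref{theorem:matroid-sieving} is phrased for a polynomial in $X$ alone, which is exactly why the auxiliary variables $Y$ and $z$ must be eliminated first rather than carried into the sieve.
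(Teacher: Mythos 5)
Your proposal is correct and follows essentially the same approach as the paper's (terse) proof: construct the same product polynomial $P(X,Y,z)$, extract the $z^t$ coefficient by interpolation, and apply odd sieving (Theorem~\ref{theorem:matroid-sieving}) with a rank-$p$ uniform matroid over $V$ so that ``odd support spans $M$'' is exactly ``odd support size $\geq p$.'' The paper leaves the handling of the fingerprint variables $Y$ and the error accounting implicit; your version of spelling out the random $Y$-substitution, arguing no spurious cancellation occurs because the $Y$-monomials $\prod_{E\in S}y_E$ are distinct, and union-bounding the substitution and sieving errors is a correct way to fill in those details.
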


An $O^*(2^p)$-time (and exponential-space) algorithm for a special case is known, given by Saurabh and Zehavi~\cite{SaurabhZ18}.
They studied the following problem: given a graph $G = (V, E)$ and integers $t, p$, is there a set $S$ of exactly $t$ vertices such that there are at least $p$ edges with one endpoint in $S$ and the other in $V \setminus S$?
Note that this is a special case of \textsc{Odd Coverage} in which every element occurs in two sets.

\section{Balanced Solution and Diverse Collection}

As noted, given an efficient enumerating polynomial $P(X)$ for a
category of objects, and given a representable matroid $M$ over $X$,
we can use our methods out-of-the-box to sieve for objects in the
collection that are independent or spanning in $M$.
In this section, we survey two applications.
The first concerns the problem of finding a \emph{balance-fair} solution.
A balanced-fairness is, in a way, a stronger notion of colourfulness; every colour should appear not only once, but also almost equally frequently.
We note that with an efficient enumerating polynomial at hand, our sieving algorithm can find a balanced-fair solution. 
The second addresses another problem category, of finding a \emph{diverse}
collection of objects, with prescribed pairwise minimum distances.
Utilizing the odd sieving method (Theorem~\ref{theorem:matroid-sieving}), we show a general way to find a diverse collection.

\subsection{Balance-fair X paradigm}
\label{sec:balanced}

There is a recent trend in pursing fairness especially in artificial intelligence applications (see e.g., the work of Chierichetti et al.~\cite{Chierichetti0LV17}).
There are many notions of fairness known in the literature.
Here, we consider the problem of finding a \emph{balanced} solution.
We assume that every object is assigned a colour from a set $C$.
For $\alpha \le \beta \in \N$, a set $S$ of objects is said to be \emph{$(\alpha, \beta)$-balanced} if $\alpha \le |S_c| \le \beta$ for every colour $c \in C$, where $S_c \subseteq S$ denotes the objects in $S$ with colour $c$.
The problem of finding a balanced solution has been studied in the context of \textsc{Matroid Intersection}~\cite{Chierichetti0LV19}, \textsc{$k$-Matching}~\cite{BandyapadhyayFIS23}, and \textsc{$k$-Path}~\cite{BentertKN23}.
We define a general problem called \textsc{Balanced Solution} as follows.
The input is a set $E$, a collection of (possibly exponentially many) subsets $\cF \subseteq 2^E$ of $E$, a set of colours $C$, a colouring $\chi \colon E \to C$, and integers $k, \alpha, \beta$.
The question is whether there is a set $S \in \cF$ of size $k$ such that $\alpha \le |S \cap \chi^{-1}(c)| \le \beta$ for all $c \in C$.
We observe that the basis sieving (Theorem~\ref{theorem:sieve-basis}) solves this problem in time $O^*(2^k)$, if an enumerating polynomial for $\mathcal{F}$ can be evaluated in polynomial time over a field of characteristic~2.
To set up the matroid constraint, we use the observation of Bentert et al.~\cite{BentertKN23} that there is a linear matroid $M$ of rank $k$ with coloured objects as its ground set such that a set of $k$ objects is $(\alpha, \beta)$-balanced if and only if it is a basis for $M$.
In particular, a linear representation of $M$ over a field of characteristic 2 can be constructed in randomized polynomial time.
We thus obtain from the definitions:

\begin{theorem}
  \textsc{Balanced Solution} can be solved in $O^*(2^k)$ time if there is an enumerating polynomial for $\cF$ that can be evaluated in polynomial time over a field of characteristic~2.
\end{theorem}

In particular, this implies $O^*(2^k)$-time algorithms for balanced-fair variants of \textsc{Matroid Intersection}, \textsc{$k$-Matching}, and \textsc{$k$-Path} (see Section~\ref{sec:enum-poly} for the enumerating polynomials).
In particular, for \textsc{$k$-Path} we use the enumerating polynomial for $k$-walks, and give all copies $x_{v,i}$ for a vertex $v$ the same label in the matroid~$M$,
thereby ensuring that any surviving monomial represents a path.
This is an improvement over the existing algorithms, all of which run in $O^*(2^{ck})$ time for some $c > 1$.

\subsection{Diverse X paradigm} \label{sec:diverse}

In the so-called ``diverse X paradigm'' (X being the placeholder for an optimization problem), we seek -- rather than a single solution -- a diverse collection of solutions, where the diversity is measured in terms of the Hamming distance, i.e., the size of the symmetric difference.
Recently, there is an increasing number of publications studying the problem of finding diverse solutions from the parameterized complexity perspective \cite{BasteFJMOPR22,BasteJMPR19,FominGJPS24,FominGPPS24,HanakaKKO21}.

The \textsc{Diverse Collection} problem is defined as follows.
For a set $E$, let $\cF_i$ be a collection of (potentially exponentially many) subsets of $E$ for each $i \in [k]$.
Given $d_{i,j} \in \N$ for $i < j \in [k]$, the problem asks to determine the existence of subsets $S_i \in \cF_i$ for $i \in [k]$ such that $|S_i \Delta S_j| \ge d_{i,j}$ for each $i < j \in [k]$.
Here, $S_i \Delta S_j$ denotes the symmetric difference $(S_i \setminus S_j) \cup (S_j \setminus S_i)$.
We show that if all collections $\cF_i$ admit enumerating polynomials $P_i(X)$ that can be efficiently evaluated, then \textsc{Diverse Collection} can be solved in $O^*(2^D)$ time, where $D = \sum_{i < j \in [k]} d_{i,j}$.

Let $X' = \{ x_{e}^{\{ i, j \}} \mid i, j \in [k], e \in E \}$ and $Y = \{ y_{i,e} \mid i \in [k], e \in E \}$ be variables.
We define 
\begin{align*}
  P(X', Y) = \prod_{i \in [k]} P_i'(X', Y)
\end{align*}
where $P_i'(X', Y)$ is the result of plugging $x_e = y_{i,e} \prod_{j \in [k] \setminus \{ i \}} x_{e}^{\{ i, j \}}$ in the enumerating polynomial $P_i(X) = \sum_{S_i \in \cF_i} c(i,S_i) \prod_{e \in S_i} x_e$ for coefficients $c(i, S_i) \in \F$.
The variables $x_{e}^{\{ i, j \}}$ will play a key role in ensuring that $|S_i \Delta S_j| \ge d_{i,j}$.
Let us expand $P(X', Y)$ into a sum of monomials:
\begin{align*}
  P(X', Y)
  &= \sum_{\substack{S_1, \dots, S_k \\ S_i \in \mathcal{F}_i}} \left( \prod_{i \in [k]} c(i, S_i) \cdot \prod_{i \in [k], e \in S_i} y_{i,e} \cdot \prod_{i \in [k], e \in S_i} \prod_{j \in [k] \setminus \{ i \}} x_e^{\{ i, j \}} \right)
\end{align*}
With $S_i \in \cF_i$ fixed for each $i \in [k]$, we have
\begin{align*}
  \prod_{i \in [k], e \in S_i} \prod_{j \in [k] \setminus \{ i \}} x_e^{\{ i, j \}}
  = \prod_{\substack{i < j \in [k]}} \left( \prod_{e \in S_i} x_{e}^{\{i,j\}} \right) \left( \prod_{e \in S_j} x_{e}^{\{i,j\}} \right)
  = \prod_{i < j \in [k]} \left( \prod_{e \in S_i \Delta S_j} x_e^{\{i,j\}} \right) \left( \prod_{e \in S_i \cap S_j} (x_e^{\{i,j\}})^2 \right).
\end{align*}
We therefore have
\begin{align*}
  P(X', Y)
  &= \sum_{\substack{S_1, \dots, S_k \\ S_i \in \mathcal{F}_i}} \left( \prod_{i \in [k]} c(i, S_i) \cdot \prod_{i \in [k], e \in S_i} y_{i,e} \cdot  \prod_{i < j \in [k]} \left( \prod_{e \in S_i \Delta S_j} x_e^{\{i,j\}} \right) \left( \prod_{e \in S_i \cap S_j} (x_e^{\{i,j\}})^2 \right) \right).
\end{align*}
For every collection of $k$-tuples $(S_1, \dots, S_k)$ with $S_i \in \cF_i$, there is a distinct monomial in $P(X', Y)$.
We use the odd sieving algorithm of Theorem~\ref{theorem:matroid-sieving}.
More precisely, we add constraints such that for every $\{ i, j \} \subseteq [k]$, there are at least $d_{i,j}$ variables $x_{e}^{\{i,j\}}$ in the odd support set.
This ensures that each pairwise Hamming distance is at least $d_{i,j}$.
Note that these constraints can be realized using a partition matroid of rank $D$, where for each $i < j \in [k]$ there is a block $\{ x_e^{{i,j}} \mid e \in E \}$ with capacity $d_{i,j}$.
Thus, we obtain:

\begin{theorem} \label{theorem:diverse-collection}
  \textsc{Diverse Collection} can be solved in $O^*(2^{D})$ time if all collections $\cF_i$ admit enumerating polynomials that can be evaluated in polynomial time over a field of characteristic~2.
\end{theorem}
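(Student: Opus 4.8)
The plan is to obtain the result as a direct instance of odd sieving (Theorem~\ref{theorem:matroid-sieving}) applied to the polynomial $P(X',Y)$ constructed above, paired with a single low-rank partition matroid that simultaneously encodes all $\binom{k}{2}$ distance constraints. Everything takes place over a field $\F$ of characteristic~$2$, which is what odd sieving requires and which suffices for the enumerating polynomials $P_i$ in the intended applications.

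The first step is to pin down the monomial structure of $P(X',Y)$. Expanding $P(X',Y)=\prod_{i\in[k]}P_i'(X',Y)$ as displayed, every $k$-tuple $(S_1,\dots,S_k)$ with $S_i\in\cF_i$ contributes exactly one monomial, in which (i) the factor $\prod_{i\in[k],\,e\in S_i}y_{i,e}$ records each $S_i$ uniquely, so distinct tuples yield distinct monomials and no algebraic cancellation can occur, and (ii) for each unordered pair $\{i,j\}$ the variable $x_e^{\{i,j\}}$ occurs to the power $[e\in S_i]+[e\in S_j]$, since $x_e^{\{i,j\}}$ was introduced only into the two factors $P_i'$ and $P_j'$. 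Hence that power is $1$ exactly when $e\in S_i\Delta S_j$ and $2$ exactly when $e\in S_i\cap S_j$, so the odd support of this monomial, within the block $\{x_e^{\{i,j\}}:e\in E\}$, equals $\{x_e^{\{i,j\}}:e\in S_i\Delta S_j\}$, which has size $|S_i\Delta S_j|$.

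The second step is to choose the matroid. Associate to each variable $x_e^{\{i,j\}}$ a private ground-set element $v_e^{\{i,j\}}$ (and associate the empty subset with every $Y$-variable, so these do not affect the sieve), and let $M$ be the partition matroid on $\{v_e^{\{i,j\}}\}$ whose part for pair $\{i,j\}$ has capacity $d_{i,j}$; this is representable over a field of characteristic~$2$ and has rank $D=\sum_{i<j}d_{i,j}$. A set of variables contains a basis of $M$ iff it contains at least $d_{i,j}$ of the variables $x_\bullet^{\{i,j\}}$ for every pair. Combining with the previous step, some subset of the odd support of the monomial for $(S_1,\dots,S_k)$ is a basis of $M$ iff $|S_i\Delta S_j|\ge d_{i,j}$ for all $i<j$, i.e.\ iff the tuple is a valid diverse collection. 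Therefore $P(X',Y)$ has a monomial whose odd support spans $M$ exactly when the instance is a yes-instance, and Theorem~\ref{theorem:matroid-sieving} decides this with one-sided error (made negligible by passing to a suitable extension field) in time $O^*(2^D)$ and polynomial space, given the ability to evaluate $P(X',Y)$. That ability is immediate: a single evaluation of $P$ costs $k$ evaluations of the $P_i$ on substituted arguments plus polynomial overhead, and each $P_i$ is efficiently evaluable by hypothesis.

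The step that needs the most care is the first one, the odd/even multiplicity bookkeeping. I would want to state explicitly that $x_e^{\{i,j\}}$ appears in no factor other than $P_i'$ and $P_j'$ (so its total degree in any monomial is at most $2$, with value $2$ forcing membership in the intersection and hence removal from the odd support), that the monomials are pairwise distinct thanks to the $Y$-fingerprints (so that ``the odd support of a monomial'' is unambiguously tied to a single tuple and no good monomial is killed by cancellation), and that a partition matroid of the required rank really is available over characteristic~$2$. None of these points is deep, but each is exactly the kind of detail whose omission would leave a gap, so the write-up should dispatch them before invoking Theorem~\ref{theorem:matroid-sieving}.
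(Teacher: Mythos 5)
Your proposal is correct and mirrors the paper's proof essentially step for step: the same polynomial $P(X',Y)$ built from the substitution $x_e \mapsto y_{i,e}\prod_{j\neq i}x_e^{\{i,j\}}$, the same observation that the odd degree of $x_e^{\{i,j\}}$ isolates $S_i\Delta S_j$, and the same matroid (your ``partition matroid with part capacities $d_{i,j}$'' is exactly the paper's ``direct sum of $\binom{k}{2}$ partition matroids each of rank $d_{i,j}$'') fed into the odd-sieving theorem. Your explicit handling of the $Y$-variables via empty $\Gamma$-sets is a detail the paper leaves implicit, but there is no substantive difference.
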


\begin{remark}
Our approach can be adapted to solve the weighted variant considered by Fomin et al.~\cite{FominGPPS24}.
For the weighted variant, every element $e$ has a positive weight $w_e \in \N$, and we require $S_i$ and $S_j$ to have $\sum_{e \in S_i \Delta S_j} w_e \ge d_{i, j}$, rather than $|S_i \Delta S_j| \ge d_{i, j}$.
To deal with weights, simply replace each variable $x_e^{\{i,j\}}$ with the product of $w_e$ variables $x_{e,1}^{\{i,j\}} x_{e,2}^{\{i,j\}} \cdots x_{e,w_e}^{\{i,j\}}$.
\end{remark}

\begin{remark}
A variant of \textsc{Diverse Collection} where we wish to maximise the sum of all pairwise Hamming distances (that is, $\sum_{i < j \in [k]} |S_i \Delta S_j| \ge D_+$) is also studied in the literature \cite{BasteFJMOPR22,BasteJMPR19,HanakaKKLO22,HanakaKKO21}.
A similar approach yields an FPT algorithm with running time $O^*(2^{D_+})$.
Using the same polynomial $P(X', Y)$, we require that there should be at least $D_+$ variables $x_e^{\{ i, j \}}$ in the odd support set.
Obviously, this can be done using a uniform matroid of rank $D_+$.
Thus, the sieving algorithm of Theorem~\ref{theorem:matroid-sieving} gives an $O^*(2^{D_+})$-time algorithm.
\end{remark}

We discuss several corollaries of Theorem~\ref{theorem:diverse-collection}.
First, we consider \textsc{Diverse Perfect Matchings}:
we are given an undirected graph $G$, an integer $k$, and $\binom{k}{2}$ integers $d_{i,j}$ for $i < j \in [k]$, and we want to find $k$ perfect matchings $M_1, \dots, M_k$ with $|M_i \Delta M_j| \ge d_{i,j}$ for every $i < j \in [k]$.
Let $d = d_{1,2}$ and $D = \sum_{i < j \in [k]} d_{i,j}$.
This problem is NP-hard even for $k = 2$ \cite{Holyer81a}.
Fomin et al.~\cite{FominGJPS24} gave an $O^*(4^{d})$-time algorithm for the special case $k = 2$.
Later, Fomin et al.~\cite{FominGPPS24} proved that \textsc{Diverse Perfect Matchings} is FPT for the case $d_{i,j} = d$ for all $i < j \in [k]$, giving an algorithm running in time $O^*(2^{2^{O(dk)}})$.
Since the Pfaffian is an enumerating polynomial for perfect matchings, we obtain:
\begin{corollary} \label{corollary:diverse-pm}
  \textsc{Diverse Perfect Matchings} can be solved in $O^*(2^{D})$ time.
\end{corollary}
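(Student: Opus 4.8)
The plan is to obtain this as an immediate instance of Theorem~\ref{theorem:diverse-collection}. I would take the ground set to be $E = E(G)$ and, for every $i \in [k]$, let $\cF_i$ be the family of perfect matchings of $G$ regarded as subsets of $E$; the pairwise distance bounds $d_{i,j}$ are those supplied in the input, so $D = \sum_{i<j\in[k]} d_{i,j}$ matches the quantity appearing in Theorem~\ref{theorem:diverse-collection}. It then remains only to check that each $\cF_i$ admits an enumerating polynomial that can be evaluated in polynomial time.

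For that I would use the Tutte matrix $A$ of $G$, with a distinct variable $x_e$ for each edge $e \in E$, and set $P_i(X) = \Pf A$ for every $i$. As recalled in Section~\ref{sec:enum-poly}, $\Pf A = \sum_{M} \sigma_M \prod_{uv \in M} x_{uv}$, the sum running over the perfect matchings $M$ of $G$, and $\Pf A$ can be evaluated in polynomial time by a standard elimination procedure (or via $\det A = (\Pf A)^2$ together with a sign computation). Since a perfect matching is exactly a set of edges, distinct matchings give distinct monomials, so $\Pf A$ has the shape $\sum_{S \in \cF} Q_S \prod_{e \in S} x_e$ with each $Q_S = \sigma_M$ a nonzero constant; hence it is a bona fide enumerating polynomial for $\cF$, with empty auxiliary variable set and working over a sufficiently large field of characteristic~$2$.

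Feeding these $P_i$ into Theorem~\ref{theorem:diverse-collection} yields the claimed $O^*(2^D)$-time algorithm. For intuition one may trace the reduction used there: substituting $x_e = y_{i,e}\prod_{j \in [k]\setminus\{i\}} x_e^{\{i,j\}}$ into $\Pf A$ and multiplying over $i$, the monomial associated with a tuple $(M_1,\dots,M_k)$ carries $x_e^{\{i,j\}}$ to the power $1$ when $e \in M_i \Delta M_j$ and to the power $2$ when $e \in M_i \cap M_j$; so its odd support among the $x^{\{i,j\}}$-variables is precisely $M_i \Delta M_j$, and requiring this odd support to span the rank-$d_{i,j}$ uniform matroid on $\{x_e^{\{i,j\}} : e \in E\}$ forces $|M_i \Delta M_j| \ge d_{i,j}$. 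The direct sum of the $\binom{k}{2}$ such uniform matroids has rank $D$, so the odd sieving algorithm of Theorem~\ref{theorem:matroid-sieving} runs in $O^*(2^D)$ time.

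There is no genuine obstacle in this corollary; the only point deserving an explicit sentence is the no-cancellation claim, i.e.\ that distinct perfect matchings contribute distinct monomials to $\Pf A$ so that the coefficients $Q_S$ are nonzero — and this is immediate, since the matching $M$ is recovered from its monomial as the set of edges appearing in it.
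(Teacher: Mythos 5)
Your proposal is correct and coincides with the paper's argument: the paper also simply observes that the Pfaffian of the Tutte matrix is an efficiently evaluable enumerating polynomial for perfect matchings and then invokes Theorem~\ref{theorem:diverse-collection}. The extra tracing-through of the reduction and the no-cancellation remark are elaborations of exactly the reasoning already in the proof of Theorem~\ref{theorem:diverse-collection}, so there is no material difference.
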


Our approach also works for diverse matroid problems \textsc{Diverse Bases} and \textsc{Diverse Common Independent Sets}, which were introduced by Fomin et al.~\cite{FominGPPS24}.
In \textsc{Diverse Bases}, we are given a matroid $M$ and $k, d_{i,j} \in \N$ for $i < j \in [k]$, and the question is whether $M$ has bases $B_1, \dots, B_k$ such that $|B_i \Delta B_j| \ge d_{i, j}$ for all $i < j \in [k]$.
In \textsc{Diverse Common Independent Sets}, we are given two matroids and $k, d_{i,j} \in \N$ for $i < j \in [k]$, and the question is whether a collection of sets $I_1, \dots, I_k$ that are independent in both matroids such that $|I_i \Delta I_j| \ge d_{i,j}$ for all $i < j \in [k]$.
The previous known algorithms of Fomin et al.~\cite{FominGPPS24} solve \textsc{Diverse Bases} and \textsc{Diverse Common Independent Sets} in time $O^*(2^{O(k^2 d \log kd)})$ and $O^*(2^{O(k^3 d^2 \log kd)})$, respectively when $d_{i,j} = d$.

\begin{corollary}
  \textsc{Diverse Bases} on linear matroids represented over fields of characteristic~2 can be solved in $O^*(2^D)$ time.
\end{corollary}
\begin{corollary}
  \textsc{Diverse Common Independent Sets} on linear matroids represented over fields of characteristic~2 can be solved in $O^*(2^D)$ time.
\end{corollary}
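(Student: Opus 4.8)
The plan is to derive this as a direct instance of Theorem~\ref{theorem:diverse-collection}. Take $E = V$ to be the common ground set of the two input linear matroids $M_1, M_2$ (represented over a common field $\F$), and set $\cF_1 = \cdots = \cF_k = \cF$, where $\cF$ is the family of common independent sets of $M_1$ and $M_2$. Theorem~\ref{theorem:diverse-collection} then reduces the task to producing a single enumerating polynomial for $\cF$ that can be evaluated in polynomial time; the claimed $O^*(2^D)$ running time follows with no further work.

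The only real content is constructing that enumerating polynomial. The Cauchy-Binet polynomial $\det(A_1 A_X A_2^T)$ recalled in Section~\ref{sec:enum-poly} enumerates common \emph{bases}, i.e.\ common independent sets of size exactly $r(M_1)=r(M_2)$, so I would combine it with truncation to capture every cardinality. For each $j \in \{0,1,\ldots,\min(r(M_1),r(M_2))\}$, let $M_1^{(j)}$ and $M_2^{(j)}$ be the $j$-truncations of $M_1$ and $M_2$; by the truncation result of \cite{LokshtanovMPS18,Marx09-matroid} these are representable in polynomial time over a common sufficiently large extension field $\F'$ of $\F$, say by matrices $A_1^{(j)}, A_2^{(j)} \in (\F')^{j \times V}$. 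A set $S \subseteq V$ is a common basis of $M_1^{(j)}$ and $M_2^{(j)}$ exactly when it is a common independent set of $M_1, M_2$ with $|S| = j$. Cauchy-Binet gives
\[
  P_j(X) := \det(A_1^{(j)} A_X (A_2^{(j)})^T) = \sum_{S \in \binom{V}{j}} \det A_1^{(j)}[\cdot,S]\,\det A_2^{(j)}[\cdot,S] \prod_{v \in S} x_v,
\]
where $A_X$ is the $V \times V$ diagonal matrix with $A_X[v,v]=x_v$; the coefficient of each $\prod_{v\in S} x_v$ is the single product $\det A_1^{(j)}[\cdot,S]\,\det A_2^{(j)}[\cdot,S]$, which is nonzero precisely when $S \in \cF$ and $|S| = j$.

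Now set $P(X) = \sum_{j=0}^{\min(r(M_1),r(M_2))} P_j(X)$. Every monomial of $P_j$ has total degree exactly $j$, so the monomials of distinct $P_j$ never collide and no cancellation occurs; hence $P(X)$ is multilinear in $X$ and contains $\prod_{v\in S}x_v$ with nonzero coefficient if and only if $S$ is a common independent set of $M_1$ and $M_2$. Thus $P(X)$ is an enumerating polynomial for $\cF$, and it is evaluated by computing $O(|V|)$ determinants of matrices of size at most $|V|\times|V|$, i.e.\ in polynomial time. Feeding $P(X)$ into Theorem~\ref{theorem:diverse-collection} (the hypothesis being satisfied since every $\cF_i = \cF$ admits this polynomial) completes the argument. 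I do not expect a serious obstacle here: the one step with any content is passing from common bases to common independent sets of all sizes and verifying that the degree-graded sum introduces no cross-cancellation, and the only fiddly point is the field bookkeeping — all the $j$-truncations of $M_1$ and $M_2$ should be realized over one common extension field $\F'$, which is harmless because enlarging the field only improves the Schwartz-Zippel failure bound inside Theorem~\ref{theorem:diverse-collection}.
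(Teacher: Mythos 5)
Your proof is correct and is exactly the route the paper intends: reduce to \textsc{Diverse Collection} and feed an efficiently evaluable enumerating polynomial for the family $\cF$ of common independent sets into Theorem~\ref{theorem:diverse-collection}. The paper presents this corollary without a written proof, and the one ingredient not explicitly supplied in its preliminaries is precisely the step you flag as having ``content'': the Cauchy--Binet polynomial of Section~\ref{sec:enum-poly} enumerates only common \emph{bases}, so one must pass to common independent sets of every cardinality. Your degree-graded sum $P(X)=\sum_j P_j(X)$ over the $j$-truncations $M_1^{(j)},M_2^{(j)}$ is a clean way to do this, and the observation that each $P_j$ is homogeneous of degree $j$ (so distinct $j$'s contribute disjoint monomial supports and no cross-cancellation can occur) is the right justification that $P$ is a genuine enumerating polynomial for $\cF$ in the paper's sense. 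The field bookkeeping (a single common extension $\F'$ for all the truncations via \cite{LokshtanovMPS18,Marx09-matroid}) is handled correctly, and, as the paper's abstract stipulates, the input matroids are implicitly taken to be representable over characteristic~2 so that the odd sieving underlying Theorem~\ref{theorem:diverse-collection} is available --- your construction is compatible with that convention.
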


Theorem~\ref{theorem:diverse-collection} also has an implication for the \textsc{$k$-Distinct Branching} problem.
Its input is a directed graph~$G$, two vertices $s$ and $t$, and an integer $k$.
The problem asks whether $G$ admits an out-branching $(V, B_s^+)$ rooted at $s$ and in-branching $(V, B_t^-)$ rooted at $t$ such that $|B_s^+ \Delta B_t^-| \ge k$.
The NP-hardness is even for $s = t$ and $k = 2n - 2$ \cite{Bang-Jensen91}.
Since Bang-Jensen and Yeo~\cite{Bang-JensenY08} asked whether \textsc{$k$-Distinct Branching} is FPT for $s = t$, this problem has been studied in parameterized complexity.
We briefly survey the history here.
Bang-Jensen et al.~\cite{Bang-JensenSS16} gave an FPT algorithm for strongly connected graphs.
Later, Gutin et al.~\cite{GutinRW18} showed that \textsc{$k$-Distinct Branching} on arbitrary directed graphs can be solved in $O^*(2^{O(k^2 \log^2 k)})$ time for $s = t$.
Very recently, Bang-Jensen et al.~\cite{Bang-JensenK021} designed an $O^*(2^{O(k \log k)})$-time algorithm.
They asked whether \textsc{$k$-Distinct Branchings} can be solved in $O^*(2^{O(k)})$ time.
As a corollary of Theorem~\ref{theorem:diverse-collection}, we answer this question in the affirmative.
Recall that the determinant of the symbolic Laplacian matrix yields an enumerating polynomial for out-branchings and for in-branchings by reversing arcs (see \cite{BjorklundKK17directed,gessel95algebraic}).
Thus, Theorem~\ref{theorem:diverse-collection} implies:

\begin{corollary}
  \textsc{$k$-Distinct Branchings} can be solved in $O^*(2^k)$ time.
\end{corollary}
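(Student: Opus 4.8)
The plan is to read this off as the case of two collections in Theorem~\ref{theorem:diverse-collection}. Take the common ground set to be the arc set $E=E(G)$, let $\cF_1$ be the family of arc sets of out-branchings of $G$ rooted at $s$, and let $\cF_2$ be the family of arc sets of in-branchings rooted at $t$. An instance of \textsc{$k$-Distinct Branchings} (where $k$ now denotes the required Hamming distance) is then precisely the \textsc{Diverse Collection} instance on $\cF_1,\cF_2$ with the single pairwise bound $d_{1,2}=k$, so $D=\sum_{i<j}d_{i,j}=k$. Hence, once efficiently evaluable enumerating polynomials for $\cF_1$ and $\cF_2$ are supplied, Theorem~\ref{theorem:diverse-collection} yields an $O^*(2^{D})=O^*(2^{k})$-time, polynomial-space algorithm.

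It remains to produce the enumerating polynomials. For out-branchings, introduce a variable $x_e$ for each arc $e$ and let $P_1(X)$ be the appropriate cofactor of the symbolic Laplacian of $G$; by the directed matrix-tree theorem (see \cite{BjorklundKK17directed,gessel95algebraic}), $P_1(X)=\sum_{B}\prod_{e\in B}x_e$, where $B$ ranges over the arc sets of out-branchings rooted at $s$. This polynomial is multilinear in $X$, has exactly one nonzero-coefficient monomial per out-branching, and is computable in polynomial time as a single determinant, so it is an enumerating polynomial for $\cF_1$ in the sense of Section~\ref{sec:enum-poly}. Reversing every arc of $G$ turns in-branchings rooted at $t$ into out-branchings rooted at $t$, so the same construction applied to the reverse graph, still over the variable set $\{x_e\mid e\in E(G)\}$, gives an enumerating polynomial for $\cF_2$, which is exactly what the \textsc{Diverse Collection} reduction requires.

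Given these two polynomials, the corollary is immediate from Theorem~\ref{theorem:diverse-collection}. There is no real analytic obstacle: all the work is already inside that theorem (the odd-sieving argument of Theorem~\ref{theorem:matroid-sieving}, applied here over a single partition matroid of rank $d_{1,2}=k$), and the matrix-tree theorem is classical. The only point requiring attention is the clash of notation — in \textsc{Diverse Collection} the symbol $k$ counts the number of solutions sought, whereas in \textsc{$k$-Distinct Branchings} it is the distance threshold — so one must invoke the theorem with exactly two collections and a single bound $d_{1,2}=k$, giving $D=k$. If an explicit pair of branchings is wanted rather than a decision, the standard witness-extraction techniques referenced in the introduction apply with negligible overhead.
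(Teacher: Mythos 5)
Your proof is correct and matches the paper's argument: the paper likewise reduces to \textsc{Diverse Collection} with the two branching families, observes $D=d_{1,2}=k$, and cites the directed matrix-tree theorem (with arc reversal for the in-branching family) as the source of the enumerating polynomials. The only thing you add is a bit more bookkeeping about the notation clash, which the paper treats as implicit.
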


\section{Path, cycle and linkage problems}
\label{sec:paths-linkages}

One of the main application areas of algebraic algorithms in
parameterized complexity is for path and cycle problems. Indeed, one
of the earliest examples of an algebraic FPT algorithm was for
\textsc{$k$-Path}, finding a path on $k$ vertices in a possibly
directed graph, ultimately improved to time $O^*(2^k)$~\cite{Koutis08ICALP,Williams09IPL,KoutisW09limits}.
Another breakthrough result in the area is Björklund's algorithm for
\textsc{Hamiltonicity}, finding a Hamiltonian path in an undirected
graph, in time $O^*(1.66^n)$~\cite{Bjorklund14detsum},
and more generally solving \textsc{$k$-Path} in undirected graphs in
time $O^*(1.66^k)$~\cite{BjorklundHKK17narrow}.
In fact, even the apparently simple question of \textsc{$k$-Path},
\textsc{$k$-Cycle} and \textsc{Hamiltonicity} problems remains a
highly active area of research.
This is particularly true in \emph{directed} graphs; 
however, in this section we restrict ourselves to undirected graphs. 
We also restrict ourselves solely to matroids represented over fields
of characteristic 2, since we need the power of the odd support
sieving method (Theorem~\ref{theorem:matroid-sieving}). 

Another, subtly different problem is to find a cycle of length
\emph{at least} $k$, which we refer to as the \textsc{Long Cycle}
problem. Unlike the corresponding ``\textsc{Long Path}'' problem,
being able to find a $k$-cycle in time $O^*(c^k)$ does not guarantee 
being able to solve \textsc{Long Cycle} in the same time.
On directed graphs, the first algorithm for \textsc{Long Cycle}
with running time $O^*(2^{O(k)})$ was given by Fomin et al.~\cite{FominLPS16JACM}
using \emph{representative families} 
(cf.~Sections~\ref{sec:intro-noDP} and~\ref{sec:noDP});
the current record is $O^*(4^k)$ by Zehavi~\cite{Zehavi16ldc}.
For undirected graphs, the currently fastest algorithm for
\textsc{Long Cycle} is by reduction to the more general
\textsc{Long $(s,t)$-Path} problem. Note that, again unlike unrooted
\textsc{Long Path}, asking for an $(s,t)$-path of length at least $k$
is a sensible question that does not trivially reduce to rooted
\textsc{$k$-Path} (i.e., to finding an $(s,t)$-path of length exactly $k$).
In turn, the fastest algorithm for \textsc{Long $(s,t)$-Path}
is by Fomin et al.~\cite{FominGKSS24} in time $O^*(2^k)$; see below.

In a different direction, in the problem \textsc{$T$-Cycle}
(a.k.a.\ \textsc{$K$-Cycle}), the input is an undirected graph $G$ and
a set of vertices $T \subseteq V(G)$, and the question is whether
there is a simple cycle in $G$ that visits every vertex in~$T$.
As mentioned in the introduction, this problem was known to be FPT
using an algorithm working over heavy graph structural methods~\cite{Kawarabayashi08},
and it was a major surprise when Björklund, Husfeldt and Taslaman~\cite{BjorklundHT12soda}
showed an $O^*(2^{|T|})$-time algorithm based on polynomial cancellations. 
Specifically, they defined a polynomial, roughly corresponding to
walks without U-turns, and showed in an intricate argument 
that an $O^*(2^{|T|})$-time sieving step over this polynomial
tests for $T$-cycles in $G$. 
Wahlström~\cite{Wahlstrom13STACS} adapted Björklund's
\emph{determinant sums} method~\cite{Bjorklund14detsum} to the
\textsc{$T$-Cycle} problem and thereby showed that it even allows for
a \emph{polynomial compression}, i.e., a reduction in polynomial time
to an object of size $|T|^{O(1)}$ from which the existence of a
$T$-cycle can be decided. 

Recently, Fomin et al.~\cite{FominGKSS24} considered problems
pushing the envelope on the method of Björklund, Husfeldt and
Taslaman~\cite{BjorklundHT12soda}, showing more involved 
cancellation-based algorithms for more general path and cycle
problems, and also extending the scope to \emph{linkages}.
Let $G=(V,E)$ be a graph, and $S, T \subseteq V$ be vertex sets.
An \emph{$(S,T)$-linkage} in $G$ is a set $\cP$ of pairwise vertex-disjoint
$(S,T)$-paths. The \emph{order} of the linkage is $p=|\cP|$. 
We say the linkage is \emph{perfect} if $|\cP|=|S|=|T|$.
Let the \textsc{Colourful $(S,T)$-Linkage} problem refer to the
following question. Let $G=(V,E)$, $S, T \subseteq V$, and an integer
$k$ be given. Furthermore, let $c \colon V \to [n]$ be a not
necessarily proper vertex colouring, also given as input.
Then the question is: Does $G$ contain a perfect $(S,T)$-linkage using
vertices of at least $k$ colours? (More generally, one may ask of an
$(S,T)$-linkage of order $p$, but this is essentially equivalent
as we can create new sets $S'$ and $T'$ of $p$ vertices each,
and connect them to $S$ and $T$.)
Fomin et al.\ showed, using complex polynomial
cancellation arguments, that \textsc{Colourful $(S,T)$-Linkage}
can be solved in time $O^*(2^{k+p})$ where $p=|S|=|T|$~\cite{FominGKSS24}.
We show the following improvement.

\begin{theorem}
  \label{thm:k-st-linkage}
  \textsc{Colourful $(S,T)$-Linkage} for undirected graphs can be
  solved in randomized time $O^*(2^k)$ and polynomial space. 
\end{theorem}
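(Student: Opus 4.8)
I would realize \textsc{Colourful $(S,T)$-Linkage} as the partition‑matroid special case of the general rank‑$k$ linkage statement of Theorem~\ref{ithm:linkage}, and prove that general statement by combining two earlier ingredients: the parity‑enumerating polynomial for padded $(S,T)$-linkages (the lemma stated in the introduction), and the odd sieving algorithm (Theorem~\ref{theorem:matroid-sieving}). Concretely, given the colouring $c \colon V(G) \to [n]$, let $M$ be the partition matroid on $V(G)$ whose parts are the colour classes $c^{-1}(1),\dots,c^{-1}(n)$, each of capacity~$1$; this matroid is trivially representable over $\F_2$ (assign $v$ the unit vector $e_{c(v)}$), and, truncated to rank $k$, a vertex set $U \subseteq V(G)$ spans $M$ if and only if $U$ meets at least $k$ colour classes. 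Hence a perfect $(S,T)$-linkage $\cP$ uses vertices of at least $k$ colours exactly when $V(\cP)$ spans $M$, so it suffices to decide whether $G$ has a perfect $(S,T)$-linkage $\cP$ with $V(\cP)$ spanning $M$.

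First I would invoke the lemma to obtain, in polynomial time, a matrix $A$ over a field $\F$ of characteristic~$2$ (enlarged as needed by passing to an extension field) such that $P(X)=\det A$ is a parity‑enumerating polynomial for $(S,T)$-linkages: the odd support of a monomial of $P$ is precisely the vertex set of a perfect $(S,T)$-linkage of $G$ — equivalently, after a routine vertex‑splitting transformation, an edge set from which that vertex set is read off through a pairwise vertex‑disjoint family of ``vertex edges'' $f_v$, one per vertex — while the padding part, the cycles of length $\le 2$ covering the rest of $G$, contributes only even degrees and so is invisible to the odd support. Each entry of $A$ is a single variable and $A$ has polynomial dimension, so $P$ admits polynomial‑time, polynomial‑space evaluation access. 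I would then feed $P$ and $M$ into the odd sieving algorithm, associating the variable that records a visit to vertex $v$ with the single matroid element $e_{c(v)}$ and every other variable with the empty set, so that the ``$\Gamma_i$ pairwise disjoint'' hypothesis of Theorem~\ref{theorem:matroid-sieving} is met. That algorithm runs in $O^*(2^k)$ time and polynomial space using only evaluation access, and it decides whether $P$ has a monomial whose odd support contains a subset spanning $M$ — i.e.\ whether $G$ has a perfect $(S,T)$-linkage through at least $k$ colours — with no false positives and false‑negative probability $(d+k)/|\F|$, which we drive down by working over a sufficiently large extension field. Composing the two reductions yields the claimed $O^*(2^k)$-time, polynomial‑space randomized algorithm.

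\textbf{Main obstacle.} I expect the difficulty to lie not in the sieving step but in the interface between it and the polynomial. The lemma's polynomial is naturally indexed by edges, whereas the colour constraint lives on vertices, and Theorem~\ref{theorem:matroid-sieving} requires the matroid‑element sets attached to distinct variables to be pairwise disjoint — which fails if one naively attaches the two endpoint colour vectors to each edge variable, since incident edges share a vertex. The fix is to arrange (either as a preprocessing vertex split, or, as the lemma's construction does internally in the matrix) that ``being traversed by a linkage path'' corresponds to the odd use of a designated vertex‑disjoint family of edges $\{f_v\}$; verifying that this preserves perfect $(S,T)$-linkages in an \emph{undirected} graph and still leaves padding with even degree is the technical heart, and is exactly what the lemma (and Theorem~\ref{ithm:linkage}) package up. A secondary point is that substituting the matroid vectors into $P$ introduces no spurious algebraic cancellation, but this is handled by the fingerprinting variables $x_i',x_i''$ already built into Theorem~\ref{theorem:matroid-sieving}.
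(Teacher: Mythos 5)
Your high-level plan is exactly the one the paper follows: encode the colouring as a linear matroid on $V(G)$ by assigning each $v$ the unit vector $e_{c(v)}$ and truncating to rank $k$, then reduce to the rank-$k$ $(S,T)$-linkage statement, which in turn is proved by running odd sieving (Theorem~\ref{theorem:matroid-sieving}) over the padded-linkage-enumerating determinant. The matroid reduction and the choice of sieving tool are both correct. However, there is a genuine gap at the bridge between the two ingredients, and it is precisely where you flag uncertainty. The determinant $\det A_{ST}$ of Lemma~\ref{lemma:padded-st-linkages} is a polynomial in \emph{edge} variables $\{x_e \mid e \in E\}$ only, so the odd support of its monomials is an edge set, while the colour constraint — and hence the matroid labelling — lives on vertices. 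You propose a vertex split, attaching a designated ``vertex edge'' $f_v$ to each $v$, but you do not give the gadget, and for \emph{undirected} graphs it is not clear one exists: the usual directed $v^{\text{in}}\!\to v^{\text{out}}$ split has no direct undirected analogue that forces every path through $v$ to traverse one designated edge exactly once, and any such replacement would also have to be shown to preserve the delicate reversible-cycle cancellation on which Lemma~\ref{lemma:padded-st-linkages} rests. You also suggest this might be ``what the lemma's construction does internally,'' but it is not: Lemma~\ref{lemma:padded-st-linkages} says nothing about vertex variables.

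The paper closes this gap with a separate, purely algebraic step (Lemma~\ref{lemma:linkage-vertex-vars}): introduce vertex variables $X_V = \{x_v \mid v \in V\}$, evaluate $\det A_{ST}$ at the substitution $x_{uv} \gets x_{uv}(x_u + x_v)$, and multiply the result by $\prod_{s \in S} x_s$. Over characteristic $2$ a padding $2$-cycle $uvu$ contributes $\bigl(x_{uv}(x_u+x_v)\bigr)^2 = x_{uv}^2(x_u^2 + x_v^2)$, which is even in every vertex variable and hence invisible to odd sieving; while for a linkage $\cP$, orienting each path from $S$ to $T$ and using the prefactor $\prod_{s \in S} x_s$ forces, in the expansion of $\prod_{uv \in E(\cP)}(x_u+x_v)$, the unique choice of always taking the head endpoint, giving a monomial multilinear in $X_V$ with support exactly $V(\cP)$. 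This is what makes it legitimate to associate the single vector $e_{c(v)}$ with the vertex variable $x_v$ (and nothing with the edge variables), meeting the disjointness hypothesis of Theorem~\ref{theorem:matroid-sieving}. Once this lemma is in place, the rest of your argument — truncation, odd sieving, amplification over an extension field — is correct and matches the paper's proof.
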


As Fomin et al.\ note, even the problem \textsc{Colourful $(s,t)$-Path}
(being the case where $|S|=|T|=1$) has a multitude of applications.
Among others, their result implies solving \textsc{Long $(s,t)$-Path} and
\textsc{Long Cycle} in time $O^*(2^k)$ -- i.e., in an undirected
graph, find an $(s,t)$-path, respectively a cycle, of length \emph{at least} 
$k$ in time $O^*(2^k)$, and \textsc{$T$-Cycle} in time $O^*(2^{|T|})$.
All of these improve on or match the previous state of the art.

Fomin et al.\ also consider the more general setting of \emph{frameworks},
as defined by Lov\'asz (and previously known as
\emph{pregeometric graphs} or \emph{matroid graphs})~\cite{FominGKSS24,Lovasz1977,LovaszGeomBook2019}.
Let $G=(V,E)$ be an undirected graph and $M=(V,\cI)$ a matroid over
the vertex set of $G$. Let $S, T \subseteq V$ and let $k$ be an integer. 
Fomin et al.\ show that if $M$ is represented over a finite field of order $q$, 
then an $(S,T)$-linkage of rank at least $k$ in $M$ can be found in
time $O^*(2^{p+O(k^2 \log (k+q))})$~\cite{FominGKSS24}.
We note that if $M$ is represented over a field of characteristic 2,
then we get a significant speedup over their algorithm.

\begin{restatable}{theorem}{linkageframework}
  \label{thm:st-linkage-framework}\RestateRemark
  Given an undirected graph $G=(V,E)$, a matroid $M$ over $V$ represented 
  over a field of characteristic 2, sets $S, T \subseteq V$ and an integer $k$, 
  in randomized time $O^*(2^k)$ and polynomial space we can find a
  perfect $(S,T)$-linkage in $G$ which has rank at least~$k$~in~$M$. 
\end{restatable}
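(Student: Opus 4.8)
The plan is to combine the determinant-based enumerating polynomial for padded $(S,T)$-linkages (the Lemma preceding Theorem~\ref{ithm:linkage} in the excerpt) with the odd sieving algorithm (Theorem~\ref{theorem:matroid-sieving}), applied to the matroid $M$ over $V$. First I would recall the construction of the matrix $A$ whose determinant $P(X) = \det A$, over a field of characteristic~2 with variables $X = \{x_e \mid e \in E(G)\}$, is a parity-enumerating polynomial for perfect $(S,T)$-linkages: every monomial in the expansion of $P(X)$ corresponds to a padded $(S,T)$-linkage $\mathcal P$ together with a padding (a cover of $G - V(\mathcal P)$ by loops and 2-cycles), and the odd support of that monomial in $X$ is exactly the edge set of $\mathcal P$, equivalently, it determines $V(\mathcal P)$. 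Since the variables $x_e$ index edges but our matroid $M$ lives on vertices $V$, I would introduce per-vertex variables: replace each $x_e$ with $x_e \cdot \prod_{v \in e} x_v'$ (or, more cleanly, scale the matrix entries so that traversing/choosing a vertex $v$ contributes a factor $x_v'$), so that the odd support in the $X'$ variables of a monomial is precisely $V(\mathcal P)$ — the loop/2-cycle padding edges contribute each incident vertex variable an even number of times and so drop out of the odd support over characteristic~2.

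Next I would set up the matroid constraint. We are given $M = (V,\cI)$ represented over a field of characteristic~2 by a matrix; by truncation (Section~\ref{sec:prel-matroids}) we may assume $M$ has rank exactly $k$ and is represented by a $k \times |V|$ matrix. Associate each vertex variable $x_v'$ with the column of $M$ representing $v$, and invoke Theorem~\ref{theorem:matroid-sieving} (odd sieving) on the polynomial $P(X, X')$: this tests, in $O^*(2^k)$ evaluations of $P$ and polynomial space, whether there is a monomial whose odd support in $X'$ spans $M$, i.e.\ contains a basis. By the correspondence above, such a monomial exists if and only if $G$ has a perfect $(S,T)$-linkage $\mathcal P$ with $V(\mathcal P)$ of rank at least $k$ in $M$. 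Each evaluation of $P$ is a single determinant computation over the field, hence polynomial time; the failure probability is $(d+k)/|\F|$ for $d = \deg P = O(n)$, and can be driven down by moving to an extension field. This proves the decision version; to actually \emph{find} the linkage we use self-reducibility in the standard way — iteratively fix edges in or out of the solution and re-test — incurring only polynomial overhead.

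One subtlety I would be careful about is ensuring that the auxiliary vertex variables do not interfere with the cancellation structure that makes $P = \det A$ enumerate \emph{padded linkages} rather than arbitrary walks: since we only multiply existing monomials by products of fresh variables, no new cancellations are introduced and no existing cancellation is destroyed, so the parity-enumerating property is preserved verbatim. A second point is that the odd sieving theorem as stated allows each variable to decorate a \emph{set} $\Gamma_i$ of matroid elements, which is exactly what we need if a vertex should be represented by one matroid column; we just take $\Gamma_v = \{v\}$ (or, if desired, a larger gadget, e.g.\ for colourful linkages where we want ``at least $k$ colours,'' we take $M$ to be a partition matroid whose parts are the colour classes). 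The main obstacle, and the place where the argument must be stated with care, is verifying the odd-support bookkeeping: that in the determinant expansion each padding term (loop or 2-cycle) touches every vertex variable an even number of times while each linkage path touches each of its vertices exactly once, so that odd support in $X'$ coincides exactly with $V(\mathcal P)$. This is precisely the ``padding disappears under characteristic~2'' phenomenon highlighted in the discussion of Theorem~\ref{ithm:odd-sieve}, and once it is nailed down the theorem follows immediately by plugging into Theorem~\ref{theorem:matroid-sieving}.
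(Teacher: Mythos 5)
Your high-level plan coincides with the paper's: use $\det A_{ST}$ (Lemma~\ref{lemma:padded-st-linkages}) as a parity-enumerating polynomial for padded $(S,T)$-linkages, introduce vertex variables, and apply odd sieving (Theorem~\ref{theorem:matroid-sieving}) against the truncated rank-$k$ representation of $M$. However, the specific variable substitution you propose is wrong, and it breaks the central correctness claim.

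You substitute $x_e \gets x_e \cdot \prod_{v \in e} x_v'$, i.e.\ $x_{uv}\gets x_{uv}\,x_u'\,x_v'$, and assert that the odd support in $X'$ of a surviving monomial is exactly $V(\cP)$. This fails: consider a single path $P = s\,v_1\,v_2\cdots v_{\ell-1}\,t$ of a linkage. The endpoint variables $x_s', x_t'$ each appear once (from the unique incident path edge), but every \emph{internal} vertex $v_i$ is incident to two path edges, so $x_{v_i}'$ appears with degree $2$. Over characteristic~$2$ these internal vertices vanish from the odd support. So the odd support is only $S\cup T$, not $V(\cP)$, and sieving against $M$ would not test whether the linkage has rank $k$ but merely whether $S\cup T$ does. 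Your vaguer alternative (``scale the matrix entries so that traversing a vertex $v$ contributes $x_v'$,'' i.e.\ $A[u,v]\gets A[u,v]\,x_v'$) has the opposite problem: a $2$-cycle padding $uvu$ would contribute $A[u,v]A[v,u]=x_{uv}^2\,x_u'\,x_v'$, placing the padded vertices $u,v$ into the odd support even though they are not in $V(\cP)$.

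The paper's Lemma~\ref{lemma:linkage-vertex-vars} resolves this with a \emph{sum} rather than a product: substitute $x_{uv}\gets x_{uv}(x_u+x_v)$ into the symmetric matrix $A_{ST}$, and additionally pre-multiply by $\prod_{s\in S}x_s$. The sum substitution keeps $A_{ST}$ symmetric, so $2$-cycle padding contributes $(x_{uv}(x_u+x_v))^2 = x_{uv}^2(x_u^2+x_v^2)$ in characteristic~$2$ (even degree in both vertex variables, hence invisible to odd sieving). Meanwhile, within a path each edge must ``choose'' one endpoint; the pre-padding by $\prod_{s\in S}x_s$ forces the edges leaving $S$ to be oriented away from $S$, and inductively the only way for every $x_v$, $v\in V(\cP)$, to achieve odd degree is for every edge to contribute its head variable under the $S\to T$ orientation. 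This yields a monomial whose odd support in $X_V$ is exactly $V(\cP)$ (and in $X_E$ exactly $E(\cP)$), which is what makes the final sieving step against the rank-$k$ truncation of $M$ correct. You would need this specific device (or something equally asymmetric between linkage edges and padding edges) for the odd-support bookkeeping in your last paragraph to go through.
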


Theorem~\ref{thm:k-st-linkage} follows from
Theorem~\ref{thm:st-linkage-framework} by letting $M$ be a partition matroid.
Concretely, let $M=(V, \cI)$ be the linear matroid with a
representation where each vertex $v \in V$ is associated with the
$c(v)$-th $n$-dimensional unit vector $e_{c(v)}$.
Then a linkage has rank at least $k$ if and only if it visits vertices
of at least $k$ different colours.

We note that \emph{directed} variants of the above results are
excluded, as it is NP-hard to find a directed $(s,t)$-path with even two
distinct colours (see Fomin et al.~\cite{FominGKSS24}).

Finally, Fomin et al.~\cite{FominGKSS24} ask as an open question
whether \textsc{Long $(s,t)$-Path} and \textsc{Long Cycle} can
be solved in $O((2-\varepsilon)^k)$ for any $\varepsilon > 0$.
We show this in the affirmative, giving an algorithm for both problems
that matches the running time for \textsc{Undirected Hamiltonicity}.
Our algorithm is a mild reinterpretation of the \emph{narrow sieves}
algorithm for \textsc{$k$-Path}~\cite{BjorklundHKK17narrow}, rephrased
in terms of an external matroid labelling the vertices of $G$.

\begin{theorem}
  \label{thm:long-st-path}
  \textsc{Long Cycle} and \textsc{Long $(s,t)$-Path} can be solved in
  randomized time $O^*(1.66^k)$ and polynomial space.
\end{theorem}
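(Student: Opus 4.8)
The plan is to run the \emph{narrow sieve} of Bj\"orklund et al.~\cite{BjorklundHKK17narrow} on top of the parity-enumerating polynomial for padded $(s,t)$-linkages, rather than on the plain $k$-walk polynomial. First I would dispose of \textsc{Long Cycle}: a cycle of length at least $k$ through an edge $uv$ exists iff $G-uv$ contains a $(u,v)$-path of length at least $k-1$, so after iterating over all edges $uv$ of $G$ it suffices to solve \textsc{Long $(s,t)$-Path}. Note that the ``at least $k$'' version must be solved natively --- iterating over exact lengths $k,k+1,\dots$ would cost $O^*(1.66^{n})$ --- and this is precisely what the linkage machinery provides: the parity-enumerating polynomial $P(X)=\det A$ for padded $(s,t)$-linkages (a modified Tutte matrix over a field of characteristic $2$) has, for every such linkage, a monomial whose odd support is exactly the edge set $E(P)$ of the path part, the $2$-cycle edges of the padding appearing with even degree and dropping out. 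Hence an $(s,t)$-path with at least $k$ vertices is exactly a monomial whose odd support spans a uniform matroid of rank $k-1$ (on $E(G)$, or on $V(G)$ after identifying each path edge with its head), and odd sieving (Theorem~\ref{theorem:matroid-sieving}) against that matroid already gives an $O^*(2^{k})$ algorithm, i.e.\ the $|S|=|T|=1$ special case of Theorem~\ref{thm:st-linkage-framework}.

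The improvement replaces this uniform matroid by the non-uniform labelling underlying the narrow sieve. As remarked in Section~\ref{sec:intro}, ordinary multilinear detection is determinantal sieving against a random representation of a uniform matroid, and the $O^*(1.66^{k})$ narrow sieve of \cite{BjorklundHKK17narrow} is the corresponding sieve against a different, ``domino''-style representation that exploits the structure of undirected paths. The intended proof therefore mirrors the labelling of \cite{BjorklundHKK17narrow} essentially verbatim, but applied to $P(X)$ instead of the walk polynomial, so that the whole computation is $O^*(1.66^{k})$ evaluations of $P$ combined by inclusion--exclusion and interpolation exactly as in Theorem~\ref{theorem:matroid-sieving}, with failure probability $k^{O(1)}/|\F|$ by Schwartz--Zippel over a sufficiently large extension field, and in polynomial space since Theorem~\ref{theorem:matroid-sieving} is a polynomial-space procedure.

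The main obstacle is checking that the narrow-sieve cancellation survives the padding. In its original form that argument annihilates every non-simple walk by reversing the walk segment between the first repeated vertex and its reoccurrence; here one must verify that (i) the padding $2$-cycles neither create nor cancel any odd-support term of a genuine $(s,t)$-path and cannot let a short path ``borrow'' rank, since their edges are squared out; (ii) the domino labelling introduces no spurious cancellation among distinct simple $(s,t)$-paths, which, as in the proofs of Theorems~\ref{theorem:sieve-basis} and~\ref{theorem:matroid-sieving}, is ensured by a private fingerprinting variable tagging each (path, sieve-witness) pair; and (iii) the exponent bookkeeping --- a sum over domino tilings of the path, identical to \cite{BjorklundHKK17narrow} --- still evaluates to $O^*(1.66^{k})$ after the substitution. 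I expect (i) and (iii) to be routine once the correspondence is set up, and (ii) to follow verbatim from our earlier sieving proofs; the genuinely delicate point is phrasing the narrow-sieve labelling so that it interacts cleanly with the determinant-based enumeration of padded linkages.
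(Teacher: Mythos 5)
Your high-level strategy — take the padded $(s,t)$-path determinant of Lemma~\ref{lemma:padded-st-paths}, observe that the $2$-cycle padding drops out of the odd support, and then run the narrow sieve of Bj\"orklund et al.\ on top of it — is exactly the paper's. The reduction from \textsc{Long Cycle} by iterating over the edge $st$ is also the same. The failure point is that you treat the reinterpretation of the narrow sieve as ``determinantal sieving against a domino-style matroid'' as if it were a known primitive that just needs to be transplanted. It is not; constructing that matroid and proving it does what you need is the actual content of the proof, and your proposal never specifies the matroid nor proves any rank lower bound.

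Concretely, the paper samples a bipartition $V = V_1 \cup V_2$, splits $E = E_1 \cup E_X \cup E_2$ accordingly, and defines $M(r_1,r_2)$ as the direct sum of a uniform matroid of rank $r_1$ on $E_1$ and a rank-$r_2$ \emph{truncated transversal matroid} on $E_X \cup E_2$ with hidden set $V_2$. This is not a ``domino'' object from \cite{BjorklundHKK17narrow}; it is a new matroid whose point is captured in Lemma~\ref{lemma:long-cycle-rank}: if a cycle $C$ contains a set $F$ independent in $M(r_1,r_2)$ with $E_2 \cap E(C) \subseteq F$ and $|F \cap E_X| = k_x$, then $|V(C)| \geq |F| + k_x$. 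That lemma is what converts ``spans a matroid of rank $r$'' into the length lower bound ``$|V(C)| \geq k$'' you need for the \emph{at least} $k$ version, and it has no analogue in the original narrow-sieve argument, which detects paths of an exact length. Your item (iii) — ``exponent bookkeeping identical to \cite{BjorklundHKK17narrow}'' — is also not quite right: because the actual cycle length $ck$ is unknown, the paper introduces two weight-tracing variables $z_x, z_2$ to interpolate out the crossing-edge and $V_2$-edge counts, iterates over all $ck \in \{k,\ldots,n\}$, and runs a two-regime parameter optimization (threshold around $c \approx 4/3$) that differs from a verbatim transcription. So: right direction, right enumerating polynomial, right intuition about the padding; but the matroid $M(r_1,r_2)$, Lemma~\ref{lemma:long-cycle-rank}, and the two-regime analysis are the proof, and they are missing.
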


All the above theorems follow from the same underlying enumerating
polynomial result. At the heart of the \textsc{Hamiltonicity}
algorithm of Björklund~\cite{Bjorklund14detsum},
and the polynomial compression for \textsc{$T$-Cycle} of Wahlström~\cite{Wahlstrom13STACS},
is the result that given a graph $G=(V,E)$ and $s, t \in V$, 
there is a particular \emph{almost} symmetric matrix $A_{st}$
such that $\det A_{st}$ effectively enumerates $(s,t)$-paths, with
some additional ``padding'' terms (see below).
We note that this statement can be generalized to linkages:
Given $G=(V,E)$ and $S, T \subseteq V$ there is a matrix $A_{ST}$
such that $\det A_{ST}$ enumerates padded perfect $(S,T)$-linkages. 
Furthermore, the ``padding'' is compatible with the odd sieving
approach of Theorem~\ref{theorem:matroid-sieving}. We review this
construction next.

\subsection{The linkage-generating determinant}

We now present the algebraic statements that underpin the algorithms
in this section. Like the rest of the paper, these algorithms are based on
algebraic sieving over a suitable enumerating polynomial.
Here, we present this polynomial, in the form of a
\emph{linkage-enumerating determinant}.

\subsubsection{Path enumeration}

We begin with the simpler case of enumerating $(s,t)$-paths. This result
is from Wahlström~\cite{Wahlstrom13STACS}, repeated for completeness,
but is also implicitly present in Björklund~\cite{Bjorklund14detsum}.
We note $(s,t)$-path enumeration is still far from a trivial conclusion, since we
want to enumerate only \emph{paths} without also enumerating $(s,t)$-walks.
Indeed, there is a catch, since otherwise we could solve
\textsc{Hamiltonicity} in polynomial time by searching for an
$(s,t)$-path term of degree $n$. Specifically, we generate
\emph{padded} $(s,t)$-paths, which is a union of $(s,t)$-paths and 2-cycles;
details follow.

Let $G=(V,E)$ be an undirected graph and $s, t \in V$ be vertices. We
show that a modified Tutte matrix of $G$ can be used to produce a
polynomial that effectively enumerates $(s,t)$-paths in $G$. This was
previously used in the polynomial compression for the $T$-cycle
problem~\cite{Wahlstrom13STACS}. 

Let $X=\{x_e \mid e \in E\}$ be a set of edge variables. 
Let $P$ be an $(s,t)$-path in $G$ and define
\[
  X(P) = \prod_{e \in E(P)} x_e.
\]
A \emph{2-cycle term} over $(G,X)$ is a term $x_e^2$ for some $e \in E$;
note that as a polynomial, if $e=uv$ then this term corresponds to the
closed walk $uvu$ in $G$. A \emph{padded $(s,t)$-path term} for an
$(s,t)$-path $P$ is a term
\[
  X(P) \cdot \prod_{e \in M} x_e^2,
\]
where $M$ is a (not necessarily perfect) matching of $G-V(P)$.
We assume by edge subdivision that $st \notin E$.

\begin{lemma} \label{lemma:padded-st-paths}
  Assume (e.g.\ via edge subdivision) that $st \notin E$. 
  There is a matrix $A_{st}$ whose entries are linear polynomials
  over a field of characteristic 2 such
  that $\det A_{s,t}$ enumerates padded $(s,t)$-path terms.
\end{lemma}
\begin{proof}
  Let $A$ be the Tutte matrix of $G$ over a sufficiently large field
  of characteristic 2. Define $A_{st}$ starting from $A$ modified 
  by letting $A_{st}[v,v]=1$ for every
  $v \in V \setminus \{s,t\}$, $A_{st}[s,t]=0$, $A_{st}[t,s]=1$ and $A_{st}[t,v]=0$
  for every $v \in V-s$. We claim that $\det A_{st}$ enumerates padded
  $(s,t)$-path terms as described. This follows from arguments in
  Wahlstr\"om~\cite{Wahlstrom13STACS}. Viewing the rows and columns of
  $A_{st}$ as vertices of $G$, each term of $\det A_{st}$ can be
  viewed as an \emph{oriented cycle cover} of $G$, i.e., a partition
  of $V$ into oriented cycles (which may include cycles of length 1
  where a diagonal entry of $A_{st}$ is used). Due to the
  modifications made to $A_{st}$ above, $t$ has $s$ as its unique out-neighbour 
  in every oriented cycle cover, and for every vertex $v \in V \setminus \{s,t\}$
  the loop term on $v$ can be used in the cycle cover.
  Furthermore, every other edge of the graph is bidirected
  (i.e., symmetric). Hence, if a cycle cover $\cC$ contains any cycle
  $C$ of at least three edges which does not use the arc $ts$,
  then the orientation of $C$ can be reversed to produce a distinct
  oriented cycle cover $\cC'$, corresponding to a distinct term of the
  determinant. Let a \emph{reversible cycle} in
  an oriented cycle cover $\cC$ be a cycle $C$ in $\cC$ which contains
  at least three edges and does not use the arc $ts$. To argue that
  all oriented cycle covers with reversible cycles cancel over a field
  of characteristic 2, we define the following pairing. Fix an
  arbitrary ordering $<$ on $V$. For each oriented cycle cover $\cC$ with
  at least one reversible cycle, select such a cycle $C \in \cC$
  by the earliest incidence of a vertex of $C$ according to $<$,
  and let $\cC'$ be the result of reversing $C$ in $\cC$. Since
  the selection of $C$ is independent of orientation, this map defines
  a pairing between $\cC$ and $\cC'$. By the symmetry of $A_{st}$,
  $\cC$ and $\cC'$ contribute precisely the same term to $\det A_{st}$.
  Generalising the argument, every oriented cycle cover $\cC$ with at
  least one reversible cycle cancels in $\det A_{st}$ in
  characteristic 2. 

  For any oriented cycle cover $\cC$ that is not cancelled by this
  argument, we note that the monomial contributed by $\cC$ to $\det A_{st}$
  is unique (recall that there is one distinct variable $x_e$ for
  every edge $e$ of $G$). Furthermore, let $e=uv$ be an edge such that
  $x_e$ occurs in a monomial $m$ of $\det A_{st}$ corresponding to an
  oriented cycle cover $\cC$. If $e$ occurs in a 2-cycle in $\cC$,
  then $m$ contains $x_e^2$; otherwise $e$ occurs in the $(s,t)$-cycle $C$
  with a passage such as $uvw$, and $x_e$ has degree $1$~in~$m$. 
\end{proof}

Note the slightly subtle interaction between 2-cycle-terms in $\det A_{st}$
and applications of Theorem~\ref{theorem:matroid-sieving}. Since variables
in 2-cycle-terms have even degree, they are not relevant for the matroid basis
sieving of the algorithm, which will therefore effectively sieve
directly over $(s,t)$-paths in $G$. However, the 2-cycle terms prevent us
from (for example) finding a Hamiltonian $(s,t)$-path in polynomial time
by sieving for terms of degree $n$. (However, using a weight-tracing
variable it is possible to find a \emph{shortest} solution, and to
check for the existence of an \emph{odd} or \emph{even} solution.)

\subsubsection{Linkage enumeration}

Through the same principle, we can construct a matrix whose
determinant enumerates perfect $(S,T)$-linkages. Let $G=(V,E)$
be an undirected graph and let $S, T \subseteq V$ where $|S|=|T|$.
As above, define a set of edge variables $X=\{x_e \mid e \in E\}$.
For an $(S,T)$-linkage $\cP$, define
\[
  X(\cP) = \prod_{e \in E(\cP)} x_e.
\]
A \emph{padded $(S,T)$-linkage term} for an $(S,T)$-linkage $\cP$ is
defined as a term 
\[
  X(\cP) \cdot \prod_{e \in M} x_e^2,
\]
where again $M$ is a (not necessarily perfect) matching in $G-V(\cP)$.
Since we are interested in perfect $(S,T)$-linkages we make some
simplifications. If there is a vertex $v \in S \cap T$, simply
delete $v$ from $G$, $S$ and $T$ since the only possible path on $v$
in a perfect $(S,T)$-linkage is the length-0 path~$v$. Hence we assume
$S \cap T = \emptyset$.
We also assume by edge subdivision that $S \cup T$ is an independent set:
Note that no $(S,T)$-linkage needs to use an edge of $G[S]$ or $G[T]$,
and a perfect $(S,T)$-linkage cannot use such an edge.
Furthermore, any edge between $S$ and $T$ can be safely subdivided
without altering the structure of linkages (and, e.g., give the subdividing vertex
the zero vector in the matroid representation).

\begin{lemma} \label{lemma:padded-st-linkages}
  Assume that $S \cap T = \emptyset$ and that $S \cup T$ is an
  independent set. 
  There is a matrix $A_{ST}$ over a field of characteristic 2 such
  that $\det A_{ST}$ enumerates padded perfect $(S, T)$-linkage terms.
\end{lemma}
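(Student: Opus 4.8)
The plan is to mimic the proof of Lemma~\ref{lemma:padded-st-paths} almost verbatim, replacing the single ``return arc'' $t\to s$ by a fixed perfect matching between $T$ and $S$ inside the matrix. Concretely, fix an arbitrary bijection $\pi\colon T\to S$, let $A$ be the Tutte matrix of $G$ over a sufficiently large field of characteristic~2 (recall we have arranged $S\cap T=\emptyset$ and that $S\cup T$ is independent), and define $A_{ST}$ by: $A_{ST}[v,v]=1$ for every $v\in V\setminus(S\cup T)$; for every $t\in T$, $A_{ST}[t,\pi(t)]=1$ and $A_{ST}[t,v]=0$ for all $v\neq\pi(t)$; and all remaining entries equal to those of $A$. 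As in the path case I would read each term of $\det A_{ST}$ as an oriented cycle cover of $G$ (with the diagonal $1$'s serving as length-one loops) and argue that over characteristic~2 only ``structured'' covers survive cancellation.

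First I would record the structural consequence of the modification. In any cycle cover, each $t\in T$ has a unique available out-arc, namely $t\to\pi(t)$, so it is always used; dually, since $\pi$ is a bijection, each $s\in S$ is entered exactly along $\pi^{-1}(s)\to s$. Hence no vertex of $S\cup T$ can sit on a loop or a $2$-cycle: every $s\in S$ leaves along an ordinary (symmetric) edge into $V\setminus(S\cup T)$, every $t\in T$ is reached along such an edge, and deleting the forced arcs $t\to\pi(t)$ from any cover breaks each of its cycles into simple $G$-paths from $S$ to $T$ through $V\setminus(S\cup T)$ — and together these path-fragments use every vertex of $S$ and of $T$ exactly once. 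Next I would run the reversal argument of Lemma~\ref{lemma:padded-st-paths} unchanged: call a cycle \emph{reversible} if it has at least three edges and uses none of the forced arcs $t\to\pi(t)$ (equivalently, it lies entirely in $V\setminus(S\cup T)$ and uses only symmetric Tutte entries, since any vertex of $S\cup T$ drags a forced arc into its cycle); reversing the orientation of the reversible cycle through the $<$-least vertex is an orientation-independent involution on covers having at least one reversible cycle, and by skew-symmetry each such pair cancels over characteristic~2.

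The surviving covers therefore consist only of loops, of $2$-cycles $uvu$ with $u,v\notin S\cup T$, and of cycles alternating $S$-vertex / ordinary path / $T$-vertex / forced arc. By the structural observation, the path-fragments of a surviving cover assemble into a perfect $(S,T)$-linkage $\cP$, its $2$-cycles form a matching $M$ of $G-V(\cP)$, and its loops cover the remaining vertices; so its monomial is exactly the padded term $X(\cP)\cdot\prod_{e\in M}x_e^2$. Conversely, given any perfect $(S,T)$-linkage $\cP$ and matching $M$ of $G-V(\cP)$, orienting each path of $\cP$ from $S$ to $T$, appending $t\to\pi(t)$ at each $T$-endpoint, turning each edge of $M$ into a $2$-cycle, and putting loops elsewhere is the \emph{unique} cover producing that monomial — the degree-one edge variables force which edges are traversed, the $T$-endpoints force their orientation and then $\pi$ forces the closing arcs, the degree-two edge variables are forced to come from $2$-cycles, and everything else is a loop. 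Hence every padded perfect $(S,T)$-linkage term occurs in $\det A_{ST}$ with coefficient $\pm1=1$ and nothing else does, which is the claim (in particular $\det A_{ST}\neq 0$ whenever a perfect $(S,T)$-linkage exists, via the $M=\emptyset$ term).

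I expect the only delicate points to be the bookkeeping of the last two steps: checking that the reversal involution never meets a forced arc, so it acts only on the ``junk'' part of a cover and leaves the linkage structure intact; checking that the path-fragments of a surviving cover really exhaust $S$ and $T$ (so the linkage is perfect); and checking the uniqueness of the cover realizing a given padded term — which rests on the facts that a disjoint union of $S$-to-$T$ paths has a unique decomposition into maximal paths, and that a variable of degree exactly two can only come from a $2$-cycle. Everything else is a direct transcription of the $|S|=|T|=1$ argument of Lemma~\ref{lemma:padded-st-paths}.
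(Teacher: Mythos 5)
Your proposal is correct and follows the paper's proof essentially verbatim: same Tutte-matrix modification forcing arcs $t \to \pi(t)$, same oriented-cycle-cover interpretation, same reversal pairing to cancel cycles of length $\geq 3$ avoiding $S \cup T$, and the same reading-off of surviving covers as padded perfect $(S,T)$-linkages. The one cosmetic difference is that the paper additionally zeroes out the entries $A[v,s]$ for $v \notin S \cup T$, $s \in S$; you instead leave them and observe that since each $t \in T$ must use its unique out-arc $t \to \pi(t)$, the column-$s$ slot is already occupied, so those entries can never appear in a nonzero permutation term anyway -- the two matrices have the same determinant.
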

\begin{proof}
  We follow the proof of Lemma~\ref{lemma:padded-st-paths}, suitably modified.
  Let $A$ be the Tutte matrix of $G$ over a sufficiently large field of characteristic 2.
  Let $S=\{s_1,\ldots,s_p\}$ and $T=\{t_1,\ldots,t_p\}$ with arbitrary ordering.
  We obtain $A_{ST}$ from $A$ by
  letting $A_{ST}[v, v] = 1$, $A_{ST}[s,s]$, $A_{ST}[t,t]=0$, $A_{ST}[v, s] = 0$ and $A_{ST}[t, v] = 0$ for every $s \in S$, 
  $t \in T$ and $v \in V \setminus (S \cup T)$. Furthermore,
  we let $A_{ST}[t_i,s_i]=1$ for $i \in [p]$ and $A_{ST}[t_i,s_j]=0$ otherwise. 
  Essentially, one can think of $A_{ST}$ as the Tutte-like matrix on
  the directed graph $G'$ where every $v \in V \setminus (S \cup T)$
  has a self-loop and the incoming arcs of $S$ and outgoing arcs of $T$ are replaced by
  the induced matching $\{t_is_i \mid i \in [p]\}$. 

  We claim that $\det A_{ST}$ enumerates padded perfect $(S, T)$-linkage terms as described.
  This mimics the argument of Lemma~\ref{lemma:padded-st-paths}: the terms of $\det A_{ST}$ have 
  a one-to-one correspondence with oriented cycle covers of $G'$.
  Note that all other edges of $G'$ not incident with $S \cup T$ are bidirected (i.e., symmetric).
  Hence, if a cycle cover $\cC$ contains any cycle $C$ of length at least 3 disjoint from $S \cup T$,
  then the orientation of $C$ can be reversed to produce a distinct oriented cycle cover $\cC'$.
  We call such a cycle \emph{reversible}.
  To argue that all oriented cycle covers with reversible cycles cancel over a field
  of characteristic 2, we define the following pairing. Fix an
  arbitrary ordering $<$ on $V$. For each oriented cycle cover $\cC$ with
  at least one reversible cycle, select such a cycle $C \in \cC$
  by the earliest incidence of a vertex of $C$ according to $<$,
  and let $\cC'$ be the result of reversing $C$ in $\cC$. Since
  the selection of $C$ is independent of orientation, this map defines
  a pairing between $\cC$ and $\cC'$. By the symmetry of $A_{ST}$,
  $\cC$ and $\cC'$ contribute precisely the same term to $\det A_{ST}$.
  Generalising the argument, every oriented cycle cover $\cC$ with at
  least one reversible cycle cancels in $\det A_{ST}$ in characteristic 2. 

  It remains to show that any oriented cycle cover where every cycle either
  intersects $S \cup T$ or has length at most 2 corresponds to a monomial
  that does not cancel in $\det A_{ST}$, and that such terms are precisely
  padded perfect $(S,T)$-linkages. Let $\cC$ be such an oriented cycle cover.
  We note that the monomial contributed by $\cC$ is a unique ``fingerprint'' of $\cC$
  as an \emph{undirected} cycle cover, since all edges correspond to
  distinct variables. Hence if $\cC$ is cancelled, it has to be
  against a distinct oriented cycle cover $\cC'$ over the same
  underlying set of undirected edges. However, reversing a cycle $C$ of
  length at most 2 yields precisely the same oriented cycle $C$ again,
  and any cycle $C$ intersecting $S \cup T$ is non-reversible.
  The latter follows since the only edges leaving $T$ or entering $S$ in $G'$ are
  directed edges from $T$ to $S$, so reversing $C$ leads to attempting to use a
  non-existing edge from $S$ to $T$.
  Hence any oriented cycle cover $\cC$ that consists of cycles intersecting $S \cup T$, 
  2-cycles and 1-cycles survives cancellation. 

  We next show that the surviving oriented cycle cover terms
  correspond directly to padded perfect $(S,T)$-linkages.
  For any perfect $(S,T)$-linkage $\cP$ padded with a matching $M$, 
  we can construct a non-cancelled oriented cycle cover:
  connect the paths of $\cP$ up using the edges $t_is_i$, $i \in [p]$.
  This defines a vertex-disjoint cycle packing on $V(\cP)$ which covers all of $S \cup T$.
  The number of cycles in the cycle cover depends on how the paths in
  $\cP$ connect their endpoints, but the number of cycles is immaterial
  to the correctness; it is enough that $\cP$ produces a unique
  non-padded term. Together with $M$ and 1-cycles we get an oriented
  cycle cover with no reversible~cycles. 

  Finally, let $\cC$ be a surviving oriented cycle cover,
  let $\cC' \subseteq \cC$ be the set of cycles of length at least 3
  (which includes every cycle on $S \cup T$ by construction),
  and let $\cP$ be the set of paths produced by deleting any arcs
  $ts$, $t \in T$, $s \in S$ from the cycles of $\cC'$.
  We claim that $\cP$ is a perfect $(S,T)$-linkage. 
  Indeed, since $\cC$ is a cycle cover every vertex of $S \cup T$
  occurs in a cycle, and there are no cycles on $S \cup T$ of length 1 or 2.
  Hence $S \cup T$ occur in $\cP$. Furthermore, they clearly occur as
  endpoints, and oriented such that every path in $\cP$ leads from $S$
  to $T$. The cycles of $\cC \setminus \cC'$ correspond to the padding 
  of the term produced. 
  Thus, $\det A_{ST}$ enumerates all padded $(S, T)$-linkage terms.
\end{proof}

\subsection{Rank \texorpdfstring{$k$ $(S,T)$}{k (S,T)}-linkage}

We now formally note Theorem~\ref{thm:st-linkage-framework}
(from which Theorem~\ref{thm:k-st-linkage} follows). 
We begin by bridging the gap between edge variables (from
Lemma~\ref{lemma:padded-st-linkages}) and vertex variables (from the
labels of matroid~$M$). 

\begin{lemma} \label{lemma:linkage-vertex-vars}
  Let $G=(V,E)$ be an undirected graph and $S, T \subseteq V$
  disjoint vertex sets so that $G[S \cup T]$ is edgeless.
  Let $X_V=\{x_v \mid v \in V\}$ and $X_E=\{x_e \mid e \in E\}$. 
  There is a polynomial $P(X_V,X_E)$ which can be evaluated in
  polynomial time over any field of characteristic 2
  such that the following~hold:
  \begin{enumerate}
  \item For every perfect $(S,T)$-linkage $\cP$ there is a monomial in $P(X_V,X_E)$
    whose odd support corresponds to $V(\cP) \cup E(\cP)$
  \item For every monomial $m$ in $P(X_V,X_E)$ with odd support
    $U \subseteq X_V$ and $F \subseteq X_E$, $F$ is the edge set of a
    perfect $(S,T)$-linkage $\cP$ where $U \subseteq V(\cP)$
  \end{enumerate}
\end{lemma}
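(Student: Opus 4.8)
The plan is to build on Lemma~\ref{lemma:padded-st-linkages}, which over a field $\F$ of characteristic~2 gives a matrix $A_{ST}$ with $\det A_{ST}$ enumerating padded perfect $(S,T)$-linkage terms $X(\cP)\cdot\prod_{e\in M}x_e^2$; combinatorially each surviving term is an oriented cycle cover consisting of the linkage paths $\cP=\bigcup_i P_i$ closed up by the arcs $t_is_i$, together with $2$-cycles (the matching $M$) and loops on the remaining vertices. I would set $P(X_V,X_E)=\det B_{ST}$, where $B_{ST}$ is obtained from $A_{ST}$ by two local rescalings: replace every edge entry $A_{ST}[u,v]=\pm x_{uv}$ (at both positions $(u,v)$ and $(v,u)$) by $\pm x_{uv}\,(1+x_u x_v)$, and replace each arc entry $A_{ST}[t_i,s_i]=1$ by $1+x_{s_i}$. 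This $B_{ST}$ is a polynomial matrix of polynomial size, so $P$ is efficiently computable.

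The first point to verify is that the cancellation argument of Lemma~\ref{lemma:padded-st-linkages} survives the modification. It does essentially verbatim: $B_{ST}$ has the same zero-pattern as $A_{ST}$, and on the block indexed by $V\setminus(S\cup T)$ it is still symmetric (the two positions $(u,v),(v,u)$ were scaled by the same factor $1+x_ux_v$), so reversing a reversible cycle disjoint from $S\cup T$ again yields the identical monomial and cancels over $\F$. Hence the surviving permutations are exactly those of Lemma~\ref{lemma:padded-st-linkages}. Expanding the $(1+\cdot)$ factors in the contribution of such a permutation, a padding $2$-cycle over $e=uv$ contributes $\bigl(x_e(1+x_ux_v)\bigr)^2=x_e^2(1+x_u^2x_v^2)$ and a loop contributes $1$, so padding and loops contribute only $x_e^2$ for padding edges and only even powers (or none) of the $X_V$-variables; a path cycle $P_i$ from $s_i=v_0$ to $t_i=v_\ell$ contributes $\prod_j x_{f_j}$ in the edge variables and, in the vertex variables, $\prod_{v\in V(P_i)}x_v^{d(v)}$ with $d(v)=\deg_J(v)+\varepsilon\,[v=s_i]$ for a freely chosen $J\subseteq E(P_i)$ and $\varepsilon\in\{0,1\}$; in particular $d(v)\in\{0,1,2\}$ and only variables of $V(P_i)$ occur.

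Item~2 then falls out: in any monomial of $P$, the $X_E$-variables of odd degree are exactly the path edges $E(\cP)$ (padding edges have degree $2$), so $F=E(\cP)$ is the edge set of a perfect $(S,T)$-linkage $\cP$; and the $X_V$-variables of odd degree all lie inside $\bigcup_i V(P_i)=V(\cP)$, so $U\subseteq V(\cP)$. For item~1, given a perfect linkage $\cP$ I would take the unpadded permutation ($M=\emptyset$, loops elsewhere) and on each path $P_i$ choose $J$ to be the alternating edge set and $\varepsilon$ the forced bit so that $d(v)=1$ at every $v\in V(P_i)$; this is exactly where the factor $1+x_{s_i}$ is needed, since without it the achievable vertex-parity vectors on a path are only the even-weight ones, so a path on an odd number of vertices could not be fully marked. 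As this choice of $(J,\varepsilon)$ is the unique one realising the all-$1$ degree sequence on each $P_i$, the monomial $\prod_{e\in E(\cP)}x_e\cdot\prod_{v\in V(\cP)}x_v$ occurs with coefficient $1$, i.e.\ is a genuine monomial of $P$ with odd support $V(\cP)\cup E(\cP)$.

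The step I expect to be the main obstacle is making item~1 hold on the nose: internal path vertices have path-degree $2$, so any ``tag a vertex once per incident edge'' scheme gives them even degree; one has to (i) tag once per \emph{visit} of a cycle rather than per incident edge, which the symmetric factors $1+x_ux_v$ achieve after expansion, and (ii) repair the residual parity defect on odd-order paths via the $1+x_{s_i}$ factor on the closing arc, then check that the realising expansion is unique so the coefficient is nonzero. Verifying that these rescalings leave the cancellation structure of Lemma~\ref{lemma:padded-st-linkages} intact, and that no surviving $X_V$-monomial picks up a padding vertex, are the remaining points of care, but both reduce to the symmetry of $B_{ST}$ and the expansion above.
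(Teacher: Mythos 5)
Your proposal is correct but takes a genuinely different route from the paper's. The paper defines $P(X_V,X_E)=\det A_{ST}\cdot\prod_{s\in S}x_s$ where $A_{ST}$ is evaluated at $x_{uv}\gets x_{uv}(x_u+x_v)$: each traversed edge tags exactly \emph{one} of its endpoints, and the pre-factor $\prod_{s\in S}x_s$ forces the tagging to be oriented from $S$ to $T$ along each path, giving a unique production of the all-degree-one monomial by a short chain of forced choices (edge $s_iv_1$ must tag $v_1$, then $v_1v_2$ must tag $v_2$, and so on). You instead scale edge entries by $(1+x_ux_v)$ (each edge tags both endpoints or neither) and arc entries by $(1+x_{s_i})$, and recover uniqueness from the fact that the parity vectors $(\deg_J(v)\bmod 2)_{v\in V(P_i)}$ achievable by $J\subseteq E(P_i)$ are exactly the even-weight vectors, with $\varepsilon_i$ supplying the one extra parity toggle needed when a path has an odd number of vertices. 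Both constructions preserve the symmetry on the $V\setminus(S\cup T)$ block so the reversible-cycle cancellation of Lemma~\ref{lemma:padded-st-linkages} carries over verbatim, and both give items 1 and 2. One small presentational caveat: the cycles of the surviving cycle covers are in general formed by \emph{chaining} several paths together through the arcs $t_is_i$ (when $P_j$ runs from $s_j$ to $t_{\pi(j)}$ with $\pi\ne\mathrm{id}$), so the phrase ``a path cycle $P_i$ from $s_i$ to $t_i$'' is not literally accurate; your per-path analysis (the edge at $t_{\pi(j)}$ is forced into $J$, then alternate, then $\varepsilon_j$ is determined by $\deg_J(s_j)$) still works unchanged, but this should be stated at the level of paths, not cycles. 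Finally, a design remark rather than a correctness issue: the paper's version makes every path edge contribute exactly one $X_V$-variable, so surviving monomials with $|E(\cP)|=k_e$ have $X_V$-degree exactly $k_e+|S|$; this homogeneity is used downstream in Corollary~\ref{cor:framework-linkage}(4) and would need mild reworking with your non-homogeneous $(1+x_ux_v)$ choice.
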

\begin{proof}
  Let $A_{ST}$ be the matrix constructed in Lemma~\ref{lemma:padded-st-linkages}
  over a new set of variables $X_E'=\{x_e' \mid e \in E\}$.
  Thus, $\det A_{ST}$ enumerates padded perfect $(S.T)$-linkages
  over the variable set $X_E'$. We evaluate $\det A_{ST}$ with an
  assignment where
  \[
    x_{uv}' \gets x_{uv}(x_u+x_v),
  \]
  and define
  \[
    P(X_V,X_E) = \det A_{ST} \cdot \prod_{s \in S} x_s.
  \]
  We claim that this produces monomials precisely as described. 

  First, let $\cP$ be a perfect $(S,T)$-linkage, and let
  $m=X(\cP)=\prod_{e \in E(\cP)} x_e'$, with no padding (i.e., with
  1-cycles on all other vertices). Then $m$ is a monomial produced by
  $\det A_{ST}$. We consider the expansion of $m$ into monomials 
  over $X_V \cup X_E$ resulting from the evaluation.
  We claim that the term
  \[
    \prod_{v \in V(\cP)} x_v \cdot \prod_{e \in E(\cP)} x_e
  \]
  is contributed multilinearly precisely once in $P(X_V,X_E)$. 
  Indeed, for every edge $e=uv \in E(\cP)$, effectively the expansion
  has to select either the contribution $x_u$ or $x_v$. Since 
  $P(X_V,X_E)$ is ``pre-padded'' by $\prod_{s \in S} x_s$, 
  every edge $sv$ leaving $s \in S$ in $\cP$ must be oriented to
  produce $x_v$ instead, or otherwise $x_s$ gets even degree. 
  It follows that the only production that covers every variable
  $x_v$ for $v \in V(\cP)$ is when every edge $uv$, oriented in $\cP$
  from $S$ to $T$ as $(u,v)$, contributes its head variable $x_v$. 
  
  Conversely, assume that $P(X_V,X_E)$ has a monomial $m$ where the
  odd support consists of $U \subseteq X_V$ and $F \subseteq X_E$.
  Then there is a perfect $(S,T)$-linkage $\cP$ such that
  $F=\{x_e \mid e \in E(\cP)\}$. We claim that every variable
  $x_v \in U$ comes from an edge variable $x_e'$ where $x_e \in F$.
  Indeed, the only other production of $x_v$ would be from a padding
  2-cycle $uvu$, which contributes
  \[
    (x_{uv}(x_u+x_v))^2 = x_{uv}^2(x_u^2+x_v^2)
  \]
  since we are working over a field of characteristic 2. Since no
  padding cycles intersect $S \cup T$ and since padding 2-cycles
  evidently do not contribute to the odd support of $m$,
  the conclusion follows. 
\end{proof}

We can now finish the result. We recall the statement of the theorem.

\linkageframework*

\begin{proof} 
  Let $I=(G, M, S, T, k)$ be the input. As noted before Lemma~\ref{lemma:padded-st-linkages}
  we can safely modify $I$ so that $S \cap T = \emptyset$
  and $G[S \cup T]$ is edgeless. 
  Let $X_V=\{x_v \mid v \in V\}$ and $X_E=\{x_e \mid e \in E\}$
  and let $P(X_V,X_E)$ be the polynomial of Lemma~\ref{lemma:linkage-vertex-vars}.
  Let $A_M$ be the representation of $M$ truncated to rank $k$
  and dimension $k \times |V|$. 
  Recall that this can be constructed efficiently, possibly by moving
  to an extension field $\F$ (see Section~\ref{sec:prel-matroids}). 
  Furthermore, we assume $|\F| = \Omega(n)$ for the sake of vanishing
  error probability; again, this can be arranged by moving to an
  extension field. Now, we use Theorem~\ref{theorem:matroid-sieving} to
  sieve over $P(X_V,X_E)$ for a monomial whose odd support in $X_V$
  spans $A_M$. By Lemma~\ref{lemma:linkage-vertex-vars}, if
  there is such a monomial $m$ then the vertex set of the monomial is
  contained in a perfect $(S,T)$-linkage $\cP$, hence there is a
  perfect $(S,T)$-linkage $\cP$ such that $V(\cP)$ spans $A_M$.
  Conversely, if there is a perfect $(S,T)$-linkage $\cP$ such that
  $V(\cP)$ spans $A_M$, then there is also a monomial $m$ of
  $P(X_V,X_E)$ such that the odd support of $m$ spans $A_N$.
  The running time and failure probability comes from
  Theorem~\ref{theorem:matroid-sieving} and $|\F|$. 
\end{proof}

We note a handful of consequences (although the variations on finding 
\emph{shortest} solutions need a little bit more introspection of the proof).

\begin{corollary}
  \label{cor:framework-linkage}
  The following problems can be solved in randomized time $O^*(2^k)$
  and polynomial space. 
  \begin{enumerate}
  \item Finding a perfect $(S,T)$-linkage of total length at least $k$
  \item In a vertex-coloured graph, finding a perfect $(S,T)$-linkage
    which uses at least $k$ different colours
  \item Given a set of terminals $K \subseteq V(G)$ with $|K|=k$, 
    finding a perfect $(S,T)$-linkage that visits every vertex of $K$
  \item Given a matroid $M$ over $V \cup E$ of total rank $k$,
    represented over a field of characteristic 2, 
    finding a perfect $(S,T)$-linkage $\cP$ such that $E(\cP) \cup
    V(\cP)$ is independent in $M$
  \end{enumerate}
  Furthermore, for each of these settings we can find a
  \emph{shortest} solution, or a shortest solution of odd,
  respectively even total length. 
\end{corollary}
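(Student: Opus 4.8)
The plan is to derive all four items, and the closing ``furthermore'', from the same sieving template that proves Theorem~\ref{thm:st-linkage-framework}, the only change being that we read the matroid constraint off the odd support over \emph{both} variable sets $X_V$ and $X_E$ simultaneously. Recall from Lemma~\ref{lemma:linkage-vertex-vars} the efficiently computable polynomial $P(X_V,X_E)$: the odd support of any of its monomials is a set $U \cup F$ where $F \subseteq X_E$ is the edge set of a perfect $(S,T)$-linkage $\cP$ and $U \subseteq X_V$ is a subset of $V(\cP)$, and conversely every perfect $(S,T)$-linkage $\cP$ contributes a monomial whose odd support is exactly $V(\cP) \cup E(\cP)$. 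So, given a linear matroid $M$ over $V \cup E$ of rank $k$ represented over a field of characteristic~$2$ --- which, after truncation and possibly passing to an extension field, we assume is given by a $k \times (|V|+|E|)$ matrix (Section~\ref{sec:prel-matroids}) --- I would associate each $x_v$ and each $x_e$ with the corresponding column of this matrix and invoke the odd sieving algorithm (Theorem~\ref{theorem:matroid-sieving}) on $P(X_V,X_E)$. By the above, the sieve detects a spanning odd support if and only if there is a perfect $(S,T)$-linkage $\cP$ with $V(\cP) \cup E(\cP)$ spanning $M$ (and since $M$ has rank $k$, such a set is a basis, hence in particular independent); the running time is $O^*(2^k)$, the space polynomial, and the failure probability is driven below any constant by working over a sufficiently large extension field. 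This is exactly item~4, and items~1--3 follow by choosing $M$ suitably.

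For item~1 I would let $M$ be the $k$-truncation of the uniform matroid $U_{|E|,k}$ on the edge set $E$, assigning the zero vector to every vertex; then $V(\cP) \cup E(\cP)$ spans $M$ precisely when $|E(\cP)| \ge k$, i.e.\ $\cP$ has total length at least $k$. For item~2, given the colouring $c$, I would let $M$ assign to each $v \in V$ the unit vector indexed by $c(v)$ and the zero vector to every edge, truncated to rank $k$; then $V(\cP) \cup E(\cP)$ spans $M$ exactly when $\cP$ uses at least $k$ colours, which also re-proves Theorem~\ref{thm:k-st-linkage}. For item~3, given $K \subseteq V$ with $|K| = k$, I would let $M$ give each $v \in K$ its own distinct unit vector and the zero vector to all other vertices and all edges; since a perfect $(S,T)$-linkage always satisfies $V(\cP) \supseteq S \cup T$, the set $V(\cP) \cup E(\cP)$ spans $M$ exactly when $K \subseteq V(\cP)$. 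All three matroids are representable over a field of characteristic~$2$ (for $U_{|E|,k}$ possibly over an extension field), so Theorem~\ref{theorem:matroid-sieving} applies with the claimed parameter in each case.

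For the ``furthermore'' part I would introduce a weight-tracing variable $z$: form $A_{ST}$ as in Lemma~\ref{lemma:padded-st-linkages} but with each edge variable $x_e$ replaced by $z x_e$, then carry out the substitution and pre-padding of Lemma~\ref{lemma:linkage-vertex-vars}. In the resulting polynomial a padded term for a linkage $\cP$ with padding matching $N$ carries $z$-degree $|E(\cP)| + 2|N| \le n$, so the least $z$-degree achieved by a fixed $\cP$ is exactly $|E(\cP)|$, and the $z$-degree of every term has the same parity as the length of its underlying linkage. Thus, for each target $\ell$ I would extract the coefficient of $z^\ell$ by interpolation (Lemma~\ref{lemma:interpolation}) and run the odd sieve above on the resulting polynomial in $X_V \cup X_E$; this succeeds iff some qualifying perfect $(S,T)$-linkage of length $\ell' \le \ell$ with $\ell' \equiv \ell \pmod 2$ exists. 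Scanning $\ell$ upward and taking, within each parity class, the least $\ell$ for which the sieve succeeds, returns the length of a shortest qualifying linkage; restricting the scan to a single parity class returns the shortest odd, respectively even, solution. Since interpolation costs $O(n)$ evaluations and each is a single odd-sieve run, the overall cost remains $O^*(2^k)$ time and polynomial space. I expect this last step to be the only delicate point: over characteristic~$2$ the usual extraction of the parity part of a polynomial via $f(z) \pm f(-z)$ is unavailable (since $-z = z$), so the parity and minimisation constraints must be routed through explicit Lagrange interpolation in $z$ rather than evaluation at signed points.
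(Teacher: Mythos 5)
Your treatment of items 1--3 and the ``furthermore'' clause is sound and essentially matches the paper's: items 1--3 are spanning conditions, so you correctly feed $P(X_V,X_E)$ and a suitable matroid into the odd-sieving theorem (Theorem~\ref{theorem:matroid-sieving}), and the weight-tracing variable $z$ handles shortest/odd/even via interpolation over $z$-degree, exactly as the paper does.

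Item~4, however, has a genuine gap. You write that if the sieve detects a linkage $\cP$ with $V(\cP)\cup E(\cP)$ spanning $M$, then ``since $M$ has rank $k$, such a set is a basis, hence in particular independent.'' This inference is false: a set that spans a rank-$k$ matroid is a basis only if it has exactly $k$ elements. A perfect $(S,T)$-linkage can have $|V(\cP)\cup E(\cP)|$ of any size, and when that size exceeds $k$ the set cannot be independent at all, yet it will still pass the odd sieve. Concretely, your algorithm for item~4 accepts whenever $V(\cP)\cup E(\cP)$ \emph{spans} $M$, which is neither necessary nor sufficient for $V(\cP)\cup E(\cP)$ being \emph{independent} in $M$: completeness fails because a small independent (non-spanning) solution is missed, and soundness fails because a large spanning (dependent) set is wrongly accepted. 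The problem is that items 1--3 are ``at-least'' conditions and so reduce cleanly to spanning, while item~4 is an ``at-most-and-independent'' condition and needs a different reduction. The paper handles this by guessing $k_e=|E(\cP)|$, noting $|V(\cP)\cup E(\cP)| = 2k_e + |S| =: k'$ (after preprocessing $S\cap T$ away), truncating $M$ to rank $k'$, extracting the degree-$k_e$-in-$X_E$ slice of $P(X_V,X_E)$ by interpolation, and then using \emph{basis} sieving (Theorem~\ref{theorem:sieve-basis}, not odd sieving) to look for a multilinear degree-$k'$ monomial whose support is a basis of the truncated matroid. Since $V(\cP)\cup E(\cP)$ has cardinality exactly $k'$, being a basis of the $k'$-truncation coincides with being independent in $M$. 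The non-multilinear padded terms and the wrongly-oriented vertex assignments of the right $X_E$-degree are eliminated precisely by the multilinearity requirement, which is why basis sieving is the right primitive here and odd sieving is not.
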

\begin{proof}
  The first three applications follow as in the discussion at the
  start of this section, by assigning appropriate vectors to the
  vertices of $G$. To additionally find a shortest, respectively
  shortest odd/even solution, attach a weight-tracing variable $z$ to
  every edge variable $x_e$ and look for a non-zero term in the
  sieving whose degree in $z$ is minimum (respectively minimum subject
  to having odd/even degree). For every perfect
  $(S,T)$-linkage $\cP$, there is a multitude of padded productions,
  but there is a unique monomial $m$ where every vertex $v
  \notin V(\cP)$ is padded using a 1-cycle (such that only edge
  variables corresponding to $E(\cP)$ occur in $m$).
  Finding this minimum degree therefore corresponds to finding the
  shortest length $|\cP|$ for a solution $\cP$. Finally, note that
  padding terms always come in pairs, hence padding $m$ does not
  change its parity in $z$.

  For the final case, first let $R=S \cap T$. We delete $R$ from $S$,
  $T$ and $G$, contract $R$ in $M$, and set $k \gets k-|R|$. We then proceed as follows.
  Attach a weight-tracing variable $z$ to every edge variable $x_e$.
  Guess the value of $k_e=|E(\cP)|$ and note that $|V(\cP)|=|E(\cP)|+|S|$ 
  for every perfect $(S,T)$-linkage; indeed, since $S \cap T = \emptyset$,
  every path contains an edge, hence every path $P \in \cP$
  is a tree with $|V(P)|=|E(P)|+1$. Thus set $k'=2k_e+|S|$,
  restricting the guess for $k_e$ to values such that $k' \leq k$,
  and truncate $M$ to rank $k'$. Use interpolation to extract terms of
  $P(X_V,X_E)$ of degree $k_e$ in $X_E$ and use Theorem~\ref{theorem:sieve-basis}
  to check for multilinear monomials that span the truncation of $M$. 
  As above, if there is a solution $\cP$, with the parameter
  $k_e=|E(\cP)|$ as guessed, then there is also a monomial $m$ in
  $P(X_V,X_E)$ of degree precisely $k_e$ in $X_E$ such that $m$
  is multilinear and its support corresponds precisely to $E(\cP) \cup
  V(\cP)$. Furthermore, $m$ is of degree precisely $k'$,
  hence $m$ precisely spans the truncation of $M$. 
  Any term in $P(X_V,X_E)$ of total degree $k_e$ in $X_E$ that is not of this form
  will either contribute fewer than $k_e+|S|$ variables from $V(\cP)$
  or will fail to be multilinear, and hence will fail to pass
  Theorem~\ref{theorem:sieve-basis}. 
\end{proof}

\subsection{Faster Long \texorpdfstring{$(s,t)$}{(s,t)}-Path and Long Cycle}

Fomin et al.~\cite{FominGKSS24} ask whether \textsc{Long
  $(s,t)$-Path} or \textsc{Long Cycle} -- i.e., the problem of finding,
respectively, an $(s,t)$-path or a cycle of length at least $k$ -- can be
solved in time $O^*((2-\varepsilon)^k)$ for any $\varepsilon > 0$,
given that there is an algorithm solving \textsc{$k$-Cycle} in time
$O^*(1.66^k)$ by Bj\"orklund et al.~\cite{BjorklundHKK17narrow}.
We answer in the affirmative, showing that the algorithm of Bj\"orklund
et al.\ can be modified to solve \textsc{Long $(s,t)$-Path} in time
$O^*(1.66^k)$ by working over the cycle-enumerating determinant
of Lemma~\ref{lemma:padded-st-paths}. 
A corresponding algorithm for \textsc{Long Cycle} follows, by
iterating over all choices of $(s,t)$ as an edge of the cycle. 
We prove the following.

\begin{theorem} \label{theorem:long-path}
  Let $G=(V,E)$ be an undirected graph and $s, t \in V$.
  There is a randomized algorithm that finds an $(s,t)$-path in $G$ of length at
  least $k$ in time $O^*((4(\sqrt{2}-1))^k)=O^*(1.66^k)$ and polynomial space.
\end{theorem}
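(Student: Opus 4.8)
The plan is to transplant the narrow-sieves algorithm of Björklund et al.~\cite{BjorklundHKK17narrow} for \textsc{$k$-Path} --- rephrased as an odd sieve over a matroid labelling of the vertices --- onto the cycle-enumerating determinant of Lemma~\ref{lemma:padded-st-paths} instead of a walk polynomial. First I would fix the algebraic substrate: by edge subdivision assume $st \notin E$, and let $A_{st}$ be the matrix of Lemma~\ref{lemma:padded-st-paths} over an extension field $\F$ of characteristic $2$ with $|\F| = \Omega(n)$, so that $P(X_E) := \det A_{st}$ is a sum of padded $st$-path terms $X(P)\cdot\prod_{e \in M} x_e^2$. The key point is that the \emph{odd} support in $X_E$ of such a term is exactly $E(P)$, while each padding $2$-cycle contributes only an even-degree factor; hence $G$ has an $st$-path of length at least $k$ if and only if $P(X_E)$ has a monomial whose odd support has size at least $k$. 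Applying Lemma~\ref{lemma:linkage-vertex-vars} with $S=\{s\}$, $T=\{t\}$, I would pass to vertex variables through $x_{uv} \gets x_{uv}(x_u+x_v)$: every $st$-path $P$ yields a monomial of the resulting $P'(X_V,X_E)$ whose odd support in $X_V$ is $V(P)$ and whose odd support in $X_E$ is $E(P)$, and --- since over characteristic $2$ a padding factor $x_{uv}^2$ turns into $x_{uv}^2(x_u^2+x_v^2)$, invisible to the odd support --- every monomial of $P'$ with odd $X_V$-support of size $\ell$ arises from an $st$-path of length at least $\ell-1$. So it suffices to sieve $P'$ for a monomial whose odd support in $X_V$ has size at least $k+1$.

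The naive route --- odd sieving (Theorem~\ref{theorem:matroid-sieving}) with a uniform matroid of rank $k+1$ on $V$ --- already recovers the $O^*(2^k)$ bound of Fomin et al.~\cite{FominGKSS23soda}. To reach $O^*(1.66^k)$ I would, rather than hit all $k+1$ path-vertices at once, exploit that an $st$-path is bipartite: its vertex set splits into the two colour classes of the path, of almost equal size, so one only needs to certify about half of them. This is exactly the structure used by Björklund et al.~\cite{BjorklundHKK17narrow} for undirected \textsc{$k$-Path} (and by Björklund~\cite{Bjorklund14detsum} for \textsc{Hamiltonicity}): label the vertices of $G$ so that only one colour class of the path has to be ``caught'' by the inclusion--exclusion sieve of Lemma~\ref{lemma:inclusion-exclusion}, while summing over the ways the two classes can interleave along the path; balancing the size of the sieved class against this outer summation produces the constant $4(\sqrt2-1)$. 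Concretely, I would interpolate out (Lemma~\ref{lemma:interpolation}) a weight-tracking variable $z$ attached to the edge variables so as to work with one path length at a time, install the narrow-sieve vertex labels --- whose assignment step is most naturally phrased as a matroid labelling of $V$ --- and run the sieve of~\cite{BjorklundHKK17narrow} over $P'$; evaluating $\det A_{st}$ stays polynomial-time and the sieve stays polynomial-space, so the overall bounds are $O^*((4(\sqrt2-1))^k)$ time and polynomial space.

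The main obstacle is not the running-time accounting, which is essentially that of~\cite{BjorklundHKK17narrow}, but checking that the narrow labelling interacts cleanly with the determinant's own cancellation structure. Concretely one needs a lemma saying that, after substituting the narrow-sieve labels into $P'$, the terms surviving cancellation are exactly those coming from genuine $st$-paths of the chosen length: the orientation-reversal pairing of Lemma~\ref{lemma:padded-st-paths} must still kill every non-path oriented cycle cover; padding $2$-cycles must remain inert under the vertex-variable substitution and the narrow sieve (as noted above); and the auxiliary variables of the narrow sieve must block any spurious cancellation between the two colour classes of a single path (its ``mirror'' symmetry), so that each long $st$-path contributes a surviving monomial. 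With that lemma in place, the Schwartz--Zippel lemma over $\F$ concludes the $st$-path case. Finally \textsc{Long Cycle} reduces to \textsc{Long $(s,t)$-Path} with only a polynomial overhead: guess an edge $st$ lying on the desired cycle, delete it, and search for an $st$-path of length at least $k-1$ in what remains, iterating over all $O(|E|)$ choices of $st$.
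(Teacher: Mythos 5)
Your plan correctly identifies the ingredients (the padded $st$-path determinant of Lemma~\ref{lemma:padded-st-paths}, a random vertex bipartition, and a BHKK-style probabilistic balancing), and your $O^*(2^k)$ fallback via a uniform matroid of rank $k{+}1$ is right. But the proposal has a genuine gap: the entire content of the $O^*(1.66^k)$ bound is the construction of a matroid of rank strictly below $k$ together with a structural lemma explaining why a monomial whose odd support spans that matroid must still correspond to a path on at least $k$ vertices, and your proposal never exhibits such a matroid. The phrase ``install the narrow-sieve vertex labels, whose assignment step is most naturally phrased as a matroid labelling of $V$'' is a promissory note, not a construction; and identifying ``the main obstacle'' as verifying that the labelling respects the determinant's cancellation structure mis-locates the difficulty --- Lemma~\ref{lemma:padded-st-paths} already completely pins down which cycle covers survive, so that part is routine.

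The paper in fact proceeds differently: it stays with \emph{edge} variables throughout (it does not invoke Lemma~\ref{lemma:linkage-vertex-vars} here at all). Given a random bipartition $V=V_1\cup V_2$ with induced edge partition $E=E_1\cup E_X\cup E_2$, it defines $M(r_1,r_2)$ as the disjoint union of a rank-$r_1$ uniform matroid on $E_1$ and a rank-$r_2$ truncation of the \emph{transversal} matroid on $E_X\cup E_2$ with hidden set $V_2$. The key rank/length lemma (Lemma~\ref{lemma:long-cycle-rank}) is then: if $F\subseteq E(C)$ is independent in $M$, contains $E_2\cap E(C)$ and has $k_x$ crossing edges, then $|V(C)|\ge|F|+k_x$, because for every maximal $V_1$-to-$V_1$ subpath of $C$ through $V_2$ the transversal matroid forbids both of its boundary crossing edges from lying in $F$. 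It is this transversal argument over crossing edges --- not the path's own bipartite $2$-colouring --- that lets a matroid of rank $\eta k$ with $\eta<3/4$ certify length $\ge k$; note $\eta\approx0.73$, not $1/2$ as your ``only need to certify about half of them'' suggests (rank $\approx k/2$ would give $O^*(1.42^k)$, not $O^*(1.66^k)$). Lemma~\ref{lemma:alg-longcycle} then sieves for an odd-support basis of $M(\ell_1,k_2)$ in $\det A_{st}$ while interpolating out tracking variables $z_x,z_2$ to fix $k_x$ and $k_2$, and the outer analysis balances $2^{\max(k_2,\,k-k_x/2)}$ against the success probability of the random partition. Whether a vertex-variable version of this matroid exists is not obvious and you have not shown it; as written, your proposal does not yield the stated running time.
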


The result takes the rest of the subsection. Like Bj\"orklund et
al.~\cite{BjorklundHKK17narrow}, the algorithm is based around
randomly partitioning the vertex set of $G$ as $V=V_1 \cup V_2$,
then use algebraic sieving to look for an $(s,t)$-path $P$ that splits
``agreeably'' between $V_1$ and $V_2$, in time better than
$O^*(2^{|P|})$. More specifically, we pick integers $k_x$ and $k_2$
and define a matroid $M=M(k_2,k_x)$ of rank $r=\eta k$ for some $\eta < 3/4$, 
and prove the following: Let $P$ be an $(s,t)$-path that (1) intersects $V_2$ in
precisely $k_2$ vertices, (2) contains precisely $k_x$ edges that
cross between $V_1$ and $V_2$, and (3) has an edge set that spans $M$.
Then $|V(P)| \geq k$. We can then look for such a path
by working over Lemma~\ref{lemma:padded-st-paths}. The details follow.

For a partition $V=V_1 \cup V_2$, let $E_1=E(G[V_1])$,
$E_2=E(G[V_2])$ and $E_X=E \setminus (E_1 \cup E_2)$ so that
$E=E_1 \cup E_X \cup E_2$ partitions $E$. 
Given a partition $V=(V_1,V_2)$ and integers $r_1$ and $r_2$, 
define a matroid $M(r_1,r_2)$ as follows.
Let $M_1$ be a uniform matroid over $E_1$ of rank $r_1$.
Let $M_2$ be a transversal matroid on ground set $F=E_X \cup E_2$,
defined via the bipartite graph $H=(F \cup V_2, E_H)$
where each $e \in F$ is connected to every vertex $v \in e \cap V_2$ in $V_2$. 
Furthermore, truncate $M_2$ to have rank $r_2$. 
That is, a set $S \subseteq E_X \cup E_2$ is independent in $M_2$
if and only if $|S| \leq r_2$ and $S$ has a set of distinct
representatives in $V_2$. Let $M(r_1,r_2)$ over ground set $E$
be the disjoint union of $M_1$ and $M_2$.

\begin{lemma} \label{lemma:long-cycle-rank}
  Let a partition $V=V_1 \cup V_2$ be given with corresponding edge
  partition $E=E_1 \cup E_X \cup E_2$. Furthermore let $s, t \in V$,
  and $M=M(r_1,r_2)$ for some $r_1$, $r_2$.  
  Let $P$ be an $(s,t)$-path and let $C=P+st$ be the corresponding cycle.
  Assume there is a set $F \subseteq E(C)$ such that $E_2 \cap E(C) \subseteq F$
  and $F$ is independent in $M$, and let $|F \cap E_X|=r_x$. 
  Then $|V(C)| \geq |F|+r_x$. 
\end{lemma}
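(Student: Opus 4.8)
The plan is to read off the structure of the cycle $C$ relative to the cut $(V_1,V_2)$ and play the three "parts" of $F$ against the number of times $C$ crosses the cut. Write $E_i(C)=E(C)\cap E_i$ for $i\in\{1,2\}$, $X(C)=E(C)\cap E_X$, and $F_1=F\cap E_1$, $F_X=F\cap E_X$, $F_2=F\cap E_2$, so that $|F|=|F_1|+|F_X|+|F_2|$ and $k_x=|F_X|$. Since $C$ is a cycle, $|V(C)|=|E(C)|=|E_1(C)|+|E_2(C)|+|X(C)|$, and since $C$ crosses the cut an even number of times we have $|X(C)|=2p$ for some integer $p\ge 0$. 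First I would dispose of the degenerate case $p=0$: then $C$ lies entirely inside $G[V_1]$ or inside $G[V_2]$, so $E(C)\subseteq E_1$ or $E(C)\subseteq E_2$; in either case $F_X=\emptyset$, hence $k_x=0$, and using $F\subseteq E(C)$ together with the hypothesis $E_2\cap E(C)\subseteq F$ one gets $|V(C)|=|E(C)|=|F|$, which is what is claimed.

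For the main case $p\ge 1$, the key structural fact is an arc decomposition of $C$: deleting the $2p$ crossing edges $X(C)$ from the cycle leaves exactly $2p$ pairwise vertex-disjoint paths ("arcs"), each contained entirely in $G[V_1]$ or entirely in $G[V_2]$; because every crossing edge joins a $V_1$-vertex to a $V_2$-vertex, the arcs alternate sides around $C$, so there are precisely $p$ arcs on each side (single-vertex arcs included, which needs a line of care but does not affect the alternation). Counting vertices arc by arc — a path on $a$ vertices has $a-1$ edges — then gives $|V(C)\cap V_1|=|E_1(C)|+p$ and $|V(C)\cap V_2|=|E_2(C)|+p$, and therefore
\[
  |V(C)| = |E_1(C)| + |E_2(C)| + 2p .
\]

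Next I would extract two consequences of the hypotheses. From $F\subseteq E(C)$ and $E_2\cap E(C)\subseteq F$ we get $F_2=E_2(C)$ (and trivially $F_1\subseteq E_1(C)$). From $F$ being independent in $M=M(r_1,r_2)$, the disjoint union of $M_1$ and $M_2$, the part $F_X\cup F_2$ is independent in $M_2$, hence in particular in the untruncated transversal matroid, so $F_X\cup F_2$ has a system of distinct representatives in $V_2$; since each element of $F_X\cup F_2$ is an edge of $C$, its representative lies in $V(C)\cap V_2$, and injectivity yields
\[
  |F_X|+|F_2| \;\le\; |V(C)\cap V_2| \;=\; |E_2(C)|+p \;=\; |F_2|+p ,
\]
that is, $|F_X|\le p$. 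Substituting into the vertex count, and using $|E_1(C)|\ge |F_1|$, $|E_2(C)|=|F_2|$ and $2p\ge 2|F_X|$,
\[
  |V(C)| = |E_1(C)|+|E_2(C)|+2p \;\ge\; |F_1|+|F_2|+2|F_X| \;=\; |F|+k_x ,
\]
which is the claim.

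I expect the one genuinely non-obvious point to be the inequality $|F_X|\le p$: the naive bound $F_X\subseteq X(C)$ only gives $|F_X|\le 2p$, which is a factor of two too weak to produce the extra $+k_x$ over the trivial $|V(C)|\ge|F|$. The saving of that factor is exactly where the two matroid constraints pull together — the transversal-matroid SDR caps $|F_X|+|F_2|$ by the number of $V_2$-vertices lying on $C$, while the hypothesis $E_2\cap E(C)\subseteq F$ pins $|F_2|$ to $|E_2(C)|$, the precise count of those $V_2$-vertices already accounted for by $V_2$-internal edges. The only other point needing attention is the bookkeeping of the arc decomposition, in particular that the two sides contribute equally many arcs and that trivial single-vertex arcs are handled consistently.
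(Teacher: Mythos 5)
Your proof is correct and follows essentially the same strategy as the paper's: decompose $C$ relative to the cut, use the transversal-matroid structure of $M_2$ together with the hypothesis $E_2 \cap E(C) \subseteq F$ to cap $|F \cap E_X|$ at the number $p$ of $V_2$-arcs, and then count. The paper organizes the key step per $V_1$-path (each $V_1$-path's full edge set is dependent in $M_2$, and since its internal $E_2$-edges are forced into $F$, at most one of its two crossing edges can be in $F$), whereas you apply the SDR bound once globally, $|F_X|+|F_2|\le |V(C)\cap V_2|=|F_2|+p$; this is a clean equivalent shortcut rather than a different route. One small slip in your $p=0$ case: if $C\subseteq G[V_1]$ you only get $|V(C)|\ge|F|$ rather than equality (since $F\subsetneq E(C)$ is possible), but as $k_x=0$ there the desired bound still holds.
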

\begin{proof}
  Decompose $C$ cyclically into edges in $E_1$ and \emph{$V_1$-paths},
  i.e., paths whose endpoints lie in $V_1$ and whose internal vertices lie
  in $V_2$, where we require each $V_1$-path to have at least one
  internal vertex. Let $P'$ be a $V_1$-path. We claim that the initial and final
  edges of $P'$ cannot both be in $F$. Indeed, all internal edges of
  $P'$ (except the initial and final edges) lie in $F$, and the full
  set of edges $E(P')$ intersect only $|E(P')|-1$ distinct vertices in
  $V_2$, i.e., $E(P')$ is dependent in $M$. Since this argument applies
  to every $V_1$-path in $C$ separately, and since the $V_1$-paths partition
  the edges of $E(C) \cap (E_X \cup E_2)$, we conclude
  \[
    |(E(C) \cap E_X) \setminus F| \geq r_x.
  \]
  Hence $|V(C)|=|E(C)| \geq |F|+r_x$.
\end{proof}

We show that this implies an algorithm for detecting long $(s,t)$-paths.

\begin{lemma} \label{lemma:alg-longcycle}
  Let $V=V_1 \cup V_2$ be a partition and $k$, $k_2$ and $k_x$ be integers. 
  Let $\mu=\max(k_2, k-k_x/2)$. There is a randomized, polynomial-space
  algorithm with running time $O^*(2^\mu)$ that detects the existence
  of an $(s,t)$-path $P$ such that $|V(P)| \geq k$, $|V(P) \cap V_2|=k_2$
  and $|E(P+st) \cap E_X|=k_x$.
\end{lemma}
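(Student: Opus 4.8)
The plan is to run the matroid odd-sieving of Theorem~\ref{theorem:matroid-sieving} over the padded-$st$-path-enumerating determinant $\det A_{st}$ of Lemma~\ref{lemma:padded-st-paths}, with the matroid $M(r_1,r_2)$ of Lemma~\ref{lemma:long-cycle-rank} set up so that its total rank is $\mu=\max(k_2,k-k_x/2)$, and letting Lemma~\ref{lemma:long-cycle-rank} certify that every surviving term comes from a path on at least $k$ vertices. Recall that the monomials of $\det A_{st}$ are padded $st$-path terms $X(P)\prod_{e\in M}x_e^2$, so the odd support of every monomial is exactly $E(P)$ for the underlying $st$-path $P$: the $2$-cycle padding is invisible to any sieve that reads only the odd support, which is what makes the odd-sieving theorem the right tool here.

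Concretely, I would first discard degenerate inputs: the number of edges a cycle sends across the cut $(V_1,V_2)$ has a fixed parity (depending on which sides $s,t$ lie on) and is at most $2k_2$, so if $k_x$ violates these constraints we answer ``no''; and if $k_2\ge k$ then any $st$-path meeting exactly $k_2$ vertices of $V_2$ already has $|V(P)|\ge k$, which I would handle by odd-sieving over $M_2$ alone (still $O^*(2^{k_2})=O^*(2^{\mu})$). In the main case I take $r_2=k_2$ for the $V_2$-transversal matroid $M_2$ on $E_X\cup E_2$ and $r_1=\max(0,k-k_x/2-k_2)$ for the uniform matroid $M_1$ on $E_1$; up to an additive $O(1)$ correction accounting for the virtual edge $st$, this gives $r_1+r_2=\mu$ in both cases, so the sieve costs $O^*(2^{\mu})$ time and polynomial space, inherited from Theorem~\ref{theorem:matroid-sieving}. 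I would decorate the edge variables of $\det A_{st}$ with auxiliary degree variables recording $|E(P)\cap E_X|$ and $|V(P)\cap V_2|$, use interpolation (Lemma~\ref{lemma:interpolation}) to restrict attention to monomials realising the prescribed profile $(k_2,k_x)$, and then invoke Theorem~\ref{theorem:matroid-sieving} to test for a monomial whose odd support spans $M(r_1,r_2)$.

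For soundness, a surviving monomial yields an $st$-path $P$ with $|V(P)\cap V_2|=k_2$, $|E(P)\cap E_X|=k_x$, and $E(C)$ spanning $M(r_1,r_2)$ for $C=P+st$. Since $E(C)\cap E_2$ is independent in the transversal matroid $M_2$ (match the edges of each maximal $V_2$-path to all but one of its vertices), it extends inside $E(C)$ to a basis $F$ of $M(r_1,r_2)$ with $E_2\cap E(C)\subseteq F$, so Lemma~\ref{lemma:long-cycle-rank} gives $|V(C)|\ge|F|+|F\cap E_X|=(r_1+r_2)+(r_2-|E(C)\cap E_2|)$; substituting $|E(C)\cap E_2|=k_2-k_x/2$ and the chosen ranks yields $|V(C)|\ge\mu+k_x/2\ge k$. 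For completeness, a genuine $st$-path $P$ with $|V(P)|\ge k$ and the prescribed profile contributes its padding-free term $X(P)$, which survives: its $k_x/2$ maximal $V_1$-runs contain $|V(P)\cap V_1|-k_x/2\ge k-k_2-k_x/2\ge r_1$ edges of $E_1$, so $E(C)$ spans $M_1$, and its $k_2$ vertices of $V_2$ are matched by $E(C)\cap(E_X\cup E_2)$, so $E(C)$ spans $M_2$.

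The step I expect to be the main obstacle is making the auxiliary degree counts pin the profile $(k_2,k_x)$ exactly in the presence of padding: a $2$-cycle $x_e^2$ contributes even degree, so it is silent for the matroid part but does perturb the raw degree counts, and one must argue --- in the style of Lemma~\ref{lemma:linkage-vertex-vars} and Corollary~\ref{cor:framework-linkage} --- that for each genuine path a padding-free witness sits at exactly the intended degrees and that no strictly shorter path can, after padding, imitate a long path that spans $M$. The accompanying bookkeeping around the virtual edge $st$ (its effect on the parity of $k_x$, on the maximal-run counts, and hence on the precise choice of $r_1$) and on verifying the rank identity $r_1+r_2=\mu$ on the nose at the boundary between the two cases is fiddly but routine; the substantive content is the reduction to Lemma~\ref{lemma:long-cycle-rank}.
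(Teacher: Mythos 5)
Your overall plan matches the paper's proof exactly: sieve for a monomial in the padded-$st$-path determinant of Lemma~\ref{lemma:padded-st-paths} whose odd support spans $M(\ell_1,k_2)$ with $\ell_1=\max(0,k-k_x/2-k_2)$ and $k_2$ the rank of the transversal matroid, after restricting via interpolation to the degree profile corresponding to $(k_x,k_2)$; then apply Lemma~\ref{lemma:long-cycle-rank}. Your closing of the rank argument --- extending $E_2\cap E(C)$ to a basis $F$ inside $E(C)$ and computing $|F\cap E_X|=r_2-|E(C)\cap E_2|=k_x/2$ --- is actually a slightly cleaner way to invoke Lemma~\ref{lemma:long-cycle-rank} than the paper's. (One small inaccuracy: you cannot decorate edge variables so as to ``record $|V(P)\cap V_2|$'' directly; the paper instead attaches degree variables $z_x$ and $z_2$ to $E_X$- and $E_2$-edges respectively and reads off the profile via the identity $|V_2\cap V(C)|=|E_2\cap E(C)|+|E_X\cap E(C)|/2$.)

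However, the step you flag as ``the main obstacle'' is a genuine gap, and it is precisely the non-trivial content of the soundness direction. Your soundness paragraph begins by asserting ``a surviving monomial yields an $st$-path $P$ with $|V(P)\cap V_2|=k_2$ and $|E(P)\cap E_X|=k_x$,'' but interpolation only controls the degree of $z_x$ and $z_2$ in the monomial $m$, and a padding $2$-cycle on a crossing or $E_2$-edge inflates those degrees by $2$ while contributing nothing to the odd support. So a priori the surviving term could come from a short path $P$ ``padded up'' to the prescribed degrees. The paper closes this as follows. Write $p_x$ and $p_2$ for the number of padding $2$-cycles on $E_X$- and $E_2$-edges; the degree constraints give $|E(C)\cap E_X|+2p_x=k_x$ and $|E(C)\cap E_2|+2p_2=k_2-k_x/2$, and substituting into the identity above yields $|V_2\cap V(C)|=k_2-2p_2-p_x\le k_2$. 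On the other hand, the odd support of $m$ is contained in $E(P)\subseteq E(C)$, so the basis $F$ of $M$ furnished by the sieve is a set of $C$-edges; its restriction to $E_X\cup E_2$ is a basis of the transversal matroid $M_2$, i.e.\ $k_2$ edges of $C$ hitting $k_2$ distinct vertices of $V_2$, forcing $|V_2\cap V(C)|\ge k_2$. The two bounds meet only when $p_x=p_2=0$, which is exactly the claim you need and do not prove. You should make this argument explicit; without it, the appeal to Lemma~\ref{lemma:long-cycle-rank} is unsupported because you have not established the value of $|E(C)\cap E_2|$ that enters your computation of $|F\cap E_X|$.
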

\begin{proof}
  Assume $k>1$ (or else solve the problem in polynomial time),
  and reject if $k_x$ is odd. Remove any edge $st$ from the graph.
  We will use $st$ as a ``virtual'' edge to complete any
  $(s,t)$-path $P$ into a cycle and use Lemma~\ref{lemma:long-cycle-rank} 
  to look for cycles $C=P+st$ of length at least $k$.
  Formally, let $G=(V,E)$ be the input graph with no edge $st$
  present, and let $G'=G+st$ be the graph with $st$ introduced. 
  
  Let $\ell_1=\max(0, k-k_x/2-k_2)$ and note $\mu=k_2+\ell_1$. 
  Construct the matroid $M=M(\ell_1,k_2)$ over $G'$ on the ground set
  $E(G')=E(G)+st$. We will test whether $G'$ contains a cycle $C=P+st$
  and an edge set $F \subseteq E(C)$ such that the following hold.
  \begin{enumerate}
  \item $F$ is independent in $M$;
  \item $E(C) \cap E_2 \subseteq F$;
  \item $|F| \geq k-k_x/2$;
  \item $|F \cap E_X| \geq k_x/2$.
  \end{enumerate}
  By Lemma~\ref{lemma:long-cycle-rank}, if such a cycle exists, then
  $|V(C)| \geq k$. We show that the converse is true, i.e., if $G'$
  contains a cycle $C=P+st$ with $|V(C)| \geq k$ then there is a set
  $F \subseteq E(C)$ meeting the above conditions, and we show how to 
  use odd support sieving over the construction of
  Lemma~\ref{lemma:padded-st-paths} to detect such a pair $(C,F)$. 

  We first define the polynomial $P(X)$ that we are working over.
  Let $X=\{x_e \mid e \in E(G')\}$ be the variable set.
  Let $A_{st}$ be the matrix of Lemma~\ref{lemma:padded-st-paths}
  constructed from $G$, so that $\det A_{st}$ enumerates padded $(s,t)$-paths.
  Create additional variables $z_x$ and $z_2$, and scale the entries
  of $A_{st}$ so that variables $x_e$ for $e \in E_2$ are scaled by a factor of $z_2$,
  and variables $x_e$ for $e \in E_X$ are scaled by a factor of $z_x$.
  Similarly define $z_{st}=1$ if $s, t \in V_1$:
  $z_{st}=z_2$ if $s, t \in V_2$; and $z_{st}=z_x$ otherwise.
  Finally, we let $P(X)$ be the coefficient of $z_x^{k_x} z_2^{k_2-k_x/2}$ 
  in $x_{st}z_{st} \det A_{st}$. Associate every variable $x_e$,
  $e \in E(G')$ with the vector $M(e)$ in the representation of $M$.
  We claim that there is an $(s,t)$-path $P$ with $|V(P)| \geq k$,
  $|V(P) \cap V_2|=k_2$ and $|E(P+st) \cap E_X|=k_x$
  if and only if $P(X)$ contains a term whose odd support spans $M$.

  First, let $C=P+st$ be a simple cycle meeting the conditions.
  Orient $C$ arbitrarily cyclically and let $F \subseteq E(C)$
  consist of every edge oriented towards a vertex of $V_2$
  together with $\ell_1$ further edges of $E(C) \cap E_1$;
  note $|E(C) \cap E_1| \geq \ell_1$.
  Also, $F$ contains precisely $k_x/2$ crossing edges.
  Furthermore, clearly $F$ is a basis for $M$. 
  Then $x_{st}z_{st} \det A_{st}$ contains a monomial $m=x_{st}z_{st} \prod_{e \in E(P)} x_ez_e$
  (for the suitable value of $z_e \in \{1, z_2, z_x\}$),
  by taking the term from the path $P$ and using only loops for padding.
  The degree of $z_x$ in $m$ is precisely $k_x$
  and the degree of $z_2$ is precisely $k_2-k_x/2$.
  Hence $m$ also occurs in $P(X)$. This proves one direction of the equivalence.

  Conversely, let $m$ be a term in $P(X)$ whose odd support spans $M$,
  and let $F$ be a subset of the odd support of $m$ such that $F$ is a
  basis for $M$. Let $C$ be a cycle such that $m$ corresponds to a
  padding of $C$. By the definition of $P(X)$, $m$ contains precisely $k_x$ crossing
  edges and $k_2-k_x/2$ edges of $E_2$, counting both $C$ and any
  2-cycles in the padding. Now, every vertex of $V_2 \cap V(C)$
  contributes two endpoints in $E(C)$, every edge of $E_2 \cap E(C)$
  represents two such endpoints, and every edge of $E_X \cap E(C)$
  represents one such endpoint. Hence
  $|V_2 \cap V(C)|=|E_2 \cap E(C)|+|E_X \cap E(C)|/2$. 
  Since not all edges counted in the degrees of $z_x$ and $z_2$ must
  come from $C$ itself, this is upper bounded by $(k_2-k_x/2) + k_x/2=k_2$, 
  with equality only if no padding 2-cycle uses an edge of $E_X \cup E_2$.
  Furthermore, since $F$ spans $M$, $F$ represents precisely $k_2$ 
  edges incident with distinct vertices of $V_2$, and since we sieve
  in the odd support $F$ cannot use edges from any padding 2-cycles.
  We conclude that $C$ contains precisely $k_x$ crossing edges and
  is incident with exactly $k_2$ vertices of $V_2$.
  Finally, $|V(C)| \geq r(M)+k_x/2 \geq k$ by Lemma~\ref{lemma:long-cycle-rank}.
  The running time and failure probability follow from Theorem~\ref{theorem:matroid-sieving}.
\end{proof}

It now only remains to combine Lemma~\ref{lemma:alg-longcycle} with a
carefully chosen random partition strategy for $V=V_1 \cup V_2$.

\begin{proof}[Proof of Theorem~\ref{theorem:long-path}]
  Let $P$ be an $(s,t)$-path and $C=P+st$ a cycle, and let $|V(C)|=ck$, $c \geq 1$. 
  Sample a partition $V=V_1 \cup V_2$ by placing every vertex $v$ into $V_2$
  independently at random with some probability $p$.
  For fix choices of $p$, $k_2$ and $\ell_1$ we estimate the probability
  that $C$ contains precisely $k_2$ vertices of $V_2$ and
  precisely $\ell_1$ edges of $E_1$.
  First, the probability that $|V(C) \cap V_2|=k_2$ is
  \[
    p_1(p,k_2) := \binom{ck}{k_2} p^{k_2}(1-p)^{ck-k_2}.
  \]
  In particular, all $\binom{ck}{k_2}$ colourings of $V(C)$ 
  with $k_2$ members of $V_2$ are equally likely.
  Now, let us count the number among those colourings where
  there are precisely $k_x$ transitions between $V_1$ and $V_2$. 
  To eliminate edge cases, assume $k_2 < |V(P)|$, $k_x < 2k_2$
  and $k_x < 2(ck-k_2)$ and that $k_x$ is even, so that $k_x$ is achievable.
  To describe the outcomes, consider an initial shorter cycle
  of $k_2$ elements, all of which are coloured $V_2$, and consider
  the different ways to place $ck-k_2$ vertices coloured $V_1$ between
  these so that there are precisely $k_x$ transitions between $V_1$
  and $V_2$. Counting cyclically, this implies that there are
  precisely $k_x/2$ blocks of vertices coloured $V_1$.
  There are precisely
  \[
    \binom{ck-k_2-1}{k_x/2-1}
  \]
  ordered sequences of $k_x/2$ positive numbers that sum to $ck-k_2$.
  Indeed, these can be thought of as placing all $ck-k_2$ elements in
  a sequence (coding the number $ck-k_2$ in unary) and selecting $k_x/2-1$ 
  out of the $ck-k_2-1$ gaps between elements to insert a break between
  blocks. For every such ordered sequence, we similarly select
  \[
    \binom{k_2}{k_x/2}
  \]
  positions in the cycle of $V_2$-vertices into which to insert the blocks.
  This undercounts slightly -- e.g., for a given vertex $v \in V(C)$,
  this accurately counts the number of assignments where $v \in V_2$,
  missing assignments where $v \in V_1$ -- but it is tight up to a
  polynomial factor. Hence, given an outcome with $|V(C) \cap V_2|=k_2$ the probability
  of precisely $k_x$ crossing edges is at least
  \[
    p_2(k_2,k_x) := \binom{ck}{k_2}^{-1} \binom{ck-k_2-1}{k_x/2-1} \binom{k_2}{k_x/2}.
  \]
  Thus the total probability of meeting both conditions is at least
  \[
    p_3(p, k_2, k_x) = p_1(p,k_2) p_2(k_2,k_x)
    = p^{k_2}(1-p)^{ck-k_2} \binom{ck-k_2-1}{k_x/2-1} \binom{k_2}{k_x/2}.
  \]
  Given such an outcome, we can then detect a cycle by Lemma~\ref{lemma:alg-longcycle}
  in time $O^*(2^{\mu})$ where $\mu=\max(k_2,k-k_x/2)$.
  By repeating the algorithm $\Theta^*(p_3(p,k_2,k_x))$ times,
  we get a high probability of success, with a total running time of
  \[
    O^*(2^\mu/p_3(p, k_2, k_x)) = n^{O(1)}
    \frac{2^\mu}{p^{k_2}(1-p)^{ck-k_2}  \binom{ck-k_2-1}{k_x/2-1} \binom{k_2}{k_x/2}}.
  \]
  The choice of $p$, $k_2$ and $k_x$ will depend on $c$, but since
  there are only $n-k+1$ possible values of $|V(C)| \geq k$ we may
  repeat the algorithm for every such value. We follow approximately
  the analysis used by Bj\"orklund et al.~\cite{BjorklundHKK17narrow}.
  Due to $\mu$, the algorithm has two modes, depending on the value of $c$.
  The expected value of $k_x/2$ depends on $c$ and $p$, and is maximised
  at $p=1/2$ with expected value $ck/4$. When $c$ is close to 1 setting $p=1/2$
  yields $E[k_2]=ck/2 < k-E[k_x/2]=(1-c/4)k$, hence the algorithm is
  dominated by $\mu=k-k_x/2$, and the best strategy is to maximise $k_x$.
  Here, the analysis of Bj\"orklund et al.\ applies. At some crossover point
  (e.g., $c = 4/3$ if we use the na\"ive values $p=1/2$, $k_2=ck/2$,
  $k_x/2=ck/4$) the algorithm at $p=1/2$ becomes dominated by $\mu=k_2$,
  and the best strategy is to pick $p$ so that $E[k_2]=E[k-k_x/2]$.
  Let us consider the second case first. We refrain from optimizing
  the running time for these values (since this regime does not
  represent the limiting behaviour of the algorithm) and 
  use $k_2=cpk$ and $k_x/2=cp(1-p)k$. Then we set $p$ so that
  \[
    cpk = k-cp(1-p)k \Rightarrow p = 1-\sqrt{1-\frac{1}{c}}.
  \]
  It can easily be checked that with $k_2=pck$ and $k_x=2p(1-p)ck$,
  we get $1/p_3(p,k_2,k_x)=O^*(1)$ -- e.g., the first part is $2^{-H(p)ck}$
  and up to polynomial factors the binomial terms are $2^{H(p)c(1-p)k}$
  and $2^{H(1-p)cpk}=2^{H(p)cpk}$, respectively, where $H(p)=-p\log p-(1-p)\log (1-p)$ 
  is the binary entropy function. Thus the running time of the
  algorithm in this regime is $O^*(2^{c(1-\sqrt{1-1/c})k})$,
  where the exponent decreases with increasing $c$ (approaching $k/2$)
  and at $c=4/3$ it becomes $O^*(2^{2k/3})=O^*(1.59^k)$.
  Now we focus on the regime $c<4/3$, in which case we set $p=1/2$
  and $k_2=ck/2$ (to maximise the expected number of crossing edges).
  We set $k_x=2cp(1-p)k + 2\beta c k=(1/2 + 2\beta)ck$ where $\beta > 0$
  is a parameter to optimize. We are in the case $\mu = k-k_x/2$. Consider
  the effect of increasing $k_x$ by 2. Noting that
  \[
    \binom{n}{k+1} = \binom{n}{k} \cdot \frac{n-k}{k+1},
  \]
  the total running time is multiplied by a factor
  \[
    \frac{1/2}{
      \frac{(1/4-\beta)ck}{(1/4+\beta)ck} \cdot
      \frac{(1/4-\beta)ck}{(1/4+\beta)ck}}.
  \]
  For the best possible value of $\beta$, this will equal $1 \pm o(1)$,  
  since otherwise we can improve the running time by raising or
  lowering $k_x$. Thus
  \[
    1/4 + \beta = (1/4-\beta)\sqrt{2} \Rightarrow
    \beta = \frac{\sqrt{2}-1}{4(1+\sqrt{2})}  =   \frac{(\sqrt{2}-1)^2}{4} 
    = \frac{3}{4} - \frac{1}{\sqrt{2}}
  \]
  by multiplying with $\sqrt{2}-1$ in the next-to-last step.
  We now revisit the total running time. By Stirling's approximation,
  \[
    \binom{n}{\alpha n} 
    = \Theta^*\left(\left( \frac{1}{\alpha^{\alpha} (1-\alpha)^{1-\alpha}} \right)^n\right)
  \]
  (see~\cite{BjorklundHKK17narrow} for a derivation).
  Plugging the values ($p=1/2$, $k_2=ck/2$, $k_x/2=(1/4+\beta)ck=(1-1/\sqrt{2})ck$)
  into the running time and simplifying we get (up to a polynomial factor)
  \[
    \frac{2^{ck} 2^{k-(1-1/\sqrt{2})ck}}
    {\binom{ck/2}{(1-1/\sqrt{2})ck}^2}
    =
    2^k2^{(1/\sqrt{2})ck}(2-\sqrt{2})^{(2-\sqrt{2})ck} (\sqrt{2}-1)^{(\sqrt{2}-1)ck}
    =  2^k2^{ck} (\sqrt{2}-1)^{ck},
  \]
  where the last step follows by factoring $2-\sqrt{2} = \sqrt{2}(\sqrt{2}-1)$ and
  simplifying the result.
  This equals $O^*((4(\sqrt{2}-1))^k)=O^*(1.66^k)$ for the basic case $ck=k$,
  and is a decreasing function in $c$. Therefore, this analysis
  applies for values of $c$ up to the crossover point where $\mu=k_2$. 
  Switching from the running time $2^\mu=2^{k-k_x/2}$ to $2^\mu=2^{k_2}$
  for the above values of $k_2$ and $k_x$ 
  represents multiplying the running time by
  \[
    2^{k_2-k+k_x/2}=2^{ck/2-k+(1-1/\sqrt{2})ck}
  \]
  hence the running time after the crossover point, with the above
  parameters, is some function $O^*(\xi^{ck})$, $\xi>1$. We evaluate
  the formula at $c=4/3$ and find it reaches $O^*(1.62^k)$ at this point. 
  Hence no further case distinctions are needed. In summary, our
  algorithm has the following steps:
  \begin{enumerate}
  \item Repeat the below with every target value $ck \in \{k,\ldots,n\}$.
  \item If $c \leq 4/3$, set $p=1/2$, $k_2=ck/2$ and $k_x=2(1-1/\sqrt{2})ck$.
  \item If $c>4/3$, set $p = 1-\sqrt{1-1/c}$, $k_2=pck$ and $k_x=2p(1-p)ck$.
  \item Repeat $\Omega(n/p_3(p, k_2, k_x))$ times: Compute a partition $V=V_1 \cup V_2$ 
    by placing every vertex $v \in V$ into $V_2$ independently at
    random with probability $p$. Use Lemma~\ref{lemma:alg-longcycle}
    with partition $(V_1,V_2)$ and arguments $k$, $k_2$, $k_x$
    to detect a cycle in $G$ with the given parameters. If successful,
    return YES.
  \item If every attempt fails, return NO.
  \end{enumerate}  
  This concludes the proof of Theorem~\ref{theorem:long-path}.
\end{proof}

\section{Subgraph problems}
\label{sec:subgraphs}

Another major area of applications of algebraic methods in FPT
algorithms is problems of detecting a particular kind of subgraph. 
This is of course very broad (and in the form just described would
arguably cover all of Section~\ref{sec:paths-linkages} as well),
so we focus on two topics. Essentially, these topics correspond to two
families of enumerating polynomials: \emph{branching walks} and
\emph{homomorphic images}. 

The first topic, which is already very broad, is detecting whether a
given graph $G$ contains a connected subgraph meeting a particular
condition. The parameter here may either be the size of the subgraph
or a parameter related to the condition itself. We review some
examples. The first application of branching walks that we are aware
of was for the well-known \textsc{Steiner Tree} problem.
Here the input is a graph $G=(V,E)$ and a set of terminals $T \subseteq V$, $|T|=k$,
and the task is to find a smallest connected subgraph of $G$ that
spans $T$ (i.e., a subtree of $G$ whose vertex set contains $T$).
Nederlof~\cite{Nederlof13algorithmica} showed a polynomial-space,
$O^*(2^k)$-time algorithm for this problem, using an inclusion-exclusion
sieving algorithm. We may also consider generalisations of the problem.
In \textsc{Group Steiner Tree}, the terminals come in $k$ groups and
the task is to find the smallest subtree that contains a terminal of
each group. In \textsc{Directed Steiner Out-Tree}, the graph is
directed, and the task is to find the smallest out-tree (i.e., a
directed subtree of $G$ where every arc leads away from the root)
which spans the terminals. Misra et al.~\cite{MisraPRSS12} showed that both of these
variants can be solved in the same running time $O^*(2^k)$.

Another problem, perhaps of an apparently very different nature, is
\textsc{Graph Motif}. The precise definition is slightly involved,
but in its base variant the input contains a graph $G=(V,E)$,
a vertex colouring $c \colon V \to [n]$, an integer $k$,
and a capacity $d_q$ for every colour $q \in c(V)$.
The task is to find a subtree $T$ of $G$ on $k$ vertices such that
every colour $q$ occurs in at most $d_q$ vertices of $T$. 
As surveyed in the introduction, this problem led to the development
of the \emph{constrained multilinear detection} method by Björklund et
al., leading to a solution in time $O^*(2^k)$ for \textsc{Graph Motif}
as well as several weighted and approximate variants~\cite{BjorklundKK16}.

In our first application in this section, we note a 
generalisation of the above results.

\begin{theorem}
  \label{thm:subtree-framework}
  Let $G=(V,E)$ be an undirected graph and $M$ be a matroid over $V$. 
  Let $k, w \in \N$. If $M$ is represented over a field of characteristic 2,
  then in randomized time $O^*(2^k)$ and polynomial space we can
  detect the existence of a connected subgraph $H$ of $G$ such that
  $V(H)$
  has rank at least $k$ in $M$ and
  $|V(H)| \leq w$.
  If $M$ is represented over any other field, then the algorithm needs
  $O^*(2^{\omega k})$ time and $O^*(4^k)$ space. 
\end{theorem}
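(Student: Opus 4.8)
The plan is to instantiate the framework ``efficient enumerating polynomial $+$ determinantal sieving'' with the \emph{branching-walk polynomial} of Nederlof~\cite{Nederlof13algorithmica}, whose explicit evaluation we take from Bj\"orklund et al.~\cite{BjorklundKK16}. Recall that a branching walk of length $j$ rooted at $r\in V$ is a homomorphism $\phi$ from an ordered rooted tree $T$ on $j+1$ nodes into $G$ sending the root of $T$ to $r$; its vertex image $\phi(V(T))$ is a connected set containing $r$ of size at most $j+1$, and its edge image $\phi(E(T))$ is the edge set of a connected subgraph. First I would record, for each root $r$ and each length $j$, a polynomial-size arithmetic circuit $\mathcal B_{r,j}(X,Y)$ over vertex variables $X=\{x_v\mid v\in V\}$ (plus a polynomial-size set $Y$ of fingerprint variables that makes every branching walk contribute a distinct monomial, so no cancellation occurs) whose monomials are exactly the branching walks of length $j$ rooted at $r$, with the $X$-support of a monomial equal to the visited vertex set $\phi(V(T))$. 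The two facts I need are: \emph{(soundness)} every monomial of $\mathcal B_{r,j}$ corresponds to a branching walk whose image is a connected subgraph on at most $j+1$ vertices; and \emph{(completeness)} for every connected subgraph $H\ni r$ with $|V(H)|\le j+1$, a spanning tree of $H$ together with a bijective $\phi$ contributes a \emph{multilinear} monomial whose $X$-support is exactly $V(H)$. For the variant where $M$ lives on $E$, I would use edge variables $x_e$ instead; here the right witness is a DFS tree of $H$ augmented with one pendant leaf per non-tree edge, which is a tree of $|E(H)|$ edges mapping onto $H$ and covering every edge once, hence a multilinear monomial with support exactly $E(H)$. Thus for the vertex case I use lengths $j\le w-1$ and for the edge case lengths $j\le w$, which is exactly what enforces the size bound.

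Next I would prepare the matroid side: replace $M$ by its rank-$k$ truncation $M'$ (Section~\ref{sec:prel-matroids}), represented by $A'\in\F^{k\times V}$ (resp.\ $A'\in\F^{k\times E}$), passing if necessary to an extension field of size $\Omega(n)$ so that the Schwartz--Zippel failure probability is negligible. Then ``$V(H)$ has rank at least $k$ in $M$'' is equivalent to ``$V(H)$ spans $M'$'', and likewise for $E(H)$.

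The sieving step then splits into two regimes. Over a field of characteristic $2$, I would associate each $x_v$ with the column $A'[\cdot,v]$ (each $\Gamma_v$ a singleton) and run the odd sieving algorithm, Theorem~\ref{theorem:matroid-sieving}, on $\mathcal B_{r,j}$ for every root $r$ and every admissible length $j$. In the completeness direction, the bijective spanning-tree (resp.\ DFS) witness of a putative solution $H$ is multilinear, so its odd support equals its support equals $V(H)$ (resp.\ $E(H)$), which spans $M'$, and odd sieving reports a positive. For soundness, any detected monomial comes from a branching walk whose image $H'$ is a connected subgraph with at most $w$ vertices (resp.\ edges), and the detected odd-support subset spanning $M'$ is contained in $V(H')$ (resp.\ $E(H')$), certifying rank at least $k$. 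The running time is $O^*(2^k)$ and the space is polynomial. Over any other field odd sieving is unavailable, so instead I would first substitute $x_v\mapsto x_v(1+x_v')$ for fresh scalar variables $x_v'$ (keeping $x_v$ as fingerprints), which replaces a monomial with $X$-support $S$ by $\sum_{S'\subseteq S}\bigl(\prod_{v\in S}x_v\bigr)\prod_{v\in S'}x_v'$, associate $x_v'$ with $A'[\cdot,v]$, and apply the exterior-algebra sieving of Theorem~\ref{lemma:extensor-matroid-sieving} to the resulting polynomial-size circuit. Sieving for an $X'$-support that is a basis of $M'$ then detects precisely a branching walk whose visited set contains a basis, i.e.\ a connected subgraph of size at most $w$ spanning $M'$. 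Repeated visits are harmless here: a vertex of multiplicity $m$ contributes $x_v^m(1+m\,x_v')$ once the wedge relation $a_v\wedge a_v=0$ annihilates higher $x_v'$-powers, and since $\mathrm{char}\,\F\neq 2$ while the spanning-tree witness has all multiplicities $1$, this produces neither false positives nor false negatives. This yields $O^*(2^{\omega k})$ time and $O^*(4^k)$ space.

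The hard part will be the last point: reconciling the ``spanning'' condition these problems actually require (Steiner Tree being the prototype, where $V(H)$ need only contain the terminals and may carry many extra Steiner vertices) with the ``is a basis'' condition that the exterior-algebra sieving delivers natively, which I resolve through the $x_v\mapsto x_v(1+x_v')$ reduction above, and then carefully checking that branching-walk repetitions do not corrupt either direction of the equivalence in either regime. A secondary check is that Bj\"orklund et al.'s evaluation procedure still goes through after these substitutions; since each substitution only rewrites a single variable as a short subexpression, the circuit stays polynomial-size and the argument is unaffected.
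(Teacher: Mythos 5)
Your proposal follows essentially the same route as the paper: use Nederlof's/Bj\"orklund et al.'s branching-walk polynomial as the efficiently-evaluable enumerating polynomial, observe that a connected subgraph $H$ admits a simple (resp.\ semisimple) branching walk whose vertex (resp.\ edge) image is exactly $V(H)$ (resp.\ $E(H)$), and sieve determinantally against the rank-$k$ truncation of $M$, using odd sieving (Theorem~\ref{theorem:matroid-sieving}) in characteristic $2$ and the exterior-algebra sieve (Theorem~\ref{lemma:extensor-matroid-sieving}) otherwise. Your spanning-tree witness for the vertex case and the DFS-tree-plus-pendant-leaf witness for the edge case are exactly the content of the paper's Lemmas~\ref{lemma:branching-walks-simple} and~\ref{lemma:branching-walks-semisimple}.

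Where you are more careful than the paper is the general-field branch. The paper disposes of it with ``the same analysis applies, but the running time \ldots follow from Theorem~\ref{lemma:extensor-matroid-sieving} instead,'' but that theorem only reports multilinear monomials whose support is \emph{exactly} a basis, while the witness subgraph $H$ typically has $|V(H)|>k$ and hence contributes a monomial of degree $\ell>k$ in $X$. Your explicit substitution $x_v\mapsto x_v(1+x_v')$, keeping $x_v$ as a fingerprint and attaching the matroid column only to $x_v'$, is the correct mechanism for converting ``spans'' into ``is a basis''; it is the general-field analogue of what odd sieving does automatically in characteristic~$2$. Your observation that the wedge relation $a_v\wedge a_v=0$ kills the higher terms of $(1+x_v')^{m_v}$ so that repeated visits only contribute the linear coefficient $m_v$ is also right, and since simple branching walks have all multiplicities $1$ this does not affect completeness, while soundness is preserved because the detected $X'$-support is still contained in the connected image of some branching walk of the bounded size. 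One cosmetic remark: iterating over roots $r$ is unnecessary since $P_\ell(X,Y)=\sum_s P_{\ell,s}(X,Y)$ already aggregates them (and keeps the auxiliary $Y$-fingerprints, which is what prevents cross-walk cancellations after your substitution). In short, the argument is correct and matches the paper, and you even supply a step the paper leaves implicit.
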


\begin{remark}
  In the conference version of this paper, we also claimed results
  for \textsc{Rank $k$ Connected Subgraph} with a matroid defined on
  the edge set of $G$ instead of the vertex set. In preparing 
  this journal version, we realized that the proof given previously
  was incomplete. An edge version of Theorem~\ref{thm:subtree-framework} is possible,
  but cannot use the simple branching walk polynomial $P_k(X,Y)$
  of Björklund et al.~\cite{BjorklundKK16}, but must use a ``decorated'' version
  (e.g., in the style of the results of Section~\ref{subsec:subgraphiso}). 
  Given that most (if not all) main applications of Theorem~\ref{thm:subtree-framework}
  are for matroids over the vertex set, we choose to omit the details
  of such an edge variant.
\end{remark}

The second result regards subgraph isomorphism. Given graphs $G$ and $H$,
the problem of checking whether $H$ is a subgraph of $G$ parameterized
by $|V(H)|$ can be either FPT, as when $H$ is a path,
or W[1]-hard, as when $H$ is a clique. More generally,
\textsc{Subgraph Isomorphism} is FPT by $|V(H)|$ if $H$ comes from a
family of graphs with bounded treewidth (originally shown using the
colour-coding technique of Alon et al.~\cite{AlonYZ95}),
and there is good evidence that no more general such class
exists~\cite{BringmannS21,Marx10twd}.
In general, the parameterized complexity of subgraph isomorphism
problems has been extensively and meticulously investigated~\cite{JansenM15,MarxP14everything}.

In fact, one of the fastest methods for \textsc{Subgraph Isomorphism}
works via an arithmetic circuit for evaluating the \emph{homomorphism polynomial}~\cite{FominLRSR12},
allowing for the randomized detection of a subgraph $H$ with
$|V(H)|=k$ and of treewidth $w$ in time $O(2^kn^{w+1})$. 
Furthermore, the exponent $w+1$ here is optimal, up to plausible conjectures~\cite{BringmannS21}.
We observe that this running time is compatible with an additional constraint that 
the copy of $H$ found in $G$ should be independent in a given linear matroid.
Since we in this application care about the concrete polynomial factor,
unlike in the rest of the paper, we make an additional assumption that 
field operations can be performed in $k^{O(1)}$ time. As briefly discussed
in Section~\ref{subsec:conventions}, this covers many but not all matroids in common use
in parameterized complexity. 

\begin{theorem}
  \label{thm:subgraph-hom}
  Let $G$ and $H$ be undirected graphs, $k=|V(H)|$ and $n=|V(G)$.
  Let a tree decomposition of $H$ of width $w$ be given. Also let $M$
  be a matroid over $V(G)$. If $M$ is represented over a field of
  characteristic 2, then in randomized time $O(2^k \cdot k^{O(1)} \cdot n^{w+1})$
  and polynomial space we can detect whether there is a subgraph of $G$ 
  isomorphic to $H$ whose vertex set is independent in $M$.
  Similarly, given $M$ over 
  $V(G) \cup E(G)$ in time
  $O(2^{k+E(H)} \cdot k ^{O(1)} \cdot n^{w+1})$
  we can detect a subgraph of $G$ isomorphic to $H$ whose
  edge and vertex set, are independent in $M$.
  Here, we assume that field operations over $\F$ take at most $k^{O(1)}$ time.
  Over a general field, the algorithm needs $O^*(4^r)$ space and 
  the running time becomes $O(2^{\omega r} k^{O(1)} n^{w+1})$,
  or $O(4^r k^{O(1)} n^{w+1})$ if a path decomposition of $H$ of width $w$ is provided
  instead of a tree decomposition. Here, $r=k$ if $M$ is over $V(G)$
  and $r=k+|E(G)|$ if $M$ is over $V(G) \cup E(G)$.
\end{theorem}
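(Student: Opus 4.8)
The plan is to instantiate determinantal sieving on the \emph{homomorphism polynomial} of the pair $(H,G)$, labelled so that the only surviving monomials come from injective homomorphisms with the desired matroid property. Introduce a variable $x_{u,v}$ for every $u\in V(H)$, $v\in V(G)$, and — in the edge variant — a variable $x_{uv,ab}$ for every $uv\in E(H)$, $ab\in E(G)$, and set
\[
  P(X)=\sum_{\varphi}\ \prod_{u\in V(H)} x_{u,\varphi(u)}\ \prod_{uv\in E(H)} x_{uv,\varphi(u)\varphi(v)},
\]
the sum ranging over all homomorphisms $\varphi\colon H\to G$. This $P$ is homogeneous of degree $r$, where $r=|V(H)|$ in the vertex-matroid case and $r=|V(H)|+|E(H)|$ in the edge-matroid case; it is multilinear in $X$; and, because distinct homomorphisms produce distinct monomials, every monomial has coefficient $1$, so there is no cancellation — in particular none over characteristic $2$. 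I would then invoke the known fact (Fomin et al.~\cite{FominLRSR12}, see also Brand~\cite{Brand19thesis}) that, given a tree decomposition of $H$ of width $w$, $P$ is computed by an arithmetic circuit of size $k^{O(1)}n^{w+1}$ obtained by dynamic programming over the decomposition (a table indexed by the $\le n^{w+1}$ partial maps from a bag to $V(G)$). When the decomposition is in fact a \emph{path} decomposition this circuit has no join gates, so it is skew.

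Next I would set up the matroid labelling. Truncate $M$ to rank $r$ — possibly passing to an extension field of $\F$, which preserves characteristic $2$ — so that it is represented by a matrix $A$ with exactly $r$ rows and columns indexed by $V(G)$ (respectively $V(G)\cup E(G)$). Label the variable $x_{u,v}$ by the column $A[\cdot,v]$ and $x_{uv,ab}$ by $A[\cdot,ab]$; formally this is an application of the sieving theorems to the matroid on the variable index set obtained by duplicating columns of $A$. The monomial $m_\varphi$ of a homomorphism $\varphi$ has support of size exactly $r$, and the corresponding labelled columns are $\{A[\cdot,\varphi(u)]:u\in V(H)\}\cup\{A[\cdot,\varphi(u)\varphi(v)]:uv\in E(H)\}$; these span a basis precisely when the associated ground elements of $M$ are pairwise distinct and independent. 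Since $\varphi$ being injective on $V(H)$ is equivalent to the vertex images being distinct, and already forces distinctness of the edge images, $m_\varphi$ survives the basis sieve if and only if $\varphi$ is injective and $\varphi(V(H))$ (respectively $\varphi(V(H))\cup\varphi(E(H))$) is independent in $M$ — that is, exactly when $\varphi$ exhibits a copy of $H$ in $G$ of the required kind. Distinct $\varphi$ give distinct monomials, so no cancellation occurs in the sieved polynomial either, and a nonzero outcome is a faithful certificate.

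Finally I would feed $P$ and this labelling into the appropriate sieving result. Over a field of characteristic $2$, the homogeneous case of Theorem~\ref{theorem:sieve-basis} applies with $O^*(2^r)$ evaluations of $P$, polynomial space, and false-negative probability $O(r/|\F|)$, which is made $o(1)$ by working over a large extension field of characteristic $2$; since one evaluation of $P$ costs $k^{O(1)}n^{w+1}$ field operations (and field operations cost $k^{O(1)}$ by assumption), the total time is $O(2^r\,k^{O(1)}\,n^{w+1})$. Over a general field I would evaluate the circuit for $P$ over the exterior algebra: with a tree decomposition the circuit is arbitrary and Theorem~\ref{lemma:extensor-matroid-sieving} gives $O^*(2^{\omega r})$ time and $O^*(4^r)$ space, whereas with a path decomposition the circuit is skew with all label-set sizes equal to $1$, so the skew case of Theorem~\ref{lemma:extensor-matroid-sieving} gives $O^*(4^r)$ time; multiplying by the $k^{O(1)}n^{w+1}$ cost of circuit evaluation yields the stated bounds. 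I expect the sieving step itself to be essentially black-box; the two points needing care are (i) choosing the fine-grained variables $x_{u,v}$, $x_{uv,ab}$ rather than vertex-indexed variables, so that no coefficient of $P$ vanishes modulo $2$ and the characteristic-$2$ polynomial-space algorithm is legitimate, and (ii) correctly classifying the evaluation circuit (skew versus not) so that the cited theorems deliver precisely the running times claimed. The combinatorial core — that a labelled basis sieve over the homomorphism polynomial recognises matroid-independent subgraph copies — is then a one-line check.
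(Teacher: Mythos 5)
Your proof is correct and follows essentially the same route as the paper: evaluate the homomorphism polynomial via dynamic programming over the tree (or path) decomposition in $k^{O(1)}n^{w+1}$ time, then apply basis sieving (Theorem~\ref{theorem:sieve-basis}, or Theorem~\ref{lemma:extensor-matroid-sieving} over general fields). The only cosmetic difference is your choice of fine-grained variables $x_{u,v}$, which serve double duty as both sieving variables (labelled by duplicated columns of $A$) and as algebraic fingerprints, whereas the paper keeps $X=\{x_v\}$ and $Y=\{y_e\}$ as sieving variables and introduces a separate fingerprint set $X'=\{x'_{i,v}\}$; both choices prevent cancellation and yield the same running times.
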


\subsection{Finding high-rank connected subgraphs}

For our first application, generalizing \textsc{Steiner Tree} and
\textsc{Graph Motif}, we apply the concept of branching walks.
Informally, branching walks in a graph $G$ are a relaxation of
subtrees of $G$, similar to how walks are a generalisation of paths.
More formally, a branching walk in $G$ can be described as a tree
$T$ and a homomorphism mapping $T$ into $G$. Let us recall the
definitions. 

\begin{definition}
  Let $G$ and $H$ be undirected graphs. A \emph{homomorphism} from $G$
  to $H$ is a mapping $\varphi \colon V(G) \to V(H)$ such that
  for every edge $uv \in E(G)$, $\varphi(u)\varphi(v) \in E(H)$. 
\end{definition}

Branching walks were defined by Nederlof~\cite{Nederlof13algorithmica}.
We use the more careful definition of Björklund et al.~\cite{BjorklundKK16}.

\begin{definition}
  Let $G$ be a graph. A \emph{branching walk} $W=(T,\varphi)$
  is an ordered, rooted tree $T$ and a homomorphism
  $\varphi$ from $T$ to $G$.
  We assume w.l.o.g.\ that $V(G)=\{1,\ldots,n\}$ and $V(T)=\{1,\ldots,|V(T)|\}$,
  where $V(T)$ is ordered according to the preorder traversal of $T$.
  We say that $W$ \emph{starts from} the vertex $\varphi(1)$ in $G$.
  The \emph{size} of $W$ is $|V(T)|$ and its \emph{span} is $\varphi(V(T))$.   
  $W$ \emph{visits} a vertex $v \in V(G)$ if $v \in \varphi(V(T))$.
  $W$ is \emph{simple} if $\varphi$ is injective. 
  Finally, $W$ is \emph{properly ordered}
  if for any two sibling nodes $a, b \in V(T)$ with $a<b$
  we have $\varphi(a)<\varphi(b)$. 
\end{definition}

The ordering here is a technical device to make the map non-ambiguous.
Björklund et al.\ define a generating polynomial (or in our terms,
an enumerating polynomial) for properly ordered branching walks.
We recall their construction next.
Fix a host graph $G=(V,E)$ and a size $k$ for the branching walk.
Introduce two sets of variables $X=\{x_v \mid v \in V(G)\}$
and $Y=\{y_{(u,v)}, y_{(v,u)} \mid uv \in E(G)\}$.
For a properly ordered branching walk $W = (T, \varphi)$ in $G$, define the
corresponding monomial
\[
  m(W,X,Y) = x_{\varphi(1)} \prod_{ab \in E(T): a<b} y_{(\varphi(a),\varphi(b))} x_{\varphi(b)}.
\]
As Björklund et al.\ show, $m(W,X,Y)$ is multilinear if and only if
$W$ is simple, and $W$ can be reconstructed from the factors of $m(W,X,Y)$. 
Given a target size $k$ for $W$ and a starting vertex $s \in V(G)$, define
\[
  P_{k,s}(X,Y) = \sum_W m(W,X,Y) \quad \text{ and } \quad P_k(X,Y) = \sum_{s \in V(G)} P_{k,s}(X,Y),
\]
where the sum goes over all properly ordered branching walks of size
$k$ in $G$ that start from $s$. Björklund et al.\ show that $P_{k,s}(X,Y)$
can be evaluated in time polynomial in $n+k$ (in fact, in $O(k^2 m)$ field operations,
where $m=|E(G)|$)~\cite{BjorklundKK16}. 

We now visit our target result, Theorem~\ref{thm:subtree-framework}.
We observe the key property of branching walks that make them
algorithmically useful: A \emph{minimal}
branching walk spanning a given vertex set is always a subtree of $G$.

\begin{lemma} \label{lemma:branching-walks-simple}
  Let $G=(V,E)$ be a graph and let $U \subseteq V$ be a set of vertices such that $G[U]$ is connected.
  Then there is a properly ordered branching walk $W$ in $G$ with span $U$
  and size $|U|$ such that the corresponding monomial $m(W,X,Y)$ is contributed only once in $P_{|U|}(X,Y)$.
  Furthermore, any branching walk with span $U$ and size $|U|$ is simple.
\end{lemma}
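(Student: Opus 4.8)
The statement has two parts, which I would handle separately. For the existence part, the plan is to start from a spanning tree of $G[U]$ and reorganise it into a properly ordered branching walk; for the second part (simplicity), a one-line counting argument suffices, so I would dispose of it first. Indeed, if $W=(T,\varphi)$ is any branching walk with span $U$ and size $|U|$, then by definition $\varphi(V(T))=U$ and $|V(T)|=|U|$, so $\varphi$ is a surjection from a finite set onto a set of the same cardinality, hence a bijection, hence injective; therefore $W$ is simple.

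For the existence part I would argue as follows. Since $G[U]$ is connected, it has a spanning tree $T'$ on vertex set $U$; fix an arbitrary root $r\in U$. Turn $T'$ into an ordered rooted tree by ordering, at every node, its children in increasing order of their labels in $V(G)=\{1,\ldots,n\}$ (these labels are pairwise distinct, so the order is strict and well defined). A preorder traversal of this ordered tree then enumerates $U$ as $v_1=r,v_2,\ldots,v_{|U|}$. Let $T$ be the abstract ordered rooted tree on vertex set $\{1,\ldots,|U|\}$ in which node $i$ plays the role of $v_i$, inheriting the adjacencies and the child-ordering of the ordered version of $T'$; by construction the integer labels $1,\ldots,|U|$ of $T$ are exactly its preorder positions, as required by the definition of a branching walk. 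Set $\varphi(i)=v_i$. Then $\varphi$ is injective, and since $T'$ is a subgraph of $G[U]\subseteq G$, every edge of $T$ is sent by $\varphi$ to an edge of $G$, so $\varphi$ is a homomorphism and $W=(T,\varphi)$ is a branching walk. Its span is $\varphi(V(T))=\{v_1,\ldots,v_{|U|}\}=U$ and its size is $|V(T)|=|U|$.

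It remains to check that $W$ is properly ordered, which I would do by unwinding the interaction between the preorder traversal and the chosen child-ordering: if $i<j$ are siblings in $T$, then in the preorder traversal the subtree at $v_i$ is visited before that at $v_j$, so $v_i$ precedes $v_j$ in the child-ordering of their common parent, which by our construction means $v_i<v_j$ in $V(G)$, i.e.\ $\varphi(i)<\varphi(j)$. I do not expect any genuine obstacle here; the only point needing care is precisely this bookkeeping step — confirming that the relabelled tree $T$ is indexed by preorder and that sibling order is compatible with the order on $V(G)$ — which is a routine property of preorder traversal.
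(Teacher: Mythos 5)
Your proposal is correct and follows essentially the same approach as the paper: both prove simplicity by a cardinality/bijection argument, and both construct the branching walk from a spanning tree of $G[U]$ via a preorder traversal that visits children in increasing order of their labels in $V(G)$, then take $\varphi$ to be the inverse of the resulting vertex ordering. Your write-up is somewhat more explicit about the relabelling to $\{1,\ldots,|U|\}$ and the check that the walk is properly ordered, but the substance is the same.
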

\begin{proof}
  Clearly, every branching walk $W$ with span $U$ needs size at least $|U|$,
  and any branching walk whose size equals the cardinality of its span is simple.
  For existence, let $T$ be an arbitrary spanning tree of $G[U]$
  where the nodes of $T$ are ordered in preorder traversal, such
  that at every vertex the lowest-index unvisited child is visited
  first. Let $W=(T,\varphi)$ where $\varphi$ is the inverse of the
  resulting vertex ordering of $T$. Then $W$ is a properly ordered
  branching walk with size $|U|$ and span $U$. Finally,
  Björklund et al.~\cite{BjorklundKK16} show that any simple properly
  ordered branching walk can be reconstructed from its monomial
  fingerprint $m(W,X,Y)$. It follows that $m(W,X,Y)$ has coefficient
  precisely 1 in $P_{|U|}(X,Y)$. 
\end{proof}

Given this, and given the ability to evaluate $P_k(X,Y)$,
Theorem~\ref{thm:subtree-framework} follows easily.

\begin{proof}[Proof of Theorem~\ref{thm:subtree-framework}]
  Let $(G, M, k, w)$ be the input, where $M$ is represented by a
  matrix $A$ over a field of characteristic 2. We assume by truncation
  that $A$ has dimension $k \times V(G)$ and rank $k$,
  and let $w=\min(w,|E(G)|)$. 
  Furthermore, ensure that $A$ is over a field $\F$ of size
  $\omega(n^2)$, e.g., $\F=GF(2^{c \log n})$ for $c > 2$.
  For each $\ell=k, \ldots, w$ let $P_\ell(X,Y)$ be the branching walk
  polynomial for branching walks of size $\ell$ over variable sets $X$ and $Y$
  defined above. Use Theorem~\ref{theorem:matroid-sieving}
  with the vectors of $A$ associated with $X$ 
  and assume that for some $\ell$, Theorem~\ref{theorem:matroid-sieving} reports that $P_\ell(X,Y)$
  contains a monomial $m$ whose odd support spans $A$.
  Then $m=m(W,X,Y)$ for a branching walk $W$. Let $W=(T,\varphi)$
  and let $S \subseteq \varphi(V(T)) \cup \varphi(E(T))$ correspond 
  to the subset of the odd support of $m$ that spans $A$, $|S|=k$. 
  Since $W$ is a branching walk, $S$ is the vertex set 
  of a connected subgraph of $G$ on at most $\ell$ vertices. 
  Hence $(G,M,k,w)$ is a YES-instance.

  On the other hand, assume that $(G,M,k,w)$ is a YES-instance
  and let $H$ be a subgraph of minimum cardinality that spans $A$.
  Let $\ell=|V(H)|$. 
  By Lemma~\ref{lemma:branching-walks-simple}, 
  there is a simple branching walk $W$ with span $V(H)$.
  Thus $P_\ell(X,Y)$ contains a monomial $m=m(W,X,Y)$ which is multilinear in $X$
  and which spans $A$. 
  Note that $P_\ell(X,Y)$ is a homogeneous polynomial of degree
  $2\ell - 1 < n^2$, so the probability of a false negative for
  $P_\ell(X,Y)$ is $o(1)$.
  Hence with probability $1-o(1)$, the algorithm reports that the
  input is a YES-instance.
  The running time follows from Theorem~\ref{theorem:matroid-sieving}.

  In the case that $M$ is represented over some other field $\F$, the same
  analysis applies (including assuming $|\F| = \omega(n^2)$), but the 
  running time and space complexity follow from
  Theorem~\ref{lemma:extensor-matroid-sieving} instead.
  More precisely, instead of directly evaluating $P_\ell(X,Y)$,
  we evaluate $P_\ell(X',Y)$ for a new set of variables $X'=\{x' \mid x \in X\}$,
  at a value of $x'=1+x$ for every $x \in X$. We argue that 
  this works as a form of ``spanning set sieve'' for $P_\ell(X,Y)$
  over arbitrary characteristic. Indeed, if an application
  of Theorem~\ref{lemma:extensor-matroid-sieving} reports that
  there is a multilinear monomial $m$ in $P_\ell(X',Y)$ such that the support
  of $X'$ in $m$ spans $M$, then $m$ is produced from some monomial
  $m(W,X,Y)$ in $P_\ell(X,Y)$, and the input is a YES-instance.
  On the other hand, if the input is a YES-instance, then 
  $P_\ell(W,X,Y)$ contains a multilinear monomial $m(W,X,Y)$
  which spans $M$; let $S \subseteq X$ be the support of $m$ in $X$
  of some basis $B$ of $M$. Then $P_\ell(X',Y)$ contains the monomial
  $m'=\prod_{v \in B} x_v \prod_{y \in \supp(m) \cap} y$ with coefficient 1:
  it is clear that $m'$ is produced precisely once from $m$. 
  Since $\supp(m) \cap Y = \supp(m') \cap Y$, and $m$ can be recovered from
  its support in $Y$, $m'$ cannot be produced from any other
  monomial in $P_\ell(X,Y)$. 
\end{proof}

We note some applications of this result. First, consider the basic
\textsc{Steiner Tree} problem, and let $G=(V,E)$ and $T \subseteq V$
be an input, $T=\{t_1,\ldots,st_k\}$. Define a $k$-dimensional matroid
$M$ over $V$ by letting vertex $t_i$ be associated with vector $e_i$,
and every other vertex associated with the $k$-dimensional zero
vector. Then a connected subgraph $H$ of $G$ spans $M$ if and only if
$T \subseteq V(H)$. We can cover \textsc{Group Steiner Tree} with a
similar construction. Let the input be $(G=(V,E), \cT)$, with
terminal grouping $\cT=\{T_1,\ldots,T_k\}$, $T_i \subseteq V$ for each $i$.
We assume the terminal sets are pairwise disjoint by adding pendants:
for every $T_i \in \cT$ and every vertex $t \in T_i$, add a pendant
$t^i$ to $t$ and replace $T_i$ by the set $\{t^i \mid t \in T_i\}$. 
This raises the size of a minimum solution by precisely $k$ vertices.
We can now apply label $e_i$ to every vertex in $T_i$ and the zero vector
as label to every other vertex and proceed as above.

Next, let us review how to use matroid constructions to solve the various
optimization variants of \textsc{Graph Motif} surveyed by Björklund
et al.~\cite{BjorklundKK16}. Let $(G=(V,E), c, k, (d_q)_{q \in c(V)})$
be a \textsc{Graph Motif} instance. Additionally, we consider the
following operations, mimicking the \textsc{Edit Distance} problem.
Let $H$ be a connected subgraph of $G$ with $k$ vertices.
Let $C=C(H)$ be the multiset of vertex colours in $H$, i.e.,
$C(H)=\{c(v) \mid v \in V(H)\}$ with element multiplicities preserved.
To \emph{substitute} a colour $q \in C$ for another colour $q' \in c(V)$,
we remove one copy of $q$ from $C$ and add a copy of $q'$.
To \emph{insert} a colour $q$, we add a copy of $q$ to $C$.
To \emph{delete} a colour $q$, we remove a copy of $q$ from $C$.
Furthermore, let a multiset $Q$ of colours be given. We wish to decide
whether there is a connected subgraph $H$ of $G$ with $|V(H)|=k$
such that $C(H)$ can be transformed into $Q$
by making at most $k_s$ substitutions, at most $k_i$ insertions,
and at most $k_d$ deletions.
Individually, these operations correspond well to standard matroid
transformations. Let $M$ be the partition matroid over $V(D)$ where
every colour class $c^{-1}(q)$ has capacity $d_q$.
Then using $M$ in Theorem~\ref{thm:subtree-framework} directly solves
\textsc{Graph Motif}. Further allowing substitutions, insertions
and/or deletions can be handled by combinations of extensions and
truncations over $M$. We consider the following general case.

\begin{lemma}
  Let $G=(V,E)$ be a graph and $c$ a vertex colouring of $G$.
  Let $k_s, k_d, k_i \in \N$ and a multiset $Q$ be given.
  There is a matroid $M$ over $V$, representable over a field of characteristic 2,
  such that any set of $k$ vertices from $V$ forms a basis of $M$
  if and only if the multiset $C(H)$ can be transformed into $Q$
  by making at most $k_s$ substitutions, $k_d$ deletions and $k_i$
  insertions. 
\end{lemma}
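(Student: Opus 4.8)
The plan is to reduce the edit‑distance requirement on $C(H)$ to a single rank threshold for a partition matroid, and then to realise that threshold by a matroid union. Throughout, for $S\subseteq V$ and a colour $q$ write $n_q(S)=|S\cap c^{-1}(q)|$, let $m_q$ be the multiplicity of $q$ in $Q$ (so $\sum_q m_q=|Q|$), and let $N$ be the partition matroid on $V$ whose colour class $c^{-1}(q)$ has capacity $m_q$; then $r_N(S)=\sum_q\min(n_q(S),m_q)$ for every $S$. We take the colours available for substitution and insertion to range over the colour universe of $G$ together with the colours occurring in $Q$, which is the natural convention and leaves $r_N$ unchanged.

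\textbf{Step 1 (combinatorial reduction).} Fix a $k$-set $S$ and ask when $C=\{\,c(v):v\in S\,\}$ can be turned into $Q$ within the budgets. Without loss of generality no original element is both substituted and deleted and no inserted element is later substituted or deleted, so every element of $C$ is \emph{kept} (mapped to an equal-colour element of $Q$), \emph{substituted} (mapped to some element of $Q$) or \emph{deleted}, and every element of $Q$ is the image of a kept or a substituted original, or is \emph{inserted}. Writing $p,s,d,i$ for the numbers of kept, substituted, deleted, inserted elements we get $p+s+d=k$, $p+s+i=|Q|$, $s\le k_s$, $d\le k_d$, $i\le k_i$, and $p\le r_N(S)$ (the kept elements of colour $q$ number at most $\min(n_q(S),m_q)$); conversely any $p\in\{0,\dots,\min(r_N(S),k,|Q|)\}$ with an admissible $s$ can be realised. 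Eliminating $d=k-p-s$ and $i=|Q|-p-s$, an admissible $s$ for a given $p$ exists iff
\[
  \max(0,\,|Q|-k_i-k_s,\,k-k_d-k_s)\ \le\ p\ \le\ \min(r_N(S),k,|Q|),
\]
provided the $S$-independent inequalities $k-k_d\le|Q|\le k+k_i$ hold (if either fails the instance is infeasible for every $H$). Set $\rho:=\max(0,\,|Q|-k_i-k_s,\,k-k_d-k_s)$; under $k-k_d\le|Q|\le k+k_i$ one checks that $\rho\le k$ and $\rho\le|Q|$ automatically. Hence, assuming those inequalities, $C(H)$ is transformable into $Q$ within the budgets if and only if $r_N(S)\ge\rho$.

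\textbf{Step 2 (matroid realisation).} Let $N|_\rho$ be the $\rho$-truncation of $N$, let $U_{V,k-\rho}$ be the uniform matroid of rank $k-\rho$ on $V$, and put
\[
  M\ :=\ \bigl(N|_\rho\bigr)\ \lor\ U_{V,k-\rho}.
\]
A set $T$ is independent in $M$ iff $T=T_1\cup T_2$ (disjoint) with $T_1$ independent in $N$, $|T_1|\le\rho$, $|T_2|\le k-\rho$; in particular $r(M)=k$ when $|V|\ge k$ and $r(N)\ge\rho$. For a $k$-set $S$: if $r_N(S)\ge\rho$, choose $T_1\subseteq S$ independent in $N$ with $|T_1|=\rho$ and $T_2=S\setminus T_1$, so $S$ is independent of size $k$, i.e.\ a basis; conversely if $S=T_1\cup T_2$ is independent in $M$ then $|T_1|\ge k-|T_2|\ge\rho$, so $r_N(S)\ge r_N(T_1)=\rho$. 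Thus the $k$-subsets that are bases of $M$ are exactly those with $r_N(S)\ge\rho$, which by Step 1 are exactly the vertex sets of size $k$ whose colour multiset transforms into $Q$ within the budgets. If $r(N)<\rho$ this is automatically consistent (no $k$-set has $r_N(S)\ge\rho$ and $r(M)<k$ so no $k$-set is a basis); in the remaining degenerate case where $k-k_d\le|Q|\le k+k_i$ fails, no valid $H$ exists and we simply take $M$ to be the rank-$0$ matroid on $V$, which has no $k$-element basis, so the equivalence holds vacuously. For representability: partition matroids, uniform matroids and truncations all have representations over a field of characteristic $2$ (for the truncation possibly after moving to an extension field; see Section~\ref{sec:prel-matroids}), and the matroid union of two matroids represented over such a field can be represented over it in randomised polynomial time, so $M$ is representable over a field of characteristic $2$.

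\textbf{Main obstacle.} The delicate part is Step 1: pinning down the threshold $\rho$ and the side conditions $k-k_d\le|Q|\le k+k_i$ exactly, and checking that all the per-colour matching constraints collapse into the single inequality $r_N(S)\ge\rho$ (including the somewhat fiddly bookkeeping of which elements of $Q$ may be reached by substitution versus insertion). Once that characterisation is established, Step 2 is routine, relying only on the standard description of independent sets in a matroid union and on the representability facts already recorded in Section~\ref{sec:prel-matroids}.
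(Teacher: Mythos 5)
Your proof is correct and follows the same basic strategy as the paper: reduce the edit-distance requirement to a single threshold on the rank $r_N(S)=\sum_q\min(n_q(S),m_q)$ of $S$ in the partition matroid $N$ with capacities taken from $Q$, and then realise that threshold via truncation and matroid union. Two things are worth flagging, both in your favour.

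First, your matroid construction is simpler and mathematically equivalent to the paper's on $k$-sets. The paper builds $M=\bigl(\operatorname{trunc}_{k_0}(N\lor U_{k_s})\bigr)\lor U_{k-k_0}$ with $k_0=\max(|Q|-k_i,\,k-k_d)$, whereas you use $M=\operatorname{trunc}_{\rho}(N)\lor U_{k-\rho}$ with $\rho=\max(0,\,k_0-k_s)$. Since the rank of $\operatorname{trunc}_{k_0}(N\lor U_{k_s})$ on any $k$-set $S$ is $\min(k_0,\,r_N(S)+k_s)$ (given $k_0\le k$), both constructions declare a $k$-set a basis precisely when $r_N(S)\ge k_0-k_s=\rho$. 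Yours just absorbs the $k_s$-extension into the truncation threshold.

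Second, and more importantly, your Step~1 is more careful than the paper's treatment of degenerate parameters. Writing $r=r_N(S)$, one has $|C_0|+a=\min(k,|Q|,r+k_s)$, so the paper's Claim ``$|C_0|+a\ge k_0$'' is equivalent to the conjunction of $k\ge k_0$, $|Q|\ge k_0$, and $r\ge k_0-k_s$. The paper's construction explicitly rejects when $k_0>k$ and enforces $r\ge k_0-k_s$ through the matroid, but it never enforces $|Q|\ge k_0$ (equivalently $|Q|\ge k-k_d$). Concretely, with $V$ consisting of $2$ red and $3$ blue vertices, $Q=\{\text{red},\text{red}\}$, $k=5$, $k_s=3$, $k_d=k_i=0$, one gets $k_0=5=k$ (so no rejection) and the paper's $M$ equals $N\lor U_3$, of rank $5$, so $S=V$ is a basis even though $C(S)$ cannot be reduced from size $5$ to size $2$ with $0$ deletions. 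Your explicit side conditions $k-k_d\le|Q|\le k+k_i$ (with the rank-$0$ fallback matroid) close exactly this gap, so your version of the lemma is the correct one.
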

\begin{proof}
  We note that since $C(H)$ and $Q$ are multisets, without element
  order, finding the minimum cost for a transformation is much simpler
  than in \textsc{Edit Distance}. Let $C=C(H)$ and let $C_0=C \cap Q$
  be the multiset intersection. Let $a=\min(|C|-|C_0|,|Q|-|C_0|,k_s)$.
  
  \begin{claim}
    $C(H)$ can be transformed into $Q$ with the operation limits
    prescribed if and only if $|C_0|+a \geq \max(|Q|-k_i,k-k_d)$.
  \end{claim}
  \begin{subproof}
    The transformation can use $a$ substitutions, and thereafter it
    has either exhausted $C$, $Q$ or the budget $k_s$. In either case, 
    it thereafter needs to use $|C|-|C_0|-a$ deletions and
    $|Q|-|C_0|-a$ insertions, i.e.,
    $k_d \geq |C|-(|C_0|+a)$ and $k_i \geq |Q|-(|C_0|+a)$. 
    The result follows.
  \end{subproof}

  Hence, let $r=|C_0|+a$ be the number of vertices from $V(H)$ that
  can be matched against $Q$ by using at most $k_s$ substitutions.
  We wish to accept a vertex set $V(H)$ if and only if
  $r \geq k_0 := \max(|Q|-k_i,k-k_d)$. We construct a matroid for this purpose.
  If $k_0>k$, reject the parameters. Otherwise, execute the following
  construction sequence.
  \begin{enumerate}
  \item Let $M_1$ be the partition matroid over $V(G)$ where for every
    colour $q$, the set $c^{-1}(q)$ has capacity in $M_1$ corresponding
    to its count in $Q$. 
  \item Let $M_2$ be $M_1$ with the rank extended by $k_s$.
  \item Let $M_3$ be $M_2$ truncated to rank $k_0$.
  \item Let $M$ be $M_3$ extended by additional rank $k-k_0$.    
  \end{enumerate}
  Indeed, let $S$ be a basis of $M$. Then there is a set $S' \subseteq S$
  with $|S'|=k_0$ that is a basis of $M_3$, implying that using at
  most $k_s$ substitutions, $S'$ can be matched into $Q$, and $|S'|=k_0$.
  Conversely, if $S$ is a set of $k$ vertices, let $C'=C(S) \cap Q$,
  and assume that $|C'|+\min(k-|C'|, |Q|-|C'|,k_s) \geq k_0$.
  Then there is a set $S' \subseteq S$ consisting of $k_0$
  vertices that can be matched into $Q$ using at most $k_s$
  substitutions, i.e., $S'$ is independent in $M_3$,
  and $S$ is a basis of $M$. 
\end{proof}

Since there are only $O(k^3)$ valid options for the integers $k_s$,
$k_d$ and $k_i$, by repeating this construction we can clearly sieve
for a connected subgraph $H$ of size $k$ with a minimum \emph{cost}
of transformation, using costs as given in Björklund et al.~\cite{BjorklundKK16}.
Another option, of attaching weight-tracing variables keeping track of the
number of substitutions and deletions, similarly to the algorithm in~\cite{BjorklundKK16},
would of course also be possible, but our purpose here was to
illustrate the matroid construction.

Further variations are clearly also possible, e.g., as in the notion
of \emph{balanced solutions} considered in Section~\ref{sec:balanced}
one may look for subgraphs with both upper and lower bounds 
$d_q \leq |V(H) \cap c^{-1}(q)| \leq e_q$ for every colour class~$q$. 

Interestingly, both \textsc{Steiner Tree} and \textsc{Graph Motif}
are SeCoCo-hard, i.e., under the set cover conjecture they cannot be
solved in time $O^*(2^{(1-\varepsilon)k})$ for any $\varepsilon > 0$~\cite{CyganDLMNOPSW16seth}.
Hence improving the algorithm of Theorem~\ref{thm:subtree-framework} is certainly SeCoCo-hard as well. 

\subsection{Independent subgraph isomorphism}
\label{subsec:subgraphiso}
Next, we review the \textsc{Subgraph Isomorphism} problem.
Let $G$ and $H$ be undirected graphs, and introduce a variable set $X=\{x_v \mid v \in V(G)\}$. 
Let $k=|V(H)|$ and $n=|V(G)|$. The \emph{homomorphism polynomial}
is the polynomial
\[
  \sum_{\varphi \colon V(H) \to V(G)}  \prod_{v \in V(H)} x_{\varphi(v)}
\]
where the sum goes over homomorphisms $\varphi$.
If a treewidth decomposition of width $w$ is given for $H$,
then the homomorphism polynomial can be evaluated
in time $f(k) \cdot n^{w+1}$ for a modest function $f(k)$~\cite{FominLRSR12}.
In particular, we follow the exposition of Brand~\cite{Brand19thesis}
who shows that in time $O(c^k+n^{w+1})$ for $c<2$ we can both compute a tree
decomposition of width $w$ for $H$ and construct an algebraic circuit
of total size $O(k \cdot n^{w+1})$ which evaluates the homomorphism
polynomial.

Since the homomorphism polynomial has no negative terms working over,
e.g., the reals there is no concern for cancellations. However, since
we want to work over fields of characteristic 2, we introduce
additional terms in the way of \emph{algebraic fingerprinting}
(cf.~\cite{KoutisW16CACM}), to prevent cancellations.
In fact, we introduce two sets of additional variables for algorithmic
convenience. Let $X'=\{x'_{i,v} \mid i \in V(H), v \in V(G)\}$
and $Y=\{y_e \mid e \in V(G)\}$. Then we define
the \emph{decorated homomorphism polynomial}
\[
  P_{H \to G}(X,X',Y) = \sum_{\varphi \colon V(H) \to V(G)}
  \prod_{i \in V(H)} x_{\varphi(i)} x'_{i,\varphi(i)} 
  \prod_{ij \in E(H)} y_{\varphi(i)\varphi(j)},
\]
where the variables $y_{\varphi(i)\varphi(j)}=y_{\varphi(j)\varphi(i)}$
are taken without order on its subscript terms,
and where $\varphi$ ranges over all homomorphisms from $H$ to $G$. 
Then clearly, every homomorphism $\varphi$ has a unique algebraic
``fingerprint'' monomial $m(\varphi)$,  since it is encoded in the
$X'$-factors~of~$m(\varphi)$.

It is easy to modify the construction of Brand~\cite[Section~4.6.1]{Brand19thesis}
to construct a circuit for (or directly compute) $P_{H \to G}(X,X',Y)$
at a slightly larger polynomial cost in $k$.

\begin{proposition}
  Let a tree decomposition of with $w$ for $H$ be given. Then in time
  $k^{O(1)}n^{w+1}$ we can construct an algebraic circuit of size
  $k^{O(1)}n^{w+1}$ that computes $P_{H \to G}$. Furthermore, if the
  decomposition is a path decomposition, then the circuit can be made
  skew.   
\end{proposition}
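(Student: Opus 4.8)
The plan is to adapt the standard dynamic programming for counting graph homomorphisms along a tree decomposition (as used in~\cite{FominLRSR12} and in the exposition of Brand~\cite{Brand19thesis}) so that it emits an arithmetic circuit, while threading through the decoration variables $X'$ and the edge variables $Y$. First I would convert the given width-$w$ tree decomposition of $H$ into a \emph{nice} tree decomposition of the same width, with the usual node types leaf, introduce-vertex, forget and join, refined by \emph{introduce-edge} nodes, so that every vertex of $H$ is forgotten exactly once and every edge of $H$ is introduced exactly once; this is a standard width-preserving transformation computable in polynomial time that produces $k^{O(1)}$ bags. For each bag $B_t$ and each map $\psi\colon B_t \to V(G)$ I would designate a gate $g_t[\psi]$, with the intended invariant that $g_t[\psi]$ equals the sum, over all homomorphisms $\varphi$ from the subgraph of $H$ already processed below $t$ that agree with $\psi$ on $B_t$, of $\bigl(\prod_{i} x_{\varphi(i)} x'_{i,\varphi(i)}\bigr)\bigl(\prod_{ij} y_{\varphi(i)\varphi(j)}\bigr)$, the first product over vertices $i$ forgotten below $t$ and the second over edges $ij$ introduced below $t$.

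Then I would define the gates bottom-up: at a leaf, $g_t[\emptyset]=1$; at an introduce-vertex node for $i$, $g_t[\psi]=g_{t'}[\psi|_{B_{t'}}]$; at an introduce-edge node for $ij$, $g_t[\psi]=g_{t'}[\psi]\cdot y_{\psi(i)\psi(j)}$ if $\psi(i)\psi(j)\in E(G)$ and $g_t[\psi]=0$ otherwise; at a forget node forgetting $i$, $g_t[\psi]=\sum_{v\in V(G)} x_v\, x'_{i,v}\, g_{t'}[\psi\cup\{i\mapsto v\}]$; and at a join node with children $t_1,t_2$ (same bag), $g_t[\psi]=g_{t_1}[\psi]\cdot g_{t_2}[\psi]$. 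A routine induction on the decomposition verifies the invariant; since every vertex is forgotten exactly once and every edge introduced exactly once, no variable is ever multiplied in twice along a root path, so join products do not double count, and since the factors $x'_{i,\varphi(i)}$ give each homomorphism a unique fingerprint monomial there is no cancellation; hence the root gate $g_r[\emptyset]$ computes $P_{H\to G}(X,X',Y)$. For the size and time bounds: there are $k^{O(1)}$ nodes, each handling at most $n^{|B_t|}\le n^{w+1}$ maps $\psi$; introduce and join nodes spend $O(1)$ gates per $\psi$, while a forget node spends $O(n)$ gates per residual map, for $O(n^{w+1})$ gates at that node, so the total circuit size, and the time to write it down, are $k^{O(1)}n^{w+1}$.

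Finally, if we start from a \emph{path} decomposition, the corresponding nice decomposition has no join nodes, so the only product gates are those of the forget and introduce-edge steps; each such product has one operand equal to an input gate (the variable $x_v$, $x'_{i,v}$, or $y_{\psi(i)\psi(j)}$), so the circuit is skew. I expect the only real work to be the bookkeeping around the introduce-edge refinement and pinning down the inductive invariant precisely---making sure that ``processed below $t$'' tracks exactly the forgotten vertices and introduced edges, so that the join step is genuinely cancellation-free and exactly $V(H)$ and $E(H)$ are accounted for at the root. Everything else is a direct modification of the usual homomorphism-counting dynamic program and of the circuit construction of Brand~\cite{Brand19thesis}.
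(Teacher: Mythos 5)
Your proof is correct and matches the approach the paper has in mind: the paper gives no proof of its own for this proposition, only the remark that one modifies the homomorphism-polynomial circuit construction of Brand's thesis, and your dynamic program over a nice tree decomposition with introduce-edge nodes is precisely that modification, with the decoration variables $x_v x'_{i,v}$ emitted at forget nodes and $y_e$ at introduce-edge nodes. Two small points worth pinning down in a final write-up: (i) state the invariant precisely as a sum over maps $\varphi$ on the vertices appearing in bags at or below $t$ that agree with $\psi$ on $B_t$ and respect the edges introduced at or below $t$, with the monomial over the forgotten vertices and introduced edges only (the correctness of the join node then follows from disjointness of the forgotten-vertex sets and introduced-edge sets of the two subtrees, which is about double-counting rather than cancellation — the polynomial has nonnegative coefficients anyway); and (ii) for the skewness claim at forget nodes, associate the triple product as $x_v \cdot (x'_{i,v}\cdot g_{t'}[\cdot])$ so that each of the two binary product gates has an input gate as one operand.
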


We can now show Theorem~\ref{thm:subgraph-hom}. This follows the
obvious path, with some extra care taken to ensure that our polynomial
term remains $n^{w+1}$.

\begin{proof}[Proof of Theorem~\ref{thm:subgraph-hom}]
  Let $G$ and $H$ be given, as well as a tree decomposition of $H$ of
  width $w$. Note that a homomorphism $\varphi \colon H \to G$
  represents a subgraph of $G$ isomorphic to $H$ if and only if
  $\varphi$ is injective on $V(H)$, which holds if and only if
  $\prod_{i \in V(H)} x_{\varphi(i)}$ is multilinear. 
  
  Let $M$ be a linear matroid, and let $r=k$ if $M$ is over
  $V(G)$ or 
  $r=k+|E(H)|$ if $M$ is over $V(G) \cup E(G)$. We assume that $M$ is represented by a matrix
  $A$ with $r$ rows and rank $r$. We also assume $A$ is over a
  sufficiently large field $\F$ (where in fact some $|\F|=\Omega(k^2)$ suffices
  since only the degree of $P_{H \to G}$ is important for correctness).
  We now employ the sieving of Theorem~\ref{theorem:sieve-basis}.
  Note that $P_{H \to G}$ is homogeneous of degree $2k+|E(H)|$
  (and homogeneous in $X$ and $Y$ separately). Using the precise
  running time bound from Theorem~\ref{theorem:sieve-basis},
  we note that field operations over $\F$ can be performed
  in $\log^{O(1)} |\F|=\tilde O(1)$ time, independent of $n$.
  A larger field $\F$ could be given in the input, but in this case
  operations over $\F$ take only $k^{O(1)}$ time by assumption.
  If there is a subgraph $H'$ of $G$ isomorphic to $H$ and independent
  in $M$, then $P_{H \to G}$ will contain a term $m$ corresponding to that
  map $\varphi$, thus $m$ is multilinear in $X \cup Y$ and
  Theorem~\ref{theorem:sieve-basis} applies and will detect $H'$ with
  high probability. Conversely, assume that Theorem~\ref{theorem:sieve-basis}
  detects a monomial $m$ such that $m$ (in $X$, respectively in $X \cup Y$)
  is multilinear and contains a basis for $M$. Then $m$ must be
  multilinear in $X$, since $m$ spans $M$. Since the monomials of
  $P_{H \to G}$ are in 1-to-1 correspondence with homomorphisms
  $\varphi \colon H \to G$, $m$ must represent a homomorphism $\varphi$
  which is injective over $V(H)$. Thus $H'$ is isomorphic to $H$ as
  noted above, and is independent in $M$. If $M$ is over $V(G) \cup E(G)$,
  then the same argument and algorithm applies except that we are
  sieving over both the variable sets $X$ and $Y$. 

  The running time over a general field follows from
  Theorem~\ref{lemma:extensor-matroid-sieving}. In particular, Brand~\cite{Brand19thesis}
  notes that the homomorphism polynomial circuit can be made skew if
  constructed over a path decomposition, and not otherwise.
\end{proof}

The result with a matroid over $V(G) \cup E(G)$ could also be achieved
by simply subdividing every edge of $H$ and $G$, but this would blow
up $|V(G)|$ and hence the polynomial factor of the algorithm.

As with our other matroid applications, there is a range of
consequences, including (e.g.) finding a colourful copy of $H$
in a vertex-coloured graph; finding a copy of $H$ in $G$ subject to
capacity constraints on vertex classes; finding a copy $H'$ of $H$ in
$G$ such that $G-E(H')$ is connected; and all the other applications
of matroid constraints covered in this paper.

\section{Speeding-up Dynamic Programming}
\label{sec:noDP}

The notion of \emph{representative sets} for linear matroids plays an essential role in the design of FPT algorithms \cite{FominLPS16JACM,FominLPS17}, as well as kernelization \cite{KratschW20JACM}.
For a matroid $M = (V, \cI)$, a set $X \subseteq V$ is said to \emph{extend} a set $Y$, if $X$ and $Y$ are disjoint and $X \cup Y$ is independent in $M$.
The \emph{representative set lemma}, due to Lov\'asz \cite{Lovasz1977} and Marx \cite{Marx09-matroid}, states the following:
Let $M = (V, \cI)$ be a linear matroid of rank $k$, and $\mathcal Y \subseteq 2^V$ be a collection of subsets of $V$.
Then, there is a subcollection $\mathcal Y' \subseteq \mathcal Y$ (which can be computed ``efficiently'') of size at most $2^k$ that \emph{represents} $\mathcal Y$, i.e., for every $X \subseteq V$, there is a set in $\mathcal Y$ extending $X$ if and only if such a set exists in $\mathcal Y'$.
There are plethora of dynamic programming FPT algorithms in the literature, where the table size is reduced from $n^k$ to $2^k$, using the representative set lemma.
In this section, we exemplify how to use determinantal sieving in place of such dynamic programming approaches in three applications, \textsc{Minimum Equivalent Subgraph}, \textsc{Eulerian Deletion}, and \textsc{Conflict-free Solution}.
We improve the running time over existing algorithms, while saving space usage to polynomial.

\subsection{Minimum Equivalent Graph}

\textsc{Minimum Equivalent Graph} is defined as follows.
We are given a directed graph $G = (V, E)$ and an integer $k$, and the question is whether there is a subgraph $G' = (V, E')$ with at most $k$ edges with the same reachability pattern, i.e., for every $u, v \in V$, there is a $uv$-path in $G$ if and only if there is in $G'$.
Fomin et al.~\cite{FominLPS16JACM} show that \textsc{Minimum
  Equivalent Graph} reduces to the following question:
Are there a pair $B_1$, $B_2$ where $B_1$ is an in-branching
and $B_2$ is an out-branching in $G$, with a common root $v$,
such that they have at least $\ell$ edges in common? 
We phrase this as a matroid-theoretical problem. 

Let \textsc{Matroid Intersection Overlap}\footnote{Called
  \textsc{Matroid Intersection Intersection} in an earlier version of
  the paper} refer to the
following problem. The input is four matroids $M_1$, \ldots, $M_4$
over the same ground set $U$, each of rank $k$, and an integer $\ell>0$.
The question is whether there are bases $B_A \in M_1 \cap M_2$
and $B_B \in M_3 \cap M_4$ such that $|B_A \cap B_B| \geq \ell$.
This captures the above question, since rooted in- and
out-branchings can be constructed as the intersection of a graphic
matroid and a suitable partition matroid. 

\begin{lemma}
  For matroids represented over a common field $\F$ of characteristic 2, 
  \textsc{Matroid Intersection Overlap} can be solved in
  $O^*(2^{2k})$ time. 
\end{lemma}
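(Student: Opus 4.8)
The plan is to reduce \textsc{Matroid Intersection Intersection} to two applications of the Cauchy--Binet formula followed by a single determinantal sieving step over an auxiliary linear matroid of rank $O(k)$.

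First I would build an enumerating polynomial for pairs of common bases. Fix representations $A_1,\dots,A_4\in\F^{k\times U}$ of $M_1,\dots,M_4$ (after truncation each has exactly $k$ rows and full row rank). Exactly as in Theorem~\ref{ithm:matroid-intersection}, let $A_1'$ be $A_1$ with column $v$ scaled by a fresh variable $x_v$, and set
\[
  P_{12}(X)=\det\!\bigl(A_1'A_2^{\!\top}\bigr)=\sum_{B\in\binom Uk}\det A_1[\cdot,B]\,\det A_2[\cdot,B]\prod_{v\in B}x_v ,
\]
which is a multilinear, homogeneous degree-$k$ polynomial, evaluable in polynomial time (one $k\times k$ determinant), whose monomials are in bijection with the sets $B$ that are simultaneously a basis of $M_1$ and of $M_2$: the coefficient $\det A_1[\cdot,B]\det A_2[\cdot,B]$ is a fixed element of $\F$, nonzero precisely for such $B$. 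Over a disjoint fresh variable set $Y=\{y_v\}$ define $P_{34}(Y)$ analogously for the pair $M_3,M_4$. The product $R(X,Y)=P_{12}(X)\,P_{34}(Y)$ is then multilinear of degree $2k$, and since $X$ and $Y$ are disjoint its monomials $X^{B_A}Y^{B_B}$ are in bijection with ordered pairs $(B_A,B_B)$, $B_A$ a common basis of $M_1,M_2$ and $B_B$ a common basis of $M_3,M_4$, with no algebraic cancellation.

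Second, I would detect whether some such pair has $|B_A\cap B_B|\ge\ell$ by a sieving step tuned to the intersection size. The target is to apply basis sieving (Theorem~\ref{theorem:sieve-basis}, or Corollary~\ref{cor:basis} for the decorated version) to $R$ against an auxiliary matroid $N$, representable over the common characteristic-$2$ field $\F$ (possibly after passing to an extension to drive down the failure probability), of rank at most $2k$, chosen so that the support $\{x_v:v\in B_A\}\cup\{y_v:v\in B_B\}$ of a monomial of $R$ spans $N$ (equivalently, is a basis) if and only if $|B_A\cap B_B|\ge\ell$. Since $R$ has polynomial degree and $N$ has rank $\le 2k$, this runs in $O^*(2^{2k})$ time and polynomial space, and correctness is immediate from the bijection established in the first step: $R$ has a surviving monomial iff there is a pair $(B_A,B_B)$ as required. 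The special case $\ell=k$ is exactly \textsc{$4$-Matroid Intersection}, where $N=M_3\oplus M_4$ (disjoint copies, rank $2k$) suffices.

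The main obstacle is precisely the construction of $N$ (equivalently, of the variable substitution) for general $\ell$. The difficulty is that $|B_A\cap B_B|$ appears ``evenly'' -- a common element $v$ contributes both $x_v$ and $y_v$ -- whereas matroid sieving natively detects a \emph{spanning/rank} condition, which is monotone in the wrong direction: it is easy to capture $|B_A\cap B_B|\le\ell$ (i.e.\ a large symmetric difference $|B_A\triangle B_B|$) by making $x_v$ and $y_v$ parallel in a generic rank-$(2k-\ell)$ matroid and sieving for spanning, but the lower bound $|B_A\cap B_B|\ge\ell$ resists this. I expect the resolution to come either from passing to the dual matroids $M_3^{\ast},M_4^{\ast}$ so that the lower bound on $B_A\cap B_B$ becomes a spanning condition on the complement $\bar B_B=U\setminus B_B$ (which then requires a further reduction so that the relevant rank stays $O(k)$ rather than $|U|$, since $|\bar B_B|$ can be large), or from a dedicated gadget matroid that forces $\ell$ of the common elements to contribute an extra rank-$\ell$ uniform block glued onto $M_3\oplus M_4$. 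Once this encoding is pinned down, the running-time bound $O^*(2^{2k})$ and the error analysis follow directly from Theorem~\ref{theorem:sieve-basis}.
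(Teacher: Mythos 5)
Your high-level framework is right (Cauchy--Binet to enumerate, then determinantal sieving for the remaining constraints), but the specific decomposition you set up --- forming the product $R(X,Y)=P_{12}(X)P_{34}(Y)$ over disjoint variable sets and then trying to read off $|B_A\cap B_B|\ge\ell$ from a matroid constraint on the support --- is the wrong one, and the gap you flag is real and is precisely the crux of the lemma. In $R$, every monomial has support of fixed size $2k$ in the disjoint variables $\{x_v\}\cup\{y_v\}$, so basis sieving would need a linear matroid $N$ of rank exactly $2k$ whose bases among the sets $\{x_v:v\in B_A\}\cup\{y_v:v\in B_B\}$ are exactly those with $|B_A\cap B_B|\ge\ell$; it is far from clear that such a set system is even a matroid, and neither of your two escape routes closes the gap: dualizing $M_3,M_4$ turns the intersection lower bound into a spanning condition on $U\setminus B_B$, whose size is $|U|-k$, so the relevant rank jumps to $\Omega(|U|)$ and the sieving cost becomes $2^{\Omega(|U|)}$, while the ``gadget matroid glued onto $M_3\oplus M_4$'' is not constructed and, for the reason just given, would have to do something unusual. (Also, your claimed $\ell=k$ shortcut with $N=M_3\oplus M_4$ does not enforce $B_A=B_B$ in the product $R(X,Y)$; the $\ell=k$ case only reduces to plain \textsc{4-Matroid Intersection} if you drop $P_{34}$ and sieve $P_{12}$ directly, as in Theorem~\ref{ithm:matroid-intersection}.)

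The paper resolves this with a different decomposition that you did not consider: instead of tracking a pair $(B_A,B_B)$ as $2k$ distinct variables, it merges them into a single ground set $U^*=U_1\cup U_2$ with $U_1$ a copy of $U$ (representing elements of $B_A\cap B_B$) and $U_2=U\times U$ (a pair $(x,y)\in U_2$ representing $x\in B_A\setminus B_B$ together with $y\in B_B\setminus B_A$), and defines new rank-$k$ matroids $M_1',\dots,M_4'$ on $U^*$ by letting $A_i'$ on a $U_2$-column $(x,y)$ use the $x$-column of $A_i$ for $i\in\{1,2\}$ and the $y$-column for $i\in\{3,4\}$. A common basis $S$ of $M_1',\dots,M_4'$ then corresponds to a pair $(B_A,B_B)$ of common bases of the original two pairs with $|B_A\cap B_B|=|S\cap U_1|$, so the intersection lower bound becomes a simple degree condition: weight the $U_1$-variables by a tracking variable $z$, extract the $z^r$ coefficient by interpolation, and sieve for common bases of $M_3',M_4'$ (rank $2k$, via Corollary~\ref{cor:basis}) over the Cauchy--Binet polynomial for $M_1',M_2'$. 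This is a straight 4-matroid intersection on $U^*$ and runs in $O^*(2^{2k})$. The idea worth internalizing is that when a quantity like $|B_A\cap B_B|$ resists being expressed as a matroid rank condition, one should instead re-encode the search space so that the quantity becomes a monomial degree, which is always easy to isolate by interpolation.
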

\begin{proof}
  We define a new ground set $U^*=U_1 \cup U_2$, where
  $U_1$ is a copy of $U$ and $U_2=U \times U$.
  We also define new matroids $M_1'$, \ldots, $M_4'$ as follows.
  Let $A_i$, $i=1, \ldots, 4$ be the representation of $M_i$.
  Then $M_i'$ is represented by a matrix $A_i'$ where
  for $x \in U_1$ we have $A_i'[\cdot,x]=A_i[\cdot,x]$,
  and for $(x,y) \in U_2$ we have
  $A_i'[\cdot,(x,y)]=A_i[x]$ for $i=1, 2$ and
  $A_i'[\cdot,(x,y)]=A_i[y]$ for $i=3,4$.
  
  \begin{claim}
    The rank of $M_i'$ for each $i=1, \ldots, 4$ is $k$.
  \end{claim}
  \begin{subproof}
    Since each column of $A_i'$ is a copy of a column from $A_i$,
    clearly the rank of $A_i'$ is at most the rank of $A_i$.
    Conversely, since every column of $A_i$ occurs as a column of $A_i'$,
    the rank of $A_i'$ is at least the rank of $A_i$. 
  \end{subproof}
  
  We show that this reduces \textsc{Matroid Intersection Overlap}
  to a kind of ``weighted'' instance of \textsc{4-Matroid Intersection}
  over matroids of rank $k$.
  
  \begin{claim}
    The input instance is positive if and only if there is a common
    basis of $M_1'$ through $M_4'$ that contains at least $\ell$
    elements from $U_1$. 
  \end{claim}
  \begin{subproof}
    The idea is the following. Let $(B_A, B_B)$ be a solution to the problem.
    Split $(B_A, B_B)$ as $B_0=B_A \cap B_B$, $B_1 = B_A \setminus B_B$
    and $B_2=B_B \setminus B_A$, $|B_0|=r$ for some $r \geq \ell$.
    Write $B_0=\{u_1,\ldots,u_r\}$, 
    $B_1=\{x_1,\ldots,x_{k-r}\}$ and $B_2=\{y_1,\ldots,y_{k-r}\}$.
    Define the new set
    \[
      S=B_0 \cup \{(x_i,y_i) \mid i \in [k-r]\}
    \]
    where $B_0 \subseteq U_1$ and $S \setminus B_0 \subseteq U_2$. 
    Then $S$ is a common basis of $M_1'$ through $M_4'$. 
    Indeed, for $M_1'$ and $M_2'$ the matrix
    $A_i'[\cdot,S]$ induced by $S$ is a copy of $A_i[\cdot,B_A]$
    and for $M_3'$ and $M_4'$ the matrix
    $A_i'[\cdot,S]$ is a copy of $A_i[\cdot,B_B]$.
    These are bases by assumption. Furthermore $S$ contains
    $r \geq \ell$ elements from $U_1$.

    Conversely, assume that $S$ is a common basis of $M_1'$ through
    $M_4'$ with $|S \cap U_1|=r$ for some $r \geq \ell$.
    Extract the sets
    \[
      B_A = (S \cap U_1) \cup \{x \mid (x,y) \in S \cap U_2\}
      \quad \text{and} \quad
      B_B = (S \cap U_1) \cup \{y \mid (x,y) \in S \cap U_2\}.
    \]
    We claim that $B_A$ is a basis for $M_1$ and $M_2$,
    and $B_B$ a basis for $M_3$ and $M_4$. Indeed, we have
    $|B_A|=|B_B|=|S|$ since otherwise one matrix $A_i'[\cdot,S]$
    would contain duplicated columns. 
    Thus the matrices $A_i'[\cdot,S]$
    and $A_i[\cdot,B_A]$ (respectively $A_i[\cdot,B_B]$)
    are identical up to column ordering.
  \end{subproof}
  
  Hence we are left to solve the question whether there exists
  a common basis for $M_1'$ through $M_4'$ with at least $\ell$
  elements from $U_1$. For this, we proceed as in the algorithm for
  \textsc{4-Matroid Intersection}. By the Cauchy-Binet formula,
  there is a polynomial $P(X)$ over $X=\{x_u \mid u \in U\}$
  which enumerates common bases of $M_1'$ and $M_2'$.
  Introduce a new variable $z$ and 
  evaluate $P(X)$ at a value where $x_u$ is multiplied by $z$ for $u \in U_1$. 
  Let $P_r'(X)$ be the coefficient of $z^r$ in $P(X)$ under this evaluation. 
  The question now reduces to asking if there is a monomial $m$ in
  $P_r'(X)$ for any $r \geq \ell$ such that $m$ is independent in
  $M_3'$ and $M_4'$, which can be solved using Corollary~\ref{cor:basis}
  in~time~$O^*(2^{2k})$. 
\end{proof}

Since rooted in-branchings and out-branchings can be represented via matroid intersection over matroids of rank $n-1$, an $O^*(2^{2n})$-time algorithm for \textsc{Minimum Equivalent Graph} follows.

\subsection{Eulerian Deletion}

An undirected (or directed) graph is said to be \emph{Eulerian} if it admits a closed walk that visits every edge (or arc, respectively.) exactly once.
It is known that an undirected graph is Eulerian if and only if it is connected and \emph{even}, i.e., every vertex has even degree and that a directed graph is Eulerian if and only if weakly connected and \emph{balanced}, i.e., every vertex has the same number of in-neighbors as out-neighbors~\cite{book-digraphs}.
\textsc{Undirected Eulerian Edge Deletion} (\textsc{Directed Eulerian Edge Deletion}) is the problem of determining whether the input undirected (directed) graph has an edge (arc) set $S$ of size at most $k$ such that $G \setminus S$ is Eulerian.
Cai and Yang~\cite{CaiY11} initiated the parameterized analysis of these problems among other related problems.
The parameterized complexity of \textsc{Undirected Eulerian Edge Deletion} and \textsc{Directed Eulerian Edge Deletion} was left open by Cai and Yang~\cite{CaiY11}.
Cygan et al.~\cite{CyganMPPS14euler} designed the first FPT algorithms with running time $O^*(2^{O(k \log k)})$ based on the colour-coding technique.
Later, Goyal et al.~\cite{GoyalMPPS18even} gave improved algorithms running in time $O^*(2^{(2 + \omega)k})$ using a representative set approach.

We briefly describe their approach on \textsc{Undirected Eulerian Edge Deletion} (the directed version is similar).
Let $T$ denote the set of vertices of odd degree in $G$. (Note that $T$ must have even cardinality.)
An edge set $S$ is called a \emph{$T$-join} if $T$ is exactly the set of vertices of odd degree in the graph $(V, S)$.
In other words, a $T$-join is an edge set that is the disjoint union of a set of $T$-paths $\mathcal{P}$ that induce a matching on the vertices in $T$ and a set of cycles (see e.g., \cite{Schrijver}). 
We note that a $T$-join is (inclusion-wise) minimal if and only if it is acyclic. 
It follows that a minimal $T$-join decomposes (though not necessarily uniquely) into $|T|/2$ paths connecting disjoint pairs of vertices in $T$.
However, a $T$-join that decomposes into paths can still be non-minimal if, for instance, two paths intersect at two vertices and thereby form a cycle.
We will say that a $T$-join is \emph{semi-minimal} if it is the edge-disjoint union of $|T|/2$ walks between disjoint pairs of $T$.
Cygan et al.~\cite{CyganMPPS14euler} observed that an edge set $S$ with $|S| \le k$ is a solution for \textsc{Undirected Eulerian Edge Deletion} if and only if $S$ is a $T$-join and $G \setminus S$ is connected.
The algorithm of Goyal et al.~\cite{GoyalMPPS18even} employs a dynamic programming approach; there is an entry for every subset of $T' \subseteq T$ (which may have size up to $2k$), which stores $T$-walks between disjoint pairs of $T'$.
The number of such walks is unbounded in $k$, but the number of walks stored in the table can be reduced using representative sets of co-graphic matroids.

We give an $O^*(2^k)$-time (and polynomial-space) algorithm.
The improvements are twofold.
First, we avoid computing the representative families.
Second, we avoid dynamic programming over the subsets of $T$.
Let $X = \{ x_e \mid e \in E \}$ be a set of edge variables
and $Y = \{y_{tt'} \mid t, t' \in T\}$ a set of variables which will encode
the decomposition of a minimal $T$-join into $T$-paths. We use the convention
that $y_{tt'}$ and $y_{t't}$ denote the same variable. 

\begin{lemma} \label{lemma:undirected-eulerian}
  There is a polynomial $P(X,Y)$ that can be efficiently evaluated such that its terms that are multilinear of degree $k$ in $X$  enumerate all minimal $T$-joins $S$ of size $k$ and (not necessarily all) semi-minimal $T$-joins of size $k$. 
\end{lemma}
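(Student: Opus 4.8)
The plan is to build $P(X)$ from the walk-enumerating polynomials of Section~\ref{sec:enum-poly}, combined with a Pfaffian to sum over all pairings of $T$ in polynomial time. Write $T=\{t_1,\ldots,t_{2\ell}\}$; if $|T|>2k$ we may immediately output the zero polynomial, since no $T$-join of size $k$ can pair up $T$. Let $A$ be the $V\times V$ matrix over $\F$ with $A[u,v]=x_{uv}$ if $uv\in E$ and $A[u,v]=0$ otherwise, so that for $a,b\in T$ and $j\ge 1$ the entry $(A^j)[a,b]$ equals $\sum_{W}X(W)$, the sum over $(a,b)$-walks $W$ of length $j$ of the monomial $X(W)=\prod_{e\in E(W)}x_e$ taken with multiplicity. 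Introduce a length-tracking variable $z$ and form the symmetric $T\times T$ matrix $M=M(X,z)$ with zero diagonal and $M[a,b]=\sum_{j=1}^{k}z^{j}(A^j)[a,b]$ for $a\ne b$; all entries are obtained from the $k$ powers $A,A^2,\ldots,A^k$. Over a field of characteristic $2$ the Pfaffian $\Pf M=\sum_{\mu}\prod_{\{a,b\}\in\mu}M[a,b]$, where $\mu$ ranges over all perfect matchings (pairings) of $T$, is still well defined and can be evaluated in polynomial time --- e.g.\ by computing $\det M=(\Pf M)^2$ and extracting the square root of the evaluation coordinate-wise via the Frobenius endomorphism, or by a direct Pfaffian-elimination procedure as already used in Section~\ref{sec:enum-poly}. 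Finally set $P(X)$ to be the coefficient of $z^{k}$ in $\Pf M$, obtained from at most $\ell k+1$ evaluations of $\Pf M$ at distinct values of $z$ by interpolation (Lemma~\ref{lemma:interpolation}).

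Unwinding the definitions, $P(X)=\sum X(W_1)\cdots X(W_\ell)$, where the sum is over all pairings $\{a_1,b_1\},\ldots,\{a_\ell,b_\ell\}$ of $T$ and all tuples of walks $W_i$ from $a_i$ to $b_i$ in $G$ with $\sum_i|E(W_i)|=k$. A monomial of this sum is multilinear of degree $k$ exactly when the walks $W_1,\ldots,W_\ell$ are pairwise edge-disjoint and none of them traverses an edge twice, i.e.\ when $W_1\cup\cdots\cup W_\ell$ is a union of $\ell$ edge-disjoint walks joining the $\ell$ disjoint pairs of $T$ --- hence a semi-minimal $T$-join of size $k$; so the multilinear degree-$k$ terms of $P$ are supported on (some) semi-minimal $T$-joins of size $k$. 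Conversely, let $S$ be any minimal $T$-join with $|S|=k$. Then $S$ decomposes into $\ell$ edge-disjoint \emph{paths} joining a pairing of $T$ (the endpoint pairing of the paths), and each such path is in particular an $(a_i,b_i)$-walk, so this decomposition contributes the monomial $\prod_{e\in S}x_e$, which is multilinear of degree $k$. Thus every minimal $T$-join of size $k$ is ``enumerated'' as required, whereas a semi-minimal $T$-join that admits an even number of (pairing, walk-tuple) decompositions of total length $k$ may cancel over characteristic $2$ --- which is precisely why not all semi-minimal $T$-joins need appear.

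The step I expect to be the main obstacle is making sure, over characteristic $2$, that the monomial $\prod_{e\in S}x_e$ of a minimal $T$-join $S$ genuinely survives, i.e.\ that its coefficient in $P$ is nonzero rather than an even sum; note that $S$ may have several endpoint pairings whose paths share internal vertices (so the number of decompositions can exceed one). One route is to prove directly that this number of decompositions is always odd --- it is, for instance, for ``star-like'' joins, where it equals $(|T|-1)!!$ --- but I do not see a transparent parity argument in general. The cleaner fix is to work with the \emph{labelled} walk polynomials of Section~\ref{sec:enum-poly}: replace each factor $A$ in $M[a,b]$ by a product $A_1\cdots A_j$ of matrices carrying fresh position-indexed variables, so that distinct labelled walk-tuples give distinct monomials and no cancellation occurs at all, at the cost that $M[a,b]$ and $M[b,a]$ differ and $M$ is no longer symmetric; the Pfaffian summation over pairings must then be restored by symmetrising the labelling (e.g.\ indexing walk positions simultaneously from both ends) or by replacing $\Pf M$ with an equivalent determinantal sum over pairings. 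Reconciling ``kill all cancellation by algebraic fingerprinting'' with ``keep enough symmetry to sum over the super-exponentially many pairings of $T$ in polynomial time'' is the crux. Once that is settled, the remaining verifications --- efficient evaluability of $\Pf M$, the interpolation bound, and the correspondence between multilinear degree-$k$ terms and (semi-)minimal $T$-joins --- are routine, and the overall evaluation of $P(X)$ runs in time polynomial in $n+k$.
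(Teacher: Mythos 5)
Your construction matches the paper's almost exactly: a $T\times T$ matrix whose $(a,b)$ entry is the \emph{unlabelled} walk polynomial $\sum_{j=1}^k (A^j)[a,b]$, a Pfaffian summing over the pairings of $T$, and extraction of the degree-$k$ part (you track length with a variable $z$ and interpolate; the paper extracts the homogeneous part directly, which is equivalent). You correctly flag the one non-routine point --- whether the coefficient of $\prod_{e\in S}x_e$ survives over characteristic $2$ for a minimal $T$-join $S$ --- and the paper's own proof indeed passes over it with nothing more than the remark that ``there is an algebraically independent term'' per minimal $T$-join. But the concern has a clean, direct resolution, and the workaround you propose via labelled walks and symmetrised fingerprinting is unnecessary (and, as you observe, awkward to reconcile with the Pfaffian sum over pairings).

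The fact you are missing is that a minimal $T$-join $S$ is a forest (removing any cycle from a $T$-join leaves a $T$-join, so minimality forbids cycles), and that in a forest the number of ways to partition $E(S)$ into edge-disjoint trails joining disjoint pairs of $T$ is always odd. Such a partition is in bijection with a choice, independently at each vertex $v$ of $(V,S)$, of a \emph{local pairing}: a partition of the $d_v$ incident edges into $\lfloor d_v/2\rfloor$ disjoint pairs, with one unpaired edge exactly when $d_v$ is odd, i.e.\ when $v\in T$. Following the paired edges traces out, since $S$ is acyclic, a family of simple paths whose endpoints are precisely the $|T|$ unpaired edges (one per $T$-vertex); conversely, because a trail in a forest is a simple path and so uses at most one entry/exit pair at each interior vertex, every trail-partition induces a unique local pairing at each vertex. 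The number of local pairings at $v$ is $(d_v-1)!!$ when $d_v$ is even and $d_v!!$ when $d_v$ is odd; both are products of odd integers, hence odd, so the total count $\prod_v(\cdot)$ is odd. Thus $\prod_{e\in S}x_e$ has odd, hence nonzero, coefficient over any field of characteristic $2$ for every minimal $T$-join $S$ of size $k$. Semi-minimal but non-minimal $T$-joins contain cycles, the bijection breaks down, and their coefficients may vanish --- exactly the slack the lemma allows.
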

\begin{proof}
  Let $A$ be a symmetric matrix indexed by $V$, where $A[u, v] = x_{uv}$ if $uv \in E$ and $A[u, v] = 0$ otherwise.
  For every $v \in V$, let $e_v$ be the $|V|$-dimensional vector where $e_v[v] = 1$ and $e_v[v']= 0$ for each $v' \in V \setminus \{ v \}$.
  We define a skew-symmetric matrix $A'$ indexed by $T$, where for every~$u, v \in$~$T$,
  \begin{align*}
    A'[u, v] = \sum_{\ell \in [k]} e_u^T A^{\ell} e_v y_{uv},
  \end{align*}
  which enumerates all $(u,v)$-walks of length up to $k$ (with an extra term $y_{t,t'}$).
  Note that this is the \emph{unlabelled} walk polynomial, as opposed to the labelled walk polynomial defined in Section~\ref{sec:prel}.
  We claim that the degree-$k$ terms of $\Pf A'$ yield the desired polynomial.
  Recall that the Pfaffian enumerates all perfect matching on the complete graph on $T$.
  Thus, every multilinear term in the monomial expansion corresponds to a set of $T$-walks that connect disjoint pairs of $T$ with no edge occurring twice or more.
  Each multilinear term thus corresponds to a semi-minimal $T$-join in $G$.
  In the other direction, let $S$ be a minimal $T$-join, decomposed into paths as $S=E(\cP_1) \cup \ldots \cup E(\cP_t)$.
  Note that since $S$ is acyclic, every path $\cP_i$ is uniquely determined by its endpoints. 
  Let $F \subseteq \binom{T}{2}$ be the matching on $T$ induced by the decomposition,
  i.e., for every $i \in [t]$ there is an edge $e_i \in F$ on the endpoints of $\cP_i$.
  Then the monomial
  \[
    \prod_{i=1}^t y_{e_i} \prod_{e \in E(\cP_i)} x_e = \prod_{st \in F} y_{st} \prod_{e \in S} x_e
  \]
  is produced only exactly once in $\Pf A'$. 
\end{proof}

We solve \textsc{Undirected Eulerian Deletion} as follows.
Assume that there is no solution of size at most $k - 1$.
To find a solution of size exactly $k$, let $P_k(X,Y)$ be the part of $P(X,Y)$ that has degree $k$ in $X$, 
where $P(X,Y)$ is the polynomial defined in Lemma~\ref{lemma:undirected-eulerian}.
Note that $P_k(X,Y)$ can be evaluated via polynomial interpolation. 

We use the basis sieving algorithm (Theorem~\ref{theorem:sieve-basis}) over the cographic matroid $M_k$ truncated to rank~$k$.
Note that a multilinear term corresponding to a semi-minimal $T$-join vanishes during the sieving step, since we assumed that there is no solution of size $k - 1$ or smaller.
Thus, we sieve for minimal $T$-join $S$ such that $S$ is a basis for $M_k$, i.e., $G \setminus S$ is connected.
By iterating over all values up to $k$, we can find a minimum solution (or decide that no solution of size $k$ or smaller exists) in time $O^*(2^k)$.

\begin{theorem}
  \textsc{Undirected Eulerian Deletion} can be solved in $O^*(2^{k})$ time.
\end{theorem}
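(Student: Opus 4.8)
The plan is to reduce \textsc{Undirected Eulerian Edge Deletion} to a basis-sieving query over the cographic matroid, following the characterization attributed to Cygan et al.~\cite{CyganMPPS14euler}. Write $T$ for the set of odd-degree vertices of $G$ (which has even cardinality). Recall that an edge set $S$ with $|S| \le k$ is a solution if and only if $S$ is a $T$-join and $G \setminus S$ is connected, and that every minimal $T$-join consists of exactly $|T|/2$ edge-disjoint paths pairing up the vertices of $T$. The polynomial $P(X)$ of Lemma~\ref{lemma:undirected-eulerian} has the property that its multilinear monomials of degree $k$ enumerate semi-minimal $T$-joins of size $k$, with a non-cancelling term for every minimal such $T$-join. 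So the task becomes: detect a multilinear degree-$k$ monomial of $P$ whose support, viewed as an edge set $S$, satisfies ``$G \setminus S$ connected''.

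To encode the connectivity constraint, I would use the \emph{cographic matroid} $M$ of $G$, in which an edge set $F$ is independent precisely when $G \setminus F$ has the same connected components as $G$. Being the dual of the graphic matroid, $M$ is representable over (an extension field of) $\mathrm{GF}(2)$ with an efficiently computable representation, which is what Theorem~\ref{theorem:sieve-basis} requires. Let $M_k$ be the $k$-truncation of $M$, so that a basis of $M_k$ is an edge set of size exactly $k$ whose removal preserves connectivity. I would extract the homogeneous degree-$k$ part $P_k(X)$ by polynomial interpolation (Lemma~\ref{lemma:interpolation}), costing only a polynomial number of evaluations of $P$, and then run the basis-sieving algorithm of Theorem~\ref{theorem:sieve-basis} on $P_k$ (which is homogeneous) with each edge variable $x_e$ associated with the column of the chosen representation of $M_k$ indexed by $e$; working over a sufficiently large extension field makes the one-sided error negligible.

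The point that needs care — and which I expect to be the main obstacle in the write-up — is that $P_k$ may also carry multilinear terms arising from semi-minimal $T$-joins that are \emph{not} minimal, i.e.\ unions of edge-disjoint walks rather than paths, and a priori such a term could survive the sieve and cause a false positive. To rule this out, I would run the whole procedure for increasing target sizes $k' = 1, 2, \ldots, k$, maintaining the invariant that there is no solution of size at most $k'-1$ when processing $k'$. Under this invariant, a non-minimal $T$-join $S$ of size $k'$ strictly contains a minimal $T$-join $S' \subsetneq S$ with $|S'| \le k'-1$; since $G \setminus S' \supseteq G \setminus S$, if the term for $S$ passes the sieve (so $G \setminus S$ is connected) then $G \setminus S'$ is connected too, making $S'$ a solution of size at most $k'-1$ and contradicting the invariant. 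Hence at step $k'$ a positive answer from basis sieving corresponds exactly to a minimal $T$-join of size $k'$ whose removal keeps $G$ connected, i.e.\ to a genuine solution of size $k'$.

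Finally, for the running time, each invocation of Theorem~\ref{theorem:sieve-basis} on $P_{k'}$ runs in $O^*(2^{k'})$ time and polynomial space: the polynomial factor absorbs the degree of $P$ and the cost of evaluating it, which is itself polynomial since the powers $A^\ell$ and the Pfaffian are polynomial-time computable. Summing over $k' \le k$ yields the claimed $O^*(2^k)$ bound, with polynomial space throughout.
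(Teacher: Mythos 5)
Your proposal is correct and follows essentially the same route as the paper: the enumerating polynomial from Lemma~\ref{lemma:undirected-eulerian}, interpolation to extract the degree-$k$ part, basis sieving (Theorem~\ref{theorem:sieve-basis}) over the $k$-truncated cographic matroid, and iterating $k'$ upward under the invariant that no smaller solution exists. The argument you spell out for why a non-minimal semi-minimal $T$-join cannot pass the sieve --- it would contain a strictly smaller minimal $T$-join $S'$ with $G \setminus S'$ connected, contradicting the invariant --- is exactly what the paper compresses into a single sentence.
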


Next, we briefly discuss the \textsc{Directed Eulerian Deletion}.
Goyal et al.~\cite{GoyalMPPS18even} showed that an arc set $S$ with $|S| \le k$ is a minimal solution for \textsc{Directed Eulerian Deletion} if and only if $S$ is the union of $\ell$ arc-disjoint paths $\cP = \{ P_1, \dots, P_{\ell} \}$ such that (i) $G \setminus S$ is weakly connected and (ii) there are exactly $\deg^+(v) - \deg^-(v)$ paths starting at every $v$ with $\deg^+(v) > \deg^-(v)$ and $\deg^-(v) - \deg^+(v)$ paths ending at every $v$ with $\deg^-(v) > \deg^+(v)$, where $\ell = \frac{1}{2} \sum_{v \in V} |\deg^+(v) - \deg^-(v)|$.
We show the directed analogue of Lemma~\ref{lemma:undirected-eulerian}.
For bookkeeping, we modify the graph as follows. For every $v \in V$ with $\deg^+(v)>\deg^-(v)$, create
$\iota(v):=\deg^+(v)-\deg^-(v)$ new vertices $v_i^+$, $i \in [\iota(v)]$, and add an arc $v_i^+v$ for each of them.
Let $T^+$ be the union of all such vertices $v_i^+$. 
Similarly, for every $v \in V$ with $\deg^-(v)>\deg^+(v)$, create
$\iota(v) := \deg^-(v)-\deg^+(v)$ new vertices $v_i^-$, $i \in [\iota(v)]$, and add an arc $vv_i^-$ for each of them.
Let $T^-$ be the union of all such vertices $v_i^-$. 
Let $G'$ be the modified graph. We can now identify edge sets $S$ in $G$ that meet the balance requirement 
of a solution with $(T^+,T^-)$-flows in $G'$. Let $E_T$ be the edges incident to $T^+ \cup T^-$. 
For simplicity, we refer to a \emph{$(T^+,T^-)$-flow in $G$} as an edge set $S$ in $G$
such that $S \cup E_T$ is a $(T^+,T^-)$-flow in $G'$. Analogous to the undirected case,
a \emph{minimal $(T^+,T^-)$-flow} is a $(T^+, T^-)$-flow in $G$ which is inclusion-wise minimal,
and a \emph{semi-minimal $(T^+,T^-)$-flow} is a $(T^+,T^-)$-flow $S$ in $G$ such that $S \cup E_T$
decomposes into $(T^+,T^-)$-walks.

Let $X=\{x_e \mid e \in E\}$ be a set of edge variables; note that no edge variables are created
for the edge of $E_T$. Furthermore, introduce a second set of variables
$Y=\{y_{e,pq} \mid e \in E, p \in T^+, q \in T^-\}$ to keep track of the decomposition of a $(T^+,T^-)$-flow $S$ into paths.

\begin{lemma}
  \label{lemma:directed-eulerian}
  There is a polynomial $P(X,Y)$ that can be efficiently evaluated such that its terms which are
  multilinear of degree $k$ in $X$ enumerate all minimal $(T^+,T^-)$-flows $S$ in $G$ of size $k$,
  in addition to possibly some that are semi-minimal but not minimal.
\end{lemma}
\begin{proof}
  We define a $T^+ \times T^-$ matrix $A'$, where for $u_i^+ \in T^+$ and $v_j^- \in T^-$
  we let entry $A'[u_i^+, v_j^-]$ contain a polynomial enumerating all $(u_i^+,v_j^-)$-walks in $G'$ of length at most $k$, as in Lemma~\ref{lemma:undirected-eulerian}, except every variable $x_e$, $e \in E$ is multiplied by $y_{e,pq}$. Note, again, that we use the value $1$ rather than a variable $x_e$ for edges $e \in E_T$. 
  Let $P(X,Y) = \det A$.
  Then, each term in $P(X,Y)$ corresponds to a set of $\ell$ walks from $T^+$ to $T^-$ in $G'$, where the initial and final edges are shared 
  between all terms and thus can be ignored. Thus every monomial corresponds to a semi-minimal $(T^+,T^-)$-flow.
  Furthermore, for every minimal $(T^+,T^-)$-flow $S$ and every decomposition of $S$ into paths $\cP_1$, \ldots, $\cP_\ell$, 
  the resulting monomial is unique, since the $Y$-variables encode the decomposition of $S$, and given such a decomposition every path $\cP_i$ contains a unique spanning walk. 
  As in Lemma~\ref{lemma:undirected-eulerian}, the part of $P(X,Y)$ that is multilinear in $X$ of degree $k$
  then corresponds to semi-minimal $(T^+,T^-)$-flows in $G$ of size $k$, including all minimal flows, and can be evaluated using $\det A$ via interpolation. 
\end{proof}

As for \textsc{Undirected Eulerian Deletion}, using the sieving algorithm of Theorem~\ref{theorem:sieve-basis} over the cographic matroid of the underlying undirected graph, we obtain:

\begin{theorem}
  \textsc{Directed Eulerian Deletion} can be solved in $O^*(2^k)$ time.
\end{theorem}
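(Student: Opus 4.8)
The plan is to mirror the \textsc{Undirected Eulerian Deletion} algorithm, replacing the Pfaffian-of-walks construction by a determinant-of-walks construction adapted to the directed boundary conditions, and then sieving over the cographic matroid exactly as before. First I would invoke the characterisation of Goyal et al.~\cite{GoyalMPPS18even} recalled above: an arc set $S$ with $|S|\le k$ is a minimal solution iff $G\setminus S$ is weakly connected and $S$ decomposes into $\ell=\frac{1}{2}\sum_{v}|\deg^+(v)-\deg^-(v)|$ arc-disjoint paths, of which exactly $\deg^+(v)-\deg^-(v)$ start at each vertex $v$ with $\deg^+(v)>\deg^-(v)$ and exactly $\deg^-(v)-\deg^+(v)$ end at each vertex with $\deg^-(v)>\deg^+(v)$. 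As in the undirected case I would iterate over the target size $k'=0,1,\dots,k$, assuming in the $k'$-th call that there is no solution of size at most $k'-1$; the smallest successful $k'$ is the optimum.

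Next, build the walk matrix. Let $A$ be the symbolic directed adjacency matrix of $G$ over a sufficiently large field $\F$ of characteristic $2$, with $A[u,v]=x_{uv}$ for each arc $uv$ and $0$ elsewhere, so that $(A^i)[u,v]$ enumerates length-$i$ walks from $u$ to $v$. With $V^+$ and $V^-$ the (multi)sets of excess-out and excess-in vertices carrying the above multiplicities, form the $V^+\times V^-$ matrix $A'$ with $A'[v^+,v^-]=\sum_{i=1}^{k}e_{v^+}^{T}A^{i}e_{v^-}$ (the unlabelled walk polynomial from $v^+$ to $v^-$) and set $P(X)=\det A'$. Over characteristic $2$ the Leibniz signs vanish, so $P(X)$ equals the sum over bijections $\pi\colon V^+\to V^-$ of $\prod_{v^+}A'[v^+,\pi(v^+)]$; hence every monomial of $P$ is a product of $\ell$ walk-monomials, one per matched pair, with exactly $\deg^+(v)-\deg^-(v)$ walks starting at each $v\in V^+$, and such a monomial is multilinear in $X$ precisely when those $\ell$ walks are arc-disjoint. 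Moreover each minimal solution $S$ contributes a distinct multilinear monomial of $P$ of degree $|S|$, with no further cancellation (its arc set is recorded verbatim in the monomial, up to the immaterial choice of which walk out of $v$ is ``which''). I would then extract the total-degree-$k'$ part $P_{k'}(X)$ of $P$ via the substitution $x_e\gets zx_e$ and interpolation in $z$ (Lemma~\ref{lemma:interpolation}), and apply basis sieving (Theorem~\ref{theorem:sieve-basis}) to $P_{k'}$ with the rank-$k'$ truncation $M_{k'}$ of the cographic matroid of the underlying undirected graph of $G$, associating each $x_e$ with the vector of $e$ in $M_{k'}$.

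Correctness then follows as in the undirected proof. A multilinear degree-$k'$ monomial that survives the sieve has as support an arc set $S$ with $|S|=k'$, with $G\setminus S$ weakly connected (since $S$ is a basis of $M_{k'}$) and with $S$ a union of $\ell$ arc-disjoint walks of the prescribed boundary type; if some walk were not a path it would contain a closed subwalk whose removal yields a strictly smaller arc set still decomposing into $\ell$ arc-disjoint walks of the same boundary type and whose complement is still weakly connected and balanced, i.e.\ a solution of size $\le k'-1$, contradicting the hypothesis. Hence $S$ is a genuine minimal solution of size $k'$. Conversely, a solution of size exactly $k'$ contains a minimal solution of that size, which yields an uncancelled multilinear degree-$k'$ monomial spanning $M_{k'}$ that the sieve detects. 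Since $A$, its powers, the determinant and the interpolation are polynomial-time and $M_{k'}$ is representable over (an extension field of) $\F$ -- which also supplies the field size needed to drive the failure probability of Theorem~\ref{theorem:sieve-basis} to $o(1)$ -- each call costs $O^*(2^{k'})$ time and polynomial space, and summing over $k'\le k$ gives the claimed $O^*(2^{k})$ bound in polynomial space.

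The main obstacle I expect is that $\det A'$ degenerates whenever some vertex has excess degree at least $2$: then $A'$ has equal rows and $P\equiv 0$, so the construction as stated would wrongly report ``no solution''. What is needed is either a more careful matrix -- one whose $\ell$ prescribed walks out of a high-excess vertex are enumerated as an honest family of linearly independent rows -- or a preprocessing reduction to instances in which every excess degree lies in $\{-1,0,1\}$ (for example, attaching to each high-excess vertex auxiliary pendant sources/sinks joined by short chains of increasing length so that the corresponding walk rows become independent, shifting $k$ by the number of forced new arcs, and giving the new arcs the zero vector in the cographic matroid so that the sieve and the degree extraction are unaffected), together with a proof that the surgery preserves solutions. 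Settling this gadget cleanly, and confirming -- exactly as the iterated ``no smaller solution'' invariant lets one do in the undirected case -- that surviving semi-minimal-but-non-minimal terms cannot cause a false positive, is where the real work lies.
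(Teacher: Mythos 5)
Your algorithm and its correctness skeleton match the paper's exactly: the same characterisation from Goyal et al., the same $V^+\times V^-$ walk matrix $A'$, the same degree extraction, and the same basis sieve over the truncated cographic matroid under the ``no smaller solution'' invariant. You have also put your finger on what is a genuine gap, and the paper's own proof of this theorem --- exactly as terse as your reconstruction --- is equally silent on it: whenever some vertex $v$ has $|\deg^+(v)-\deg^-(v)|\ge 2$, every row of $A'$ labelled by a copy of $v$ carries the same polynomial $\sum_{\ell}e_v^T A^{\ell}e_{v^-}$, so $\det A'\equiv 0$ and the algorithm always answers ``no''.

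What you have not resolved --- and your proposed repair does not in fact work --- is how to remove this degeneracy. Attaching pendant sources $s_1,\dots,s_{d_v}$ to $v$ by chains of distinct lengths does make the rows $A'[s_i,\cdot]$ formally different (different chain prefix and residual length cap), but it does not break the underlying permutation symmetry. For any candidate arc set $S'$, the walks out of $s_1,\dots,s_{d_v}$ must traverse their chains into $v$ and then split into $d_v$ arc-disjoint walk-tails out of $v$; all $d_v!$ assignments of tails to sources produce the same monomial $\prod_{e\in S'}x_e$ through $d_v!$ distinct transversals of the matrix (the length caps never bind, as every relevant walk is shorter than $k$). The coefficient of that monomial in $\det A'$ is therefore a multiple of $d_v!$, which is even when $d_v\ge 2$, so it still vanishes over characteristic two. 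More generally, any repair that multiplies rows by fixed monomial prefixes cannot help, because the cancellation is caused by indistinguishable walk-tails rather than by literally equal rows of the matrix. A correct argument needs either a polynomial that enumerates arc sets directly rather than ordered walk systems, or a separate parity argument showing that some minimum solution's monomial always carries an odd coefficient; the paper's one-paragraph treatment supplies neither, and that is where the missing work lies.
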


\subsection{Conflict-free Solution}
\label{sec:conflict}

There has been a line of research studying ``conflict-free'' variants of classical problems \cite{AgrawalJKS20,DarmannPS09,DarmannPSW11,JacobMR21}.
Consider a problem in which we search for a solution $S$, which is a subset of the ground set $E$.
In the conflict-free version, the solution should form an independent (i.e., pairwise non-adjacent) set in $H$, where $H$ is an additionally given graph whose vertices are $E$ and whose edges are ``conflicts''.
Formally, let us define the problem \textsc{Conflict-free solution} as follows.
The input is a collection $\mathcal{F}$ of (possibly exponentially many) subsets of $E$, a conflict graph $H$ on $E$, and an integer $t$.
The problem asks there is a set $S \in \cF$ of size $t$ that forms an independent set in $H$.
We give another immediate consequence of Theorem~\ref{theorem:matroid-sieving} on \textsc{Conflict-free Solution}.
As a by-product, we improve on existing algorithms.
Since \textsc{Conflict-free Solution} is W[1]-hard in general, we restrict the input as follows.

\newcommand{\IS}{\text{IS}}

Let $P_H(X)$ over $\{ x_v \mid v \in V \}$ be an enumerating polynomial for independent sets in $H$, i.e., $P_H(X) = \sum_{I} c_I \prod_{v \in I} x_v$, where $I$ ranges over all independent sets of $H$ and $c_I \in \F$ is a constant.
Since it is NP-hard to determine the existence of an independent set of size $k$, the polynomial $P_H(X)$ cannot be efficiently evaluated, unless P = NP.
However, when $H$ is from a restricted graph, it can be.
Let $\mathcal{G}_{\IS}$ be a class of such graphs, i.e., let $\mathcal{G}_{\IS}$ contain all graphs $H$ such that $P_H(X)$ can be evaluated in
time $O(|V(H)|^c)$ for some constant $c$ using some fixed algorithm. 
One example of such a class $\mathcal{G}_{\IS}$ is then, for instance, chordal graphs (graphs that do not contain any cycle of length four or greater as an induced subgraph) as shown by Achlioptas and Zampetakis \cite{AchlioptasZ21}.
Moreover, we will say that a set family $\cF$ over $E$ is \emph{$k$-representable} if there is a matrix $A \in \F^{k \times \ell}$ over a field $\F$ of characteristic 2
such that every $e \in E$ is associated with a pairwise disjoint subset $\Gamma_e \subseteq [\ell]$, and for any $S \subseteq E$, $S \in \cF$ if and only if $A[\cdot, \bigcup_{e \in S} \Gamma_e]$ has full row rank (i.e., contains a non-singular submatrix).
By Theorem~\ref{theorem:matroid-sieving}, we have

\begin{theorem} \label{theorem:conflict-free}
  If $\cF$ is $k$-representable and $H \in \mathcal{G}_{\IS}$, then 
  \textsc{Conflict-free Solution} can be solved in $O^*(2^{k})$ time.
\end{theorem}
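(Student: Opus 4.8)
The plan is to combine the independence polynomial of $H$ --- which serves as an enumerating polynomial for the independent sets of $H$, and which is efficiently evaluable precisely because $H \in \mathcal{G}_{\IS}$ --- with odd sieving (Theorem~\ref{theorem:matroid-sieving}) against the matroid witnessing the $k$-representability of $\cF$. First I would set $X = \{x_e \mid e \in E\}$ and take $P_H(X) = \sum_{I} \prod_{e \in I} x_e$, where $I$ ranges over the independent sets of $H$; since $H \in \mathcal{G}_{\IS}$ this can be evaluated in polynomial time (for instance via Achlioptas--Zampetakis when $H$ is chordal; the divisions used there are harmless over a field of characteristic $2$, since we only ever evaluate at random points, which avoid a measure-zero locus with high probability). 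Note $P_H$ is multilinear and the monomial with support $S$ occurs iff $S$ is independent in $H$. To pin the solution size to exactly $t$, I would introduce a tracking variable $z$, evaluate $P_H$ with $x_e \mapsto z x_e$, and extract the coefficient of $z^t$ by Lagrange interpolation (Lemma~\ref{lemma:interpolation}); call the result $P^{=t}(X)$. This is still multilinear, and its monomials are exactly the independent sets of $H$ of size $t$.

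Next I would invoke $k$-representability: let $A \in \F^{k \times \ell}$ and the pairwise-disjoint sets $\Gamma_e \subseteq [\ell]$ be as in the hypothesis, and let $M$ be the rank-$k$ matroid represented by $A$. Applying Theorem~\ref{theorem:matroid-sieving} to $P^{=t}(X)$ with each $x_e$ associated with $\Gamma_e$ runs in $O^*(2^k)$ time and polynomial space. Because $P^{=t}$ is multilinear, odd support equals support, so the sieve reports YES iff there is a size-$t$ independent set $I$ of $H$ together with a sub-collection $S \subseteq I$ for which $\bigcup_{e \in S} \Gamma_e$ is a basis of $M$. Since $\cF$ is upward closed (full row rank only increases under adding columns), and since in the representations arising here every $\Gamma_e$ is a singleton --- so that a spanning union of columns already contains a basis --- this occurs iff $I \in \cF$, in which case $I$ itself is a valid solution; conversely every valid solution is such an $I$. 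Combining this equivalence with the running time and failure probability of Theorem~\ref{theorem:matroid-sieving} (boosting $|\F|$ by passing to an extension field if necessary) yields the claimed randomized $O^*(2^k)$-time, polynomial-space algorithm.

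The step I expect to be the main obstacle is precisely this bookkeeping equivalence: reconciling ``$S \in \cF$'' (stated as full row rank of $A[\cdot, \bigcup_{e\in S}\Gamma_e]$) with what the sieve actually detects --- a whole-block sub-collection of the support whose union is a \emph{basis} --- while simultaneously requiring the witness to be an independent set of $H$ of size exactly $t$. One must check that layering the $z$-degree restriction on top of the matroid sieve causes no interference (the variable $z$ and the auxiliary variables introduced inside Theorem~\ref{theorem:matroid-sieving} are all fresh, so no spurious cancellation arises), and that $P_H$ is a bona fide enumerating polynomial in the sense of Section~\ref{sec:enum-poly} (all coefficients nonzero), so that the monomial corresponding to a target solution genuinely survives through both the interpolation and the sieving.
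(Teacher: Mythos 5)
You correctly flag the main obstacle yourself --- that the odd sieve (Theorem~\ref{theorem:matroid-sieving}) detects only a \emph{whole-block} sub-collection $S$ of the support with $\sum_{e\in S}\gamma_e=k$ and $\bigcup_{e\in S}\Gamma_e$ an exact basis, while $k$-representability only asserts that $A[\cdot,\bigcup_{e\in I}\Gamma_e]$ has full row rank --- but your way out, ``in the representations arising here every $\Gamma_e$ is a singleton,'' is not a hypothesis of the theorem and is false for the paper's own two applications: \textsc{Conflict-free Matching} uses $\gamma_e=2$, and \textsc{Conflict-free Set Cover} uses $\gamma_{E}=|E|$. With non-singleton blocks the converse direction of your equivalence genuinely breaks. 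Concretely, take $V=\{1,2,3\}$, $\cE=\{\{1,2\},\{2,3\}\}$, $H$ edgeless, $t=2$, and the natural $3$-representation with $\Gamma_{\{1,2\}}$ mapping to columns $e_1,e_2$ and $\Gamma_{\{2,3\}}$ to $e_2,e_3$. Then $\{\{1,2\},\{2,3\}\}$ is a conflict-free set cover of size $2$, so the algorithm must answer YES, but the possible block-sums are $0,2,4$, never $k=3$, so the odd sieve applied as you describe returns NO.

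The fix is to unroll each block before sieving rather than hand it to the sieve as a chunk: substitute $x_e \gets y_e\prod_{q\in\Gamma_e}(1+z_{e,q})$ with a fresh fingerprint variable $y_e$ and a fresh $z_{e,q}$ for each index $q\in\Gamma_e$. Interpolate on the total $y$-degree to pin $|I|=t$, replace the $y_e$ by random field elements, and then invoke basis sieving (Theorem~\ref{theorem:sieve-basis}) on the $z$-variables with $\gamma_{e,q}=1$ and $z_{e,q}$ labelled by the single column $A[\cdot,q]$. A surviving monomial now has $z$-support an \emph{arbitrary} $k$-subset $T\subseteq\bigcup_{e\in I}\Gamma_e$ with $A[\cdot,T]$ non-singular, which exists precisely when $A[\cdot,\bigcup_{e\in I}\Gamma_e]$ has full row rank, i.e.\ $I\in\cF$; the $y_e$'s and the fact that the $\Gamma_e$ are disjoint ensure no spurious cross-monomial cancellation. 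The rest of your argument --- efficient evaluation of the independence polynomial, interpolation, and Schwartz--Zippel over a large extension field --- goes through unchanged, and your point that divisions in the Achlioptas--Zampetakis evaluation are harmless at random evaluation points is reasonable (the paper itself only remarks on this in a footnote).
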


Theorem~\ref{theorem:conflict-free} gives the following improvements over existing dynamic programming algorithms:
\begin{itemize}
  \item
  \textsc{Conflict-free Matching}: Given a graph $G = (V, E)$, a conflict graph $H = (E, E')$, and an integer $k$, the task is to decide whether $G$ has a conflict-free matching of size $k$.
  Agrawal et al.~\cite{AgrawalJKS20} showed that \textsc{Conflict-free Matching} can be solved in $O^*(2^{(2\omega + 2)k})$ time when $H$ is chordal.
  We can solve this problem in $O^*(4^k)$ time because the collection of matchings of size $k$ is $2k$-representable:
  Let $A$ be the representation of the uniform matroid over $V$ (with every vertex $v$ copied $\deg(v)$ times) of rank $2k$.
  For each edge $uv$, let $\Gamma_{uv}$ contain one copy of the column for $u$ and one copy of the column for $v$.
  Then a set of $k$ edges spans $A$ if and only if it is pairwise disjoint, i.e., forms a matching.
  \item
  \textsc{Conflict-free Set Cover}: Given a collection $\cE$ of sets over $V$ (with $|V| = n$), a conflict graph $H = (\cE, E')$, and an integer $t$, the task is to decide whether $G$ has a conflict-free set cover $S \subseteq \cE$ (i.e., $\bigcup S = V$) of size at most $t$.
  Jacob et al.~\cite{JacobMR21} gave an $O^*(3^n)$-time algorithm for \textsc{Conflict-free Set Cover} when the conflict graph $H$ is chordal.
  Since set covers are $n$-representable (mapping every set $E$ to a list of elements $(v,E)$, $v \in E$ as in Section~\ref{sec:matroid-applications}),
  by Theorem~\ref{theorem:conflict-free}, it can be solved in $O^*(2^n)$ time.
\end{itemize}

Incidentally, Jacob et al.~\cite{JacobMR21} showed that \textsc{Conflict-free Set Cover} is W[1]-hard parameterized by $n$, even if the conflict graph is bipartite.
This implies that the class of bipartite graphs is not contained in $\mathcal{G}_{\IS}$, although a maximum independent set can be found in polynomial time on bipartite graphs.
It is perhaps no coincidence that the problem of counting independent sets is \#P-hard for bipartite graphs \cite{ProvanB83}.

\section{Conclusions} \label{sec:conc}

We have presented \emph{determinantal sieving}, a new powerful method
for algebraic exact and FPT algorithms that
extends the power of multilinear sieving with the ability to sieve for
terms in a polynomial that in addition to being multilinear are also
independent in an auxiliary linear matroid.
This yields significantly improved and generalized results for a range
of FPT problems, including \textsc{$q$-Matroid Intersection} in time $O^*(2^{(q-2)k})$
over a field of characteristic 2, improving on a previous result, of $O^*(4^{qk})$~\cite{BrandP21},
as well as algorithms solving problems over \emph{frameworks}
in the same running time as was previously known for the basic
existence problem (e.g., \textsc{Subgraph Isomorphism}).
Additionally, we showed that over fields of characteristic 2,
we can exploit cancellations in monomial expansion to
sieve for terms in a polynomial whose \emph{odd support} contains a
basis for the auxiliary matroid. 
This has further applications for a multitude of problems,
such as finding diverse solution collections and
for parameterized path, cycle and linkage problems.
Over fields of characteristic 2, 
all our algorithms are randomized and use polynomial space. 

Let us mention a few issues that we have not focused on in this paper.

\paragraph{Weighted problems.} As in most algebraic algorithms, we can
handle solution weights with a pseudopolynomial running time. That is,
given an algebraic sieving algorithm for some problem over a ground set $V$
with a running time of $O^*(f(k))$, and given a set of item weights
$\omega \colon V \to \N$ and a weight target $W$, we can usually find
solutions of weight $W$ in time $O^*(f(k) \cdot W)$
by multiplying every variable $x_v$, $v \in V$
by $z^{\omega(v)}$ for a new variable $z$ and using interpolation.
However, the gold standard would be to reduce the weight dependency
to $O^*(f(k) \cdot \log W)$ for the task of finding a min-weight solution,
and this appears incompatible with algebraic algorithms.
Even for the most classical problem \textsc{TSP}, whose unweighted
variant \textsc{Hamiltonicity} is solvable in $O^*(1.66^n)$ time~\cite{Bjorklund14detsum},
the $O^*(2^n)$-barrier has been broken recently, and even then,
only partially and conditionally~\cite{Nederlof20TSP}.

\paragraph{Counting.} Since our most efficient algorithms work over
fields of characteristic 2, they do not intrinsically allow us to
count the number of solutions. Indeed, for many settings relevant
to us, such as \textsc{$k$-Path} and bipartite or general matchings,
the corresponding counting problems are known to be hard (\#W[1]-hard
and \#P-hard, respectively; see Curticapean~\cite{Curticapean18counting} for a survey). 
On the other hand, being able to detect the existence of a witness
does have some applications for \emph{approximate} counting.
In particular, having access to a decision oracle for
\emph{colourful} witnesses, given a colouring of the ground set,
implies approximate counting algorithms~\cite{BhattacharyaBGM22stacs,DellLM22counting,DellL21}.
Improved, algebraically based FPT approximate counting algorithms are
also known for particular problems, such as \textsc{\#$k$-Path}~\cite{LokshtanovBSZ21countkpath}.

\paragraph{Derandomization.} 
The task of derandomizing our results ranges from doable with known
methods to completely infeasible, given the details of the
application.

In a typical application of our method, 
combining a polynomial $P(X)$ and a linear matroid $M$ over $X$,
we have two sources of randomness: Finding a representation of $M$
and the Schwartz-Zippel step of checking whether $P(X)$ is non-zero
(including avoiding cancellations due to interference between 
multilinear monomials in $P(X)$, representing different bases of $M$).
For many matroids and matroid constructions, a representation can be
found efficiently, including uniform matroids, partition matroids, and
graphic and co-graphic matroids, as well as matroids constructed from
these using operations of dualization, contraction, disjoint union and
truncation~\cite{LokshtanovMPS18,OxleyBook,Marx09-matroid}. 
Beyond this, there appears to be a barrier. A deterministic
representation of transversal matroids would presumably also lead to a
deterministic solution to \textsc{Exact Matching}, which is long open. 
Since gammoids generalize transversal matroids, and transversal
matroids can be constructed via a sequence of matroid union steps
over very simple matroids~\cite{OxleyBook}, gammoids and non-disjoint union
also appear difficult. However, some progress has been made on
constructing representations in superpolynomial time that depends on
the rank~\cite{LokshtanovMP0Z18repr,MisraPR020repr}.

For the Schwartz-Zippel PIT step, the obstacles are oddly similar. 
A polynomial $P(X)$ (over $\Q$ or $\R$) is \emph{combinatorial} if all
coefficients are non-negative. In such a case, PIT can be derandomized
via the exterior algebra; see Brand~\cite{Brand19thesis}.
This aligns with the extensor-based determinantal sieving used
in this paper, hence some of our results can be derandomized, albeit at the expense of increased running time.
More precisely, given a skew arithmetic circuit computing a combinatorial polynomial, 
we can sieve for a $k$-multilinear term whose support forms a 
matroid basis in $O^*(4^k)$ time, using the idea of lift mapping.
However, for results that depend on odd sieving, or where $P(X)$
corresponds to a determinant or Pfaffian computation, 
derandomization once again appears infeasible.

\subsection{Open questions}

Let us highlight some open questions.

One very interesting question is \textsc{Directed Hamiltonicity}.
We note two very different methods for checking Hamiltonicity
in bipartite digraphs in time $O^*(c^n)$ for $c<2$. The first is by 
Cygan, Kratsch and Nederlof~\cite{CyganKN18hamilton}, who established
a rank bound of $2^{n/2-1}$ on the \emph{perfect matching connectivity matrix}. 
This leads to a SETH-optimal algorithm for \textsc{Hamiltonicity}
parameterized by pathwidth, and an algorithm for
\textsc{Hamiltonicity} in bipartite digraphs in time $O^*(1.888^n)$.
However, the fastest algorithm for \textsc{Hamiltonicity} in bipartite
digraphs follows a polynomial sieving approach by Björklund, Kaski
and Koutis~\cite{BjorklundKK17directed}. In our terminology, we would
describe their algorithm as, given a bipartite digraph $G=(U \cup V,E)$,
enumerating subgraphs of $G$ that have in- and out-degree 1 in $V$
and whose underlying undirected graph is a spanning tree of $G$
plus one edge. It then remains to sieve via inclusion-exclusion for
those graphs which have non-zero in- and out-degree for every vertex
in $U$ as well, which they show can be done in time $O^*(3^{|U|})$ 
rather than $4^{|U|}$ due to the structure of the problem space.
Still, it remains unknown whether \textsc{Hamiltonicity} in general
digraphs can be solved in $O^*(c^n)$ for any $c<2$.

We would be very interested in a derivation of the $2^{n/2}$
rank bound for the perfect matching connectivity matrix
in a less problem-specific manner. Such a result, one would hope,
could uncover new tools and methods that could lead to improved
algebraic algorithms for a wider range of applications.
We also note, to the best of our
knowledge, that the optimal running time for \textsc{Hamiltonicity}
parameterized by treewidth remains open. Finally, can
\textsc{$k$-Path} be solved in time $O^*(c^k)$ for some $c<2$
on bipartite digraphs? 

Another major problem concerns \textsc{$k$ Disjoint Paths}
and its harder variant \textsc{Min-Sum $k$ Disjoint Paths}.
Given the success of algebraic algorithms for related problems,
it would be very interesting to find an algebraic algorithm for either problem
for general $k$. Björklund and Husfeldt~\cite{BjorklundH19twodp}
show an algebraic algorithm for \textsc{Min-Sum $k$ Disjoint Paths} for $k=2$, 
by showing a way to compute the permanent over $\mathbb{Z}_4[X]/(X^m)$, the ring
of bounded-degree polynomials over $\mathbb{Z}_4$. For the nearly ten years
since this result's original publication, we do not know of any developments
even for $k=3$. 

As a more down-to-earth problem, what is the best running time for
\textsc{$q$-Matroid Parity} and (possibly) \textsc{$q$-Set Packing}?
Recall that Björklund et al.~\cite{BjorklundHKK17narrow} showed that
\textsc{$q$-Dimensional Matching} can be solved in $O^*(2^{(q-2)k})$ time
and that \textsc{$q$-Set Packing} can be solved in time $O^*(2^{(q-\varepsilon_q)k})$
for some $\varepsilon_q > 0$, essentially by a randomized reduction to
\textsc{$q$-Dimensional Matching}.
Can \textsc{$q$-Matroid Parity}, the generalisation of \textsc{$q$-Set Packing},
be solved in $O^*(2^{(q-\varepsilon_q)k})$ time for some $\varepsilon_q > 0$?
The difference is most stark for $q=3$, where
\textsc{3-Matroid Intersection} is solvable in time $O^*(2^k)$, 
\textsc{3-Set Packing} in time $O^*(3.328^k)$
and \textsc{3-Matroid Parity} only in time $O^*(8^k)$. 

Among other individual problems of interest are to find improvements
and the best possible running times for problems such as 
\textsc{Long Directed Cycle}~\cite{FominLPS16JACM,Zehavi16ldc}
\textsc{Connected $f$-Factor}~\cite{FominGPS19sidma},
and more generally parameterized connectivity problems, cf.~\cite{AgrawalMPS22rSNDP,Feldmann0L22}.

\printbibliography

@inproceedings{BjorklundKK14witness,
  author =        {Andreas Bj{\"{o}}rklund and Petteri Kaski and
                   Lukasz Kowalik},
  booktitle =     {{European Symposium on Algorithms (ESA)}},
  pages =         {149--160},
  publisher =     {Springer},
  series =        {Lecture Notes in Computer Science},
  title =         {Fast Witness Extraction Using a Decision Oracle},
  volume =        {8737},
  year =          {2014},
  url          = {https://doi.org/10.1007/978-3-662-44777-2\_13},
  doi          = {10.1007/978-3-662-44777-2\_13},
  timestamp    = {Tue, 14 May 2019 10:00:54 +0200},
  biburl       = {https://dblp.org/rec/conf/esa/BjorklundKK14.bib},
  bibsource    = {dblp computer science bibliography, https://dblp.org}
}

@inproceedings{BjorklundKKL15alenex,
  author =        {Andreas Bj{\"{o}}rklund and Petteri Kaski and
                   Lukasz Kowalik and Juho Lauri},
  booktitle =     {{SIAM Symposium on Algorithm Engineering and Experiments (ALENEX)}},
  pages =         {104--118},
  publisher =     {{SIAM}},
  title =         {Engineering Motif Search for Large Graphs},
  year =          {2015},
    url          = {https://doi.org/10.1137/1.9781611973754.10},
  doi          = {10.1137/1.9781611973754.10},
  timestamp    = {Sun, 25 Oct 2020 22:35:34 +0100},
  biburl       = {https://dblp.org/rec/conf/alenex/BjorklundKKL15.bib},
  bibsource    = {dblp computer science bibliography, https://dblp.org}
}

@phdthesis{Brand19thesis,
  author =        {Brand, Cornelius},
  title =         {Paths and Walks, Forests and Planes},
  year =          {2019},
   url          = {https://d-nb.info/1196621217},
  timestamp    = {Wed, 22 Mar 2023 17:57:21 +0100},
  biburl       = {https://dblp.org/rec/phd/de/Brand19.bib},
  bibsource    = {dblp computer science bibliography, https://dblp.org}
}

@article{Schwartz80,
  author =        {Jacob T. Schwartz},
  journal =       {J. {ACM}},
  number =        {4},
  pages =         {701--717},
  title =         {Fast Probabilistic Algorithms for Verification of
                   Polynomial Identities},
  volume =        {27},
  year =          {1980},
  url          = {https://doi.org/10.1145/322217.322225},
  doi          = {10.1145/322217.322225},
  timestamp    = {Wed, 14 Nov 2018 10:35:24 +0100},
  biburl       = {https://dblp.org/rec/journals/jacm/Schwartz80.bib},
  bibsource    = {dblp computer science bibliography, https://dblp.org}
}

@inproceedings{Zippel79,
  author =        {Zippel, Richard},
  booktitle =     {International Symposium on Symbolic and Algebraic Manipulation (EUROSAM)},
  pages =         {216--226},
  title =         {Probabilistic algorithms for sparse polynomials},
  year =          {1979},
   url          = {https://doi.org/10.1007/3-540-09519-5\_73},
  doi          = {10.1007/3-540-09519-5\_73},
  timestamp    = {Fri, 17 Jul 2020 16:12:47 +0200},
  biburl       = {https://dblp.org/rec/conf/eurosam/Zippel79.bib},
  bibsource    = {dblp computer science bibliography, https://dblp.org}
}

@article{Tutte47,
  author =        {Tutte, William T},
  journal =       {Journal of the London Math. Soc.},
  number =        {2},
  pages =         {107--111},
  publisher =     {Wiley Online Library},
  title =         {The factorization of linear graphs},
  volume =        {1},
  year =          {1947},
  url = {https://doi.org/10.1112/jlms/s1-22.2.107},
  doi = {10.1112/jlms/s1-22.2.107}
}

@article{Nederlof13algorithmica,
  author =        {Jesper Nederlof},
  journal =       {Algorithmica},
  number =        {4},
  pages =         {868--884},
  title =         {Fast Polynomial-Space Algorithms Using
                   Inclusion-Exclusion},
  volume =        {65},
  year =          {2013},
    url          = {https://doi.org/10.1007/s00453-012-9630-x},
  doi          = {10.1007/S00453-012-9630-X},
  timestamp    = {Tue, 26 Jun 2018 14:10:25 +0200},
  biburl       = {https://dblp.org/rec/journals/algorithmica/Nederlof13.bib},
  bibsource    = {dblp computer science bibliography, https://dblp.org}
}

@article{MulmuleyVV87matching,
  author =        {Ketan Mulmuley and Umesh V. Vazirani and
                   Vijay V. Vazirani},
  journal =       {Comb.},
  number =        {1},
  pages =         {105--113},
  title =         {Matching is as easy as matrix inversion},
  volume =        {7},
  year =          {1987},
    url          = {https://doi.org/10.1007/BF02579206},
  doi          = {10.1007/BF02579206},
  timestamp    = {Wed, 22 Jul 2020 22:02:58 +0200},
  biburl       = {https://dblp.org/rec/journals/combinatorica/MulmuleyVV87.bib},
  bibsource    = {dblp computer science bibliography, https://dblp.org}
}

@article{Bjorklund14detsum,
  author =        {Andreas Bj{\"{o}}rklund},
  journal =       {{SIAM} J. Comput.},
  number =        {1},
  pages =         {280--299},
  title =         {Determinant Sums for Undirected {H}amiltonicity},
  volume =        {43},
  year =          {2014},
   url          = {https://doi.org/10.1137/110839229},
  doi          = {10.1137/110839229},
  timestamp    = {Sat, 27 May 2017 14:22:58 +0200},
  biburl       = {https://dblp.org/rec/journals/siamcomp/Bjorklund14.bib},
  bibsource    = {dblp computer science bibliography, https://dblp.org}
}

@article{BjorklundHKK17narrow,
  author =        {Andreas Bj{\"{o}}rklund and Thore Husfeldt and
                   Petteri Kaski and Mikko Koivisto},
  journal =       {J. Comput. Syst. Sci.},
  pages =         {119--139},
  title =         {Narrow sieves for parameterized paths and packings},
  volume =        {87},
  year =          {2017},
  doi =           {10.1016/j.jcss.2017.03.003},
  url =           {https://doi.org/10.1016/j.jcss.2017.03.003},
}

@inproceedings{EibenKW24,
  author       = {Eduard Eiben and
                  Tomohiro Koana and
                  Magnus Wahlstr{\"{o}}m},
  title        = {Determinantal Sieving},
  booktitle    = {ACM–SIAM Symposium on Discrete Algorithms (SODA)},
  pages        = {377--423},
  publisher    = {{SIAM}},
  year         = {2024},
  doi          = {10.1137/1.9781611977912.16},
}

@inproceedings{Koutis08ICALP,
  author =        {Ioannis Koutis},
  booktitle =     {{International Colloquium on Automata, Languages, and Programming (ICALP)}},
  pages =         {575--586},
  publisher =     {Springer},
  series =        {Lecture Notes in Computer Science},
  title =         {Faster Algebraic Algorithms for Path and Packing
                   Problems},
  volume =        {5125},
  year =          {2008},
    url          = {https://doi.org/10.1007/978-3-540-70575-8\_47},
  doi          = {10.1007/978-3-540-70575-8\_47},
  timestamp    = {Fri, 27 Mar 2020 09:02:59 +0100},
  biburl       = {https://dblp.org/rec/conf/icalp/Koutis08.bib},
  bibsource    = {dblp computer science bibliography, https://dblp.org}
}

@inproceedings{KoutisW09limits,
  author =        {Ioannis Koutis and Ryan Williams},
  booktitle =     {{International Colloquium on Automata, Languages, and Programming (ICALP)}},
  pages =         {653--664},
  publisher =     {Springer},
  series =        {Lecture Notes in Computer Science},
  title =         {Limits and Applications of Group Algebras for
                   Parameterized Problems},
  volume =        {5555},
  year =          {2009},
    url          = {https://doi.org/10.1007/978-3-642-02927-1\_54},
  doi          = {10.1007/978-3-642-02927-1\_54},
  timestamp    = {Tue, 14 May 2019 10:00:44 +0200},
  biburl       = {https://dblp.org/rec/conf/icalp/KoutisW09.bib},
  bibsource    = {dblp computer science bibliography, https://dblp.org}
}

@article{Williams09IPL,
  author =        {Ryan Williams},
  journal =       {Inf. Process. Lett.},
  number =        {6},
  pages =         {315--318},
  title =         {Finding paths of length $k$ in ${O}^*(2^k)$ time},
  volume =        {109},
  year =          {2009},
   url          = {https://doi.org/10.1016/j.ipl.2008.11.004},
  doi          = {10.1016/J.IPL.2008.11.004},
  timestamp    = {Fri, 30 Nov 2018 13:29:07 +0100},
  biburl       = {https://dblp.org/rec/journals/ipl/Williams09.bib},
  bibsource    = {dblp computer science bibliography, https://dblp.org}
}

@article{KoutisW16CACM,
  author =        {Ioannis Koutis and Ryan Williams},
  journal =       {Commun. {ACM}},
  number =        {1},
  pages =         {98--105},
  title =         {Algebraic fingerprints for faster algorithms},
  volume =        {59},
  year =          {2016},
   url          = {https://doi.org/10.1145/2742544},
  doi          = {10.1145/2742544},
  timestamp    = {Mon, 26 Jun 2023 20:57:06 +0200},
  biburl       = {https://dblp.org/rec/journals/cacm/KoutisW16.bib},
  bibsource    = {dblp computer science bibliography, https://dblp.org}
}

@article{GutinMOW21listcol,
  author =        {Gregory Z. Gutin and Diptapriyo Majumdar and
                   Sebastian Ordyniak and Magnus Wahlstr{\"{o}}m},
  journal =       {{SIAM} J. Discret. Math.},
  number =        {1},
  pages =         {575--596},
  title =         {Parameterized Pre-Coloring Extension and List
                   Coloring Problems},
  volume =        {35},
  year =          {2021},
  biburl =        {https://dblp.org/rec/journals/siamdm/GutinMOW21.bib},
  bibsource =     {dblp computer science bibliography, https://dblp.org},
  doi =           {10.1137/20M1323369},
  timestamp =     {Thu, 29 Apr 2021 15:13:38 +0200},
  url =           {https://doi.org/10.1137/20M1323369},
}

@article{GutinWY17rpostman,
  author =        {Gregory Z. Gutin and Magnus Wahlstr{\"{o}}m and
                   Anders Yeo},
  journal =       {J. Comput. Syst. Sci.},
  number =        {1},
  pages =         {121--131},
  title =         {Rural postman parameterized by the number of
                   components of required edges},
  volume =        {83},
  year =          {2017},
   url          = {https://doi.org/10.1016/j.jcss.2016.06.001},
  doi          = {10.1016/J.JCSS.2016.06.001},
  timestamp    = {Tue, 16 Feb 2021 14:04:46 +0100},
  biburl       = {https://dblp.org/rec/journals/jcss/GutinWY17.bib},
  bibsource    = {dblp computer science bibliography, https://dblp.org}
}

@article{BjorklundKK16,
  author =        {Andreas Bj{\"{o}}rklund and Petteri Kaski and
                   Lukasz Kowalik},
  journal =       {Algorithmica},
  number =        {2},
  pages =         {947--967},
  title =         {Constrained Multilinear Detection and Generalized
                   Graph Motifs},
  volume =        {74},
  year =          {2016},
  biburl =        {https://dblp.org/rec/journals/algorithmica/
                  BjorklundKK16.bib},
  bibsource =     {dblp computer science bibliography, https://dblp.org},
  doi =           {10.1007/s00453-015-9981-1},
  timestamp =     {Mon, 26 Oct 2020 09:00:30 +0100},
  url =           {https://doi.org/10.1007/s00453-015-9981-1},
}

@article{CyganDLMNOPSW16seth,
  author =        {Marek Cygan and Holger Dell and Daniel Lokshtanov and
                   D{\'{a}}niel Marx and Jesper Nederlof and
                   Yoshio Okamoto and Ramamohan Paturi and Saket Saurabh and
                   Magnus Wahlstr{\"{o}}m},
  journal =       {{ACM} Trans. Algorithms},
  number =        {3},
  pages =         {41:1--41:24},
  title =         {On Problems as Hard as {CNF-SAT}},
  volume =        {12},
  year =          {2016},
  url          = {https://doi.org/10.1145/2925416},
  doi          = {10.1145/2925416},
  timestamp    = {Sat, 19 Oct 2019 19:12:05 +0200},
  biburl       = {https://dblp.org/rec/journals/talg/CyganDLMNOPSW16.bib},
  bibsource    = {dblp computer science bibliography, https://dblp.org}
}

@inproceedings{BjorklundHKK07convolution,
  author =        {Andreas Bj{\"{o}}rklund and Thore Husfeldt and
                   Petteri Kaski and Mikko Koivisto},
  booktitle =     {{Annual ACM Symposium on Theory of Computing (STOC)}},
  pages =         {67--74},
  publisher =     {{ACM}},
  title =         {Fourier meets {M}{\"{o}}bius: fast subset convolution},
  year =          {2007},
    url          = {https://doi.org/10.1145/1250790.1250801},
  doi          = {10.1145/1250790.1250801},
  timestamp    = {Mon, 26 May 2025 08:18:30 +0200},
  biburl       = {https://dblp.org/rec/conf/stoc/BjorklundHKK07.bib},
  bibsource    = {dblp computer science bibliography, https://dblp.org}
}

@inproceedings{BrandDH18,
  author =        {Cornelius Brand and Holger Dell and Thore Husfeldt},
  booktitle =     {Annual ACM Symposium on Theory of Computing (STOC)},
  pages =         {151--164},
  publisher =     {{ACM}},
  title =         {Extensor-coding},
  year =          {2018},
  doi =           {10.1145/3188745.3188902},
  url =           {https://doi.org/10.1145/3188745.3188902},
}

@inproceedings{BrandP21,
  author =        {Cornelius Brand and Kevin Pratt},
  booktitle =     {International Colloquium on Automata, Languages, and Programming (ICALP)},
  pages =         {38:1--38:19},
  publisher =     {Schloss Dagstuhl - Leibniz-Zentrum f{\"{u}}r
                   Informatik},
  series =        {LIPIcs},
  title =         {Parameterized Applications of Symbolic
                   Differentiation of (Totally) Multilinear Polynomials},
  volume =        {198},
  year =          {2021},
  doi =           {10.4230/LIPIcs.ICALP.2021.38},
  url =           {https://doi.org/10.4230/LIPIcs.ICALP.2021.38},
}

@article{CyganNPPRW22twdconn,
  author =        {Marek Cygan and Jesper Nederlof and Marcin Pilipczuk and
                   Michal Pilipczuk and Johan M. M. van Rooij and
                   Jakub Onufry Wojtaszczyk},
  journal =       {{ACM} Trans. Algorithms},
  number =        {2},
  pages =         {17:1--17:31},
  title =         {Solving Connectivity Problems Parameterized by
                   Treewidth in Single Exponential Time},
  volume =        {18},
  year =          {2022},
    url          = {https://doi.org/10.1145/3506707},
  doi          = {10.1145/3506707},
  timestamp    = {Mon, 28 Aug 2023 21:42:40 +0200},
  biburl       = {https://dblp.org/rec/journals/talg/CyganNPPRW22.bib},
  bibsource    = {dblp computer science bibliography, https://dblp.org}
}

@article{FominLPS16JACM,
  author =        {Fedor V. Fomin and Daniel Lokshtanov and
                   Fahad Panolan and Saket Saurabh},
  journal =       {J. {ACM}},
  number =        {4},
  pages =         {29:1--29:60},
  title =         {Efficient Computation of Representative Families with
                   Applications in Parameterized and Exact Algorithms},
  volume =        {63},
  year =          {2016},
  biburl =        {http://dblp.org/rec/bib/journals/jacm/FominLPS16},
  bibsource =     {dblp computer science bibliography, http://dblp.org},
  doi =           {10.1145/2886094},
  timestamp =     {Mon, 30 Oct 2017 11:35:03 +0100},
  url =           {http://doi.acm.org/10.1145/2886094},
}

@book{OxleyBook,
  author =        {James Oxley},
  publisher =     {Oxford University Press},
  title =         {Matroid Theory},
  year =          {2011},
  isbn =          {9780198566946},
    doi = {10.1093/acprof:oso/9780198566946.001.0001},
    url = {https://doi.org/10.1093/acprof:oso/9780198566946.001.0001},
}

@article{LokshtanovMPS18,
  author =        {Daniel Lokshtanov and Pranabendu Misra and
                   Fahad Panolan and Saket Saurabh},
  journal =       {{ACM} Trans. Algorithms},
  number =        {2},
  pages =         {14:1--14:20},
  title =         {Deterministic Truncation of Linear Matroids},
  volume =        {14},
  year =          {2018},
   url          = {https://doi.org/10.1145/3170444},
  doi          = {10.1145/3170444},
  timestamp    = {Fri, 27 Dec 2019 21:12:48 +0100},
  biburl       = {https://dblp.org/rec/journals/talg/LokshtanovMPS18.bib},
  bibsource    = {dblp computer science bibliography, https://dblp.org}
}

@article{Marx09-matroid,
  author =        {D{\'a}niel Marx},
  journal =       {Theor. Comput. Sci.},
  number =        {44},
  pages =         {4471-4479},
  title =         {A parameterized view on matroid optimization
                   problems},
  volume =        {410},
  year =          {2009},
  bibsource =     {DBLP, http://dblp.uni-trier.de},
   url          = {https://doi.org/10.1016/j.tcs.2009.07.027},
  doi          = {10.1016/J.TCS.2009.07.027},
  timestamp    = {Wed, 17 Feb 2021 22:00:05 +0100},
  biburl       = {https://dblp.org/rec/journals/tcs/Marx09.bib},
  bibsource    = {dblp computer science bibliography, https://dblp.org}
}

@article{Brand22note,
  author =        {Cornelius Brand},
  journal =       {Inf. Process. Lett.},
  pages =         {106242},
  title =         {A note on algebraic techniques for subgraph
                   detection},
  volume =        {176},
  year =          {2022},
   url          = {https://doi.org/10.1016/j.ipl.2021.106242},
  doi          = {10.1016/J.IPL.2021.106242},
  timestamp    = {Wed, 23 Feb 2022 11:17:49 +0100},
  biburl       = {https://dblp.org/rec/journals/ipl/Brand22.bib},
  bibsource    = {dblp computer science bibliography, https://dblp.org}
}

@article{Koutis05IPL,
  author =        {Ioannis Koutis},
  journal =       {Inf. Process. Lett.},
  number =        {1},
  pages =         {7--9},
  title =         {A faster parameterized algorithm for set packing},
  volume =        {94},
  year =          {2005},
    url          = {https://doi.org/10.1016/j.ipl.2004.12.005},
  doi          = {10.1016/J.IPL.2004.12.005},
  timestamp    = {Fri, 26 May 2017 22:54:46 +0200},
  biburl       = {https://dblp.org/rec/journals/ipl/Koutis05.bib},
  bibsource    = {dblp computer science bibliography, https://dblp.org}
}

@article{KoutisW16limits,
  author =        {Ioannis Koutis and Ryan Williams},
  journal =       {{ACM} Trans. Algorithms},
  number =        {3},
  pages =         {31:1--31:18},
  title =         {{LIMITS} and Applications of Group Algebras for
                   Parameterized Problems},
  volume =        {12},
  year =          {2016},
   url          = {https://doi.org/10.1145/2885499},
  doi          = {10.1145/2885499},
  timestamp    = {Tue, 06 Nov 2018 12:51:20 +0100},
  biburl       = {https://dblp.org/rec/journals/talg/KoutisW16.bib},
  bibsource    = {dblp computer science bibliography, https://dblp.org}  
}

@article{Koutis12motif,
  author =        {Ioannis Koutis},
  journal =       {Inf. Process. Lett.},
  number =        {22},
  pages =         {889--892},
  title =         {Constrained multilinear detection for faster
                   functional motif discovery},
  volume =        {112},
  year =          {2012},
  url          = {https://doi.org/10.1016/j.ipl.2012.08.008},
  doi          = {10.1016/J.IPL.2012.08.008},
  timestamp    = {Fri, 26 May 2017 22:54:42 +0200},
  biburl       = {https://dblp.org/rec/journals/ipl/Koutis12.bib},
  bibsource    = {dblp computer science bibliography, https://dblp.org}
}

@article{FominLPS17,
  author =        {Fedor V. Fomin and Daniel Lokshtanov and
                   Fahad Panolan and Saket Saurabh},
  journal =       {{ACM} Trans. Algorithms},
  number =        {3},
  pages =         {36:1--36:29},
  title =         {Representative Families of Product Families},
  volume =        {13},
  year =          {2017},
  url          = {https://doi.org/10.1145/3039243},
  doi          = {10.1145/3039243},
  timestamp    = {Fri, 27 Dec 2019 21:12:48 +0100},
  biburl       = {https://dblp.org/rec/journals/talg/FominLPS17.bib},
  bibsource    = {dblp computer science bibliography, https://dblp.org}
}

@article{FominGKSS24,
  author       = {Fedor V. Fomin and
                  Petr A. Golovach and
                  Tuukka Korhonen and
                  Kirill Simonov and
                  Giannos Stamoulis},
  title        = {Fixed-Parameter Tractability of Maximum Colored Path and Beyond},
  journal      = {{ACM} Trans. Algorithms},
  volume       = {20},
  number       = {4},
  pages        = {36:1--36:48},
  year         = {2024},
  url          = {https://doi.org/10.1145/3674835},
  doi          = {10.1145/3674835},
  timestamp    = {Wed, 06 Nov 2024 22:18:22 +0100},
  biburl       = {https://dblp.org/rec/journals/talg/FominGKSS24.bib},
  bibsource    = {dblp computer science bibliography, https://dblp.org}
}

@inproceedings{Lovasz1977,
  author =        {L\'aszl\'o Lov\'asz},
  booktitle =     {Proc. Sixth British Combinatorial Conf.},
  pages =         {45--86},
  series =        {Combinatorial Surveys},
  title =         {Flats in matroids and geometric graphs},
  year =          {1977}
}

@book{LovaszGeomBook2019,
  author =        {L{\'a}szlo{\'o} Lov{\'a}sz},
  publisher =     {AMS},
  series =        {Colloquium Publications},
  title =         {Graphs and Geometry},
  year =          {2019},
  isbn = {978-1-4704-5087-8}
}

@article{Bellman62,
  author =        {Richard Bellman},
  journal =       {J. {ACM}},
  number =        {1},
  pages =         {61--63},
  title =         {Dynamic Programming Treatment of the Travelling
                   Salesman Problem},
  volume =        {9},
  year =          {1962},
   url          = {https://doi.org/10.1145/321105.321111},
  doi          = {10.1145/321105.321111},
  timestamp    = {Wed, 14 Nov 2018 10:35:23 +0100},
  biburl       = {https://dblp.org/rec/journals/jacm/Bellman62.bib},
  bibsource    = {dblp computer science bibliography, https://dblp.org}
}

@inproceedings{BjorklundKK17directed,
  author =        {Andreas Bj{\"{o}}rklund and Petteri Kaski and
                   Ioannis Koutis},
  booktitle =     {International Colloquium on Automata, Languages, and Programming (ICALP)},
  pages =         {91:1--91:14},
  publisher =     {Schloss Dagstuhl - Leibniz-Zentrum f{\"{u}}r
                   Informatik},
  series =        {LIPIcs},
  title =         {Directed {H}amiltonicity and Out-Branchings via
                   Generalized {L}aplacians},
  volume =        {80},
  year =          {2017},
    url          = {https://doi.org/10.4230/LIPIcs.ICALP.2017.91},
  doi          = {10.4230/LIPICS.ICALP.2017.91},
  timestamp    = {Sun, 02 Oct 2022 16:03:11 +0200},
  biburl       = {https://dblp.org/rec/conf/icalp/BjorklundKK17.bib},
  bibsource    = {dblp computer science bibliography, https://dblp.org}
}

@article{CyganKN18hamilton,
  author =        {Marek Cygan and Stefan Kratsch and Jesper Nederlof},
  journal =       {J. {ACM}},
  number =        {3},
  pages =         {12:1--12:46},
  title =         {Fast {H}amiltonicity Checking Via Bases of Perfect
                   Matchings},
  volume =        {65},
  year =          {2018},
    url          = {https://doi.org/10.1145/3148227},
  doi          = {10.1145/3148227},
  timestamp    = {Sat, 08 Jan 2022 02:23:32 +0100},
  biburl       = {https://dblp.org/rec/journals/jacm/CyganKN18.bib},
  bibsource    = {dblp computer science bibliography, https://dblp.org}
}

@inproceedings{BandyapadhyayFIS23,
  author =        {Sayan Bandyapadhyay and Fedor V. Fomin and
                   Tanmay Inamdar and Kirill Simonov},
  booktitle =     {International Workshop on Graph-Theoretic Concepts in Computer Science (WG)},
  pages =         {1--15},
  publisher =     {Springer},
  series =        {Lecture Notes in Computer Science},
  title =         {Proportionally Fair Matching with Multiple Groups},
  volume =        {14093},
  year =          {2023},
    url          = {https://doi.org/10.1007/978-3-031-43380-1\_1},
  doi          = {10.1007/978-3-031-43380-1\_1},
  timestamp    = {Sat, 14 Oct 2023 20:12:46 +0200},
  biburl       = {https://dblp.org/rec/conf/wg/BandyapadhyayFIS23.bib},
  bibsource    = {dblp computer science bibliography, https://dblp.org}
}

@inproceedings{BentertKN23,
  author =        {Matthias Bentert and Leon Kellerhals and
                   Rolf Niedermeier},
  booktitle =     {AAAI Conference on Artificial Intelligence (AAAI)},
  pages =         {12346--12354},
  publisher =     {{AAAI} Press},
  title =         {Fair Short Paths in Vertex-Colored Graphs},
  year =          {2023},
    url          = {https://doi.org/10.1609/aaai.v37i10.26455},
  doi          = {10.1609/AAAI.V37I10.26455},
  timestamp    = {Mon, 04 Sep 2023 16:50:23 +0200},
  biburl       = {https://dblp.org/rec/conf/aaai/BentertKN23.bib},
  bibsource    = {dblp computer science bibliography, https://dblp.org}
}

@inproceedings{Chierichetti0LV17,
  author =        {Flavio Chierichetti and Ravi Kumar and
                   Silvio Lattanzi and Sergei Vassilvitskii},
  booktitle =     {Conference on Neural Information Processing Systems (NeurIPS)},
  pages =         {5029--5037},
  title =         {Fair Clustering Through Fairlets},
  year =          {2017},
   url          = {https://proceedings.neurips.cc/paper/2020/hash/f10f2da9a238b746d2bac55759915f0d-Abstract.html},
  timestamp    = {Tue, 19 Jan 2021 15:56:58 +0100},
  biburl       = {https://dblp.org/rec/conf/nips/AhmadianEK0MMPV20.bib},
  bibsource    = {dblp computer science bibliography, https://dblp.org}
}

@inproceedings{Chierichetti0LV19,
  author =        {Flavio Chierichetti and Ravi Kumar and
                   Silvio Lattanzi and Sergei Vassilvitskii},
  booktitle =     {International Conference on Artificial Intelligence and Statistics (AISTATS)},
  pages =         {2212--2220},
  publisher =     {{PMLR}},
  series =        {Proceedings of Machine Learning Research},
  title =         {Matroids, Matchings, and Fairness},
  volume =        {89},
  year =          {2019},
    url          = {http://proceedings.mlr.press/v89/chierichetti19a.html},
  timestamp    = {Fri, 07 Jun 2019 09:03:47 +0200},
  biburl       = {https://dblp.org/rec/conf/aistats/Chierichetti0LV19.bib},
  bibsource    = {dblp computer science bibliography, https://dblp.org}
}

@article{BasteFJMOPR22,
  author =        {Julien Baste and Michael R. Fellows and Lars Jaffke and
                   Tom{\'{a}}s Masar{\'{\i}}k and
                   Mateus de Oliveira Oliveira and Geevarghese Philip and
                   Frances A. Rosamond},
  journal =       {Artif. Intell.},
  pages =         {103644},
  title =         {Diversity of solutions: An exploration through the
                   lens of fixed-parameter tractability theory},
  volume =        {303},
  year =          {2022},
    url          = {https://doi.org/10.1016/j.artint.2021.103644},
  doi          = {10.1016/J.ARTINT.2021.103644},
  timestamp    = {Fri, 21 Jan 2022 22:02:29 +0100},
  biburl       = {https://dblp.org/rec/journals/ai/BasteFJMOPR22.bib},
  bibsource    = {dblp computer science bibliography, https://dblp.org}
}

@article{BasteJMPR19,
  author =        {Julien Baste and Lars Jaffke and
                   Tom{\'{a}}s Masar{\'{\i}}k and Geevarghese Philip and
                   G{\"{u}}nter Rote},
  journal =       {Algorithms},
  number =        {12},
  pages =         {254},
  title =         {{FPT} Algorithms for Diverse Collections of Hitting
                   Sets},
  volume =        {12},
  year =          {2019},
    url          = {https://doi.org/10.3390/a12120254},
  doi          = {10.3390/A12120254},
  timestamp    = {Sun, 19 Jan 2025 14:35:52 +0100},
  biburl       = {https://dblp.org/rec/journals/algorithms/BasteJMPR19.bib},
  bibsource    = {dblp computer science bibliography, https://dblp.org}
}

@article{FominGJPS24,
  author =        {Fedor V. Fomin and Petr A. Golovach and Lars Jaffke and
                   Geevarghese Philip and Danil Sagunov},
  journal =       {Algorithmica},
  number =        {6},
  pages =         {2026--2040},
  title =         {Diverse Pairs of Matchings},
  volume =        {86},
  year =          {2024},
    url          = {https://doi.org/10.1007/s00453-024-01214-7},
  doi          = {10.1007/S00453-024-01214-7},
  timestamp    = {Sat, 31 May 2025 23:13:14 +0200},
  biburl       = {https://dblp.org/rec/journals/algorithmica/FominGJPS24.bib},
  bibsource    = {dblp computer science bibliography, https://dblp.org}
}

@article{FominGPPS24,
  author =        {Fedor V. Fomin and Petr A. Golovach and Fahad Panolan and
                   Geevarghese Philip and Saket Saurabh},
  journal =       {Math. Program.},
  number =        {1},
  pages =         {415--447},
  title =         {Diverse collections in matroids and graphs},
  volume =        {204},
  year =          {2024},
    url          = {https://doi.org/10.1007/s10107-023-01959-z},
  doi          = {10.1007/S10107-023-01959-Z},
  timestamp    = {Fri, 22 Mar 2024 09:02:22 +0100},
  biburl       = {https://dblp.org/rec/journals/mp/FominGPPS24.bib},
  bibsource    = {dblp computer science bibliography, https://dblp.org}
}

@inproceedings{HanakaKKO21,
  author =        {Tesshu Hanaka and Yasuaki Kobayashi and
                   Kazuhiro Kurita and Yota Otachi},
  booktitle =     {AAAI Conference on Artificial Intelligence (AAAI)},
  pages =         {3778--3786},
  publisher =     {{AAAI} Press},
  title =         {Finding Diverse Trees, Paths, and More},
  year =          {2021},
  url =           {https://ojs.aaai.org/index.php/AAAI/article/view/16495},
  doi          = {10.1609/AAAI.V35I5.16495},
  timestamp    = {Mon, 04 Sep 2023 16:50:24 +0200},
  biburl       = {https://dblp.org/rec/conf/aaai/HanakaKKO21.bib},
  bibsource    = {dblp computer science bibliography, https://dblp.org}
}

@inproceedings{Bang-JensenK021,
  author =        {J{\o}rgen Bang{-}Jensen and Kristine Vitting Klinkby and
                   Saket Saurabh},
  booktitle =     {European Symposium on Algorithms (ESA)},
  pages =         {11:1--11:15},
  publisher =     {Schloss Dagstuhl - Leibniz-Zentrum f{\"{u}}r
                   Informatik},
  series =        {LIPIcs},
  title =         {{$k$}-{D}istinct Branchings Admits a Polynomial
                   Kernel},
  volume =        {204},
  year =          {2021},
  url          = {https://doi.org/10.4230/LIPIcs.ESA.2021.11},
  doi          = {10.4230/LIPICS.ESA.2021.11},
  timestamp    = {Sun, 06 Oct 2024 21:00:58 +0200},
  biburl       = {https://dblp.org/rec/conf/esa/Bang-JensenK021.bib},
  bibsource    = {dblp computer science bibliography, https://dblp.org}
}

@article{GutinRW18,
  author =        {Gregory Z. Gutin and Felix Reidl and
                   Magnus Wahlstr{\"{o}}m},
  journal =       {J. Comput. Syst. Sci.},
  pages =         {86--97},
  title =         {{$k$}-{D}istinct in- and out-branchings in digraphs},
  volume =        {95},
  year =          {2018},
  doi =           {10.1016/j.jcss.2018.01.003},
  url =           {https://doi.org/10.1016/j.jcss.2018.01.003},
}

@article{Tsur19b,
  author =        {Dekel Tsur},
  journal =       {Theor. Comput. Sci.},
  pages =         {96--104},
  title =         {Faster deterministic parameterized algorithm for
                   \emph{k}-Path},
  volume =        {790},
  year =          {2019},
    url          = {https://doi.org/10.1016/j.tcs.2019.04.024},
  doi          = {10.1016/J.TCS.2019.04.024},
  timestamp    = {Sun, 19 Jan 2025 14:12:41 +0100},
  biburl       = {https://dblp.org/rec/journals/tcs/Tsur19b.bib},
  bibsource    = {dblp computer science bibliography, https://dblp.org}
}

@inproceedings{Kawarabayashi08,
  author =        {Ken{-}ichi Kawarabayashi},
  booktitle =     {{International Conference on Integer Programming and Combinatorial Optimization (IPCO)}},
  pages =         {374--384},
  publisher =     {Springer},
  series =        {Lecture Notes in Computer Science},
  title =         {An Improved Algorithm for Finding Cycles Through
                   Elements},
  volume =        {5035},
  year =          {2008},
   url          = {https://doi.org/10.1007/978-3-540-68891-4\_26},
  doi          = {10.1007/978-3-540-68891-4\_26},
  timestamp    = {Tue, 14 May 2019 10:00:50 +0200},
  biburl       = {https://dblp.org/rec/conf/ipco/Kawarabayashi08.bib},
  bibsource    = {dblp computer science bibliography, https://dblp.org}
}

@inproceedings{BjorklundHT12soda,
  author =        {Andreas Bj{\"{o}}rklund and Thore Husfeldt and
                   Nina Taslaman},
  booktitle =     {{ACM–SIAM Symposium on Discrete Algorithms (SODA)}},
  pages =         {1747--1753},
  publisher =     {{SIAM}},
  title =         {Shortest cycle through specified elements},
  year =          {2012},
  url          = {https://doi.org/10.1137/1.9781611973099.139},
  doi          = {10.1137/1.9781611973099.139},
  timestamp    = {Tue, 02 Feb 2021 17:07:56 +0100},
  biburl       = {https://dblp.org/rec/conf/soda/BjorklundHT12.bib},
  bibsource    = {dblp computer science bibliography, https://dblp.org}
}

@inproceedings{Wahlstrom13STACS,
  author =        {Magnus Wahlstr{\"{o}}m},
  booktitle =     {{Symposium on Theoretical Aspects of Computer Science (STACS)}},
  pages =         {341--352},
  publisher =     {Schloss Dagstuhl - Leibniz-Zentrum f{\"{u}}r
                   Informatik},
  series =        {LIPIcs},
  title =         {Abusing the {T}utte Matrix: An Algebraic Instance
                   Compression for the {K}-set-cycle Problem},
  volume =        {20},
  year =          {2013},
  url          = {https://doi.org/10.4230/LIPIcs.STACS.2013.341},
  doi          = {10.4230/LIPICS.STACS.2013.341},
  timestamp    = {Wed, 21 Aug 2024 22:46:00 +0200},
  biburl       = {https://dblp.org/rec/conf/stacs/Wahlstrom13.bib},
  bibsource    = {dblp computer science bibliography, https://dblp.org}
}

@article{DBLP:journals/algorithmica/PanolanSZ19,
  author =        {Fahad Panolan and Saket Saurabh and Meirav Zehavi},
  journal =       {Algorithmica},
  number =        {3},
  pages =         {1267--1287},
  title =         {Parameterized Algorithms for List {$K$}-Cycle},
  volume =        {81},
  year =          {2019},
  doi =           {10.1007/s00453-018-0469-7},
  url =           {https://doi.org/10.1007/s00453-018-0469-7},
}

@article{MisraPRSS12,
  author =        {Neeldhara Misra and Geevarghese Philip and
                   Venkatesh Raman and Saket Saurabh and Somnath Sikdar},
  journal =       {J. Comb. Optim.},
  number =        {2},
  pages =         {131--146},
  title =         {{FPT} algorithms for Connected Feedback Vertex Set},
  volume =        {24},
  year =          {2012},
  biburl =        {https://dblp.org/rec/journals/jco/MisraPRSS12.bib},
  bibsource =     {dblp computer science bibliography, https://dblp.org},
  doi =           {10.1007/s10878-011-9394-2},
  timestamp =     {Mon, 30 Oct 2017 11:35:03 +0100},
  url =           {https://doi.org/10.1007/s10878-011-9394-2},
}

@inproceedings{BringmannS21,
  author =        {Karl Bringmann and Jasper Slusallek},
  booktitle =     {{International Colloquium on Automata, Languages, and Programming (ICALP)}},
  pages =         {40:1--40:16},
  publisher =     {Schloss Dagstuhl - Leibniz-Zentrum f{\"{u}}r
                   Informatik},
  series =        {LIPIcs},
  title =         {Current Algorithms for Detecting Subgraphs of Bounded
                   Treewidth Are Probably Optimal},
  volume =        {198},
  year =          {2021},
  url          = {https://doi.org/10.4230/LIPIcs.ICALP.2021.40},
  doi          = {10.4230/LIPICS.ICALP.2021.40},
  timestamp    = {Wed, 21 Aug 2024 22:46:00 +0200},
  biburl       = {https://dblp.org/rec/conf/icalp/BringmannS21.bib},
  bibsource    = {dblp computer science bibliography, https://dblp.org}
}

@article{KratschW20JACM,
  author =        {Stefan Kratsch and Magnus Wahlstr{\"{o}}m},
  journal =       {J. {ACM}},
  number =        {3},
  pages =         {16:1--16:50},
  title =         {Representative Sets and Irrelevant Vertices: New
                   Tools for Kernelization},
  volume =        {67},
  year =          {2020},
  url          = {https://doi.org/10.1145/3390887},
  doi          = {10.1145/3390887},
  timestamp    = {Sat, 08 Jan 2022 02:23:32 +0100},
  biburl       = {https://dblp.org/rec/journals/jacm/KratschW20.bib},
  bibsource    = {dblp computer science bibliography, https://dblp.org}
}

@article{CaiY11,
  author =        {Leizhen Cai and Boting Yang},
  journal =       {J. Discrete Algorithms},
  number =        {3},
  pages =         {231--240},
  title =         {Parameterized complexity of even/odd subgraph
                   problems},
  volume =        {9},
  year =          {2011},
   url          = {https://doi.org/10.1016/j.jda.2011.03.004},
  doi          = {10.1016/J.JDA.2011.03.004},
  timestamp    = {Mon, 28 Aug 2023 21:43:48 +0200},
  biburl       = {https://dblp.org/rec/journals/jda/CaiY11.bib},
  bibsource    = {dblp computer science bibliography, https://dblp.org}
}

@article{CyganMPPS14euler,
  author =        {Marek Cygan and D{\'{a}}niel Marx and
                   Marcin Pilipczuk and Michal Pilipczuk and
                   Ildik{\'{o}} Schlotter},
  journal =       {Algorithmica},
  number =        {1},
  pages =         {41--61},
  title =         {Parameterized Complexity of {E}ulerian Deletion
                   Problems},
  volume =        {68},
  year =          {2014},
  url          = {https://doi.org/10.1007/s00453-012-9667-x},
  doi          = {10.1007/S00453-012-9667-X},
  timestamp    = {Sun, 25 Jul 2021 11:42:44 +0200},
  biburl       = {https://dblp.org/rec/journals/algorithmica/CyganMPPS14.bib},
  bibsource    = {dblp computer science bibliography, https://dblp.org}
}

@article{GoyalMPPS18even,
  author =        {Prachi Goyal and Pranabendu Misra and Fahad Panolan and
                   Geevarghese Philip and Saket Saurabh},
  journal =       {J. Comput. Syst. Sci.},
  pages =         {1--13},
  title =         {Finding even subgraphs even faster},
  volume =        {97},
  year =          {2018},
  url          = {https://doi.org/10.1016/j.jcss.2018.03.001},
  doi          = {10.1016/J.JCSS.2018.03.001},
  timestamp    = {Sun, 19 Jan 2025 14:37:17 +0100},
  biburl       = {https://dblp.org/rec/journals/jcss/GoyalMPPS18.bib},
  bibsource    = {dblp computer science bibliography, https://dblp.org}
}

@article{AgrawalJKS20,
  author =        {Akanksha Agrawal and Pallavi Jain and Lawqueen Kanesh and
                   Saket Saurabh},
  journal =       {Algorithmica},
  number =        {7},
  pages =         {1939--1965},
  title =         {Parameterized Complexity of Conflict-Free Matchings
                   and Paths},
  volume =        {82},
  year =          {2020},
  url          = {https://doi.org/10.1007/s00453-020-00681-y},
  doi          = {10.1007/S00453-020-00681-Y},
  timestamp    = {Tue, 02 Jan 2024 13:40:38 +0100},
  biburl       = {https://dblp.org/rec/journals/algorithmica/AgrawalJKS20.bib},
  bibsource    = {dblp computer science bibliography, https://dblp.org}
}

@article{JacobMR21,
  author =        {Ashwin Jacob and Diptapriyo Majumdar and
                   Venkatesh Raman},
  journal =       {Theory Comput. Syst.},
  number =        {3},
  pages =         {515--540},
  title =         {Parameterized Complexity of Conflict-Free Set Cover},
  volume =        {65},
  year =          {2021},
   url          = {https://doi.org/10.1007/s00224-020-10022-9},
  doi          = {10.1007/S00224-020-10022-9},
  timestamp    = {Sat, 30 Sep 2023 10:21:38 +0200},
  biburl       = {https://dblp.org/rec/journals/mst/JacobMR21.bib},
  bibsource    = {dblp computer science bibliography, https://dblp.org}
}

@inproceedings{AchlioptasZ21,
  author =        {Dimitris Achlioptas and Kostas Zampetakis},
  booktitle =     {International Colloquium on Automata, Languages, and Programming (ICALP)},
  pages =         {8:1--8:16},
  publisher =     {Schloss Dagstuhl - Leibniz-Zentrum f{\"{u}}r
                   Informatik},
  series =        {LIPIcs},
  title =         {Local Approximations of the Independent Set
                   Polynomial},
  volume =        {198},
  year =          {2021},
  url          = {https://doi.org/10.4230/LIPIcs.ICALP.2021.8},
  doi          = {10.4230/LIPICS.ICALP.2021.8},
  timestamp    = {Wed, 21 Aug 2024 22:46:00 +0200},
  biburl       = {https://dblp.org/rec/conf/icalp/AchlioptasZ21.bib},
  bibsource    = {dblp computer science bibliography, https://dblp.org}
}

@book{CyganFKLMPPS15PCbook,
  author =        {Marek Cygan and Fedor V. Fomin and Lukasz Kowalik and
                   Daniel Lokshtanov and D{\'{a}}niel Marx and
                   Marcin Pilipczuk and Michal Pilipczuk and
                   Saket Saurabh},
  publisher =     {Springer},
  title =         {Parameterized Algorithms},
  year =          {2015},
  url          = {https://doi.org/10.1007/978-3-319-21275-3},
  doi          = {10.1007/978-3-319-21275-3},
  isbn         = {978-3-319-21274-6},
  timestamp    = {Sun, 25 Oct 2020 22:32:21 +0100},
  biburl       = {https://dblp.org/rec/books/sp/CyganFKLMPPS15.bib},
  bibsource    = {dblp computer science bibliography, https://dblp.org}

}

@book{DiestelBook,
  author =        {Reinhard Diestel},
  publisher =     {Springer},
  series =        {Graduate texts in mathematics},
  title =         {Graph Theory, 4th Edition},
  volume =        {173},
  year =          {2012},
  biburl =        {https://dblp.org/rec/books/daglib/0030488.bib},
  bibsource =     {dblp computer science bibliography, https://dblp.org},
  isbn =          {978-3-642-14278-9},
  timestamp =     {Mon, 04 Mar 2013 20:55:42 +0100},
}

@book{book-digraphs,
  author =        {J{\o}rgen Bang{-}Jensen and Gregory Gutin},
  publisher =     {Springer},
  title =         {Digraphs - theory, algorithms and applications},
  year =          {2002},
  biburl =        {http://dblp.dagstuhl.de/rec/bib/books/daglib/0006487},
  bibsource =     {dblp computer science bibliography, http://dblp.org},
  isbn =          {978-1-85233-611-0},
  timestamp =     {Thu, 07 Apr 2011 16:15:59 +0200},
}

@book{vzGathen-MCA,
  author =        {von zur Gathen, Joachim and Gerhard, Jürgen},
  edition =       {3},
  publisher =     {Cambridge University Press},
  title =         {Modern Computer Algebra},
  year =          {2013},
  isbn         = {978-1-107-03903-2},
  timestamp    = {Tue, 20 Aug 2013 18:59:58 +0200},
  biburl       = {https://dblp.org/rec/books/daglib/0031325.bib},
  bibsource    = {dblp computer science bibliography, https://dblp.org}
}

@book{murota1999matrices,
  author =        {Murota, Kazuo},
  publisher =     {Springer Science \& Business Media},
  title =         {Matrices and matroids for systems analysis},
  volume =        {20},
  year =          {1999},
  isbn = {978-3-540-66024-8},
  doi = {10.1007/978-3-642-03994-2}
}

@book{Schrijver,
  author =        {Alexander Schrijver},
  publisher =     {Springer},
  title =         {Combinatorial Optimization: Polyhedra and Efficiency},
  year =          {2003},
  isbn = {978-3-540-44389-6}
}

@inproceedings{Lovasz79,
  author =        {L{\'{a}}szl{\'{o}} Lov{\'{a}}sz},
  booktitle =     {International Symposium on Fundamentals of Computation Theory (FCT)},
  pages =         {565--574},
  publisher =     {Akademie-Verlag, Berlin},
  title =         {On determinants, matchings, and random algorithms},
  year =          {1979},
}

@article{gessel95algebraic,
  author =        {Gessel, Ira M and Stanley, Richard P},
  journal =       {Handbook of combinatorics},
  pages =         {1021--1061},
  title =         {Algebraic enumeration},
  volume =        {2},
  year =          {1995},
  isbn = {0262071711},
  publisher = {MIT Press},
}

@inproceedings{KoanaW25stacs,
  author =        {Tomohiro Koana and Magnus Wahlstr{\"{o}}m},
  booktitle =     {{Symposium on Theoretical Aspects of Computer Science (STACS)}},
  pages =         {62:1--62:19},
  publisher =     {Schloss Dagstuhl - Leibniz-Zentrum f{\"{u}}r
                   Informatik},
  series =        {LIPIcs},
  title =         {Faster Algorithms on Linear Delta-Matroids},
  volume =        {327},
  year =          {2025},
  url          = {https://doi.org/10.4230/LIPIcs.STACS.2025.62},
  doi          = {10.4230/LIPICS.STACS.2025.62},
  timestamp    = {Tue, 25 Feb 2025 17:10:40 +0100},
  biburl       = {https://dblp.org/rec/conf/stacs/KoanaW25.bib},
  bibsource    = {dblp computer science bibliography, https://dblp.org}
}

@article{Wlodarczyk19,
  author =        {Michal W\l{}odarczyk},
  journal =       {Algorithmica},
  number =        {2},
  pages =         {497--518},
  title =         {Clifford Algebras Meet Tree Decompositions},
  volume =        {81},
  year =          {2019},
  doi =           {10.1007/s00453-018-0489-3},
  url =           {https://doi.org/10.1007/s00453-018-0489-3},
}

@article{Chen09,
  author =        {Chen, Jianer and Kneis, Joachim and Lu, Songjian and
                   M\"{o}lle, Daniel and Richter, Stefan and
                   Rossmanith, Peter and Sze, Sing-Hoi and
                   Zhang, Fenghui},
  journal =       {SIAM Journal on Computing},
  number =        {6},
  pages =         {2526-2547},
  title =         {Randomized Divide-and-Conquer: Improved Path,
                   Matching, and Packing Algorithms},
  volume =        {38},
  year =          {2009},
  doi = {10.1137/080716475},
  URL = {https://doi.org/10.1137/080716475},
}

@inproceedings{BrandKS23,
  author =        {Cornelius Brand and Viktoriia Korchemna and
                   Michael Skotnica},
  booktitle =     {International Symposium on Mathematical Foundations of Computer Science ({MFCS})},
  pages =         {25:1--25:14},
  publisher =     {Schloss Dagstuhl - Leibniz-Zentrum f{\"{u}}r
                   Informatik},
  series =        {LIPIcs},
  title =         {Deterministic Constrained Multilinear Detection},
  volume =        {272},
  year =          {2023},
  url          = {https://doi.org/10.4230/LIPIcs.MFCS.2023.25},
  doi          = {10.4230/LIPICS.MFCS.2023.25},
  timestamp    = {Mon, 03 Mar 2025 21:17:47 +0100},
  biburl       = {https://dblp.org/rec/conf/mfcs/BrandKS23.bib},
  bibsource    = {dblp computer science bibliography, https://dblp.org}
}

@inproceedings{MahajanV97,
  author =        {Meena Mahajan and V. Vinay},
  booktitle =     {ACM–SIAM Symposium on Discrete Algorithms (SODA)},
  pages =         {730--738},
  publisher =     {{ACM/SIAM}},
  title =         {A Combinatorial Algorithm for the Determinant},
  year =          {1997},
  url          = {http://dl.acm.org/citation.cfm?id=314161.314429},
  timestamp    = {Thu, 05 Jul 2018 07:29:42 +0200},
  biburl       = {https://dblp.org/rec/conf/soda/MahajanV97.bib},
  bibsource    = {dblp computer science bibliography, https://dblp.org}
}

@article{GeelenI05,
  author =        {James F. Geelen and Satoru Iwata},
  journal =       {Comb.},
  number =        {2},
  pages =         {187--215},
  title =         {Matroid Matching Via Mixed Skew-Symmetric Matrices},
  volume =        {25},
  year =          {2005},
  doi =           {10.1007/s00493-005-0013-7},
  url =           {https://doi.org/10.1007/s00493-005-0013-7},
}

@article{SaurabhZ18,
  author =        {Saket Saurabh and Meirav Zehavi},
  journal =       {Algorithmica},
  number =        {12},
  pages =         {3844--3860},
  title =         {$(k, n-k)$-{M}ax-{C}ut: An ${O}^*(2^p)$-Time
                   Algorithm and a Polynomial Kernel},
  volume =        {80},
  year =          {2018},
   url          = {https://doi.org/10.1007/s00453-018-0418-5},
  doi          = {10.1007/S00453-018-0418-5},
  timestamp    = {Wed, 17 Oct 2018 17:20:28 +0200},
  biburl       = {https://dblp.org/rec/journals/algorithmica/SaurabhZ18.bib},
  bibsource    = {dblp computer science bibliography, https://dblp.org}
}

@inproceedings{HanakaKKLO22,
  author =        {Tesshu Hanaka and Yasuaki Kobayashi and
                   Kazuhiro Kurita and See Woo Lee and Yota Otachi},
  booktitle =     {AAAI Conference on Artificial Intelligence (AAAI)},
  pages =         {3758--3766},
  publisher =     {{AAAI} Press},
  title =         {Computing Diverse Shortest Paths Efficiently: {A}
                   Theoretical and Experimental Study},
  year =          {2022},
  url          = {https://doi.org/10.1609/aaai.v36i4.20290},
  doi          = {10.1609/AAAI.V36I4.20290},
  timestamp    = {Mon, 04 Sep 2023 16:50:24 +0200},
  biburl       = {https://dblp.org/rec/conf/aaai/HanakaKKLO22.bib},
  bibsource    = {dblp computer science bibliography, https://dblp.org}
}

@article{Holyer81a,
  author =        {Ian Holyer},
  journal =       {{SIAM} J. Comput.},
  number =        {4},
  pages =         {718--720},
  title =         {The {NP}-Completeness of Edge-Coloring},
  volume =        {10},
  year =          {1981},
  doi =           {10.1137/0210055},
  url =           {https://doi.org/10.1137/0210055},
}

@article{Bang-Jensen91,
  author =        {J{\o}rgen Bang{-}Jensen},
  journal =       {J. Comb. Theory, Ser. {B}},
  number =        {1},
  pages =         {1--23},
  title =         {Edge-disjoint in- and out-branchings in tournaments
                   and related path problems},
  volume =        {51},
  year =          {1991},
  url          = {https://doi.org/10.1016/0095-8956(91)90002-2},
  doi          = {10.1016/0095-8956(91)90002-2},
  timestamp    = {Fri, 07 Jun 2024 15:05:39 +0200},
  biburl       = {https://dblp.org/rec/journals/jct/Bang-Jensen91.bib},
  bibsource    = {dblp computer science bibliography, https://dblp.org}
}

@article{Bang-JensenY08,
  author =        {J{\o}rgen Bang{-}Jensen and Anders Yeo},
  journal =       {Discret. Appl. Math.},
  number =        {15},
  pages =         {2924--2929},
  title =         {The minimum spanning strong subdigraph problem is
                   fixed parameter tractable},
  volume =        {156},
  year =          {2008},
  url          = {https://doi.org/10.1016/j.dam.2007.12.003},
  doi          = {10.1016/J.DAM.2007.12.003},
  timestamp    = {Thu, 11 Feb 2021 23:24:18 +0100},
  biburl       = {https://dblp.org/rec/journals/dam/Bang-JensenY08.bib},
  bibsource    = {dblp computer science bibliography, https://dblp.org}
}

@article{Bang-JensenSS16,
  author =        {J{\o}rgen Bang{-}Jensen and Saket Saurabh and
                   Sven Simonsen},
  journal =       {Algorithmica},
  number =        {1},
  pages =         {279--296},
  title =         {Parameterized Algorithms for Non-separating Trees and
                   Branchings in Digraphs},
  volume =        {76},
  year =          {2016},
  url          = {https://doi.org/10.1007/s00453-015-0037-3},
  doi          = {10.1007/S00453-015-0037-3},
  timestamp    = {Mon, 16 Sep 2019 14:52:13 +0200},
  biburl       = {https://dblp.org/rec/journals/algorithmica/Bang-JensenSS16.bib},
  bibsource    = {dblp computer science bibliography, https://dblp.org}
}

@article{Zehavi16ldc,
  author =        {Meirav Zehavi},
  journal =       {Inf. Process. Lett.},
  number =        {6},
  pages =         {419--422},
  title =         {A randomized algorithm for long directed cycle},
  volume =        {116},
  year =          {2016},
  url          = {https://doi.org/10.1016/j.ipl.2016.02.005},
  doi          = {10.1016/J.IPL.2016.02.005},
  timestamp    = {Fri, 30 Nov 2018 13:29:07 +0100},
  biburl       = {https://dblp.org/rec/journals/ipl/Zehavi16.bib},
  bibsource    = {dblp computer science bibliography, https://dblp.org}
}

@article{AlonYZ95,
  author =        {Noga Alon and Raphael Yuster and Uri Zwick},
  journal =       {J. {ACM}},
  number =        {4},
  pages =         {844--856},
  title =         {Color-Coding},
  volume =        {42},
  year =          {1995},
  biburl =        {https://dblp.org/rec/journals/jacm/AlonYZ95.bib},
  bibsource =     {dblp computer science bibliography, https://dblp.org},
  doi =           {10.1145/210332.210337},
  timestamp =     {Wed, 14 Nov 2018 10:35:26 +0100},
  url =           {https://doi.org/10.1145/210332.210337},
}

@article{Marx10twd,
  author =        {D{\'{a}}niel Marx},
  journal =       {Theory Comput.},
  number =        {1},
  pages =         {85--112},
  title =         {Can You Beat Treewidth?},
  volume =        {6},
  year =          {2010},
   url          = {https://doi.org/10.4086/toc.2010.v006a005},
  doi          = {10.4086/TOC.2010.V006A005},
  timestamp    = {Sun, 21 Jun 2020 17:37:51 +0200},
  biburl       = {https://dblp.org/rec/journals/toc/Marx10.bib},
  bibsource    = {dblp computer science bibliography, https://dblp.org}
}

@inproceedings{JansenM15,
  author =        {Bart M. P. Jansen and D{\'{a}}niel Marx},
  booktitle =     {{ACM–SIAM Symposium on Discrete Algorithms (SODA)}},
  pages =         {616--629},
  publisher =     {{SIAM}},
  title =         {Characterizing the easy-to-find subgraphs from the
                   viewpoint of polynomial-time algorithms, kernels, and
                   {T}uring kernels},
  year =          {2015},
  url          = {https://doi.org/10.1137/1.9781611973730.42},
  doi          = {10.1137/1.9781611973730.42},
  timestamp    = {Tue, 02 Feb 2021 17:07:47 +0100},
  biburl       = {https://dblp.org/rec/conf/soda/JansenM15.bib},
  bibsource    = {dblp computer science bibliography, https://dblp.org}
}

@inproceedings{MarxP14everything,
  author =        {D{\'{a}}niel Marx and Michal Pilipczuk},
  booktitle =     {{Symposium on Theoretical Aspects of Computer Science (STACS)}},
  pages =         {542--553},
  publisher =     {Schloss Dagstuhl - Leibniz-Zentrum f{\"{u}}r
                   Informatik},
  series =        {LIPIcs},
  title =         {Everything you always wanted to know about the
                   parameterized complexity of Subgraph Isomorphism (but
                   were afraid to ask)},
  volume =        {25},
  year =          {2014},
  url          = {https://doi.org/10.4230/LIPIcs.STACS.2014.542},
  doi          = {10.4230/LIPICS.STACS.2014.542},
  timestamp    = {Tue, 19 Aug 2025 16:16:35 +0200},
  biburl       = {https://dblp.org/rec/conf/stacs/MarxP14.bib},
  bibsource    = {dblp computer science bibliography, https://dblp.org}
}

@article{FominLRSR12,
  author =        {Fedor V. Fomin and Daniel Lokshtanov and
                   Venkatesh Raman and Saket Saurabh and
                   B. V. Raghavendra Rao},
  journal =       {J. Comput. Syst. Sci.},
  number =        {3},
  pages =         {698--706},
  title =         {Faster algorithms for finding and counting subgraphs},
  volume =        {78},
  year =          {2012},
   url          = {https://doi.org/10.1016/j.jcss.2011.10.001},
  doi          = {10.1016/J.JCSS.2011.10.001},
  timestamp    = {Sun, 06 Oct 2024 21:32:40 +0200},
  biburl       = {https://dblp.org/rec/journals/jcss/FominLRSR12.bib},
  bibsource    = {dblp computer science bibliography, https://dblp.org}
}

@inproceedings{DarmannPS09,
  author =        {Andreas Darmann and Ulrich Pferschy and
                   Joachim Schauer},
  booktitle =     {International Conference on Algorithmic Decision Theory (ADT)},
  pages =         {414--423},
  publisher =     {Springer},
  series =        {Lecture Notes in Computer Science},
  title =         {Determining a Minimum Spanning Tree with Disjunctive
                   Constraints},
  volume =        {5783},
  year =          {2009},
   url          = {https://doi.org/10.1007/978-3-642-04428-1\_36},
  doi          = {10.1007/978-3-642-04428-1\_36},
  timestamp    = {Tue, 23 Jan 2024 13:25:45 +0100},
  biburl       = {https://dblp.org/rec/conf/aldt/DarmannPS09.bib},
  bibsource    = {dblp computer science bibliography, https://dblp.org}
}

@article{DarmannPSW11,
  author =        {Andreas Darmann and Ulrich Pferschy and
                   Joachim Schauer and Gerhard J. Woeginger},
  journal =       {Discret. Appl. Math.},
  number =        {16},
  pages =         {1726--1735},
  title =         {Paths, trees and matchings under disjunctive
                   constraints},
  volume =        {159},
  year =          {2011},
  url          = {https://doi.org/10.1016/j.dam.2010.12.016},
  doi          = {10.1016/J.DAM.2010.12.016},
  timestamp    = {Thu, 11 Feb 2021 23:24:52 +0100},
  biburl       = {https://dblp.org/rec/journals/dam/DarmannPSW11.bib},
  bibsource    = {dblp computer science bibliography, https://dblp.org}
}

@article{ProvanB83,
  author =        {J. Scott Provan and Michael O. Ball},
  journal =       {{SIAM} J. Comput.},
  number =        {4},
  pages =         {777--788},
  title =         {The Complexity of Counting Cuts and of Computing the
                   Probability that a Graph is Connected},
  volume =        {12},
  year =          {1983},
  url          = {https://doi.org/10.1137/0212053},
  doi          = {10.1137/0212053},
  timestamp    = {Sun, 02 Jun 2019 21:02:34 +0200},
  biburl       = {https://dblp.org/rec/journals/siamcomp/ProvanB83.bib},
  bibsource    = {dblp computer science bibliography, https://dblp.org}
}

@inproceedings{Nederlof20TSP,
  author =        {Jesper Nederlof},
  booktitle =     {{Annual ACM Symposium on Theory of Computing (STOC)}},
  pages =         {40--53},
  publisher =     {{ACM}},
  title =         {Bipartite {TSP} in {$O(1.9999^n)$} time, assuming
                   quadratic time matrix multiplication},
  year =          {2020},
  url          = {https://doi.org/10.1145/3357713.3384264},
  doi          = {10.1145/3357713.3384264},
  timestamp    = {Mon, 18 Dec 2023 07:33:37 +0100},
  biburl       = {https://dblp.org/rec/conf/stoc/Nederlof20.bib},
  bibsource    = {dblp computer science bibliography, https://dblp.org}
}

@inproceedings{Curticapean18counting,
  author =        {Radu Curticapean},
  booktitle =     {International Symposium on Parameterized and Exact Computation (IPEC)},
  pages =         {1:1--1:18},
  publisher =     {Schloss Dagstuhl - Leibniz-Zentrum f{\"{u}}r
                   Informatik},
  series =        {LIPIcs},
  title =         {Counting Problems in Parameterized Complexity},
  volume =        {115},
  year =          {2018},
   url          = {https://doi.org/10.4230/LIPIcs.IPEC.2018.1},
  doi          = {10.4230/LIPICS.IPEC.2018.1},
  timestamp    = {Tue, 11 Feb 2020 15:52:14 +0100},
  biburl       = {https://dblp.org/rec/conf/iwpec/Curticapean18.bib},
  bibsource    = {dblp computer science bibliography, https://dblp.org}
}

@article{BhattacharyaBGM22stacs,
  author       = {Anup Bhattacharya and
                  Arijit Bishnu and
                  Arijit Ghosh and
                  Gopinath Mishra},
  title        = {Faster Counting and Sampling Algorithms using Colorful Decision Oracle},
  journal      = {{ACM} Trans. Comput. Theory},
  volume       = {16},
  number       = {2},
  pages        = {12:1--12:19},
  year         = {2024},
  url          = {https://doi.org/10.1145/3657605},
  doi          = {10.1145/3657605},
  timestamp    = {Mon, 03 Mar 2025 22:25:51 +0100},
  biburl       = {https://dblp.org/rec/journals/toct/BhattacharyaBGM24.bib},
  bibsource    = {dblp computer science bibliography, https://dblp.org}
}

@article{DellLM22counting,
  author =        {Holger Dell and John Lapinskas and Kitty Meeks},
  journal =       {{SIAM} J. Comput.},
  number =        {4},
  pages =         {849--899},
  title =         {Approximately Counting and Sampling Small Witnesses
                   Using a Colorful Decision Oracle},
  volume =        {51},
  year =          {2022},
  url          = {https://doi.org/10.1137/19m130604x},
  doi          = {10.1137/19M130604X},
  timestamp    = {Tue, 07 May 2024 20:21:45 +0200},
  biburl       = {https://dblp.org/rec/journals/siamcomp/DellLM22.bib},
  bibsource    = {dblp computer science bibliography, https://dblp.org}
}

@article{DellL21,
  author =        {Holger Dell and John Lapinskas},
  journal =       {{ACM} Trans. Comput. Theory},
  number =        {2},
  pages =         {8:1--8:24},
  title =         {Fine-Grained Reductions from Approximate Counting to
                   Decision},
  volume =        {13},
  year =          {2021},
  url          = {https://doi.org/10.1145/3442352},
  doi          = {10.1145/3442352},
  timestamp    = {Tue, 07 May 2024 20:26:41 +0200},
  biburl       = {https://dblp.org/rec/journals/toct/DellL21.bib},
  bibsource    = {dblp computer science bibliography, https://dblp.org}
}

@article{LokshtanovBSZ21countkpath,
  author =        {Daniel Lokshtanov and Andreas Bj{\"{o}}rklund and
                   Saket Saurabh and Meirav Zehavi},
  journal =       {{ACM} Trans. Algorithms},
  number =        {3},
  pages =         {26:1--26:44},
  title =         {Approximate Counting of \emph{k}-Paths: Simpler,
                   Deterministic, and in Polynomial Space},
  volume =        {17},
  year =          {2021},
  url          = {https://doi.org/10.1145/3461477},
  doi          = {10.1145/3461477},
  timestamp    = {Thu, 12 Aug 2021 08:58:06 +0200},
  biburl       = {https://dblp.org/rec/journals/talg/LokshtanovBSZ21.bib},
  bibsource    = {dblp computer science bibliography, https://dblp.org}
}

@inproceedings{LokshtanovMP0Z18repr,
  author =        {Daniel Lokshtanov and Pranabendu Misra and
                   Fahad Panolan and Saket Saurabh and Meirav Zehavi},
  booktitle =     {Innovations in Theoretical Computer Science (ITCS)},
  pages =         {32:1--32:13},
  publisher =     {Schloss Dagstuhl - Leibniz-Zentrum f{\"{u}}r
                   Informatik},
  series =        {LIPIcs},
  title =         {Quasipolynomial Representation of Transversal
                   Matroids with Applications in Parameterized
                   Complexity},
  volume =        {94},
  year =          {2018},
  url          = {https://doi.org/10.4230/LIPIcs.ITCS.2018.32},
  doi          = {10.4230/LIPICS.ITCS.2018.32},
  timestamp    = {Tue, 11 Feb 2020 15:52:14 +0100},
  biburl       = {https://dblp.org/rec/conf/innovations/LokshtanovMP0Z18.bib},
  bibsource    = {dblp computer science bibliography, https://dblp.org}
}

@article{MisraPR020repr,
  author =        {Pranabendu Misra and Fahad Panolan and
                   M. S. Ramanujan and Saket Saurabh},
  journal =       {Theor. Comput. Sci.},
  pages =         {51--59},
  title =         {Linear representation of transversal matroids and
                   gammoids parameterized by rank},
  volume =        {818},
  year =          {2020},
  url          = {https://doi.org/10.1016/j.tcs.2018.02.029},
  doi          = {10.1016/J.TCS.2018.02.029},
  timestamp    = {Sun, 19 Jan 2025 14:12:37 +0100},
  biburl       = {https://dblp.org/rec/journals/tcs/MisraPR020.bib},
  bibsource    = {dblp computer science bibliography, https://dblp.org}
}

@article{BjorklundH19twodp,
  author =        {Andreas Bj{\"{o}}rklund and Thore Husfeldt},
  journal =       {{SIAM} J. Comput.},
  number =        {6},
  pages =         {1698--1710},
  title =         {Shortest Two Disjoint Paths in Polynomial Time},
  volume =        {48},
  year =          {2019},
  url          = {https://doi.org/10.1137/18M1223034},
  doi          = {10.1137/18M1223034},
  timestamp    = {Sun, 19 Jan 2025 14:36:59 +0100},
  biburl       = {https://dblp.org/rec/journals/siamcomp/BjorklundH19.bib},
  bibsource    = {dblp computer science bibliography, https://dblp.org}
}

@article{FominGPS19sidma,
  author =        {Fedor V. Fomin and Petr A. Golovach and Fahad Panolan and
                   Saket Saurabh},
  journal =       {{SIAM} J. Discret. Math.},
  number =        {2},
  pages =         {795--836},
  title =         {Editing to Connected F-Degree Graph},
  volume =        {33},
  year =          {2019},
  url          = {https://doi.org/10.1137/17M1129428},
  doi          = {10.1137/17M1129428},
  timestamp    = {Sun, 19 Jan 2025 14:21:56 +0100},
  biburl       = {https://dblp.org/rec/journals/siamdm/FominGPS19.bib},
  bibsource    = {dblp computer science bibliography, https://dblp.org}
}

@article{AgrawalMPS22rSNDP,
  author =        {Akanksha Agrawal and Pranabendu Misra and
                   Fahad Panolan and Saket Saurabh},
  journal =       {Algorithmica},
  number =        {9},
  pages =         {2622--2641},
  title =         {Fast Exact Algorithms for Survivable Network Design
                   with Uniform Requirements},
  volume =        {84},
  year =          {2022},
   url          = {https://doi.org/10.1007/s00453-022-00959-3},
  doi          = {10.1007/S00453-022-00959-3},
  timestamp    = {Mon, 05 Feb 2024 20:24:39 +0100},
  biburl       = {https://dblp.org/rec/journals/algorithmica/AgrawalMPS22.bib},
  bibsource    = {dblp computer science bibliography, https://dblp.org}
}

@article{Feldmann0L22,
  author = {Andreas Emil Feldmann and
Anish Mukherjee and
Erik Jan van Leeuwen},
  title = {The parameterized complexity of the survivable network design problem},
  journal = {J. Comput. Syst. Sci.},
  volume = {148},
  pages = {103604},
  year = {2025},
  url = {https://doi.org/10.1016/j.jcss.2024.103604},
  doi = {10.1016/J.JCSS.2024.103604}
}

@inproceedings{AkmalK25,
  author       = {Shyan Akmal and
                  Tomohiro Koana},
  title        = {Faster Edge Coloring by Partition Sieving},
  booktitle    = {International Symposium on Theoretical Aspects of Computer Science ({STACS} 2025)},
  series       = {LIPIcs},
  volume       = {327},
  pages        = {7:1--7:18},
  publisher    = {Schloss Dagstuhl - Leibniz-Zentrum f{\"{u}}r Informatik},
  year         = {2025},
  url          = {https://doi.org/10.4230/LIPIcs.STACS.2025.7},
  doi          = {10.4230/LIPICS.STACS.2025.7}
}

\end{document}